\documentclass[a4paper,10pt]{article}

\usepackage{amsfonts}
\usepackage{bm}
\usepackage{dsfont} % for \openone
\usepackage{physics}
\usepackage[numbers]{natbib}
\usepackage{authblk}

\usepackage{preamble}

\title{QAOA-MaxCut has barren plateaus for almost all graphs}
%Quantum disadvantage of weighted QAOA%
%Most dynamical Lie algebras for the MaxCut quantum approximate optimization algorithm are exponentially large

% \author{Rui Mao, Pei Yuan, }
\author[1,2]{Rui Mao \thanks{maorui21b@ict.ac.cn}}
\author[1]{Pei Yuan \thanks{peiyuan@tencent.com}}
\author[1]{Jonathan Allcock  \thanks{jonallcock@tencent.com}}
\author[1]{Shengyu Zhang \thanks{shengyzhang@tencent.com}}
\affil[1]{Tencent Quantum Laboratory}
%\affil[2]{State Key Lab of Processors, Institute of Computing Technology, Chinese Academy of Sciences, Beijing 100190, China}
\affil[2]{School of Computer Science and Technology, University of Chinese Academy of Sciences, Beijing 100049, China}

\date{\today}

% \begin{noindent}

\begin{document}

\maketitle

% \ruic{about title: maybe just QAOA-MaxCut?}

\begin{abstract}
  The Quantum Approximate Optimization Algorithm (QAOA) has been the subject of intense study over recent years, yet the corresponding Dynamical Lie Algebra (DLA)---a key indicator of the expressivity and trainability of variational quantum algorithms---remains poorly understood beyond highly symmetric instances.
  An exponentially scaling DLA dimension is associated with the presence of so-called barren plateaus in the optimization landscape, which renders training intractable.

  In this work, we investigate the DLA of QAOA applied to the canonical MaxCut problem, for both weighted and unweighted graphs.
  For weighted graphs, we show that when the weights are drawn from a continuous distribution, the DLA dimension grows as $\Theta(4^n)$ almost surely for all connected graphs except paths and cycles.
  In the more common unweighted setting, we show that asymptotically all but an exponentially vanishing fraction of graphs have $\Theta(4^n)$ large DLA dimension.
  The entire simple Lie algebra decomposition of the corresponding DLAs is also identified, from which we prove that the variance of the loss function is $O(1/2^n)$, implying that QAOA on these weighted and unweighted graphs all suffers from barren plateaus.
  Moreover, we give explicit constructions for families of graphs whose DLAs have exponential dimension, including cases whose MaxCut is in $\mathsf P$.

  Our proof of the unweighted case is based on a number of splitting lemmas and DLA-freeness conditions that allow one to convert prohibitively complicated Lie algebraic problems into amenable graph theoretic problems.
  These form the basis for a new algorithm that computes such DLAs orders of magnitude faster than previous methods, reducing runtimes from days to seconds on standard hardware.
  We apply this algorithm to MQLib, a classical MaxCut benchmark suite covering over 3,500 instances with up to 53,130 vertices, and find that, ignoring edge weights, at least 75\% of the instances possess a DLA of dimension at least $2^{128}$.
  %--- REMOVED UNTIL WE GET APPROVAL
  % To assist further research into this topic, we release a complete database of QAOA-MaxCut DLAs for all 4- to 7-vertex graphs.
  % ----

\end{abstract}

\tableofcontents

\section{Introduction}

Variational quantum algorithms (VQAs) constitute one of the main computational paradigms in quantum computing.
Originally proposed to take advantage of noisy, intermediate scale (NISQ) quantum devices, variational algorithms are based on a hybrid quantum-classical framework, where an initial quantum state $\rho$ is processed by a parameterized quantum circuit (PQC), and an observable $O$ is measured on the output.
The parameters $\bm\theta$ are iteratively tuned via a classical optimizer, with the aim of minimizing a loss function $\ell(\rho, O; \bm\theta)$.
The flexibility afforded by this approach -- with freedom to tailor the circuit structure according to both problem requirements as well as hardware constraints -- has led to proposed applications in a wide variety of areas including optimization \cite{blekos2024review,lin2016performance}, quantum chemistry \cite{kivlichan2018quantum,wiersema2020exploring}, and machine learning \cite{mitarai2018quantum,schuld2020circuit,benedetti2019generative,altaisky2001quantum}.
%, making it one of the most promising pathways for near-term quantum applications.

% The pursuit of quantum advantage --- solving problems intractable for classical computers --- is a central issue in quantum computation.
% While fault-tolerant quantum computers promise to realize this goal for specific algorithms like Shor's algorithm \cite{shor1999polynomial} and Grover's algorithm \cite{grover1996fast}, they seem unlikely to be realized in the short term.
% Current noisy intermediate-scale quantum (NISQ) devices are constrained by limited qubit counts and noise, which prevents the implementation of deep quantum circuits for fault-tolerant quantum computing.
% To address these constraints, the Variational Quantum Algorithm (VQA) \cite{cerezo2021variational}, a quantum-classical hybrid algorithm, was proposed.
% VQAs leverage the complementary strengths of quantum and classical processors, using a parameterized quantum circuit (PQC) to prepare and measure trial quantum states. In contrast, a classical optimizer iteratively adjusts the parameters to minimize a cost function.
% This framework provides a flexible and hardware-efficient approach to tackling challenging problems in optimization \cite{blekos2024review,lin2016performance}, quantum chemistry \cite{kivlichan2018quantum,wiersema2020exploring}, and machine learning \cite{mitarai2018quantum,schuld2020circuit,benedetti2019generative,altaisky2001quantum}, making it one of the most promising pathways for near-term quantum applications.

However, the efficacy and efficiency of VQAs, in both the near and long term, is still under investigation.
In particular, the trainability of such algorithms can be affected by the presence or absence of so-called barren plateaus (BPs) \cite{mcclean2018barren}, a phenomenon where the optimization landscape becomes exponentially flatter and featureless with increasing problem size.
More specifically, BPs are said to occur when the variance of the loss function or its gradient decays exponentially with the system size, i.e., $\Var_{\bm\theta}(\ell(\rho,O;\bm\theta))$ or $\Var_{\bm\theta}(\nabla\ell(\rho,O;\bm\theta)) \in O(1/b^n)$ for some $b >1$.
If this occurs, this presents a huge challenge in updating the circuit parameters by either gradient-based or gradient-free methods.

Barren plateaus have been shown to have many causes, including the expressiveness of the PQC \cite{mcclean2018barren,holmes2022connecting,ortiz2021entanglement,patti2021entanglement,larocca2022diagnosing,friedrich2023quantum,sharma2022trainability,kieferova2021quantum,pesah2021absence,lee2021progress,martin2023barren,grimsley2023adaptive,sack2022avoiding}, the randomness of the initial state \cite{mcclean2018barren,cerezo2021cost,thanasilp2023subtleties,shaydulin2022importance,abbas2021power,leone2024practical,holmes2021barren}, the locality of measurements \cite{cerezo2021cost,uvarov2021barren,kashif2023impact,khatri2019quantum,uvarov2020variational,leadbeater2021f,cerezo2022variational} and hardware noise \cite{wang2021noise,stilck2021limitations,garcia2024effects}.
Intriguingly, in two recent works \cite{ragone2024lie,fontana2024characterizing}, these seemingly different causes were all shown to have the same underlying mathematical explanation in terms of the Dynamical Lie algebra (DLA) \cite{erdmann2006introduction,hall2013lie}, a vector space generated by the set of operators corresponding to the parameterized gates in the circuit.
For an $n$-qubit PQC, the DLA $\mathfrak{g}$ is a subalgebra of the Lie algebra $\mathfrak{u}(2^n)$, and decomposes into a direct sum of an abelian center and a number of simple Lie algebras $\mathfrak{g}_j$.
The variance of the loss function can be precisely characterized by the projections of $\rho$ and $O$ onto the simple components $\mathfrak{g}_j$, along with the dimensions of these components, provided that either $\rho\in\mathrm{i}\mathfrak{g}$ or $O\in \mathrm{i}\mathfrak{g}$.
Consequently, by identifying the DLA of a VQA and determining its algebraic structure, one can ascertain whether BPs will be present or not.
In particular, if the DLA decomposes into at most a polynomial number of simple components and all of them have exponential dimensions, then it will lead to BPs \cite{ragone2024lie,fontana2024characterizing}.

In addition to their connection to BPs, DLAs can be used in the analysis of several other important aspects of quantum systems: the overparametrization phenomenon in quantum neural networks (QNN) \cite{larocca2023theory}; the controllability of the unitary propagator \cite{altafini2001controllability} and spin, fermionic and bosonic systems \cite{zeier2011symmetry}; and the design of algorithms for Hamiltonian time evolution unitary synthesis \cite{kokcu2022fixed}.

While the mathematical relationship between DLAs and BPs is enlightening, the practical application of this connection requires the explicit computation of various DLA properties, such as the dimension and basis of a DLA, its decomposition into a center and simple components, and a set of basis vectors for each simple component.
A basis for a DLA can be computed algorithmically, although the time required can be exponential in the number of qubits.
However, determining the decomposition and component bases presents a significantly greater challenge.
When the generators of the DLA are Pauli strings, an efficient method for determining the decomposition of the DLA was proposed \cite{aguilar2024full}.
However, in the more general case that the DLA generators are linear combinations of Pauli strings, much less is known.
%\sout{and a basis of each simple component is known }

% As discussed above, the DLA plays a crucial role in characterizing the trainability of VQAs and some other directions in quantum computation. However, practical application to analyze specific VQAs hinges on computing key DLA properties. 
% These include the dimension and basis of the DLA, its decomposition into a center and simple components, and the basis vector of each simple component. The basis of a DLA can be derived directly, albeit potentially exponentially in time complexity. However, deriving the decomposition and component bases presents a significantly greater challenge.  Fortunately, Ref. \cite{aguilar2024full} provided an efficient method for determining both the decomposition of the DLA and the basis of each simple component when the DLA is generated by a set of Pauli strings. Nevertheless, when the generators of the DLA are linear combinations of Pauli strings, these tasks can become quite challenging.

In this work, we focus on the DLA of one of the most studied variational quantum algorithms, namely the Quantum Approximate Optimization Algorithm (QAOA) \cite{farhi2014quantum} for the MaxCut problem on graphs.
Given an unweighted graph $G=(V,E)$, the QAOA-MaxCut circuit corresponds to the layered variational circuit \eq{ U(\bm\gamma,\bm\beta) = e^{-\mathrm i\beta_L H_m} e^{-\mathrm i\gamma_L H_p}\ldots e^{-\mathrm i\beta_1 H_m} e^{-\mathrm i\gamma_1 H_p} } where $L$ is a positive integer, $\bm\gamma = (\gamma_1, \ldots, \gamma_L)$ and $\bm\beta = (\beta_1, \ldots, \beta_L)$ with $\beta_j, \gamma_j\in\mb{R}$, and the mixer and problem Hamiltonians are \eq{ H_m &\triangleq \sum_{u\in V}X_u, \quad H_p \triangleq \sum_{(u,v)\in E} Z_u Z_v, } respectively, where $X_u, Z_u$ denote the Pauli $X$ and $Z$ operators acting on qubit $u$.
The corresponding DLA $\mathfrak{g}$ is generated by the set $\{H_m, H_p\}$.
The initial state of the circuit is usually taken to be $\rho = (\ket{+}\bra{+})^{\otimes n}$, the measurement at the end of the circuit is $O =\frac{1}{\sqrt{|E|}}H_p$, and the loss function is $\ell(\rho,O;\bm\theta) = \tr(U(\bm\theta) \rho U(\bm\theta)^\dagger O)$ with $\bm \theta=(\bm \gamma,\bm \beta)$.

A variant of QAOA called multi-angle QAOA \cite{herrman2022multi} was developed to enhance the approximation ratio for the MaxCut problem while reducing circuit depth.
The key difference between multi-angle QAOA and standard QAOA lies in their parameterizations.
While the standard algorithm assigns a single parameter ($\beta_j$) to each $H_m$, and another ($\gamma_j$) to each $H_p$, in multi-angle QAOA each $X_u$ and $Z_u Z_v$ term is tunable by its own independent parameter (See \Cref{fig:gj3l}).
In terms of DLA, the \emph{multi-angle DLA} $\mathfrak{g}_{\rm{ma}}$ is generated by the set $\{X_u: u \in V\}\cup \{Z_uZ_v : (u,v) \in E\}$.
It is always the case that $\mathfrak{g}\subseteq \mathfrak{g}_{\rm{ma}}$.
If equality holds then we say that the standard DLA $\mathfrak{g}$ is \emph{free}.
In contrast to the standard DLA case, where the DLA generators are linear combinations of Pauli strings, each generator of the multi-angle DLA is an individual Pauli string.
This facilitates the analysis, and a full characterization of multi-angle DLAs on general connected graphs has been obtained \cite{kazi2024analyzing,kokcu2024classification}, building on a broader classifications of DLAs generated by Pauli strings \cite{aguilar2024full}.

The standard QAOA-MaxCut DLAs have proven more challenging to analyze, and to-date are well-understood only for some symmetric cases, such as paths, cycles, and complete graphs.
%\cite{allcock2024dynamical,kazi2024analyzing}.
% Previous studies on QAOA-MaxCut DLAs have primarily focused on highly symmetric graphs, such as paths, cycles and complete graphs.  
For path graphs, the DLA has been shown to be isomorphic to $\mathfrak{u}(n)$ \cite{kazi2024analyzing}; for cycle graphs, the DLA is isomorphic to $\mathfrak{u}(1)\oplus\mathfrak{u}(1)\oplus \bigoplus_{j=1}^{n-1}\mathfrak{su}(2)$ \cite{alessandro2025controllability, allcock2024dynamical}, and an explicit basis for each simple component has been given \cite{allcock2024dynamical}; for complete graphs, an explicit basis for the DLA was constructed \cite{allcock2024dynamical}.
The QAOA-MaxCut complete graph case is an instance of an $S_n$-equivariant quantum neural network, whose corresponding DLAs have been shown not to exhibit BPs \cite{schatzki2024theoretical,albertini2018controllability}.

However, the situation for general graphs is not well understood.
Numerical investigations on Erd\H{o}s-R\'enyi random graphs \cite{larocca2022diagnosing} suggest that typical instances exhibit large-dimensional DLAs and are prone to barren plateaus.

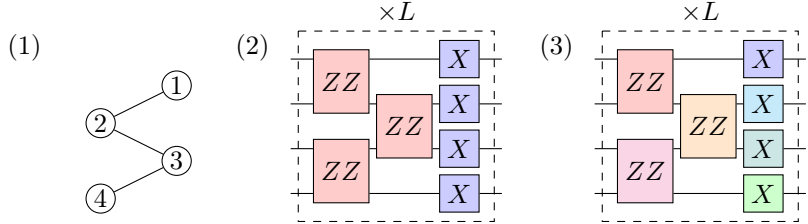
\begin{figure}[ht]
  \centering
  \begin{tikzpicture}
  \begin{scope}
    \node at (-0.5,0) {(1)};
    \foreach \num in {1,2,3,4} {
      \node [draw,circle,inner sep=1pt] (\num) at ({0.5+mod(\num,2)},{-0.5*\num}) {\num};
    }
    \draw (1) -- (2) -- (3) -- (4);
  \end{scope}

  \begin{scope}[xshift=3cm]
    \node at (-0.5,0) {(2)};
    \begin{yquant}
      qubit {} q[4];
        
      [this subcircuit box style={dashed, label=$\times L$}]
      subcircuit {
        qubit {} q[4];
        [fill=red!20] box {$ZZ$} (q[0,1]);
        [fill=red!20] box {$ZZ$} (q[2,3]);
        [fill=red!20] box {$ZZ$} (q[1,2]);
        align q[0-3];
        [fill=blue!20] x -;
      } (-);
    \end{yquant}
  \end{scope}
  
  \begin{scope}[xshift=7cm]
    \node at (-0.5,0) {(3)};
    \begin{yquant}
      qubit {} q[4];
        
      [this subcircuit box style={dashed, label=$\times L$}]
      subcircuit {
        qubit {} q[4];
        [fill=red!20] box {$ZZ$} (q[0,1]);
        [fill=magenta!20] box {$ZZ$} (q[2,3]);
        [fill=orange!20] box {$ZZ$} (q[1,2]);
        align q[0-3];
        [fill=blue!20] x q[0];
        [fill=cyan!20] x q[1];
        [fill=teal!20] x q[2];
        [fill=green!20] x q[3];
      } (-);
    \end{yquant}
  \end{scope}
\end{tikzpicture}
  \caption{
    (1) The MaxCut problem seeks to find a bipartition of the vertices of a graph such that the number of edges (or sum of edge weights in the weighted MaxCut problem) across the two partitions is maximized.
    (2) and (3) The parameterized quantum circuits of QAOA-MaxCut and multi-angle QAOA-MaxCut, respectively. $ZZ$ and $X$ indicate a 2-qubit gate $e^{\mathrm{i}\beta Z_iZ_j}$ on qubits $i,j$ and a single-qubit gate $e^{\mathrm{i}\gamma X_j}$ on qubit $j$ for some real parameters $\beta,\gamma$.
    Gates with the same color share the same parameter.
    The circuits are repeated $L$ times, with different parameters in each repetition.
    % (2) The PQC of QAOA.
    % According to the input graph, the \emph{problem Hamiltonian} is defined by $H_p = \sum_{(u, v) \in E} Z_u Z_v$, and the \emph{mixer Hamiltonian} is defined by $H_m = \sum_{u \in V} X_u$.
    % The PQC can be expressed as $U(\bm{\beta}, \bm{\gamma}) = \prod_i e^{\mathrm i \beta_i H_p} e^{\mathrm i \gamma_i H_m}$, where $\bm\beta, \bm\gamma$ are parameters to be optimized.
    % In total, there are $2p$ parameters.
    % (3) The PQC of multi-angle QAOA.
    % The key difference from QAOA is that each $ZZ$ or $X$ term is assigned an independent parameter, so there are $p(|E| + |V|)$ parameters in total.
  }

  \label{fig:gj3l}
\end{figure}

% \paragraph{Main results}
\subsection{Main results}
In this work, we investigate the standard QAOA-MaxCut DLA for both unweighted graphs, as described above, as well as weighted graphs $G=(V,E,\bm{r})$, where $\bm{r}=\{r_{uv}\in \mathbb{R} : (u,v) \in E\}$ denotes the set of weights on edges.
For weighted graphs, the mixer term $H_m$ is the same as for the unweighted case, while the problem Hamiltonian now carries the weights, viz $H_p \triangleq \sum_{(u, v) \in E} r_{uv} Z_{u}Z_v$.

We first give a sufficient condition for the DLA of any weighted graph to be free.

% \syz{Let's discuss whether to write the results in the theorem environment. Another important thing is the order; usually more valuable results go earlier.}

\begin{restatable}{theorem}{thmweighted}\label{thm:weighted}
  Let $G = (V, E, \bm{r})$ be a weighted graph such that
  \begin{enumerate}
    \item
          $r_{uv} \neq 0$ for any $(u, v) \in E$, and
    \item
          for any family of sign functions $\big\{s_x: \mathcal{N}(x)\to \{\pm 1\}\big\}_{x\in V}$, and any two vertices $u,v\in V$,
          \begin{equation}\label{eq:weight-constraint}
            \sum_{w \in \mathcal{N}(u)} s_u(w) r_{uw} \neq \sum_{w' \in \mathcal{N}(v)} s_v(w') r_{vw'}.
          \end{equation}
  \end{enumerate}
  Then $\mathfrak{g}=\mathfrak{g}_{\rm{ma}}$.
\end{restatable}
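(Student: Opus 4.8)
Since the inclusion $\mathfrak{g}\subseteq\mathfrak{g}_{\rm{ma}}$ holds unconditionally, the plan is to prove the reverse inclusion by exhibiting every generator of $\mathfrak{g}_{\rm{ma}}$ inside $\mathfrak{g}$. I would first reduce to producing the single-qubit operators $X_u$. Writing $N_u\triangleq\sum_{w\in\mathcal{N}(u)}r_{uw}Z_w$ for the weighted neighbourhood operator of $u$, one has $[H_p,X_u]=2\mathrm{i}\,Y_uN_u$, and for an edge $(u,v)$ the bracket $[[H_p,X_u],X_v]$ collapses to $-4\,r_{uv}\,Y_uY_v$ (every other term of $Y_uN_u$ commutes with $X_v$). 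Since $r_{uv}\neq0$ by condition~(1), this shows $Y_uY_v\in\mathfrak{g}$ whenever $X_u,X_v\in\mathfrak{g}$; then $[X_v,Y_uY_v]\propto Y_uZ_v$ and $[X_u,Y_uZ_v]\propto Z_uZ_v$. Hence it suffices to show $X_u\in\mathfrak{g}$ for every $u\in V$ (no connectivity hypothesis is needed, since this recovery of $Z_uZ_v$ is carried out edge by edge).

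To extract the $X_u$ from $H_m=\sum_{u}X_u$, I would iterate $\mathrm{ad}_{H_p}$. Using $[H_p,X_v]=2\mathrm{i}\,Y_vN_v$, $[H_p,Y_v]=-2\mathrm{i}\,X_vN_v$, and the fact that each $N_v$ is diagonal and therefore commutes with $H_p$, an immediate induction gives $\mathrm{ad}_{H_p}^{2k}(H_m)=4^{k}\,\Omega_{2k}$, where $\Omega_{2k}\triangleq\sum_{v\in V}X_vN_v^{2k}$. Thus $\Omega_{2k}\in\mathfrak{g}$ for all $k\geq 0$.

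The decisive step is to read these elements in the computational basis. Since $N_v\ket{z}=n_v(z)\ket{z}$ with $n_v(z)=\sum_{w\in\mathcal{N}(v)}r_{vw}(-1)^{z_w}$, and $X_v$ flips bit $v$, the matrix $\Omega_{2k}$ is supported on the edges of the Boolean cube, carrying the value $n_v(z)^{2k}$ on the edge at $z$ in direction $v$; this is well defined because $n_v$ does not depend on $z_v$. As $z$ varies, $n_v(z)$ ranges exactly over the set $D_v$ of signed weighted degrees of $v$ appearing in \eqref{eq:weight-constraint}, and $D_v=-D_v$. Condition~(2) says precisely that the $D_v$ are pairwise disjoint, and together with this sign symmetry it forces the squared sets $\{d^2:d\in D_v\}$ to be pairwise disjoint as well (a coincidence $d^2=e^2$ with $d\in D_u$, $e\in D_v$ would give $\pm d\in D_v$, hence $d\in D_v$). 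Fixing $u$, Lagrange interpolation on the finite set $\bigsqcup_v\{d^2:d\in D_v\}$ therefore yields an even polynomial $f(x)=\sum_k c_k x^{2k}$ with $f\equiv1$ on $D_u$ and $f\equiv0$ on $\bigcup_{v\neq u}D_v$. Then $\sum_k c_k\Omega_{2k}\in\mathfrak{g}$, and its cube-edge entries are $f(n_v(z))$, which equal $1$ when $v=u$ and $0$ otherwise; that is, this element is exactly $X_u$. Combined with the first paragraph, $\mathfrak{g}\supseteq\mathfrak{g}_{\rm{ma}}$, hence $\mathfrak{g}=\mathfrak{g}_{\rm{ma}}$.

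The one genuinely non-mechanical point is the observation in the third paragraph: the whole tower $\{\mathrm{ad}_{H_p}^{2k}(H_m)\}_k$ collapses, in the computational basis, to cube-shift matrices whose edge weights are powers of the signed weighted degrees, so that the Lie-algebraic problem of splitting $H_m$ into its summands becomes a single one-variable polynomial interpolation whose feasibility is exactly condition~(2); everything else is routine Pauli bookkeeping. I expect the only delicate bits in the write-up to be the induction of the second paragraph and the evenness/disjointness check for $f$, both of which are short and rest solely on the symmetry $D_v=-D_v$ and pairwise disjointness.
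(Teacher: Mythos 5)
Your proposal is correct and follows essentially the same route as the paper: the paper also iterates the squared adjoint $f=\tfrac14[H_p,[H_p,\cdot]]$ on $H_m$, identifies the eigenvalues attached to vertex $u$ as the squares of the signed weighted degree sums, uses condition~(2) (together with the sign symmetry of those sums) to get disjoint spectra, and applies Lagrange interpolation to project out each $X_u$, after which $Z_uZ_v$ is recovered from $X_u,X_v,H_p$ by nested commutators using $r_{uv}\neq0$. Your computational-basis reading of $\mathrm{ad}_{H_p}^{2k}(H_m)=4^k\sum_v X_vN_v^{2k}$ is just a more concrete presentation of the paper's diagonalization of $f$ on the spaces of XZ even stars, and your explicit check that disjointness of the signed-sum sets forces disjointness of their squares fills in a step the paper leaves implicit.
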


The conditions on the edge weights $r_{uw}$ in the above theorem are satisfied with probability one for connected graphs on $n \ge 3$ vertices with weights randomly sampled from continuous distributions.
% \syz{To check later: Is there any condition on the distribution other than being continuous? Any condition on the edge set $E$?}
By combining this with a known classification of all multi-angle QAOA-MaxCut DLAs \cite{kazi2024analyzing,kokcu2024classification}, it follows that $\mathfrak{g}$ has exponentially large components for all connected graphs $G$, unless $G$ is a path or cycle, and that the corresponding QAOA-MaxCut algorithms will suffer from barren plateaus.

\def\CorBP{
  \begin{enumerate}
    \item
          $\mathfrak{g}=\mathfrak{g}_{\rm{ma}}$.
    \item
          $\.g$ is the direct sum of either one or two simple Lie algebras, each of dimension $\Theta(4^n)$.
    \item
          $\Var_{\bm\theta}[\ell(\rho,O;\bm\theta)] = O(1/2^n)$, and barren plateaus are therefore present.
  \end{enumerate}
}

%\ruic{Since this theorem is stated in terms of BP, I decide to restrict $G$ to be connected.}

\begin{restatable}{theorem}{thmrandomweightsBP}
  %\ruic{Corollary or theorem?}
  Let $G = (V, E, \bm{r})$ be a connected graph, and not a path or cycle.
  If the weights are drawn from a continuous distribution, then, with probability 1, the following statements hold for QAOA-MaxCut.
  \CorBP
\end{restatable}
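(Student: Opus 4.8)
The plan is to derive the three items in turn from \Cref{thm:weighted}, the existing classification of multi-angle QAOA-MaxCut DLAs, and the Lie-algebraic formula for the loss variance.

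\emph{Item 1.} A connected graph that is neither a path nor a cycle has at least four vertices, since every connected graph on at most three vertices is one of $P_1, P_2, P_3, C_3$, each of which is a path or a cycle; in particular $n \ge 3$. As noted immediately after \Cref{thm:weighted}, when the weights $r_{uv}$ are drawn from a continuous distribution the two conditions there hold with probability $1$: condition~(i) fails only on a null set, and condition~(ii) is violated only on a finite union of proper affine subspaces of $\mathbb{R}^{|E|}$, one for each ordered pair of distinct vertices together with a choice of local sign patterns. Hence \Cref{thm:weighted} gives $\mathfrak{g} = \mathfrak{g}_{\rm{ma}}$ almost surely.

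\emph{Item 2.} I would invoke the complete classification of multi-angle QAOA-MaxCut DLAs on connected graphs \cite{kazi2024analyzing,kokcu2024classification}: for every connected graph that is not a path or a cycle, $\mathfrak{g}_{\rm{ma}}$ is semisimple and is the direct sum of one or two simple ideals, each isomorphic to one of $\mathfrak{su}(2^{n-1})$, $\mathfrak{so}(2^n)$, or $\mathfrak{sp}(2^{n-1})$ and hence of dimension $\Theta(4^n)$; paths and cycles are the only exceptions, with $\mathfrak{g}_{\rm{ma}}$ of dimension polynomial in $n$. Since our $G$ is connected and not a path or cycle, combining with Item~1 shows that the standard DLA $\mathfrak{g} = \mathfrak{g}_{\rm{ma}}$ has exactly this structure almost surely.

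\emph{Item 3.} Because $O = \frac{1}{\sqrt{|E|}} H_p$ and $\mathrm{i} H_p \in \mathfrak{g}$ by definition of the DLA, we have $O \in \mathrm{i}\mathfrak{g}$, which is the hypothesis needed to apply the variance formula of \cite{ragone2024lie,fontana2024characterizing}: $\Var_{\bm\theta}[\ell(\rho,O;\bm\theta)]$ is a sum over the simple ideals $\mathfrak{g}_j$ of $\mathfrak{g}$ of terms each bounded by $C\, \mathcal{P}_{\mathfrak{g}_j}(\rho)\, \mathcal{P}_{\mathfrak{g}_j}(O) / \dim\mathfrak{g}_j$ for an absolute constant $C$, where $\mathcal{P}_{\mathfrak{g}_j}(A) = \|\Pi_{\mathfrak{g}_j}(A)\|_{\mathrm{HS}}^2$ is the $\mathfrak{g}_j$-purity. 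I would then bound each purity by the full Hilbert--Schmidt norm: $\mathcal{P}_{\mathfrak{g}_j}(\rho) \le \tr(\rho^2) = 1$ since $\rho$ is pure, and $\mathcal{P}_{\mathfrak{g}_j}(O) \le \tr(O^2) = \frac{1}{|E|}\tr(H_p^2) = 2^n$ since the $|E|$ Pauli terms of $H_p$ are mutually Hilbert--Schmidt orthogonal with squared norm $2^n$ each. With at most two ideals and $\dim\mathfrak{g}_j = \Theta(4^n)$ by Item~2, this yields $\Var_{\bm\theta}[\ell(\rho,O;\bm\theta)] = O(2^n / 4^n) = O(1/2^n)$, i.e.\ a barren plateau.

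Essentially all the substance has been pushed into \Cref{thm:weighted} (already proved) and the cited classification and variance results, so the remaining obstacle is bookkeeping: checking that the hypotheses of the variance formula genuinely hold in our setting --- notably that $O \in \mathrm{i}\mathfrak{g}$ and that the number of simple ideals stays bounded (here a constant) --- and tracking normalization conventions so that the purities of $\rho$ and $O$ are controlled by $1$ and $2^n$ rather than something larger. The matching upper bound $\dim\mathfrak{g}_j = O(4^n)$ needed for the $\Theta$ (not merely $\Omega$) in Item~2 is automatic, since each admissible simple ideal embeds in $\mathfrak{su}(2^n)$.
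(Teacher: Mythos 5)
Your proposal is correct and follows essentially the same route as the paper: a measure-zero/union-bound argument for the hypotheses of \Cref{thm:weighted}, the classification of \Cref{thm:zygu} for the structure and $\Theta(4^n)$ dimension, and the variance formula \cref{eq:VQA-var} with $\mcP_{\.g_j}(\rho)\le 1$ and $\mcP_{\.g_j}(O)\le 2^n$. The only nitpick is that for weighted graphs the paper normalizes $O = H_p/\sqrt{\sum_{(u,v)\in E} r_{uv}^2}$ rather than $H_p/\sqrt{|E|}$, but this still gives $\tr(O^2)=2^n$, so your bound is unaffected.
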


% \ruic{To remove:}
% We obtain the following result.
% \begin{restatable}{corollary}{corrandomweights}
%   Let $G = (V, E, \bm{r})$ be a graph on $n$ vertices, where each connected component has size $\ge 3$, with weights drawn from a continuous distribution.
%   Then, $\mathfrak{g}=\mathfrak{g}_{\rm{ma}}$ with probability 1.
% \end{restatable}

For unweighted graphs --- corresponding to the standard QAOA-MaxCut setting --- we prove that, asymptotically, all but an exponentially small fraction of graphs have exponentially large DLA dimension, confirming the numerical observations of \cite{larocca2022diagnosing}.
Combined with the fact that such random graphs are connected with high probability, this again implies that $\mathfrak{g}$ has exponentially large components, and the corresponding QAOA-MaxCut algorithms will be susceptible to barren plateaus.
More precisely, we consider Erd\H{o}s-R\'enyi random graphs $G\sim G(n,p)$, where graphs on $n$ vertices are constructed by adding each possible edge between pairs of vertices independently with probability $p$.
When $p=1/2$ we obtain the following result.

% \begin{restatable}{theorem}{thmunweighted}\label{thm:DLA-ER-main}
%   Let $G\sim G(n,\frac{1}{2})$ be an Erd\H{o}s-R\'enyi random graph.
%   Then $\mathfrak{g}=\mathfrak{g}_{\rm{ma}}$ with probability at least $1-\exp(-\Omega(n))$.
% \end{restatable}

% For any $n\ge 7$, the DLA of QAOA-MaxCut on $G\sim G(n,\frac{1}{2})$ is free with probability at least $1-\exp(-\Theta(n))$.thm:DLA-ER-main
%\syz{About the condition $n\ge 7$: (1) Since the $\Theta$ notation already means asymptotical behavior, do we still need the condition? (2) If we actually have an absolute constant in $\Theta$ to make the condition meaningful, then we can also compute the probability by exhaustive search for $n\le 6$. The numerical results in the website Section \ref{sec:numerical-results} has computed $\.g$ and $\gma$ for all graphs up to isomorphism, and $Pr[\.g = \gma] = 2^{-\binom{n}{2}}\sum_{G\in IsoClass} |orbit(G)| $, where the summation is over all nonisomorphic $n$-node graphs $G$ with $\.g = \gma$, and $|orbit(G)|$ is number of graphs isomorphic to $G$.}
%\ruic{For $1 < n < 7$, the probability is 0}
%\syz{Then do we want to also state this fact and add a proof? Even if we don't, we may still want to remove the $n\ge 7$ condition as it looks redundant considering the $\Theta$ notation.}
%\ruic{The $n \ge 7$ condition is removed}
% Combined with the fact that such random graphs are connected with high probability, this theorem again implies that $\mathfrak{g}$ has exponentially large components, and the corresponding QAOA-MaxCut algorithms will be susceptible to barren plateaus.
\begin{restatable}{theorem}{thmunweightedBP}\label{thm:DLA-ER-main}
  %\ruic{Corollary or theorem?}
  Let $G\sim G(n,\frac{1}{2})$ be an Erd\H{o}s-R\'enyi random graph.
  Then with probability at least $1-\exp(-\Omega(n))$, the following statements hold for QAOA-MaxCut.
  \CorBP
  The same statements hold if we draw $G$ uniformly at random from all non-isomorphic graphs on $n$ vertices.
\end{restatable}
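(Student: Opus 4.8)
I would reduce the three conclusions, in order, to: (i) a purely graph-theoretic \emph{freeness condition} $\mathcal C$ guaranteeing $\mathfrak g=\mathfrak g_{\rm{ma}}$; (ii) the classification of multi-angle QAOA-MaxCut DLAs of \cite{kazi2024analyzing,kokcu2024classification}; and (iii) the DLA--variance correspondence of \cite{ragone2024lie,fontana2024characterizing}. All the analytic content then concentrates in a single probabilistic statement --- that $G\sim G(n,\tfrac12)$ satisfies $\mathcal C$ with probability $1-\exp(-\Omega(n))$. One should note at the outset that \Cref{thm:weighted} is of no use here: for the all-ones weight vector the sums $\sum_{w\in\mathcal N(u)}s_u(w)$ are integers of parity fixed by $\deg(u)$, so its key inequality almost never holds. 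The unweighted condition $\mathcal C$ must instead be extracted from the second-order (splitting-lemma) machinery, which is precisely why that machinery is developed.

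For the core estimate the exact shape of $\mathcal C$ matters. It cannot be a naive ``all signed neighborhood sums are distinct'' requirement, since that already fails whp --- in $G(n,\tfrac12)$ two vertices have equal degree with probability $\Theta(n^{-1/2})$ --- so $\mathcal C$ has to be a more robust irregularity/genericity condition that rules out the handful of symmetric or degenerate local configurations obstructing the isolation of individual $Z_uZ_v$ and $X_u$ terms by iterated brackets. Granting such a $\mathcal C$, the plan is to express $\neg\mathcal C$ as a $\mathrm{poly}(n)$-size union of ``forbidden configuration'' events, each a statement about $\Omega(n)$ mutually independent edge indicators, and to bound each such event by $\exp(-\Omega(n))$ --- by direct concentration when a deviation rather than a point collision is ruled out, or by a Littlewood--Offord anti-concentration bound after collapsing the exponentially many sign choices into $\mathrm{poly}(n)$ equivalence classes. \emph{This is the main obstacle}: one must pin down a $\mathcal C$ that is at once sufficient for freeness, compatible with the multi-angle classification, and violated by $G(n,\tfrac12)$ only on an event of exponentially small probability --- which in particular forbids any union bound over the $2^{\Theta(n)}$ sign functions or over anything finer than $\mathrm{poly}(n)$ configurations. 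Disconnectedness, and being a path or a cycle, each have probability $\exp(-\Omega(n))$ and are discarded immediately; this establishes conclusion~1.

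Given $\mathfrak g=\mathfrak g_{\rm{ma}}$, conclusion~2 follows from the multi-angle classification: a connected non-bipartite graph that avoids the finitely many exceptional families appearing there has $\mathfrak g_{\rm{ma}}$ equal to a single simple Lie algebra, or a direct sum of two, each of dimension $\Theta(4^n)$; and $G\sim G(n,\tfrac12)$ meets every one of these requirements except with probability $\exp(-\Omega(n))$ --- in particular it contains a triangle, hence is non-bipartite, and is neither a path nor a cycle, outside an event of probability $\exp(-\Omega(n))$. Conclusion~3 is then the standard consequence of conclusion~2 via \cite{ragone2024lie,fontana2024characterizing} (specialized to this family of DLAs earlier in the paper): since $O\propto H_p\in\mathrm i\mathfrak g$, the loss variance equals a sum of at most two terms of the form $\mathcal P_{\mathfrak g_j}(\rho)\,\mathcal P_{\mathfrak g_j}(O)/\dim\mathfrak g_j$, and with $\|\rho\|_F^2=1$, $\|O\|_F^2=2^n$, and $\dim\mathfrak g_j=\Theta(4^n)$ this is $O(1/2^n)$ --- a barren plateau.

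Finally, to transfer everything to a uniformly random non-isomorphic graph I would use that conclusions~1--3 are isomorphism-invariant, together with the classical Erd\H{o}s--R\'enyi fact that all but an $\exp(-\Omega(n))$ fraction of the $g_n=(1+o(1))\,2^{\binom{n}{2}}/n!$ isomorphism classes on $n$ vertices are asymmetric: the bad asymmetric classes number at most $(\#\{\text{bad labeled graphs}\})/n!\le\exp(-\Omega(n))\,2^{\binom{n}{2}}/n!=\exp(-\Omega(n))\,g_n$, while the symmetric classes together contribute only $\exp(-\Omega(n))\,g_n$, so the bad fraction among all isomorphism classes is again $\exp(-\Omega(n))$.
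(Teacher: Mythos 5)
Your reductions are sound and match the paper's: conclusion 1 is the only real content, conclusions 2 and 3 follow from \Cref{thm:zygu} and \cref{eq:VQA-var} after discarding the exponentially unlikely events that $G$ is disconnected, a path, or a cycle, and the transfer to uniformly random non-isomorphic graphs via asymmetry of random graphs is exactly the paper's argument. Your observation that \Cref{thm:weighted} is useless in the unweighted case (the signed sums are integers of fixed parity) is also correct and is why the paper develops the splitting lemmas. But the proposal has a genuine gap precisely where you flag ``the main obstacle'': you never exhibit the freeness condition $\mathcal C$, and the strategy you sketch for bounding $\Pr[\neg\mathcal C]$ --- a $\mathrm{poly}(n)$-size union of local ``forbidden configurations'' over independent edge indicators, handled by concentration or Littlewood--Offord anti-concentration --- does not match the structure of the condition that actually works and would not obviously succeed. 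Littlewood--Offord is the right tool for the \emph{weighted} spectral-separation condition; in the unweighted setting the relevant quantities are degree \emph{parities}, which are exact $\mathbb F_2$-linear functionals of the edge indicators, not signed sums admitting anti-concentration.

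The paper's condition $\mathcal C$ is not a local configuration condition at all but a global recursive one: the iterated degree-parity refinement (\Cref{alg:vsplit-int,alg:vsplit-ext}, justified by \Cref{lem:vsplit-int,lem:vsplit-ext}) must terminate with every vertex in its own block. Its analysis requires three ingredients you do not supply: (i) a conditional-independence lemma (\Cref{lem:subgraph-independ}) showing that, conditioned on the odd-degree set $V_o=S$, the induced subgraphs $G[S]$ and $G[V\setminus S]$ are again independent ER graphs --- proved by $\mathbb F_2$-linear algebra on the degree-parity map using a spanning tree (\Cref{lem:eey0}); (ii) a union bound over the $O(4^k)$ binary trees describing possible recursion histories (\Cref{lem:unweighted-internal-gives-many}), showing the internal splitting produces $\Omega(n)$ blocks except with probability $\exp(-\Omega(n))$; and (iii) a final external-splitting step (\Cref{lem:split-twice}) separating any two surviving vertices via $\Omega(n)$ independent parity bits. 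The recursion histories are exponentially many, so your prescription that one must avoid union bounds over more than $\mathrm{poly}(n)$ events is itself an obstruction to your plan; the paper evades it by the tree-counting argument combined with the conditional independence of the splits. As it stands the proposal correctly frames the problem and settles the easy parts, but the central estimate is left open and the proposed route to it points in the wrong direction.
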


In addition, we give explicit constructions for families of graphs whose DLAs are free, as well as constructive methods for embedding graphs with free DLAs in larger graphs such that the DLAs of the larger graphs are also free.
In particular, we consider so-called asymmetric subdivisions of odd graphs, extended ladder graphs, and grid+1 graphs (precise definitions given later) and show that their corresponding DLAs are all free.

\begin{theorem}[\Cref{thm:5xv6}, \Cref{prop:free-ladder} and \Cref{prop:grid}, restated]\label{thm:families}
  If $G=(V,E)$ is (i) an asymmetrically subdivided odd graph, (ii) an extended ladder graph, or (iii) a grid+1 graph, then $\mathfrak{g} = \mathfrak{g}_{\rm{ma}}$.
\end{theorem}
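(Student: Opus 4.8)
Since $\mathfrak{g}\subseteq\mathfrak{g}_{\rm{ma}}$ always holds, and $\mathfrak{g}_{\rm{ma}}$ is generated by the single Pauli strings $\{X_u : u\in V\}\cup\{Z_uZ_v:(u,v)\in E\}$, proving $\mathfrak{g}=\mathfrak{g}_{\rm{ma}}$ amounts to showing that each of these strings already lies in $\mathfrak{g}$, i.e.\ that the sums $H_m=\sum_u X_u$ and $H_p=\sum_{(u,v)\in E}Z_uZ_v$ can be resolved into their summands by iterated commutators. A clean necessary condition for this is that $G$ be \emph{asymmetric}: any automorphism $\sigma$ of $G$ induces a Lie-algebra automorphism (conjugation by the qubit-permutation unitary $U_\sigma$) that fixes $H_m$ and $H_p$, hence fixes every element of $\mathfrak{g}$, whereas it sends $X_u\mapsto X_{\sigma(u)}$; so $\sigma(u)\neq u$ forces $X_u\notin\mathfrak{g}$. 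The three families are engineered precisely so as to destroy the large automorphism groups of the underlying odd graph, ladder and grid — via the asymmetric subdivisions, the ladder ``extension'', and the extra ``$+1$'' vertex respectively — and the content of the theorem is that for these families asymmetry is in fact \emph{enough}, which is forced through using the paper's splitting and DLA-freeness lemmas.

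The engine I would use is a two-part reduction to graph combinatorics. First, invoke the DLA-freeness conditions to translate statements like ``$X_u\in\mathfrak{g}$'' and ``$Z_uZ_v\in\mathfrak{g}$'' into conditions on the bounded-radius neighbourhood structure of $u$ and $(u,v)$ — a Weisfeiler--Leman-type refinement of the signed-degree data already appearing in the weighted criterion \eqref{eq:weight-constraint} of \Cref{thm:weighted}. This gives a \emph{seed lemma}: a vertex or edge carrying a locally unique combinatorial signature has its Pauli string projectable out of $H_m$ (resp.\ $H_p$) by a polynomial-length commutator sequence. Second, a \emph{propagation/embedding lemma}: once one generator has been isolated, freeness of the remaining graph (after an appropriate splitting operation) lifts to freeness of the whole, so freeness spreads from a seed outward.

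For each family I would then (i) exhibit an explicit small sub-configuration where the seed lemma fires directly, and (ii) propagate. For an asymmetrically subdivided odd graph, the pairwise-distinct lengths of the subdivided paths make every branch vertex and every path-internal vertex combinatorially distinguishable (residual degeneracies being ruled out by the parity/structural hypotheses defining the family), and one seeds from, say, a shortest subdivided path. For an extended ladder graph, the extension attached at one end breaks both the rail-swap and the end-swap symmetry, giving a distinguished corner to seed from; for a grid$+1$ graph, the attached extra vertex is the seed. In each case the base case is a finite commutator computation, and propagation peels off one vertex, rung, or column at a time, using the chain/ladder/grid adjacency pattern together with the broken symmetry to reestablish the seed hypothesis after each move.

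\textbf{Main obstacle.} The hard part is the propagation step rather than any single commutator computation. Asymmetry of $G$ is a global property, whereas the splitting lemmas only see bounded-radius neighbourhoods; one must therefore argue that the symmetry broken near the seed is actually ``felt'' at every vertex and edge, and in particular that the induction never stalls at a sub-configuration that is locally symmetric even though $G$ is globally asymmetric. Making this work requires choosing, for each family, the right order in which to peel off vertices and the right splitting at each step so that the locally-unique-signature hypothesis holds again after every move — a careful, family-specific induction that I expect to be the technical heart of the proof.
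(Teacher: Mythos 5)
Your high-level framing is right in several respects: the goal is indeed to show that every $X_u$ (and then every $Z_uZ_v$) individually lies in $\mathrm{i}\mathfrak{g}$, triviality of $\mathrm{Aut}(G)$ is indeed the necessary condition (the paper's \Cref{cor:autg-trivial}), and the paper does reduce the Lie-algebraic question to a combinatorial refinement process on the graph. However, there is a genuine gap in your proposed ``seed lemma.'' The paper's splitting machinery (\Cref{lem:vsplit-int,lem:vsplit-ext}) can only separate two vertices of a set $S$ when they have \emph{different parities} of neighbourhood counts into some already-isolated set $T$; this is a strictly coarser invariant than a Weisfeiler--Leman colouring or a ``locally unique combinatorial signature.'' A vertex can be WL-distinguishable from all others and still be inseparable by parity refinement, and conversely freeness can hold even when the refinement stalls: the paper exhibits a $7$-vertex free graph that the splitting algorithm fails to split (\Cref{fig:hsvc}), and notes that all $8$ asymmetric connected graphs on $6$ vertices have \emph{non}-free DLAs. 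So ``asymmetry plus local distinguishability $\Rightarrow$ projectable'' is false in both directions, and your seed lemma cannot be taken as given; the only valid separation criterion available is degree parity under recursive refinement.

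The second gap is that the propagation step — which you correctly identify as the technical heart — is exactly what is missing, and it is not a routine induction. For asymmetrically subdivided odd graphs the paper first applies the internal split globally ($V_o$ = the branch vertices of the odd graph, $V_e$ = a disjoint union of paths of \emph{distinct} lengths), then runs a dedicated induction (\Cref{lem:0wzl}) that repeatedly trims the endpoints of all nontrivial paths simultaneously — the distinct lengths guarantee the trimmed family is again ``either trivial or distinct'' — and finally propagates $X_{u(i)}$ and $Z_{u(i)}Z_{u(i+1)}$ along each subdivided path via explicit commutator identities (\Cref{lem:bck2}). This is a global, parity-driven peeling from the path \emph{ends} inward, not a seeding from a shortest path outward. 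For extended ladders the paper does not propagate within a fixed graph at all: it proves freeness by induction on $n$ using an \emph{extension} lemma (\Cref{prop:extend-even-odd}) that attaches a new rung to an already-free ladder, with the base cases $G_{3,1},G_{3,2}$ checked by direct computation. For grid$+1$ the odd-degree subgraph is a spider plus paths of distinct lengths, and isolation then sweeps column by column via external splits. Without supplying these family-specific arguments (or a correct substitute for the parity-based separation criterion), the proposal does not constitute a proof.
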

As a consequence, it follows that for every $n \ge 7$, there exists at least one $n$-vertex graph whose DLA is free.

% \begin{restatable}{theorem}{SubdivThm}\label{thm:5xv6}
%   If $G=(V,E)$ is an asymmetric subdivision of an odd graph, then $\mathfrak{g} = \mathfrak{g}_{\rm{ma}}$.
% \end{restatable}

% \ruic{I'd like to stress the ``ubiquity'' of asymmetric subdivision of an odd graph: any graph can be transformed (as an equivalent MaxCut instance) to such graph}

We further give a polynomial-time reduction from any connected unweighted graph to an {asymmetrically subdivided odd graph} which is an equivalent MaxCut instance.
The freeness of the corresponding DLA shows that MaxCut instances with exponentially large DLAs are ubiquitous, and include those that are efficiently solvable by classical computers.
% \syz{The point of this sounds a bit weird. Having a large DLA toughly means hardness to train. So we are saying that for any possibly easy problem instance, one can construct a hard one solving which can solve the easy one. Usually, the interesting direction is the other one: reducing a seemingly hard one to an easy one to solve. So maybe we consider not to mention this result in Introduction (and abstract)?}
% \ruic{I removed the word ``reduction'' to avoid confusion, it should be instead a construction of free graphs. I really like this construction because of its flexibility: not only can it be constructed from any connected graph, but it is also semantically equivalent as a MaxCut instance. I want to include this construction here and in the abstract because it shows that: not only most graphs are free, but also the free graphs can be easily found in any complexity classes up to an polynomial transformation.}

\begin{restatable}{theorem}{thmpolytimereduction}\label{thm:polytimered}
  For any connected graph $G=(V,E)$, there is an asymmetrically subdivided odd graph $G'=(V',E')$, with $\abs{V'}, \abs{E'} \in O(|E|^2)$, such that
  \begin{enumerate}
    \item
          $\mathrm{MaxCut}(G')-\mathrm{MaxCut}(G)=|V'| - |V|$,  and there is an efficient transformation between the MaxCut solutions of $G$ and $G'$.
    \item
          $\mathfrak{g}' = \mathfrak{g}_{\rm{ma}}'$, where $\mathfrak{g}'$ is the DLA corresponding to $G'$.
  \end{enumerate}
\end{restatable}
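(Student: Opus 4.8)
The plan is to realise $G'$ as an explicit \emph{odd-length, pairwise-distinct-length} subdivision of (a mild preprocessing of) $G$, so that $G'$ is an asymmetrically subdivided odd graph by construction; the DLA claim (item~2) is then immediate from \Cref{thm:families}(i) (equivalently \Cref{thm:5xv6}), while the MaxCut claim (item~1) and the size bound reduce to elementary bookkeeping. Concretely, I would fix an ordering $e_1,\dots,e_m$ of $E$ ($m=\abs{E}$) and replace each $e_i=(u,v)$ by a path of odd length $\ell_i:=2i+1$, inserting $2i$ fresh internal vertices. If the defining condition of an \emph{odd graph} is a property of the base graph (e.g.\ that every vertex have odd degree), or if $G$ happens to be one of the excluded path/cycle shapes, I would first attach one pendant vertex to each offending vertex: this flips the relevant degree parities, turns a path or cycle into a graph that is neither, keeps $G$ connected, adds only $O(\abs{V})$ vertices and edges, and shifts $\mathrm{MaxCut}$ by exactly the number of attached pendants (each pendant edge being independently cuttable); then subdivide the result as above and invoke \Cref{thm:5xv6} to conclude $\mathfrak{g}' = \mathfrak{g}_{\rm{ma}}'$.

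For the MaxCut identity the key observation is that a path of \emph{odd} length $\ell$ joining $u$ and $v$ has maximum cut $\ell$ when a bipartition separates $u$ and $v$, and $\ell-1$ when it does not --- in either case exactly $\ell-1$ more than the single edge $(u,v)$ would contribute. Consequently, for \emph{every} bipartition, restricting to the original vertex set loses exactly $\sum_i(\ell_i-1)$, and greedily two-colouring the inserted paths extends any bipartition of $G$ with a gain of exactly $\sum_i(\ell_i-1)$; since $\sum_i(\ell_i-1)$ is precisely the number of inserted vertices (a pendant gadget, if used, contributing its own vertex count in the same way), this yields $\mathrm{MaxCut}(G')-\mathrm{MaxCut}(G)=\abs{V'}-\abs{V}$, with the restriction and extension maps furnishing the claimed efficient translation of optimal solutions. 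For the sizes, $\abs{V'}-\abs{V}=\sum_{i=1}^m 2i=m(m+1)$ and $\abs{E'}=\sum_{i=1}^m(2i+1)=m^2+2m$, both $O(\abs{E}^2)$; using \emph{distinct} lengths forces some $\ell_i\ge 2m-1$, so a quadratic blow-up is intrinsic to this approach and matches the stated bound.

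The real mathematical content --- freeness of the DLA of an asymmetrically subdivided odd graph --- is entirely carried by \Cref{thm:5xv6}, so the work here is just in the packaging, and I expect the only genuine friction to be matching its hypotheses exactly: (i) the subdivision lengths must be \emph{odd}, which is precisely what keeps ``$u,v$ separated'' the favourable configuration and forces the additive MaxCut constant to equal $\abs{V'}-\abs{V}$ rather than some other quantity; (ii) the subdivision must be ``asymmetric'' in the precise sense the classification requires --- distinctness of the $\ell_i$ should suffice, since an automorphism of $G'$ must then fix each inserted path setwise and, using connectivity together with the presence of two edges meeting at a common vertex, is forced to be the identity; and (iii) the preprocessed base graph must satisfy the definition of an odd graph and avoid the path/cycle exception, which is exactly what the pendant step guarantees. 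Everything else is a short arithmetic-series computation together with the one-line path-cut estimate above.
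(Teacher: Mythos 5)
Your proposal is correct and follows essentially the same route as the paper: subdivide edge $e_i$ into a path with $2i$ internal vertices (odd length $2i+1$, all lengths distinct), attach a pendant to each even-degree vertex beforehand so the base graph becomes an odd graph, invoke \Cref{thm:5xv6} for freeness, and do the same path-by-path MaxCut bookkeeping showing each subdivided path gains exactly its number of internal vertices. The one case your pendant fix does not reach is $G=K_2$: both vertices already have odd degree so nothing gets attached, yet $K_2$ is not an odd graph (fewer than 3 vertices); the paper handles this degenerate case separately by replacing it with three paths of lengths 2, 3, 4 glued at a common endpoint.
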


The proof of \Cref{thm:DLA-ER-main} is based on a new practical algorithm (\Cref{alg:dla}) we propose that computes DLA dimensions orders of magnitude faster than prior methods, reducing computation times from days to seconds on standard hardware.
Applying our algorithm to the MQLib benchmark suite \cite{dunning2018what} reveals that 75\% of instances (ignoring edge weights) have DLA dimension exceeding $2^{128}$, showing that our algorithm is a powerful tool for filtering out intractable instances for QAOA.

\subsection{Techniques}
Our approach is based on analyzing the spectral properties of the operator $f(\cdot)=\frac{1}{4}[H_{p},[H_{p},\cdot]]$, where $[\cdot, \cdot]$ is the Lie bracket associated with the DLA (see \Cref{sec:prelim}), and $H_p = \sum_{(u,v)\in E} r_{uv} Z_u Z_v$ for a graph $G = (V,E,\bm r)$.
In particular, we show that polynomials in $f$, when acting on $H_{m}=\sum_{u \in V}X_{u}$, project out various disjoint linear combinations of the $X_u$, with the linear combinations depending on the spectrum of $f$.

When $G$ is a weighted graph satisfying the conditions in \Cref{thm:weighted}, then the spectrum of $f$ is such that a projector can be formed for each $X_{u}$ separately, and thus $X_{u}\in \mathrm{i}\mathfrak{g}$ for all $u\in V$.
This is a necessary and sufficient condition for $\mathfrak{g}=\mathfrak{g}_{\text{ma}}$.

When $G$ is unweighted, projectors can be constructed onto the even-degree and odd-degree vertices in $G$, i.e., $\sum_{u\in V : \deg(u) \text{ even}}X_{u}\in\mathrm{i}\mathfrak{g}$ and $\sum_{u\in V: \deg(u) \text{ odd}}X_{u} \in \mathrm{i}\mathfrak{g}$.
Furthermore, we show that this construction can be applied recursively to subgraphs induced by the even- or odd-degree vertices of $G$, to successively split linear combinations of $X_{u}$ operators into smaller linear combinations which are also in $\mathrm{i}\mathfrak{g}$.
When no further splittings by this `internal' parity (i.e., considering the degree of vertices within a subgraph) argument are possible, we show that further splitting may be possible by comparing the `external' degree parity of vertices in one subset $S$ with another subset $T$.
These splitting arguments form the basis for an algorithm (\Cref{alg:dla}) which, given an unweighted graph $G$ as input, outputs a list of linear combinations of the $X_u$ that lie in $\mathrm{i}\mathfrak{g}$.
If the algorithm returns $X_{u}\in \mathrm{i}\mathfrak{g}$ for all $u \in V$ then this is sufficient for $\mathfrak{g}=\mathfrak{g}_{\text{ma}}$.
Our proof that random graphs are free, except with exponentially small probability, follows from analyzing the probability that the algorithm terminates with all the $X_u$ split.

\section{Preliminaries}\label{sec:prelim}

\paragraph{Notation}
Let $[n]\triangleq \{1,2,\ldots,n\}$.
For any two sets $T_1$ and $T_2$, define $T_1\backslash T_2\triangleq\{v: v\in T_1, v\notin T_2\}$.
If a set $S$ is partitioned into $S_1, \ldots, S_k$, we denote this by $S = S_1\uplus \cdots \uplus S_k$.
We adopt the common row vector concatenation notation $(x,y)$ for two row vectors $x$ and $y$, and column concatenation notation $(x;y)$ for two column vectors $x$ and $y$.
For any $m$-bit string $x= x_1x_2\ldots x_m\in \F_2^m$, its Hamming weight $|x|$ is the number of $1$'s in $x$, and $x_i$ is the $i$th bit of $x$.
We use $\bm{0}$ to denote the all-zero bit-string.
For any $S\subseteq V$, $\bm 1_S\in \mathbb{F}^{|V|}_2$ is a bit string where the $i$-th bit is $1$ if $i\in S$ and $0$ otherwise.
For any set $T$, $|T|$ denotes its cardinality.
For a set $T$ and a map $f$, $f(T)\triangleq \{f(x):x\in T\}$.
%The \textit{index} of a subgroup $H$ in a group $G$ is the number of left (or right) cosets of $H$ in $G$, denoted by $[G:H]$.
For a vector space $V$, $V^*$ denotes the corresponding dual vector space consisting of all linear forms on $V$.
For a linear form $f: D \to Y$, the {kernel} of $f$ is denoted by $\operatorname{Ker}(f) \triangleq \{ v \in D: f(v) = 0 \}$, and the image of $f$ is denoted by $\operatorname{Im}(f) \triangleq \{f(v)\in Y: v\in D \}$.
%If \jon{$V$ is a vector space over $\F_2$ and $f\in V^*$ then, viewing $V$ as an additive group, } $\operatorname{Ker}(f)$ is a subgroup with index $1$ or $2$. \jon{(jon): is this used anywhere?}
%and we view $V$ as an additive group

Let $P=P_1P_2\cdots P_n \triangleq P_1\otimes P_2\otimes\cdots \otimes P_n$ denote the $n$-qubit Pauli strings, where $P_j\in\{I,X,Y,Z\}$.
Here $I$ is the $2\times 2$ identity operator, and
\begin{equation*}
  X=\left(
  \begin{array}{cc}
    0 & 1 \\
    1 & 0
  \end{array}
  \right),\qquad Y=\left(
  \begin{array}{cc}
    0          & -\mathrm{i} \\
    \mathrm{i} & 0
  \end{array}
  \right),\qquad Z=\left(
  \begin{array}{cc}
    1 & 0  \\
    0 & -1
  \end{array}
  \right)
\end{equation*}
are the single-qubit Pauli operators.
% \pei{\sout{Let $\mathcal{P}_n\triangleq\{I,X,Y,Z\}^{\otimes n}$ denote the set of $n$-qubit Pauli strings.}}
The subscript indicates the qubit that a Pauli or identity operator acts on.
The identity operators in a Pauli string are sometimes omitted, e.g, we write $Z_1I_2I_3Z_4$ as $Z_1Z_4$.

\paragraph{Graphs}
We consider simple and undirected graphs in this paper.
Denote by $G=(V,E)$ an unweighted graph, and by $G=(V,E,\bm r)$ a weighted graph where $\bm r:E\to \mbR$ assigns a real-valued weight $r_{uv}$ to each edge $(u,v)\in E$.
Unweighted graphs can be viewed as weighted graphs with all weights being 1.
For any graph $G=(V,E)$, let $|G|$ denote the number of vertices in $G$, $\mathcal{N}(u)\triangleq\{v\in V: (u,v)\in E\}$ denote the neighbors of $u$ and ${\deg(u)}$ the degree of vertex $u$.
For clarity, we shall sometimes denote by $V(G)$ and $E(G)$ the sets of vertices and edges in $G$, respectively.

For a subgraph $H=(V_H,E_H)$ of $G$, $\mathcal{N}_H(u)$ and $\deg_{H}(u)$ denote the neighbors and degree of $u\in V_H$ in subgraph $H$ respectively.
For a set $T\subseteq V$ and a vertex $v\in V$, define $\mathcal N_T(v)\triangleq \mathcal{N}_G(v)\cap T$.
The automorphism group $\mathrm{Aut}(G)$ of $G=(V,E)$ is defined as $\mathrm{Aut}(G)\triangleq \{\pi\in \mathfrak S_V: (u,v)\in E\text{~iff~}(\pi(u),\pi(v))\in E\}$, where $\mathfrak S_V$ is the permutation group on $V$.
If $\mathrm{Aut}(G)$ consists of only the identity permutation, we say that $\mathrm{Aut}(G)$ is \textit{trivial}.
For any $S,T\subseteq V$, let
\begin{equation*}
  E(S,T) \triangleq \{ (u,v) \in E: u \in S, v \in T\}\quad \text{and} \quad E(S) \triangleq \{ (u,v) \in E: u,v \in S\},
\end{equation*}
and define the \textit{induced subgraphs}
\begin{equation*}
  G[S,T]\triangleq (S\cup T, E(S,T)) \quad \text{and} \quad G[S]\triangleq (S,E(S)).
\end{equation*}
Note that $G[S,T]$ is a bipartite graph.

% \syzs{Let $G=(V,E,\Sigma,\ell)$ be a \textit{labeled graph}, where $V$ is the vertex set, $E$ is the edge set, $\Sigma$ is the label set with size $|V|$ and $\ell:\Sigma \to V$ is a bijective labelling function that assigns labels to vertices.
% When necessary, a graph without labels will be referred to as an \textit{unlabeled graph}.}{}
%\syz{Do we really need this labeled graph later?} \pei{(pei): The labeled graph is only mentioned in the proof \Cref{thm:DLA-ER-main}.} 

%To distinguish, a graph without labels is referred to as an \textit{unlabeled graph} when necessary.
%$\ell$ is the labelling function.The labelling function $\ell: \Sigma \to V$ is a bijection, and it assigns a label to a vertex.

A \emph{path graph} on $n$ vertices (or \emph{$n$-path}) is denoted by $P_n = (V_n, E_n)$, where $V_n \triangleq \{ v_i: 1 \le i \le n \}$ and $E_n \triangleq \{ (v_i, v_{i+1}): 1 \le i < n \}$.
The \emph{length} of a path graph is defined by the number of edges, i.e., $P_n$ has length $n-1$.
We also define an \emph{empty path} to be a null graph, i.e., $n = 0$.

A fundamental model in the study of random graphs is the Erd\H{o}s-R\'enyi model, defined as follows.
\begin{definition}[Erd\H{o}s-R\'enyi (ER) model]~
  \begin{itemize}
    \item
          In the \emph{ER model} $G(n,p)$, a graph $G=(V,E)$ is constructed by connecting $n$ labeled vertices with each possible edge included independently with probability $p$, denoted by $G\sim G(n,p)$.
    \item
          In the \emph{bipartite ER model} $G(n_1,n_2,p)$, a graph $G=(V,E)$ is constructed by connecting edges between two sets of labeled vertices with sizes $n_1$ and $n_2$, and each possible edge is included independently with probability $p$, denoted by $G\sim G(n_1,n_2,p)$.
  \end{itemize}
\end{definition}

\paragraph{Lie Algebras and Dynamical Lie Algebras}
A Lie algebra $\.g$ is a vector space $V$ over a field $\mbF$, together with a bilinear map  $[\cdot,\cdot]:\.g\times \.g \to \.g$, called the \textit{Lie bracket}, satisfying (1) $[A,A] = 0$ for all $A\in \.g$, and (2) the Jacobi identity
\begin{align*}
  [A,[B,C]] + [B,[C,A]] + [C,[A,B]] = 0, \quad \forall A,B,C\in \.g.
\end{align*}
In this paper, we focus on matrix Lie algebras, for which the Lie bracket is defined as the matrix commutator $[A,B]\triangleq AB-BA$.

A \emph{Lie subalgebra} $\.h\subseteq\.g$ of a Lie algebra $\.g$ is a linear subspace of $\.g$ which is closed under the Lie bracket, i.e. $[H_1,H_2]\in \.h$ for all $H_1, H_2\in \.h$.
A Lie algebra $\.g$ is said to be the direct sum of subalgebras $\.g_1, \.g_2, \ldots, \.g_k$, denoted \eq{\.g = \.g_1 \oplus \.g_2 \oplus \ldots \oplus \.g_k,} if it is the direct sum of $\.g_1, \ldots, \.g_k$ as vector spaces, and $[\.g_i, \.g_j] = 0$ for all $i\neq j$.
An \textit{ideal} of a Lie algebra $\.g$ is a Lie subalgebra $\.i$ of $\.g$ satisfying the further requirement that $[A,B]\in \.i$ for all $A\in \.g$ and $B\in\.i$.
An ideal $\.i$ of $\.g$ is \textit{trivial} if $\.i = \.g$ or $\.i = \{0\}$.
A Lie algebra $\mathfrak{g}$ is \emph{abelian} if $[A,B]=0$ for all $A,B\in \mathfrak{g}$.
A \textit{simple Lie algebra} is a Lie algebra that is non-abelian and contains no nontrivial ideals.
The \textit{center} $\.c$ of $\.g$ is defined as $\.c\triangleq \{x\in \.g : [x,y]=0,~\forall y\in \.g\}$.
For Lie algebras $\mathfrak{g}_1$ and $\mathfrak{g}_2$, if there exists a bijective linear map $\phi: \mathfrak{g}_1\to \mathfrak{g}_2$ such that $\phi([X,Y])=[\phi(X),\phi(Y)]$ for any $X,Y \in \mathfrak{g}_1$, then $\mathfrak{g}_1$ and $\mathfrak{g}_2$ are \textit{isomorphic}, denoted by $\mathfrak{g}_1 \cong\mathfrak{g}_2$.

A Lie algebra $\mathfrak{g}$ is \emph{semisimple} if it decomposes as a direct sum of simple components, i.e., $\mathfrak{g}= \mathfrak{g}_1 \oplus \ldots \oplus \mathfrak{g}_k$, where the $\mathfrak{g}_i$ are simple.
If $\mathfrak{g}$ is \emph{semisimple} then $\mathfrak{g}=[\mathfrak{g},\mathfrak{g}]$, where the commutator ideal $[\mathfrak{g},\mathfrak{g}]\triangleq \operatorname{span}_\mathbb{F}\{[A,B] : A,B\in\mathfrak{g}\}$.

We list some matrix Lie algebras over $\mathbb{R}$ used in this paper.
\begin{enumerate}
  \item
        $\.{so}(n) \triangleq \{H\in \mbR^{n\times n}: H^\top  = -H\}$, which has dimension $n(n-1)/2$.
  \item
        $\.{sp}(n) \triangleq \{H\in \mbC^{2n\times 2n}: \Omega H + H^\top \Omega = 0\}$ for $\Omega \triangleq
          \begin{bmatrix}
            0    & I_n \\
            -I_n & 0
          \end{bmatrix}
        $, which has dimension $n(2n+1)$.
  \item
        $\.{su}(n) \triangleq \{H\in \mbC^{n\times n}: H^\dagger = -H, \tr(H) = 0\}$, which has dimension $n^2-1$.
\end{enumerate}
Note that $\.{su}(n)$ ($n\ge 2$), $\.{so}(n)$ ($n=3,5,\ge 7$) and $\.{sp}(n)$ $(n\ge 1)$ are simple Lie algebras.

An $L$-layer parameterized quantum circuit (PQC) of a variational quantum algorithm (VQA) acting on $n$ qubits corresponds to a unitary operator
\begin{equation}\label{eq:PQC}
  U(\bm \theta) = \prod_{l=1}^L \prod_{k=1}^{K} e^{\mathrm{i}\theta_{l,k}H_k},
\end{equation}
where $\bm\theta = (\theta_{l,k}) \in \mathbb R^{LK}$ and $H_1,H_2,\ldots,H_K$ are Hermitian operators.
The corresponding dynamical Lie algebra is defined as follows.

\begin{definition}[Dynamical Lie Algebra]\label{def:dla}
  Let $U$ be a layered parameterized quantum circuit with a set $\+A=\{H_1, \ldots, H_K\}$ of Hermitian operators as in \cref{eq:PQC}.
  The circuit is associated with a Lie algebra generated by $\+A$ as follows,
  \begin{equation}
    \.g =\langle \mathrm{i}\+A\rangle_{\text{Lie}, \mbR}=\operatorname{span}_{\mathbb{R}}\{[\mathrm{i} H_{j_t},[\cdots,[\mathrm{i}H_{j_3},[\mathrm{i}H_{j_2},\mathrm{i}H_{j_1}]]\cdots]]:~ j_1,\ldots,j_t\in [K], t\ge 1\},
  \end{equation}
  namely the linear space spanned by all possible nested commutators of elements in $\mathrm{i}\+A$.
  This Lie algebra is called the dynamical Lie algebra of the circuit $U$, and $H_1, \ldots, H_K$ are called the generators of $\.g$.
\end{definition}

\paragraph{DLAs of QAOA-MaxCut}
QAOA \cite{farhi2014quantum} is a quantum algorithm for solving combinatorial optimization problems.
A widely studied version of QAOA solves
the MaxCut problem on a weighted graph $G=(V,E,\bm{r})$, and begins by initializing an $n$-qubit state $\rho=(\ket{+}\bra{+})^{\otimes n} $ and alternatively applying the mixer and problem operators $U_m$ and $U_p$
\begin{equation*}
  U_m(\beta)= e^{-\mathrm{i}\beta \sum_{u\in V}X_u}, \quad U_p(\gamma) = e^{-\mathrm{i}\gamma \sum_{(u,v)\in E} r_{uv} Z_u Z_v}.
\end{equation*}
%Recall that $\bm{r}\triangleq\{r_{uv}: (u,v)\in E\}$ denotes the set of weights on edges. 
After the circuit execution, one measures the output state in the computational basis and computes the MaxCut value based on the measurement results.
The measurement operator $O$ is taken to be $O=\frac{1}{\sqrt{\sum_{(u,v)\in E}r_{uv}^2} }H_p$, normalized so that, as in \cite{ragone2024lie}, $\|O\|_2^2 \le 2^n$.
For unweighted graphs, the measurement is $O=\frac{1}{\sqrt{|E|}}H_p$.
Note that, in both cases, $O\in \mathrm i\.g$.

% As in \cite{ragone2024lie}, we normalize the measurement operator $O$ so that $\|O\|_2^2 \le 2^n$. For QAOA-MaxCut, the measurement is normalized as $O=\frac{1}{\sqrt{\sum_{(u,v)\in E}r_{uv}^2} }H_p$. Note that $O\in \mathrm i\.g$ and $\|O\|_2^2 \le 2^n$. For unweighted graphs, the measurement is $O=\frac{1}{\sqrt{|E|}}H_p$.
%\syzs{When $\bm{r}\equiv 1$, $G=(V,E,\bm{r})$ is an unweighted graph, denoted by $G=(V,E)$ for brevity.}{}
%Then we define the corresponding DLA as follows.

\begin{definition}[DLA of QAOA-MaxCut]
  Let $G = (V, E, \bm{r})$ be a weighted graph.
  The QAOA generators for this graph are
  \begin{equation*}
    H_p \triangleq \sum_{(u, v) \in E} r_{uv} Z_{u} Z_{v}, \quad H_m \triangleq \sum_{u \in V} X_{u}.
  \end{equation*}
  The DLA (also called \textit{standard} DLA) is defined by
  \begin{equation}\label{eq:weighted-DLA}
    \mathfrak{g} = \LieClosure{\{ \mathrm{i} H_p, \mathrm{i} H_m \}}.
  \end{equation}
  The DLAs for the QAOA on unweighted graphs are defined in the same way, with all $r_{uv}$ set to 1, denoted by $\bm{r}\equiv 1$.
\end{definition}

A variant of QAOA, called multi-angle QAOA, enhances flexibility by assigning distinct parameters to each term in the problem and mixer Hamiltonians, rather than using a single angle for the entire problem Hamiltonian and a single angle for the mixer Hamiltonian.
For an unweighted graph $G=(V,E)$, the mixer and problem operators $U_m,$ $U_p$ of the multi-angle QAOA for the MaxCut problem are defined as
\begin{equation*}
  U_m(\bm{\beta})= e^{-\mathrm{i} \sum_{u\in V}\beta_u X_u}, \quad U_p(\bm{\gamma}) = e^{-\mathrm{i}\sum_{(u,v)\in E} \gamma_{uv} Z_u Z_v},
\end{equation*}
where $\bm{\beta}=(\beta_v)_{v\in V}\in\mathbb{R}^{|V|}$ and $\bm{\gamma}=(\gamma_{uv})_{(u,v)\in E} \in \mathbb{R}^{|E|}$.  %\syz{I think the standard measurement is $ (|E|-H_p)/2$. Please give proper normalization factor here to make $\|O\|_2^2 \le 2^n$. We then need to modify the corresponding parts in Introduction.}
%The corresponding DLA is defined as follows.

\begin{definition}[DLA of multi-angle QAOA-MaxCut]
  Let $G = (V, E)$ be a simple and unweighted graph.
  The multi-angle QAOA generators for this graph are
  \begin{equation*}
    \{\mathrm{i} X_u, ~ \mathrm{i} Z_u Z_v:\ \forall u\in V, ~\forall (u,v)\in E\}.
  \end{equation*}
  The corresponding DLA (also called \textit{free} DLA) is defined as:
  \begin{equation}\label{eq:ma-DLA}
    \mathfrak{g}_{\text{ma}} \triangleq \langle\left\{\mathrm{i} X_u: u \in V\right\} \cup \left\{\mathrm{i} Z_u Z_v: (u, v) \in E\right\}\rangle_{\text{Lie}, \mathbb{R}}.
  \end{equation}
\end{definition}

% \paragraph{Regular \pei{and} multi-angle \pei{(free) \sout{and orbit}} DLAs of QAOA-MaxCut on unweighted graphs}

% \pei{(pei): It seems that the orbit DLAs don't appear in the rest of the paper, so I removed orbit DLAs. }

% For the unweighted graph $G=(V,E)$, the \emph{regular DLA} (or just \emph{DLA}) of the QAOA-MaxCut is defined as:
% \begin{equation}\label{eq:reg-DLA}
%   \mathfrak{g} \triangleq \ev{\{\mathrm{i}  H_p, \mathrm{i}  H_m\}}_{\text{Lie}, \mathbb{R}}.
% \end{equation}

% The \emph{orbit-DLA} of the QAOA-MaxCut is defined in terms of the automorphism group of $G$ as:
% \begin{equation}\label{eq:orbit-DLA}
%   \mathfrak{g}_{\text{ob}}  \triangleq \ev{\{O_V \cup O_E\}}_{\text{Lie}, \mathbb{R}},
% \end{equation}
% where \syz{Doesn't seem to be right.
%   The sum should be over elements in one orbit, and then enumerate all orbits.}
% \begin{align}
%   O_V & \triangleq \left\{\sum_{\pi \in \operatorname{Aut}(G)} \mathrm{i}  X_{\pi(u)}: u \in V\right\},                 \\
%   O_E & \triangleq \left\{\sum_{\pi \in \operatorname{Aut}(G)} \mathrm{i}  Z_{\pi(u)} Z_{\pi(v)}: (u, v) \in E\right\}.
% \end{align}
% The \emph{multi-angle DLA} (or \emph{ma-DLA} or \emph{free-DLA}) of the QAOA-MaxCut is defined as:
% \begin{equation}\label{eq:ma-DLA}
%   \mathfrak{g}_{\text{ma}} \triangleq \langle\left\{\mathrm{i}  X_u: u \in V\right\} \cup \left\{\mathrm{i}  Z_u Z_v: (u, v) \in E\right\}\rangle_{\text{Lie}, \mathbb{R}}.
% \end{equation}
For clarity, we may sometimes denote the graph corresponding to a DLA with a subscript, i.e., we may write $\mathfrak{g}_G$ and $\mathfrak{g}_{G,{\rm ma}}$ instead of simply $\mathfrak{g}$ and $\.g_{\rm ma}$.
% Ref.~\cite{kazi2024analyzing,kokcu2024classification} showed the DLAs of multi-angle QAOA-MaxCut (free DLAs) for connected-unweighted graphs.
Dynamical Lie algebras for multi-angle QAOA-MaxCut on unweighted graphs have been fully classified:
\begin{theorem}[DLAs of multi-angle QAOA-MaxCut, \cite{kazi2024analyzing,kokcu2024classification}]\label{thm:zygu}
  For any unweighted and connected graph $G=(V,E)$ with $n$ vertices, the following statements hold.
  \begin{itemize}
    \item
          If $G$ is a path, then $\.g_{\rm ma}\cong \.{so}(2n)$ with dimension $2n^2-n$.
    \item
          If $G$ is a cycle, then $\.g_{\rm ma}\cong \.{so}(2n)\oplus \.{so}(2n)$ with dimension $4n^2-2n$.
    \item
          If $G$ is a non-bipartite graph and not a cycle, then $\.g_{\rm ma}\cong\.{su}(2^{n-1})\oplus \.{su}(2^{n-1})$ with dimension $2^{2n-1}-2$.
    \item
          If $G=(V_1\uplus V_2,E)$ is a bipartite graph and not a path and a cycle, then
          \begin{itemize}
            \item
                  $\.g_{\rm ma}\cong\.{sp}(2^{n-2})\oplus \.{sp}(2^{n-2})$ with dimension $2^{2n-2}+2^{n-1}$ if $n$ is even and $|V_1|$ is odd;

            \item
                  $\.g_{\rm ma}\cong\.{so}(2^{n-1})\oplus \.{so}(2^{n-1})$ with dimension $2^{2n-2}-2^{n-1}$ if $n$ is even and $|V_1|$ is even;

            \item
                  $\.g_{\rm ma}\cong\.{su}(2^{n-1})$ with dimension $2^{2n-1}-1$ if $n$ is odd.
          \end{itemize}
  \end{itemize}
\end{theorem}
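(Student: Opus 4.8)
The plan is to translate this Lie-algebraic classification into a purely combinatorial question about an auxiliary graph, invoke the known classification of dynamical Lie algebras generated by Pauli strings \cite{aguilar2024full}, and then settle the remaining real-form and block-counting questions by parity bookkeeping on $G$. First I would form the \emph{anticommutation graph} $\Gamma$ of the generating set $\{\mathrm i X_u:u\in V\}\cup\{\mathrm i Z_uZ_v:(u,v)\in E\}$, whose vertices are the generators and whose edges join the pairs of generators that anticommute. Since any two $X$'s commute, any two $ZZ$'s commute, and $X_u$ anticommutes with $Z_vZ_w$ exactly when $u\in\{v,w\}$, the graph $\Gamma$ is precisely the subdivision graph of $G$: bipartite, with the $X_u$ on one side, the $Z_uZ_v$ on the other, and $u$ joined to every edge incident on it. Consequently $\Gamma$ is connected, a path, or a cycle if and only if $G$ is connected, a path, or a cycle, respectively, and $\Gamma$ is acyclic if and only if $G$ is a tree.

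Next I would feed $\Gamma$ into the Pauli-string classification. When $\Gamma$ is a path on $k$ vertices the DLA is a free-fermion algebra isomorphic to $\mathfrak{so}(k+1)$ (the two-vertex case $\mathfrak{su}(2)\cong\mathfrak{so}(3)$ being the seed), so for $G=P_n$, where $k=|V|+|E|=2n-1$, one obtains $\mathfrak{so}(2n)$; when $\Gamma$ is an even cycle on $k$ vertices the product of all generators around the cycle supplies a central element that splits the algebra into two copies of an $\mathfrak{so}(k)$ chain, so for $G=C_n$, where $k=2n$, one obtains $\mathfrak{so}(2n)\oplus\mathfrak{so}(2n)$. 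In the remaining case --- $\Gamma$ connected but neither a path nor a cycle, equivalently $G$ connected and neither a path nor a cycle --- the classification asserts that the DLA, restricted to each isotypic component of $(\mathbb C^2)^{\otimes n}$, is the full special-unitary algebra of that component, cut down to its orthogonal or symplectic subalgebra exactly when the component carries a symmetric or a skew-symmetric invariant bilinear form; what then remains is to count the distinct components and determine the form type on each.

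To do this I would work from the structure of $G$. The matrices $X_u$ and $Z_uZ_v$ are real and symmetric, so the multi-angle DLA carries a $\mathbb Z_2$-grading --- a $d$-fold nested bracket of the generators is purely imaginary symmetric for $d$ odd and real antisymmetric for $d$ even --- and, together with the fact that $\prod_{u\in V}X_u$ commutes with every generator (being a product of $X$-operators it commutes with each $X_u$, and it anticommutes with each $Z_uZ_v$ on the two qubits $u,v$, hence commutes), this already exhibits a nontrivial element of the commutant of $\mathfrak{g}_{\rm ma}$. The remaining bookkeeping is to list all Pauli strings in the commutant --- besides $\prod_{u}X_u$ these arise from products of generators around cycles of $\Gamma$ and, when $G$ is bipartite with parts $V_1,V_2$, from products indexed by a part --- to record whether each squares to the identity or to its negative, and to read off the form type from how the resulting central involution interacts with the grading above. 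Carrying this out case by case gives: $G$ non-bipartite and not a cycle yields two $2^{n-1}$-dimensional components with no invariant form, hence $\mathfrak{su}(2^{n-1})\oplus\mathfrak{su}(2^{n-1})$; $G$ bipartite and neither a path nor a cycle with $n$ odd yields a single $2^{n-1}$-dimensional component with no invariant form, hence $\mathfrak{su}(2^{n-1})$; and $G$ bipartite and neither a path nor a cycle with $n$ even yields two components, carrying a symmetric form when $|V_1|$ is even --- hence $\mathfrak{so}(2^{n-1})\oplus\mathfrak{so}(2^{n-1})$ --- and a skew-symmetric form when $|V_1|$ is odd --- hence $\mathfrak{sp}(2^{n-2})\oplus\mathfrak{sp}(2^{n-2})$. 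The stated dimensions then follow from $\dim\mathfrak{so}(m)=m(m-1)/2$, $\dim\mathfrak{su}(m)=m^2-1$, and $\dim\mathfrak{sp}(m)=m(2m+1)$.

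The main obstacle is exactly this last step. Reducing to the subdivision graph and quoting the Pauli-string classification is routine, and the path and cycle sub-cases reduce to a short direct computation; but pinning down the commutant finely enough to separate the number of genuine simple summands from mere multiplicities, and fixing the $\mathfrak{so}$-versus-$\mathfrak{su}$-versus-$\mathfrak{sp}$ dichotomy uniformly across the bipartite and non-bipartite cases and the parities of $n$ and $|V_1|$, is delicate --- this is where the interplay between the combinatorics of $G$ and the representation theory of the orthogonal, unitary, and symplectic Lie algebras has to be handled with care.
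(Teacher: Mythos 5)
This statement is \Cref{thm:zygu}, which the paper does not prove at all: it is imported verbatim from \cite{kazi2024analyzing,kokcu2024classification} and used as a black box. So there is no in-paper proof to compare against; what you have written is a sketch of how the cited references themselves establish the classification. Your opening reduction is correct and is indeed their route: the anticommutation (frustration) graph of $\{X_u\}\cup\{Z_uZ_v\}$ is exactly the subdivision graph of $G$, paths and cycles of $G$ map to paths $P_{2n-1}$ and even cycles $C_{2n}$, and the free-fermion identifications $\mathfrak{so}(2n)$ and $\mathfrak{so}(2n)\oplus\mathfrak{so}(2n)$ with the stated dimensions check out.

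The genuine gap is the one you name yourself: everything that distinguishes the four non-path, non-cycle cases is asserted rather than derived. You never actually compute the commutant (for connected $G$ the only nonidentity Pauli string commuting with all generators is $\prod_u X_u$, so the extra splitting in the bipartite case must come from central elements of the \emph{group generated by the generators}, e.g.\ products over cycles of $\Gamma$ and over the parts $V_1,V_2$ --- you allude to these but verify none of their commutation relations or squares), and the assignment of symmetric versus skew-symmetric forms as a function of the parities of $n$ and $|V_1|$ is stated as the conclusion of a ``case by case'' analysis that is not carried out. Since this is precisely the content of the theorem beyond the path/cycle cases, the proposal is an outline with the load-bearing step missing. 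One concrete symptom: you claim ``the stated dimensions then follow,'' but for the bipartite case with $n$ odd your identification $\mathfrak{g}_{\rm ma}\cong\mathfrak{su}(2^{n-1})$ gives dimension $2^{2n-2}-1$, whereas the theorem states $2^{2n-1}-1$; these disagree (the theorem as printed is internally inconsistent in that sub-case --- the dimension matches $\mathfrak{su}(2^{n-1})\oplus\mathfrak{su}(2^{n-1})\oplus\mathfrak{u}(1)$ rather than a single $\mathfrak{su}(2^{n-1})$), and a completed form-type/commutant computation would have been forced to resolve which of the two is correct rather than declare agreement.
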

An explicit basis for $\mathfrak{g}_{\rm ma}$ is given in \cite{kazi2024analyzing}.

Note that, for any unweighted graph $G=(V,E)$ or weighted graph $G=(V,E,\bm{r})$, we have $ \.g\subseteq \.g_{\rm ma}$.
% \begin{equation}
%   \.g\subseteq \.g_{\rm ma}.
% \end{equation}
If $\mathfrak{g}=\mathfrak{g}_{\rm ma}$ for a given $G$, then we will say that the DLA of $G$ is \emph{free}.
%\jon{Free DLAs are \emph{Pauli DLAs}, i.e., where each generator is a tensor product of single qubit Pauli operators.}
\begin{lemma}\label{lem:free-semisimple}
  % \jon{Pauli DLAs are semisimple.} 
  If the DLA of $G$ is free, then $\mathfrak{g}$ is semisimple when $G$ has no isolated vertices.
\end{lemma}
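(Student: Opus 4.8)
The plan is to use freeness to replace $\mathfrak{g}$ by $\mathfrak{g}_{\mathrm{ma}}$ throughout, so that it suffices to prove $\mathfrak{g}_{\mathrm{ma}}$ is semisimple, and to deduce this from two facts: \emph{(i)} $\mathfrak{g}_{\mathrm{ma}}$ is reductive, and \emph{(ii)} $\mathfrak{g}_{\mathrm{ma}}$ is perfect, i.e. $\mathfrak{g}_{\mathrm{ma}}=[\mathfrak{g}_{\mathrm{ma}},\mathfrak{g}_{\mathrm{ma}}]$. A reductive Lie algebra decomposes as $\mathfrak{g}_{\mathrm{ma}}=\mathfrak{c}\oplus\mathfrak{s}$ with $\mathfrak{c}$ its center and $\mathfrak{s}$ semisimple (possibly $\{0\}$), and then $[\mathfrak{g}_{\mathrm{ma}},\mathfrak{g}_{\mathrm{ma}}]=[\mathfrak{s},\mathfrak{s}]=\mathfrak{s}$, so \emph{(i)} and \emph{(ii)} together force $\mathfrak{c}=\{0\}$ and $\mathfrak{g}_{\mathrm{ma}}=\mathfrak{s}$; this is genuinely semisimple since it is nonzero ($G$ has no isolated vertices, hence an edge, so some $\mathrm{i}X_u\neq 0$ lies in $\mathfrak{g}_{\mathrm{ma}}$).

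For \emph{(i)} I would observe that every generator $\mathrm{i}X_u$ and $\mathrm{i}Z_uZ_v$ is traceless and anti-Hermitian, and commutators preserve this, so $\mathfrak{g}_{\mathrm{ma}}\subseteq\mathfrak{su}(2^n)$. The Hilbert--Schmidt form $\langle A,B\rangle\triangleq-\tr(AB)$ is $\operatorname{Ad}$-invariant and positive definite on $\mathfrak{su}(2^n)$, hence restricts to an invariant positive-definite form on $\mathfrak{g}_{\mathrm{ma}}$; a finite-dimensional Lie algebra carrying such a form is reductive, because the orthogonal complement of any ideal is again an ideal, so one can split off simple ideals until only an abelian summand (the center) remains.

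For \emph{(ii)} it is enough to show every generator lies in the ideal $[\mathfrak{g}_{\mathrm{ma}},\mathfrak{g}_{\mathrm{ma}}]$, since then so does the whole Lie algebra they generate, which is $\mathfrak{g}_{\mathrm{ma}}$. Fix an edge $(u,v)$. Using $YZ=\mathrm{i}X$ and the fact that operators on different qubits commute, one checks the identities $[\mathrm{i}X_u,\mathrm{i}Z_uZ_v]=2\mathrm{i}Y_uZ_v$, $[\mathrm{i}Y_uZ_v,\mathrm{i}X_u]=2\mathrm{i}Z_uZ_v$, and $[\mathrm{i}Y_uZ_v,\mathrm{i}Z_uZ_v]=-2\mathrm{i}X_u$. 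The first shows $\mathrm{i}Y_uZ_v\in\mathfrak{g}_{\mathrm{ma}}$; the second then exhibits $\mathrm{i}Z_uZ_v$ as a bracket of two elements of $\mathfrak{g}_{\mathrm{ma}}$, so $\mathrm{i}Z_uZ_v\in[\mathfrak{g}_{\mathrm{ma}},\mathfrak{g}_{\mathrm{ma}}]$ for every edge. For the mixer generators the hypothesis enters: every $u\in V$ has a neighbour $v$, and the third identity then exhibits $\mathrm{i}X_u$ as a bracket of $\mathrm{i}Y_uZ_v$ and $\mathrm{i}Z_uZ_v$, both in $\mathfrak{g}_{\mathrm{ma}}$, so $\mathrm{i}X_u\in[\mathfrak{g}_{\mathrm{ma}},\mathfrak{g}_{\mathrm{ma}}]$ as well. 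Together with \emph{(i)} this completes the proof. There is no serious obstacle here; the only place the stated hypothesis is genuinely needed is in making every $\mathrm{i}X_u$ a commutator, and the only nontrivial ingredient is the reductivity fact in \emph{(i)} --- which is precisely where it matters that $\mathfrak{g}_{\mathrm{ma}}$ lives inside a \emph{compact} Lie algebra, since a perfect Lie algebra need not be semisimple in general.

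A fully self-contained alternative avoids \emph{(i)}: write $G$ as a disjoint union of connected components $G_1,\dots,G_k$, each with at least one edge (no isolated vertices) and hence at least two vertices. Generators from distinct components act on disjoint qubits, so they commute, and operators supported on disjoint vertex sets are linearly independent, giving a Lie-algebra isomorphism $\mathfrak{g}_{\mathrm{ma}}\cong\bigoplus_{i=1}^{k}\mathfrak{g}_{G_i,\mathrm{ma}}$. By \Cref{thm:zygu}, each $\mathfrak{g}_{G_i,\mathrm{ma}}$ is a direct sum of one or two copies of $\mathfrak{so}(m)$, $\mathfrak{su}(m)$ or $\mathfrak{sp}(m)$ for an $m$ determined by $G_i$, and the vertex-count lower bounds the hypothesis provides (in particular $n_i\ge 2$ for every component, ruling out $\mathfrak{so}(2)=\mathfrak{u}(1)$ and $\mathfrak{su}(1)=\{0\}$) make every summand a semisimple Lie algebra. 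A direct sum of semisimple Lie algebras is semisimple, which again yields the claim.
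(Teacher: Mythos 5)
Both of your arguments are correct. Your second, ``self-contained alternative'' is in fact exactly the paper's proof: decompose $G$ into connected components, note that generators on distinct components commute so that $\mathfrak{g}_{\mathrm{ma}}$ is the direct sum of the components' multi-angle DLAs, and invoke \Cref{thm:zygu} to see each summand is semisimple (the no-isolated-vertices hypothesis guaranteeing each component has at least two vertices, so no abelian summand appears). Your primary argument is a genuinely different and arguably cleaner route: it avoids the classification theorem entirely, using only that $\mathfrak{g}_{\mathrm{ma}}\subseteq\mathfrak{su}(2^n)$ carries the $\operatorname{Ad}$-invariant positive-definite Hilbert--Schmidt form (hence is reductive), together with the explicit identities $[\mathrm{i}X_u,\mathrm{i}Z_uZ_v]=2\mathrm{i}Y_uZ_v$, $[\mathrm{i}Y_uZ_v,\mathrm{i}X_u]=2\mathrm{i}Z_uZ_v$, $[\mathrm{i}Y_uZ_v,\mathrm{i}Z_uZ_v]=-2\mathrm{i}X_u$ (all of which check out) to show every generator lies in $[\mathfrak{g}_{\mathrm{ma}},\mathfrak{g}_{\mathrm{ma}}]$, so $\mathfrak{g}_{\mathrm{ma}}$ is perfect and its center vanishes. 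What the paper's approach buys is brevity given that \Cref{thm:zygu} is already quoted; what yours buys is independence from that classification and a transparent identification of where the no-isolated-vertices hypothesis enters (every $\mathrm{i}X_u$ needs a neighbour to be realized as a commutator). Both are valid proofs of the lemma.
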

\begin{proof}
  Every graph $G$ is a disjoint union of connected components $G= G_1 \oplus G_2 \oplus\ldots\oplus G_k$.
  If $\mathfrak{g}=\mathfrak{g}_{\rm{ma}}$, then $\mathfrak{g}$ is generated by $\{\mathrm i X_{u}, \mathrm i Z_uZ_v : u,v\in V\}$.
  Note that the $X_u$ and $Z_uZ_v$ operators corresponding to different connected components commute.
  Thus, if $G$ has no isolated vertices, $\mathfrak{g}=\mathfrak{g}_1\oplus \ldots \oplus \mathfrak{g}_k$, where $\mathfrak{g}_j$ is the DLA generated by the $X_u$ and $Z_uZ_v$ operators in the $j$-th connected component.
  From \Cref{thm:zygu} each of these is semisimple.
\end{proof}
% \pei{(pei): the above fact seems not to be true. For example, the DLA generated by $\{iZ_1Z_2,iZ_2Z_3\}$ is not semisimple. Only when the anti-commutative graph of generators is a connected graph, the above fact holds.}\jon{Yes, good point, in my mind I was assuming the anticommutator graph was connected.  We only need semisimplicity later in Lemma \Cref{lem:extend-dla-by-zz} and the propositions which depend on it.  If $G$ is free then this is sufficient for $\mathfrak{g}_G$ to be semisimple right?} \pei{(pei): yes, you are right.}
For an $n$-qubit Pauli string $P=P_1\ldots P_n$ and permutation $\pi\in\mathfrak{S}_n$, define $\pi(\mathrm i P)\triangleq \mathrm i P_{\pi(1)}\ldots P_{\pi(n)}$, and extend this action by linearity to linear combinations of Pauli strings.
As shown in \cite{allcock2024dynamical}, for any $\pi\in \mathrm{Aut}(G)$, $\pi(H) = H$ for all $H\in\mathfrak{g}$.
This gives the following corollary.
\begin{corollary}\label{cor:autg-trivial}
  If the DLA of $G$ is free, then $\mathrm{Aut}(G)$ is trivial.
\end{corollary}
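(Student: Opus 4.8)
The plan is to obtain this as an immediate consequence of the automorphism-invariance of the DLA quoted just above, namely that for any $\pi\in\mathrm{Aut}(G)$ one has $\pi(H)=H$ for all $H\in\mathfrak{g}$ (this holds because conjugation by the permutation unitary realizing $\pi$ is a Lie-algebra automorphism fixing both generators $H_m=\sum_{u}X_u$ and $H_p=\sum_{(u,v)\in E} r_{uv}Z_uZ_v$, hence fixes the whole generated algebra). Given this, the only work is to combine it with the definition of freeness.

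Concretely, I would assume the DLA of $G$ is free, i.e.\ $\mathfrak{g}=\mathfrak{g}_{\mathrm{ma}}$, and fix an arbitrary $\pi\in\mathrm{Aut}(G)$. For each vertex $u\in V$, the operator $\mathrm{i}X_u$ is one of the generators of $\mathfrak{g}_{\mathrm{ma}}$, so $\mathrm{i}X_u\in\mathfrak{g}_{\mathrm{ma}}=\mathfrak{g}$. Applying the invariance property gives $\pi(\mathrm{i}X_u)=\mathrm{i}X_u$, while by the definition of the action of $\pi$ on Pauli strings we also have $\pi(\mathrm{i}X_u)=\mathrm{i}X_{\pi(u)}$. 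Since distinct single-qubit Pauli strings are linearly independent, the equality $X_{\pi(u)}=X_u$ forces $\pi(u)=u$. As $u$ was arbitrary, $\pi$ is the identity permutation, and therefore $\mathrm{Aut}(G)$ is trivial.

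I do not expect a genuine obstacle here: this is a short corollary of the cited result combined with the definition of $\mathfrak{g}_{\mathrm{ma}}$. The one point deserving a line of care is the linear-independence step — that equality of the operators $X_{\pi(u)}$ and $X_u$ really does force $\pi(u)=u$ rather than being an accidental numerical coincidence — but this is immediate from the fact that the Pauli strings form a basis of the matrix algebra; everything else is routine.
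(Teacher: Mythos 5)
Your proof is correct and follows exactly the route the paper intends: the paper derives this corollary directly from the automorphism-invariance $\pi(H)=H$ for all $H\in\mathfrak{g}$, leaving the remaining step (that freeness puts each $\mathrm{i}X_u$ in $\mathfrak{g}$, so invariance forces $\pi(u)=u$) implicit, and your write-up supplies precisely that step. The only cosmetic remark is that with the paper's convention $\pi(\mathrm{i}P)=\mathrm{i}P_{\pi(1)}\cdots P_{\pi(n)}$ one gets $\pi(X_u)=X_{\pi^{-1}(u)}$ rather than $X_{\pi(u)}$, but this changes nothing in the conclusion.
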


\paragraph{VQAs and DLAs}
Since dynamical Lie algebras are subalgebras of $\.{u}(2^n)$, they can be decomposed as  $\.g=\.g_1\oplus \.g_2\oplus \cdots \oplus \.g_k\oplus \.c$, where $\.g_1,\ldots,\.g_k$ are simple Lie algebras and $\.c$ is the center. If $O\in \mathrm{i}\.g$ or $\rho\in \mathrm{i}\.g$, and the circuit is deep enough to form a unitary 2-design, then the variance of the loss function is related to the DLA $\.g$ \cite{ragone2024lie}:
\begin{equation}\label{eq:VQA-var}
  \Var_{\bm\theta}[\ell(\rho,O;\bm\theta)] = \sum_{j=1}^k \frac{\mcP_{\.g_j}(\rho) \mcP_{\.g_j}(O)}{\dim(\.g_j)}.
\end{equation}
Here the $\.s$-purity of a Hermitian operator $H\in \mathrm{i}\.{u}(2^n)$ (with respect to a subalgebra $\.s$ of $\.u(2^n)$) is 
\[\+P_{\.s}(H)\triangleq \tr(H_{\.s}^2) =  \sum_{j=1}^{\dim(\.s)}\abs{\tr(E_j^\dag H)}^2,\]
where $H_\.s$ is the orthogonal projection of operator $H$ onto $\.s_{\mbC}\triangleq \spn_{\mathbb C}\{E_j: j=1,2,\ldots,\dim(\.s)\}$ and $\{E_j: j=1,2,\ldots,\dim(\.s)\}$ is an orthonormal basis for $\.s$.

\section{DLAs of QAOA-MaxCut on weighted graphs}\label{sec:DLA-weighted-graph}

In this section, we study the DLAs of QAOA-MaxCut on weighted graphs, and prove the following sufficient condition on the edge weights under which a DLA is free.

\thmweighted*

% \begin{assumption}\label{asp:mt10}
%   The graph $G = (V, E, \bm{r})$ has edge weights $\bm r$ with the following properties
%   \begin{itemize}
%     \item
%           $r_{uv} \neq 0$ for any $(u, v) \in E$.
%     \item
%           For any two vertices $u, v \in V$ and any sign functions $s_u: \mathcal{N}(u)\to \{\pm 1\}, s_v: \mathcal{N}(v) \to \{\pm 1\}$,
%           \begin{equation}
%             \sum_{w \in \mathcal{N}(u)} s_u(w) r_{uw} \neq \sum_{w' \in \mathcal{N}(v)} s_v(w') r_{vw'}.
%           \end{equation}
%   \end{itemize}
% \end{assumption}

% Under this assumption, the DLA is free.

% \begin{restatable}{theorem}{weightedDLA}\label{thm:pxgc}
%   If \Cref{asp:mt10} holds, then $\mathfrak g = \mathfrak g_{\rm ma}$.
% \end{restatable}

% The following corollary follows immediately from \Cref{thm:zygu,thm:pxgc}.

% When the edge weights are randomly sampled from continuous distributions (e.g., $U(0,1)$ or $\text{Exp}(\lambda)$), this condition holds almost surely.

The condition in \cref{eq:weight-constraint} on the edge weights $r_{uw}$ holds almost surely for connected graphs on $n \ge 3$ vertices, with weights randomly sampled from continuous distributions, such as the uniform distribution $U(0,1)$ or exponential distribution $\text{Exp}(\lambda)$.
This leads to the following result.

\thmrandomweightsBP*

\begin{proof}
  \underline{Point 1:}
  By assumption, $G$ is not a 2-path and thus, for any vertices $u,v$, the set of edges incident to $u$ does not equal to that of $v$.
  For any fixed pair of vertices $(u, v)$ and any fixed sign functions $\{s_x\}_{x\in V}$, the set of weights $\{r_{uv}:(u,v)\in E\}$ making the \textit{equality} to hold in \Cref{eq:weight-constraint} is a subspace of dimension strictly lower than $|E|$, making this set to have 0 measure in $\mbR^{|E|}$.
  Therefore, \Cref{eq:weight-constraint} is satisfied with probability 1 when the weights $r_{uv}$ are sampled from any continuous distribution.
  Similarly, $r_{uv} \neq 0$ with probability 1 for any edges $(u, v) \in E$.
  There are only a finite number of pairs $u,v\in V$, edges $(u,v)\in E$, and sign functions $\{r_{uv}:(u,v)\in E\}$, so the conditions of \Cref{thm:weighted} are satisfied with probability 1 by a union bound over bad events.

  \underline{Point 2:}
  Follows from Point 1 and \Cref{thm:zygu}, noting the assumption that $G$ is not a path or cycle.

  \underline{Point 3:}
  Follows from Point 2 and \Cref{eq:VQA-var}, taking $k=1$ or 2, $\mcP_{\.g_j}(\rho) \le 1$ and $\mcP_{\.g_j}(O) \le 2^n$, and $\dim(\.g_j) = \Theta(4^n)$.
\end{proof}

Note that when $G$ is a path or a cycle on $n \ge 3$ vertices, and the weights are sampled from a continous distribution, then we still have $\mathfrak{g} = \mathfrak{g}_{\rm ma}$ with probability 1.
However, $\dim (\.g_{\rm ma}) = O(n^2)$ so barren plateaus do not necessarily occur.

% From \Cref{thm:zygu} it follows that $\dim(\mathfrak{g})\in \Theta(4^n)$ for all connected graphs $G$ with random edge weights, unless $G$ is a path or cycle.

% \begin{corollary}
%   Under \Cref{asp:mt10} for a weighted connected graph $G = (V, E, \bm{r})$ with $n$ vertices, the DLA of QAOA-MaxCut on $G$ has dimension $\Theta(4^n)$ unless $G$ is a path or cycle.
% \end{corollary}

%To prove \Cref{thm:pxgc}, 
To prove \Cref{thm:weighted}, our approach will be to prove that $\mathrm iX_u\in \mathrm i\mathfrak{g}$ for all $u\in V$.
As the following lemma shows, this is a sufficient condition for $\mathfrak g = \mathfrak g_{\rm ma}$.
% The next lemma shows that $\{\mathrm iX_u: u \in V\} \subseteq \mathfrak g$ implies $\mathfrak g = \mathfrak g_{\rm ma}$, assuming non-zero edge weights, thereby completes the proof of \Cref{thm:pxgc}.

\begin{lemma}\label{lem:w547}
  If $\{\mathrm iX_u: u \in V\} \subseteq \mathfrak g$ and $r_{uv} \neq 0$ for $(u,v) \in E$, then $\mathfrak g = \mathfrak g_{\rm ma}$.
\end{lemma}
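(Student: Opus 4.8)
The plan is to prove the reverse inclusion $\mathfrak{g}_{\rm ma}\subseteq\mathfrak{g}$, the inclusion $\mathfrak{g}\subseteq\mathfrak{g}_{\rm ma}$ being automatic as noted above. Since $\mathfrak{g}_{\rm ma}$ is generated as a Lie algebra by $\{\mathrm{i}X_u:u\in V\}\cup\{\mathrm{i}Z_uZ_v:(u,v)\in E\}$, and the operators $\mathrm{i}X_u$ already lie in $\mathfrak{g}$ by hypothesis, it suffices to show $\mathrm{i}Z_uZ_v\in\mathfrak{g}$ for every edge $(u,v)\in E$. I would obtain each such operator from the generator $\mathrm{i}H_p=\mathrm{i}\sum_{(a,b)\in E}r_{ab}Z_aZ_b$ by a short chain of nested commutators against the $\mathrm{i}X_w$, repeatedly using the single-qubit relations $[X,Y]=2\mathrm{i}Z$, $[Y,Z]=2\mathrm{i}X$, $[Z,X]=2\mathrm{i}Y$ together with the fact that Pauli operators on disjoint qubits commute.

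Concretely, I would first compute $[\mathrm{i}X_u,\mathrm{i}H_p]$, which equals $2\mathrm{i}\sum_{w\in\mathcal{N}(u)}r_{uw}Y_uZ_w$, since $X_u$ commutes with every edge term not incident to $u$; call this element $A_u\in\mathfrak{g}$. Next, fixing a neighbour $w\in\mathcal{N}(u)$ (so $w\neq u$, the graph being simple), I would bracket $A_u$ with $\mathrm{i}X_w$: the operator $X_w$ commutes with $Y_u$ and with $Z_{w'}$ for every $w'\neq w$, so only the $w'=w$ summand survives, yielding $r_{uw}\,\mathrm{i}Y_uY_w$ up to a nonzero real scalar; since $r_{uw}\neq 0$, this gives $\mathrm{i}Y_uY_w\in\mathfrak{g}$ for every edge $(u,w)$. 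Finally, bracketing $\mathrm{i}Y_uY_w$ with $\mathrm{i}X_u$ produces $\mathrm{i}Z_uY_w$, and bracketing that with $\mathrm{i}X_w$ produces $\mathrm{i}Z_uZ_w$, all within $\mathfrak{g}$. This establishes $\mathrm{i}Z_uZ_w\in\mathfrak{g}$ for all edges, and hence $\mathfrak{g}=\mathfrak{g}_{\rm ma}$.

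I do not expect a genuine obstacle: the argument is a finite sequence of Pauli-commutator computations. The one step requiring a moment of care --- and the only place the hypothesis $r_{uv}\neq 0$ enters --- is isolating the single term $\mathrm{i}Y_uY_w$ out of the sum $A_u$; this works precisely because commuting with $X_w$ for $w\in\mathcal{N}(u)$ detects exactly the $Z_w$ factor of the $(u,w)$ edge term and nothing else, and because that term's coefficient $r_{uw}$ is nonzero so the term is not annihilated. The remaining verifications (the three bracket identities and the disjoint-support commutativity) are routine.
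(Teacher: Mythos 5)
Your proposal is correct, and it takes essentially the same route as the paper: the paper writes the single closed-form identity $\mathrm iZ_u Z_v = \frac{1}{16 r_{uv}} [\mathrm iX_{u}, [\mathrm iX_{u}, [\mathrm iX_{v}, [\mathrm iX_{v}, \mathrm iH_{p}]]]]$, which is exactly your stepwise chain of brackets (twice against $X_u$ and twice against $X_v$, using $r_{uv}\neq 0$ to isolate the edge term) packaged into one formula. No gap.
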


\begin{proof}
  For any $(u, v) \in E$,
  \begin{equation*}
    \mathrm iZ_u Z_v = \frac{1}{16 r_{uv}} [\mathrm iX_{u}, [\mathrm iX_{u}, [\mathrm iX_{v}, [\mathrm iX_{v}, \mathrm iH_{p}]]]] \in \mathfrak g.
  \end{equation*}
  Hence,
  \begin{equation*}
    \mathfrak g_{\rm ma} = \langle \{\mathrm{i} Z_{u} Z_{v}: (u, v) \in E\} \cup \{\mathrm{i} X_{u}: u \in V\} \rangle_{\rm Lie, \mathbb{R}} \subseteq \mathfrak{g}.
  \end{equation*}
  The converse is true by the definitions of $\.g$ and $\gma$.
\end{proof}

We now introduce a class of Pauli strings called \emph{XZ even stars}, which plays a central role in our analysis.

\begin{definition}[XZ even star]
  For a vertex $u \in V$, define the set of even-parity binary vectors:
  \begin{equation*}
    \mathbb{E}_{u} \triangleq \left\{x \in \mathbb{F}_{2}^{\deg(u)}: |x| \equiv 0 \bmod 2\right\}.
  \end{equation*}
  Denote the neighbors of $u$ by $v_{1}, \dots, v_{\deg(u)}$.
  An \emph{XZ even star} centered at $u$ is a Pauli string of the form
  \begin{equation}\label{eq:XZ-even-star}
    S^u_x \triangleq X_u \prod_{i=1}^{\deg(u)} (Z_{v_i})^{x_i}, \quad \text{for } x \in \mathbb{E}_{u}.
  \end{equation}
  The set of XZ even stars centered at $u$ and the set of all XZ even stars are denoted by
  \begin{equation*}
    \mathcal{S}_u \triangleq \{ S^u_x: x \in \mathbb{E}_{u} \} \qq{and}
    \mathcal{S} \triangleq \bigcup_{u \in V} \mathcal{S}_u.
  \end{equation*}
  The vector spaces spanned by these sets are denoted by
  \begin{equation}\label{eq:H-space}
    \mathcal{H}_u \triangleq \operatorname{span} \mathcal{S}_u \qq{and}
    \mathcal{H} \triangleq \operatorname{span} \mathcal{S}.
  \end{equation}
\end{definition}

Another important object is a linear map over $\mathcal{H}$ defined as
\begin{equation}\label{eq:5e2q}
  f(\cdot) \triangleq -\frac{1}{4} [\mathrm i H_{p}, [\mathrm i H_{p}, \cdot]] = \frac{1}{4} [H_{p}, [H_{p}, \cdot]],
\end{equation}
which is proportional to the squared adjoint map of $\mathrm i H_p$.
The following lemma gives the properties of $f$.
We defer the proof to the end of this section.

\begin{lemma}\label{lem:wi5p}
  The function $f$ defined in \cref{eq:5e2q} has the following properties.
  \begin{enumerate}
    \item
          \label{itm:xpk9}
          $f(\mathcal{H}_u)\subseteq \mathcal{H}_u$ and $f(\mathcal{H})\subseteq\mathcal{H}$.
    \item
          \label{itm:3mht}
          $f$ is diagonalizable.
          Moreover, for a vertex $u \in V$, the spectrum of $f|_{\mathcal H_u}$ is
          \begin{equation}\label{eq:0j9g}
            \operatorname{spec}(f|_{\mathcal H_u}) = \left\{\left(\sum_{i=1}^{d-1} (-1)^{x_i} r_{uv_i} + r_{uv_d}\right)^2: x \in \mathbb E_u\right\},
          \end{equation}
          where $d \triangleq \deg(u)$.
    \item
          \label{itm:pttf}
          $f^t(H_m) \in \mathrm i \mathfrak g$ for $t \ge 0$.
  \end{enumerate}
\end{lemma}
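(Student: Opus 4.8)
The plan is to reduce all three parts to a single explicit computation of $f$ on the XZ even star basis. Since the Pauli strings $S^u_x$ for distinct centres $u$ are themselves distinct, $\mathcal{S}=\bigcup_u\mathcal{S}_u$ is linearly independent and $\mathcal{H}=\bigoplus_{u\in V}\mathcal{H}_u$, so it suffices to understand $f$ on each $\mathcal{H}_u$ separately. For part \ref{itm:xpk9} I would fix $u$ and $x\in\mathbb{E}_u$ and compute the nested commutator directly. Every summand $r_{ab}Z_aZ_b$ of $H_p$ commutes with all the $Z$-factors of $S^u_x=X_u\prod_iZ_{v_i}^{x_i}$, and commutes with $X_u$ unless $u\in\{a,b\}$; hence only the edge terms $r_{uw}Z_uZ_w$ with $w\in\mathcal{N}(u)$ fail to commute with $S^u_x$, and those anticommute with it. Using $Z_uX_u=-X_uZ_u$ this gives
\[ [H_p,S^u_x] = -2\sum_{w\in\mathcal{N}(u)} r_{uw}\, X_uZ_u\!\prod_i Z_{v_i}^{(x\oplus e_w)_i}, \]
an ``odd star dressed with a factor $Z_u$''. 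Applying $[H_p,\cdot]$ a second time (again only the $r_{uw'}Z_uZ_{w'}$ terms contribute, and now $Z_uX_uZ_u=-X_u$) lands back in $\mathcal{H}_u$:
\[ f(S^u_x) = \tfrac14[H_p,[H_p,S^u_x]] = \sum_{w,w'\in\mathcal{N}(u)} r_{uw}r_{uw'}\, S^u_{x\oplus e_w\oplus e_{w'}}. \]
Because $e_w\oplus e_{w'}$ always has even weight, each $S^u_{x\oplus e_w\oplus e_{w'}}\in\mathcal{S}_u$, which proves $f(\mathcal{H}_u)\subseteq\mathcal{H}_u$ and hence $f(\mathcal{H})\subseteq\mathcal{H}$. (In particular $H_m=\sum_u S^u_0\in\mathcal{H}$, so $f^t(H_m)$ is well defined, which part \ref{itm:pttf} will need.)

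For part \ref{itm:3mht} I would read the last display as the statement that $f|_{\mathcal{H}_u}$ is the convolution operator $S^u_x\mapsto\sum_{z\in\mathbb{E}_u}c_z\,S^u_{x\oplus z}$ on the finite abelian group $(\mathbb{E}_u,\oplus)$, with $c_z=\sum_{(w,w'):\,e_w\oplus e_{w'}=z}r_{uw}r_{uw'}$. Convolution operators on a finite abelian group are simultaneously diagonalized by its characters; here the characters of $\mathbb{E}_u\subseteq\mathbb{F}_2^{d}$ ($d=\deg(u)$) are the restrictions of $\chi_a(z)=(-1)^{a\cdot z}$, with $a$ and $a\oplus\bm 1$ inducing the same character since $\mathbb{E}_u^\perp=\{\bm 0,\bm 1\}$, and the eigenvalue attached to $\chi_a$ is
\[ \widehat c(\chi_a) = \sum_{w,w'\in\mathcal{N}(u)} r_{uw}r_{uw'}(-1)^{a_w+a_{w'}} = \Big(\sum_{w\in\mathcal{N}(u)}(-1)^{a_w}r_{uw}\Big)^{2}. \]
Choosing the coset representatives with $a_d=0$ recovers exactly \cref{eq:0j9g}, and since these eigenvectors span $\mathcal{H}_u$, the map $f|_{\mathcal{H}_u}$, and therefore $f$, is diagonalizable. (Alternatively, one can observe that $\mathrm{ad}_{H_p}$ is self-adjoint for the Hilbert--Schmidt inner product, so $f=\tfrac14\,\mathrm{ad}_{H_p}^2$ is self-adjoint with nonnegative spectrum on the invariant subspace $\mathcal{H}$, which yields diagonalizability at once and is consistent with the eigenvalues above being perfect squares.)

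For part \ref{itm:pttf} I would induct on $t$. For $t=0$, $\mathrm{i}H_m$ is a generator of $\mathfrak{g}$, so $H_m\in\mathrm{i}\mathfrak{g}$. If $\mathrm{i}f^t(H_m)\in\mathfrak{g}$, then $[\mathrm{i}H_p,[\mathrm{i}H_p,\mathrm{i}f^t(H_m)]]\in\mathfrak{g}$ because $\mathrm{i}H_p\in\mathfrak{g}$ and $\mathfrak{g}$ is a Lie algebra; but by \cref{eq:5e2q} this operator equals $-4\,\mathrm{i}f^{t+1}(H_m)$, so $f^{t+1}(H_m)\in\mathrm{i}\mathfrak{g}$, completing the induction.

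The only genuine work is in parts \ref{itm:xpk9}--\ref{itm:3mht}: carefully tracking the signs in the two nested Pauli-string commutators, and recognizing the resulting operator as a group convolution whose Fourier coefficients turn out to be perfect squares. Once those are in hand, part \ref{itm:pttf} is a one-line induction.
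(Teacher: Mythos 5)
Your proof is correct and follows essentially the same route as the paper: the explicit double-commutator computation giving $f(S^u_x)=\sum_{w,w'}r_{uw}r_{uw'}S^u_{x\oplus e_w\oplus e_{w'}}$ matches the paper's \cref{eq:ypzv}, your character sums are exactly the paper's eigenvectors $W^u_\ell$ with the same eigenvalues, and Point~\ref{itm:pttf} is the same one-line induction. The only difference is presentational: you package the eigenvector verification as Fourier analysis of a convolution operator on $(\mathbb{E}_u,\oplus)$, whereas the paper checks $f(W^u_\ell)=\lambda^u_\ell W^u_\ell$ directly.
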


% \pei{The following \Cref{lem:fqe1,lem:w547} are utilized to prove \Cref{thm:pxgc}.}
The next lemma will be used to show that $X_u \in \operatorname{span} \{f^t(H_m): t \ge 0\} \subseteq \mathrm i \mathfrak g$ for any $u \in V$.

\begin{lemma}\label{lem:fqe1}
  Let $A$ be a diagonalizable linear operator on a vector space $V$.
  Assume $V$ decomposes into a direct sum of $A$-invariant subspaces $V = V_1 \oplus \dots \oplus V_k$ such that the spectra of the restrictions $A|_{V_i}$ are pairwise disjoint:
  \begin{equation*}
    \operatorname{spec}(A|_{V_i}) \cap \operatorname{spec}(A|_{V_j}) = \varnothing, \quad \text{for all } 1 \le i < j \le k.
  \end{equation*}
  Then for every vector $v \in V$, the projections onto the subspaces \(V_i\) satisfy
  \begin{equation}\label{eq:hdfm}
    \Pi_{V_i} v \in \operatorname{span}\{A^t v : t \ge 0\}, \quad \text{for all } 1 \le i \le k.
  \end{equation}
\end{lemma}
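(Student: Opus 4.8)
The plan is to show that each projection $\Pi_{V_i}$ equals $p_i(A)$ for a suitable polynomial $p_i$; since $p_i(A)$ is a finite linear combination of powers $A^t$, this gives $\Pi_{V_i}v = p_i(A)v \in \operatorname{span}\{A^t v : t\ge 0\}$, which is exactly \cref{eq:hdfm}. (In the applications $V$ is finite-dimensional; the argument below only uses that $\operatorname{spec}(A)$ is a finite set.)

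The first, and really the only substantive, step is to recognize the decomposition $V = V_1\oplus\dots\oplus V_k$ as a coarsening of the eigenspace decomposition of $A$. Write $\Lambda_i\triangleq\operatorname{spec}(A|_{V_i})$ and $\Lambda\triangleq\operatorname{spec}(A)$. I would first note that the restriction of a diagonalizable operator to an invariant subspace is again diagonalizable, so $V_i = \bigoplus_{\lambda\in\Lambda_i}\ker(A|_{V_i}-\lambda)$. The key claim is: for $\lambda\in\Lambda_i$ one has $\ker(A-\lambda)\subseteq V_i$. To prove it, take $w\in\ker(A-\lambda)$ and decompose $w=\sum_j w_j$ with $w_j\in V_j$; since each $V_j$ is $A$-invariant, applying $A-\lambda$ and matching components gives $(A-\lambda)w_j=0$ for every $j$. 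For $j\neq i$ we have $\lambda\notin\Lambda_j$ by the disjointness hypothesis, forcing $w_j=0$, so $w=w_i\in V_i$. Consequently $\Lambda=\biguplus_i\Lambda_i$ is a disjoint union, $\ker(A-\lambda)\subseteq V_i$ whenever $\lambda\in\Lambda_i$, and hence $V_i=\bigoplus_{\lambda\in\Lambda_i}\ker(A-\lambda)$, so that $\Pi_{V_i}$ is the spectral projector onto this sum of eigenspaces along $\bigoplus_{\lambda\notin\Lambda_i}\ker(A-\lambda)$.

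With this in hand, I would take $p_i$ to be the Lagrange interpolation polynomial on the finite set of distinct scalars $\Lambda$ determined by $p_i(\lambda)=1$ for $\lambda\in\Lambda_i$ and $p_i(\lambda)=0$ for $\lambda\in\Lambda\setminus\Lambda_i$, e.g. $p_i(x)=\sum_{\mu\in\Lambda_i}\prod_{\nu\in\Lambda\setminus\{\mu\}}\frac{x-\nu}{\mu-\nu}$. Evaluating on an eigenvector $w$ with eigenvalue $\lambda$ gives $p_i(A)w=p_i(\lambda)w$, which equals $w=\Pi_{V_i}w$ when $\lambda\in\Lambda_i$ (using $\ker(A-\lambda)\subseteq V_i$) and equals $0=\Pi_{V_i}w$ otherwise; since eigenvectors of $A$ span $V$, this yields $p_i(A)=\Pi_{V_i}$, and writing $p_i(x)=\sum_t c_t x^t$ completes the proof. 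I expect the containment $\ker(A-\lambda)\subseteq V_i$ to be the only step requiring care, as it is precisely where the pairwise disjointness of the spectra enters; everything afterward (diagonalizability of restrictions, and Lagrange interpolation computing the spectral projector) is routine.
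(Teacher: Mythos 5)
Your proof is correct and follows essentially the same route as the paper's: identify each $\Pi_{V_i}$ as a sum of spectral projectors obtained via Lagrange interpolation, hence a polynomial in $A$. You are in fact slightly more careful than the paper, which asserts "each $V_i$ decomposes into a direct sum of eigenspaces" without proof, whereas you explicitly verify the containment $\ker(A-\lambda)\subseteq V_i$ for $\lambda\in\operatorname{spec}(A|_{V_i})$ using the disjointness hypothesis.
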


\begin{proof}
  Let the spectrum of $A$ be $\{\lambda_1, \lambda_2, \dots\}$.
  By the Lagrange interpolation formula, the projection to the $\lambda_i$-eigenspace can be written as
  \begin{equation*}
    \prod_{j: j \neq i} \frac{A - \lambda_j I}{\lambda_i - \lambda_j},
  \end{equation*}
  which is a polynomial of $A$.
  By assumption, each $V_i$ decomposes into a direct sum of eigenspaces.
  Hence, the projection onto $V_i$ can also be expressed as a polynomial of $A$.
  This proves \cref{eq:hdfm}.
\end{proof}

% Now we restate \Cref{thm:pxgc} and complete the proof.

% \weightedDLA*

We now prove \Cref{thm:weighted}.

\begin{proof}[Proof of \Cref{thm:weighted}]
  % By Point \ref{itm:3mht} of \Cref{lem:wi5p}, $f$ is a diagonalizable linear operator on vector space $\mathcal{H}$.
  % From \cref{eq:0j9g} and the conditions on the edge weights in \Cref{thm:weighted}, $\operatorname{spec}(f|_{\mathcal H_u}) \cap \operatorname{spec}(f|_{\mathcal{H}_v}) = \varnothing$ for two different vertices $u, v \in V$.
  % Combining this with Point \ref{itm:xpk9} of \Cref{lem:wi5p}, $\mathcal{H}$ decomposes into a direct sum of $f$-invariant subspaces $\mathcal{H}=\bigotimes_{u\in V}\mathcal{H}_u$ such that the spectra of the restrictions $f\mid_{\mathcal{H}_u} $ are pairwise disjoint.
  % Namely, $f$ and $\mathcal{H}$ satisfy all the assumptions of \Cref{lem:fqe1}.
  % Based on \cref{eq:XZ-even-star,eq:H-space}, $X_u\in\mathcal{H}_u$ and $H_m = \sum_{u \in V} X_u \in \mathcal H$.
  % By Point \ref{itm:pttf} of \Cref{lem:wi5p} and \Cref{lem:fqe1}, we have $\Pi_{\mathcal{H}_u}H_m=X_u \in \mathrm i \mathfrak g$ for any $u \in V$.
  % Hence, $\{\mathrm iX_u: u \in V\} \subseteq \mathfrak g$.
  % By assumption, the weights are non-zero (\Cref{asp:mt10}), and thus by \Cref{lem:w547} we have $\mathfrak g = \mathfrak g_{\rm ma}$.
  By \Cref{lem:wi5p}, $f$ is a diagonalizable linear operator on vector space $\mathcal{H}$, and $\mathcal{H}$ decomposes into a direct sum of $f$-invariant subspaces $\mathcal{H}=\bigoplus_{u\in V}\mathcal{H}_u$.
  From \cref{eq:0j9g} and the conditions on the edge weights in \Cref{thm:weighted}, $\operatorname{spec}(f|_{\mathcal H_u}) \cap \operatorname{spec}(f|_{\mathcal{H}_v}) = \varnothing$ for two different vertices $u, v \in V$.
  That is, $f$ and $\mathcal{H}$ satisfy all the assumptions of \Cref{lem:fqe1}.
  Based on \cref{eq:XZ-even-star,eq:H-space}, $X_u\in\mathcal{H}_u$ and $H_m = \sum_{u \in V} X_u \in \mathcal H$.
  By \Cref{lem:wi5p} (Point \ref{itm:pttf}) and \Cref{lem:fqe1}, we have $X_u=\Pi_{\mathcal{H}_u}H_m \in \spn\{f^t (H_m): t\ge 0\} \in \mathrm i \mathfrak g$ for any $u \in V$.
  Hence, $\{\mathrm iX_u: u \in V\} \subseteq \mathfrak g$.
  By assumption, the weights are non-zero (\Cref{thm:weighted}), and thus by \Cref{lem:w547} we have $\mathfrak g = \mathfrak g_{\rm ma}$.
\end{proof}

We now present the proof of \Cref{lem:wi5p}.
%, which was deferred earlier for readability.

\begin{proof}[Proof of \Cref{lem:wi5p}]
  %We prove Points \ref{itm:xpk9} to \ref{itm:pttf} one by one.
  \mbox{} \\
  \noindent\underline{Point \ref{itm:xpk9}:}
  For $u \in V$ and $x \in \mathbb{E}_u$, direct computation shows that
  \begin{equation}\label{eq:ypzv}
    f(S^u_x) = c_{xx} S^u_x + 2\sum_{y: |x \oplus y| = 2} c_{xy} S^u_y,
  \end{equation}
  where
  \begin{equation}\label{eq:5qx5}
    c_{xx} \triangleq \sum_{{i=1}}^{{\deg (u)}} r_{uv_{{i}}}^2 \qq{and}
    c_{xy} \triangleq \prod_{{i=1}}^{{\deg (u)}} (r_{uv_{i}})^{x_i \oplus y_i}.
  \end{equation}
  Since $f$ is a linear and $S_y^u \in \mcH_u$ for $y$ with $|x\oplus y|=2$, it follows that $f(\mathcal{H}_u)\subseteq \mathcal{H}_u$ and $f(\mathcal{H})\subseteq \mathcal{H}$.

  \noindent\underline{Point \ref{itm:3mht}:}
  For $u \in V$ and $\ell \in \mathbb{E}_u^*$, define
  \begin{align}
    W^u_\ell       & \triangleq \sum_{x \in \mathbb{E}_u} (-1)^{\ell(x)} S^u_x,                                                                    \\
    \lambda^u_\ell & \triangleq \sum_{v \in \mathcal{N}(u)} r_{uv}^2 + 2\sum_{x:|x|=2} (-1)^{\ell(x)} \prod_{v \in \mathcal{N}(u)} (r_{uv})^{x_i}.
    \label{eq:x377}
  \end{align}
  We will show that $f(W^u_\ell) = \lambda^u_\ell W^u_\ell$.
  In fact, by \cref{eq:ypzv,eq:5qx5},
  \begin{equation*}
    f(W^u_\ell) = \sum_{x \in \mathbb{E}_u} (-1)^{\ell(x)} f(S^u_x) = A + B,
  \end{equation*}
  where
  \begin{equation*}
    A = \sum_{x \in \mathbb{E}_u} (-1)^{\ell(x)} c_{xx} S^u_x = c_{xx} \sum_{x \in \mathbb{E}_u} (-1)^{\ell(x)} S^u_x = \left(\sum_{v \in \mathcal{N}(u)} r_{uv}^2\right) W^u_\ell,
  \end{equation*}
  and
  \begin{align*}
    B & = 2\sum_{x \in \mathbb{E}_u} (-1)^{\ell(x)} \sum_{y: |x \oplus y| = 2} c_{xy} S^u_y = 2\sum_{z: |z| = 2} \sum_{y \in \mathbb{E}_u} (-1)^{\ell(y \oplus z)} c_{y \oplus z, y} S^u_y \\
      & = \left(2\sum_{z: |z| = 2} (-1)^{\ell(z)} \prod_{v \in \mathcal{N}(u)} (r_{uv})^{z_i}\right) \left(\sum_{y \in \mathbb{E}_u} (-1)^{\ell(y)} S^u_y\right)                           \\
      & = \left(2\sum_{z: |z| = 2} (-1)^{\ell(z)} \prod_{v \in \mathcal{N}(u)} (r_{uv})^{z_i}\right) W^u_\ell.
  \end{align*}
  Hence,
  \begin{equation*}
    f(W^u_\ell) = \lambda^u_\ell W^u_\ell.
  \end{equation*}

  Let $d \triangleq \deg(u)$.
  Notice that $\mathbb E_u^* \cong \mathbb E_u$: an explicit isomorphism sends a binary vector $b \in \mathbb E_u$ to the linear functional $\ell(x) \triangleq \sum_{i=1}^{d-1} b_i x_i \in \mathbb E_u^*$.
  With this identification, the eigenvalues $\lambda_\ell^u$ in \cref{eq:x377} take the compact form
  \begin{equation*}
    \lambda_\ell^u = \left(\sum_{i=1}^{d-1} (-1)^{b_i} r_{uv_i} + r_{uv_d}\right)^2.
  \end{equation*}
  This proves \cref{eq:0j9g}.

  \noindent\underline{Point \ref{itm:pttf}:}
  By \cref{eq:5e2q}, $f$ is proportional to the squared adjoint map of $\mathrm i H_p$.
  Since $\mathrm iH_m, \mathrm iH_p \in \mathfrak g$, we have $f^t(\mathrm iH_m) \in \mathfrak g$.
  By linearity of $f$, $f^t(H_m) \in \mathrm i \mathfrak g$.
\end{proof}

\section{DLAs of QAOA-MaxCut on unweighted graphs}\label{sec:DLA-unweighted-graph}

% \ruic{We presented 4 splitting algorithms in \Cref{sec:DLA-algorithm}, \Cref{sec:bfs-split}, \Cref{sec:unweighted-random-proof} (arguably a same algorithm as \Cref{sec:DLA-algorithm}), and \Cref{sec:main-result-ER}. Should we take care of this redundancy?} \syz{We should. Actually these similar procedures triggered me to conjecture that all splitting procedures lead to the same partition. Now given Theorem \ref{thm:all-splittings}, let's see how to merge them.}

In this section, we investigate the DLAs of QAOA-MaxCut on unweighted graphs.
%\Cref{sec:DLA-algorithm} \Cref{sec:DLA-asym-subdivided-odd-graph} \Cref{sec:j8w9} \Cref{sec:}\Cref{sec:DLA-ERgraph}

\subsection{Three splitting lemmas}

We first prove three lemmas that show how the generators of a standard QAOA-MaxCut DLA can be `split' to give new elements that lie in the Lie algebra.
Taken together, these lemmas allow one to convert the algebraic problem (e.g., by computing commutators) of determining which elements lie in the DLA for a graph $G=(V,E)$ into a problem of coloring the even- and odd-degree vertices of $G$ and subgraphs of $G$ recursively induced by those colorings.

%Let us first introduce some notation. 
For $V' \subseteq V$ and $E' \subseteq E$, define
\begin{equation*}
  X_{V'} \triangleq \sum_{u \in V'} X_u \qq{and} ZZ_{E'} \triangleq \sum_{(u,v) \in E'} Z_u Z_v.
\end{equation*}
Note that $X_V = H_m$ and $ZZ_E = H_p$.

%We first introduce the following lemmas that ensure the correctness of the algorithm.

\begin{lemma}[Edge splitting]\label{lem:esplit}
  For any $S,T\subseteq V$ with $S\cap T=\varnothing$, if $X_S, X_T \in \mathrm{i}\mathfrak{g}$, then $ZZ_{E(S,T)} \in \mathrm{i}\mathfrak{g}$.
\end{lemma}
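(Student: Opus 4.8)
The plan is to exhibit $\mathrm i\,ZZ_{E(S,T)}$ as an explicit nested commutator built from $\mathrm iH_p$, $\mathrm iX_S$ and $\mathrm iX_T$. All three of these lie in $\mathfrak g$: $\mathrm iH_p\in\mathfrak g$ always, and the hypotheses $X_S,X_T\in\mathrm i\mathfrak g$ say precisely that $\mathrm iX_S,\mathrm iX_T\in\mathfrak g$; since $\mathfrak g$ is a Lie subalgebra it is closed under brackets, so once the right combination is identified, membership is automatic. The whole computation rests on the single-qubit identities $[X,Z]=-2\mathrm iY$ and $[X,Y]=2\mathrm iZ$, together with the observation that $X_S=\sum_{u\in S}X_u$ is a sum of commuting single-qubit operators, so $\operatorname{ad}_{X_S}=\sum_{u\in S}\operatorname{ad}_{X_u}$ with each summand acting on one qubit only (and likewise for $T$).

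First I would compute $\operatorname{ad}_{X_S}^2(H_p)$ edge by edge: an edge with no endpoint in $S$ contributes nothing; an edge with exactly one endpoint in $S$ survives with a factor $4$; an edge with both endpoints in $S$ produces $8(Z_uZ_v-Y_uY_v)$. Hence $\operatorname{ad}_{X_S}^2(H_p)=4\,ZZ_{E(S,V\setminus S)}+8\,ZZ_{E(S)}-8\sum_{(u,v)\in E(S)}Y_uY_v$. Second, I would apply $\operatorname{ad}_{X_T}^2$ to this. This is exactly where the disjointness $S\cap T=\varnothing$ is used: $\operatorname{ad}_{X_T}$ only touches qubits in $T$, so it annihilates every edge inside $S$, while among the edges of $E(S,V\setminus S)$ it sees precisely those landing in $T$, i.e. the edges of $E(S,T)$, each again scaled by $4$. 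Everything else is killed, leaving $\operatorname{ad}_{X_T}^2\operatorname{ad}_{X_S}^2(H_p)=16\,ZZ_{E(S,T)}$. Tracking the five factors of $\mathrm i$ (note $\mathrm i^5=\mathrm i$), this reads
\[
  [\mathrm iX_T,[\mathrm iX_T,[\mathrm iX_S,[\mathrm iX_S,\mathrm iH_p]]]]=16\,\mathrm i\,ZZ_{E(S,T)},
\]
and the left-hand side lies in $\mathfrak g$, so $ZZ_{E(S,T)}\in\mathrm i\mathfrak g$, as claimed. (The same conclusion also falls out in two symmetric steps: $[\mathrm iX_S,[\mathrm iX_T,\mathrm iH_p]]$ equals, up to a real constant, $\mathrm i\sum_{(u,v)\in E(S,T)}Y_uY_v$, since $S\cap T=\varnothing$ kills all cross terms supported inside $S$ or inside $T$; applying $[\mathrm iX_S,[\mathrm iX_T,\cdot\,]]$ once more converts each $Y_uY_v$ into $Z_uZ_v$.)

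I do not expect a genuine conceptual obstacle here — the lemma is a computation. The only thing requiring care is the bookkeeping: pinpointing where $S\cap T=\varnothing$ is invoked so that the spurious $Y_uY_v$ contributions on edges inside $S$ (and, in the two-step version, inside $T$) cancel and only $E(S,T)$ remains; and keeping the powers of $\mathrm i$ and the integer prefactors straight so that the final bracket is manifestly a real scalar multiple of $\mathrm i\,ZZ_{E(S,T)}$ and hence genuinely an element of $\mathfrak g$.
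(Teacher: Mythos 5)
Your proposal is correct and is essentially the paper's own proof: the paper likewise verifies the identity $ZZ_{E(S,T)} = \tfrac{1}{16}[X_S,[X_S,[X_T,[X_T,H_p]]]]$ by direct computation and concludes by closure of $\mathfrak g$ under brackets. Your edge-by-edge bookkeeping (including where $S\cap T=\varnothing$ kills the spurious $Y_uY_v$ terms) just makes explicit what the paper leaves as "direct computation."
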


\begin{proof}
  It can be verified by direct computation that
  \begin{equation*}
    ZZ_{E(S,T)} = \frac{1}{16} [X_S, [X_S, [X_T, [X_T, H_p]]]].
  \end{equation*}
  Since $X_S,X_T,H_p\in \mathrm{i}\mathfrak{g}$, so is $ZZ_{E(S,T)}$.
\end{proof}

% The next lemma splits a vertex set $S\subseteq V$ by the inside parity. More precisely, 
Given $S\subseteq V$, denote by $S_o$ and $S_e$ the set of vertices in $S$ with odd and even degrees in the induced subgraph $G[S]$, respectively.
\begin{lemma}[Internal vertex splitting]\label{lem:vsplit-int}
  %\sout{Let $S = S_e \cup S_o \subseteq V$, where $S_e$ ($S_o$) contains vertices whose degree in induced subgraph $G[S]$ is even (odd).}
  For any $S\subseteq V$, if $X_S \in \mathrm{i}\mathfrak{g}$, then $X_{S_o}, X_{S_e} \in \mathrm{i}\mathfrak{g}$.
\end{lemma}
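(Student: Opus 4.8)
The plan is to mimic the strategy used for the weighted case in \Cref{thm:weighted}: find a suitable operator built from $\mathrm i H_p$ that acts on a space containing $X_S$, diagonalize it, and use \Cref{lem:fqe1} to project out the pieces we want. The key realization is that the parity-of-degree distinction is exactly an eigenvalue distinction for the operator $f(\cdot)=\frac14[H_p,[H_p,\cdot]]$ restricted to the appropriate XZ-even-star space, but now specialized to the \emph{unweighted} induced subgraph $G[S]$ with all weights equal to $1$.

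First I would set up the right ambient space. For $S \subseteq V$, I would consider the span $\mathcal H^{(S)}$ of XZ even stars $S^u_x$ centered at vertices $u \in S$, where the $Z$-legs are restricted to neighbors of $u$ \emph{within $S$} (i.e.\ using $\deg_{G[S]}(u)$). The point of working with $G[S]$ rather than $G$ is that the relevant commutator structure only sees edges inside $S$: legs pointing outside $S$ would spoil the even-star closure under $f$. I would then observe, exactly as in the proof of \Cref{lem:wi5p}, that with all weights $1$ the operator $f$ preserves each $\mathcal H^{(S)}_u$ and its restriction has spectrum $\{(\sum_{i=1}^{d-1}(-1)^{x_i}+1)^2 : x\in\mathbb E_u\}$ with $d=\deg_{G[S]}(u)$. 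The crucial arithmetic fact is that this eigenvalue set for a vertex of even internal degree is disjoint from that of a vertex of odd internal degree: the largest eigenvalue $d^2$ has the same parity as $d$, and more simply one can separate the two classes by looking at whether $d^2$ (equivalently $0$, attained when $x$ is balanced, only possible if $d$ even) lies in the spectrum. So the direct-sum decomposition $\mathcal H^{(S)} = \big(\bigoplus_{u\in S_e}\mathcal H^{(S)}_u\big)\oplus\big(\bigoplus_{u\in S_o}\mathcal H^{(S)}_u\big)$ has the two summands carrying disjoint spectra.

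The second ingredient is to get $X_S$ into the game and to argue that $f^t$ applied to it stays in $\mathrm i\mathfrak g$. We have $X_S = \sum_{u\in S} X_u \in \mathcal H^{(S)}$ since each $X_u = S^u_{\bm 0}$. By hypothesis $\mathrm i X_S \in \mathfrak g$, and $\mathrm i H_p \in \mathfrak g$, so each nested commutator $f^t(\mathrm i X_S) \in \mathfrak g$, hence $f^t(X_S)\in\mathrm i\mathfrak g$ for all $t\ge 0$. Now \Cref{lem:fqe1} applied to $A=f|_{\mathcal H^{(S)}}$, the decomposition above, and the vector $v = X_S$ gives that the projection of $X_S$ onto $\bigoplus_{u\in S_e}\mathcal H^{(S)}_u$ lies in $\operatorname{span}\{f^t(X_S):t\ge 0\}\subseteq\mathrm i\mathfrak g$, and likewise for the odd part. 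Finally I would identify these projections: since $X_S = \sum_{u\in S} S^u_{\bm 0}$ and $S^u_{\bm 0}=X_u$ lies in the block $\mathcal H^{(S)}_u$, the projection onto the even-degree blocks is exactly $\sum_{u\in S_e}X_u = X_{S_e}$, and similarly $X_{S_o}$. Therefore $\mathrm i X_{S_e}, \mathrm i X_{S_o}\in\mathfrak g$.

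The main obstacle I anticipate is the spectral-disjointness claim: one must be careful that for $u\in S_e$ and $v\in S_o$ the eigenvalue sets in \cref{eq:0j9g} (with unit weights) really are disjoint. This is not automatic from parity alone since both sets contain $1$ when the relevant balanced/near-balanced sums vanish. The clean argument is that $f|_{\mathcal H^{(S)}_u}$ has $0$ in its spectrum if and only if $\deg_{G[S]}(u)$ is even (take $x$ to balance the $d-1$ signed $\pm1$ terms against the trailing $+1$, which requires $d$ even), so $0 \in \operatorname{spec}(f|_{\mathcal H^{(S)}_u})$ exactly for $u\in S_e$. But this only separates "contains $0$" from "does not", not individual even-degree vertices from individual odd-degree ones in general — so to apply \Cref{lem:fqe1} with the two-block decomposition I only need disjointness \emph{between the even-block union and the odd-block union}, and for that it suffices to show no eigenvalue is shared across the parity divide. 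Here I would use that every eigenvalue is of the form $(\text{integer})^2$ where the integer is $\equiv d \pmod 2$; an eigenvalue common to a $u\in S_e$ block and a $v\in S_o$ block would be both an even square and an odd square, which is impossible for a nonzero value, and the value $0$ occurs only on the even side. This parity-of-the-square-root observation is the real crux and is what I would write out carefully; the rest is a direct transcription of the weighted-case machinery.
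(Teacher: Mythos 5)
Your overall strategy --- diagonalize a squared-adjoint operator on a space of XZ even stars, observe that even- and odd-degree blocks have disjoint spectra because the eigenvalues are squares of integers congruent to $d \bmod 2$, and extract $X_{S_e}, X_{S_o}$ via \Cref{lem:fqe1} --- is exactly the paper's strategy, and your treatment of the spectral disjointness and of the case $S = V$ is correct. However, there is a genuine gap in the case $S \subsetneq V$: you apply the operator $f(\cdot) = \frac14[H_p,[H_p,\cdot]]$ built from the \emph{full} problem Hamiltonian to the space $\mathcal H^{(S)}$ of stars whose legs are restricted to $\mathcal N_{G[S]}(u)$. That space is not $f$-invariant. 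Concretely, if $u \in S$ has a neighbor $v \in S$ and another neighbor $w \notin S$, then the cross term $[Z_uZ_w,[Z_uZ_v, X_u]]$ is proportional to $X_u Z_v Z_w$, which has a leg outside $S$ and hence lies outside $\mathcal H^{(S)}_u$. You flag this danger yourself ("legs pointing outside $S$ would spoil the even-star closure"), but restricting the \emph{space} does not cure it --- one must also restrict the \emph{operator}. If instead you enlarge the space to stars with legs in all of $\mathcal N_G(u)$ so that $f$-invariance holds, the spectrum of $f|_{\mathcal H_u}$ is governed by $\deg_G(u)$ rather than $\deg_{G[S]}(u)$, and the projections you obtain split $S$ by full-graph degree parity, which is not the statement of the lemma.

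The missing step, which the paper supplies, is to first manufacture an element of $\mathrm i\mathfrak g$ supported only on internal edges: since $X_{\bar S} = H_m - X_S \in \mathrm i\mathfrak g$, the edge-splitting lemma (\Cref{lem:esplit}) gives $ZZ_{E(S,\bar S)} \in \mathrm i\mathfrak g$, hence $ZZ_{E(S)} + ZZ_{E(\bar S)} = H_p - ZZ_{E(S,\bar S)} \in \mathrm i\mathfrak g$. One then defines $f'$ as the squared adjoint of this operator; $f'$ does preserve your restricted star spaces (the $ZZ_{E(\bar S)}$ part commutes with stars supported in $S$), its spectrum on $\mathcal H'_u$ for $u \in S$ is governed by $\deg_{G[S]}(u)$ as required, and $(f')^t(X_S) \in \mathrm i\mathfrak g$ because both generators of $f'$ and $X_S$ lie in $\mathrm i\mathfrak g$. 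With that substitution your argument goes through verbatim; without it, neither the invariance nor the membership $\Pi_{\mathcal H^{(S)}_e} X_S \in \operatorname{span}\{f^t(X_S) : t \ge 0\}$ can be justified.
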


\begin{proof}
  We first prove the case when $S = V$.
  Recall the definition of $f$ in \cref{eq:5e2q}, and its properties (\Cref{lem:wi5p}).
  For unweighted graphs, $r_{uv} \equiv 1$ and hence for vertex $u \in V$, we have
  \begin{equation*}
    \operatorname{spec}(f|_{\mathcal H_u}) = \left\{\left(\sum_{i=1}^{d-1} (-1)^{x_i} + 1\right)^2: x \in \mathbb E_u\right\} = \left\{(d-2k)^2: 0 \le k \le d-1\right\},
  \end{equation*}
  where $d = \deg(u)$ and, in the second equality, $k$ is the number of 1's in $x_1\cdots x_{d-1}$, which can take any integer value from 0 to $d-1$.
  This implies that if $d$ is even, then all the eigenvalues in $\operatorname{spec}(f|_{\mathcal H_u})$ are even numbers; if $d$ is odd, then $\operatorname{spec}(f|_{\mathcal H_u})$ comprises odd numbers.
  It follows that
  \begin{equation*}
    \operatorname{spec}(f|_{\mathcal H_e}) \cap \operatorname{spec}(f|_{\mathcal H_o}) = \varnothing,
  \end{equation*}
  where $\mathcal  H_e \triangleq \bigoplus_{u \in S_e} \mathcal H_u$ and $\mathcal  H_o \triangleq \bigoplus_{u \in S_o} \mathcal H_u$.
  Observe that
  \begin{equation*}
    X_{S_e} = \Pi_{\mathcal H_e} H_m \quad \text{and} \quad  X_{S_o} = \Pi_{\mathcal H_o} H_m,
  \end{equation*}
  both of which are in $\operatorname{span} \{ f^k(H_m): k \ge 0 \} \subseteq \mathrm i\mathfrak g$ by \Cref{lem:fqe1}.\\

  Now consider the case when $S \subsetneq V$.
  Let $\bar{S} = V \backslash S$, then we have $X_{\bar{S}} = H_m - X_S \in \mathrm{i}\mathfrak{g}$ since $X_S\in\mathrm{i}\mathfrak{g}$.
  By \Cref{lem:esplit}, $ZZ_{E(S,\bar{S})} \in \mathrm{i}\mathfrak{g}$.
  Hence, $ZZ_{E(S)} + ZZ_{E(\bar{S})} = H_p - ZZ_{E(S,\bar{S})}\in\mathrm{i}\mathfrak{g}$.
  Consider the subgraph $G' = (V, E(S) \cup E(\bar{S}))$ and define
  \begin{equation*}
    f'(\cdot) \triangleq \frac{1}{4} \left[ZZ_{E(S)} + ZZ_{E(\bar{S})}, [ZZ_{E(S)} + ZZ_{E(\bar{S})}, \cdot]\right].
  \end{equation*}
  By the same argument, we can show that
  \begin{equation*}
    X_{S_e} = \Pi_{\mathcal H'_e} X_S,\ X_{S_o} = \Pi_{\mathcal H'_o} X_S \in \operatorname{span} \{ (f')^k(X_S): k \ge 0 \} \subseteq \mathrm i\mathfrak g.
  \end{equation*}
  Here, $\mathcal H'_e \triangleq \bigoplus_{u \in S_e} \mathcal H'_u$ and $\mathcal H'_o \triangleq \bigoplus_{u \in S_o} \mathcal H'_u$, and $\mathcal H'_u$ is the space spanned by XZ even stars centered at $u$ defined with respect to the subgraph $G'$.
\end{proof}

%The third lemma provides an alternative method for splitting $X_S$. More precisely, 
Given two disjoint subsets $S,T\subseteq V$, denote by $S_o(T)$ and $S_e(T)$ the sets of vertices in $S$ with odd and even degrees in the induced subgraph $G[S,T]$, respectively.
%, this time by the external parity.

\begin{lemma}[External vertex splitting]\label{lem:vsplit-ext}
  For any $S,T\subseteq V$ with $S\cap T=\varnothing$, if $X_S, X_T \in \mathrm{i}\mathfrak{g}$, then $X_{S_e(T)}, X_{S_o(T)} \in \mathrm{i}\mathfrak{g}$.
  %\sout{Let $S = S_e \cup S_o \subseteq V$, where $S_e$ ($S_o$) contains vertices whose degree in subgraph $G[S,T]$ is even (odd).}
\end{lemma}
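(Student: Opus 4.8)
The plan is to mirror the proof of internal vertex splitting (\Cref{lem:vsplit-int}), but to run the spectral argument on the bipartite subgraph $H \triangleq G[S,T]$ rather than on $G$. The first step is edge splitting (\Cref{lem:esplit}): from $X_S, X_T \in \mathrm{i}\mathfrak{g}$ one gets $ZZ_{E(S,T)} \in \mathrm{i}\mathfrak{g}$. Since $ZZ_{E(S,T)} = \sum_{(a,b)\in E(H)} Z_aZ_b$ is exactly the (unweighted) problem Hamiltonian of $H$, the machinery of \Cref{lem:wi5p} applies with $H_p$ replaced by $ZZ_{E(S,T)}$.

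Concretely, define $f''(\cdot) \triangleq \frac14[ZZ_{E(S,T)}, [ZZ_{E(S,T)}, \cdot]]$, the analogue of $f$ for $H$. For $u \in S$ every neighbour of $u$ in $H$ lies in $T$, so the space $\mathcal{H}''_u$ spanned by the XZ even stars centred at $u$ with respect to $H$ consists of $X_u$ decorated with $Z$'s on $T$. By \Cref{lem:wi5p}(\ref{itm:xpk9}) applied to $H$, each $\mathcal{H}''_u$ is $f''$-invariant, hence so is $V \triangleq \bigoplus_{u\in S}\mathcal{H}''_u$, which contains $X_S = \sum_{u\in S}X_u$. By \Cref{lem:wi5p}(\ref{itm:3mht}), $f''|_V$ is diagonalizable, and for $u\in S$ with $d \triangleq \deg_H(u) \ge 1$ the spectrum of $f''|_{\mathcal{H}''_u}$ equals $\{(d-2k)^2 : 0 \le k \le d-1\}$ (while for $d=0$ it is $\{0\}$, since then $ZZ_{E(S,T)}$ commutes with $X_u$); in every case all eigenvalues share the parity of $d$. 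Therefore $\operatorname{spec}(f''|_{\mathcal{H}''_{S_e(T)}})$ consists of even integers and $\operatorname{spec}(f''|_{\mathcal{H}''_{S_o(T)}})$ of odd integers, where $\mathcal{H}''_{S_e(T)} \triangleq \bigoplus_{u\in S_e(T)}\mathcal{H}''_u$ and likewise for $S_o(T)$; in particular these two spectra are disjoint.

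Thus $V = \mathcal{H}''_{S_e(T)} \oplus \mathcal{H}''_{S_o(T)}$ is a decomposition into $f''$-invariant subspaces with pairwise disjoint spectra, so \Cref{lem:fqe1} yields $\Pi_{\mathcal{H}''_{S_e(T)}} X_S,\ \Pi_{\mathcal{H}''_{S_o(T)}} X_S \in \operatorname{span}\{(f'')^t X_S : t \ge 0\}$. Since $X_u \in \mathcal{H}''_u$, these projections are precisely $X_{S_e(T)}$ and $X_{S_o(T)}$. Finally, because $\mathrm{i}X_S \in \mathfrak{g}$ and $\mathrm{i}ZZ_{E(S,T)} \in \mathfrak{g}$, the same $\mathrm{i}$-bookkeeping as in \Cref{lem:wi5p}(\ref{itm:pttf}) shows $(f'')^t X_S \in \mathrm{i}\mathfrak{g}$ for all $t \ge 0$, and hence $X_{S_e(T)}, X_{S_o(T)} \in \mathrm{i}\mathfrak{g}$.

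The main thing to get right is the correspondence with \Cref{lem:wi5p}: that lemma is phrased for the problem Hamiltonian of a DLA, whereas here $ZZ_{E(S,T)}$ enters the argument only as an element of $\mathrm{i}\mathfrak{g}$ produced by edge splitting. This causes no difficulty, since parts (\ref{itm:xpk9}) and (\ref{itm:3mht}) of \Cref{lem:wi5p} are purely computational facts about double commutators of $\sum_{(a,b)\in E(H)}Z_aZ_b$ with XZ even stars of $H$ and transfer verbatim, while part (\ref{itm:pttf}) only needs $\mathrm{i}ZZ_{E(S,T)} \in \mathfrak{g}$. The one genuinely degenerate case --- vertices of $S$ with no neighbour in $T$, for which the spectrum formula of \Cref{lem:wi5p} does not literally apply --- is handled by the direct observation above that $f''$ annihilates $X_u$ there, which is consistent with even degree and preserves the parity separation. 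I do not foresee any obstacle beyond this setup.
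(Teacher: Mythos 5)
Your proposal is correct and follows essentially the same route as the paper: edge splitting to obtain $ZZ_{E(S,T)} \in \mathrm{i}\mathfrak{g}$, then the spectral/parity argument of \Cref{lem:vsplit-int} applied to the operator $\frac14[ZZ_{E(S,T)},[ZZ_{E(S,T)},\cdot]]$ on the XZ even stars of the bipartite subgraph $G[S,T]$. Your explicit treatment of the degree-zero case is a detail the paper leaves implicit, and it is handled correctly.
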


\begin{proof}
  By \Cref{lem:esplit}, $ZZ_{E(S,T)} \in \mathrm{i}\mathfrak{g}$.
  Consider the subgraph $G' = (V, E(S, T))$ and define
  \begin{equation*}
    f'(\cdot) = \frac14 [ZZ_{E(S,T)}, [ZZ_{E(S,T)}, \cdot]].
  \end{equation*}
  By the same argument as the proof of \Cref{lem:vsplit-int}, we can show that
  \begin{equation*}
    X_{S_e (T)} = \Pi_{\mathcal H'_e} X_S,\ X_{S_o(T)} = \Pi_{\mathcal H'_o} X_S \in \operatorname{span} \{ (f')^k(X_S): k \ge 0 \} \subseteq \mathrm i\mathfrak g.
  \end{equation*}
  Here, $\mathcal H'_e \triangleq \bigoplus_{u \in S_e(T)} \mathcal H'_u$ and $\mathcal H'_o \triangleq \bigoplus_{u \in S_o (T)} \mathcal H'_u$, and $\mathcal H'_u$ is the space spanned by XZ even stars centered at $u$ defined with respect to the subgraph $G'$.
\end{proof}

\subsection{Algorithm for splitting DLA generators}\label{sec:DLA-algorithm}

Here we use the splitting lemmas of the previous section as the basis for an algorithm (\Cref{alg:dla}) for splitting up the generators of DLA $\mathfrak{g}$ for QAOA-MaxCut efficiently, and ensuring that they remain within $\mathrm{i}\mathfrak{g}$ after splitting.
As will be shown later, this splitting procedure greatly simplifies the computation of DLAs for almost all graphs and, when applicable, it splits the generators into individual Pauli strings, indicating that the corresponding DLA is free, and making further computation unnecessary.
%In this way, it significantly accelerates the computation of the DLA of almost all graphs. 
% Procedures \ref{alg:vsplit-int} and \ref{alg:vsplit-ext} propose the algorithms for splitting $H_m$, and \Cref{alg:esplit} proposes the algorithm for splitting $H_p$.
% \Cref{alg:dla} returns the DLAs for general graphs by using the above generator-splitting algorithms.

\Cref{alg:dla} is itself based on two algorithms (\Cref{alg:vsplit-int,alg:vsplit-ext}) for splitting $H_m$, and one algorithm (\Cref{alg:esplit}) for splitting $H_p$.
The idea is to use these three algorithms to partition the DLA generators into as many small components as possible.
If the output partition $\mathcal{P}$ of the vertex set $V$ is such that $\abs{\mathcal{P}}=\abs{V}$, then every operator $\mathrm{i}X_u$ for $u\in V$ is in the DLA, and the output of the algorithm is a basis for $\mathfrak{g}_{\text{ma}}$.
If $\abs{\mathcal{P}} <\abs{V}$ then the procedure fails to completely split the DLA, and the algorithm resorts to a standard brute force approach (e.g. $\texttt{GenerateDLA(}\mathcal B)$ from \cite{allcock2024dynamical}) to compute a basis of $\mathfrak{g}$ from the set of generators produced by \Cref{alg:vsplit-int,alg:vsplit-ext,alg:esplit}.

% \syz{For all the algorithms, (1) consider to use more intuitive names. For example, $X$-InternalSplit, $X$-InternalExternalSplit, etc. (2) remove the ``// by Lemma ...'' in the algorithm comments---the steps are specified regardless of the lemmas, and the lemmas are to guarantee some properties of the steps, but there is no enough space for explanation. So let's just put all explanations in the proofs.}

\SetKwFunction{SplitVerticesInternal}{SplitVerticesInternal}
\SetKwFunction{SplitVerticesExternal}{SplitVerticesExternal}
\SetKwFunction{SplitEdges}{SplitEdges}

\begin{algorithm}[!ht]
  \caption{SplitVerticesInternal($G$)}\label{alg:vsplit-int}

  \KwData{A graph $G = (V,E)$}
  \KwResult{A vertex partition $\mathcal{P} = \{ V_1, V_2, \dots \}$}

  \BlankLine
  $V_e \gets \{ u \in V: \deg (u) \text{ is even} \}$\;
  $V_o \gets \{ u \in V: \deg (u) \text{ is odd} \}$\;
  \BlankLine
  \lIf{$V_e = \varnothing$}{\Return{$\{ V_o \}$}}
  \lIf{$V_o = \varnothing$}{\Return{$\{ V_e \}$}}

  \BlankLine
  $\mathcal{P}_e \gets$ \SplitVerticesInternal{$G[V_e]$};\\
  %\tcp*[f]{by \Cref{lem:vsplit-int}} 
  $\mathcal{P}_o \gets$ \SplitVerticesInternal{$G[V_o]$}; %\tcp*[f]{by \Cref{lem:vsplit-int}}

  \BlankLine
  \Return{$\mathcal{P}_e \cup \mathcal{P}_o$}
\end{algorithm}

\begin{algorithm}[!ht]
  \SetKwBlock{Loop}{loop}{end}
  \SetKw{And}{and}
  \SetKw{Break}{break}
  \SetKwData{Flag}{flag}

  \caption{SplitVerticesExternal($G, \mathcal P$)}\label{alg:vsplit-ext}

  \KwData{A graph $G = (V,E)$, and a vertex partition $\mathcal P$}
  \KwResult{A finer vertex partition}

  % \BlankLine
  % \syzn{$\mcP^{(1)} \pei{\gets} \mcP$}\;

  % \For{\pei{$t\gets 1$ \bf{to} $n$}}{\label{ln:gjvj}
  %   \Flag $\gets$ \texttt{false}\;
  %   \syzn{$\mcP^{(t+1)} \gets \mcP^{(t)}$}\;
  %   \ForEach{$S \neq T \in \mathcal{P}^{(t)}$}{
  %     $S_e \gets \{ u \in S: \pei{|\mathcal N_T(u)|}\text{ is even} \}$\;
  %     $S_o \gets \{ u \in S: \pei{|\mathcal N_T(u)|}\text{ is odd} \}$\;
  %     \If{$S_e \neq \varnothing$ \And $S_o \neq \varnothing$}{
  %       $\mathcal{P}^\syzn{{(t+1)}} \gets (\mathcal{P}^\syzn{{(t+1)}} \backslash \{S\}) \cup \{S_e,S_o\}$;\\ %\tcp*[f]{by \Cref{lem:vsplit-ext}}
  %       \Flag $\gets$ \texttt{true}\;
  %       \Break;
  %     }
  %   }
  %   \If{\Flag is \texttt{false} }{\syzn{$t \pei{\gets} t+1$}\; \Break\;}
  % }

  % \BlankLine
  % \Return{$\mathcal{P}^\syzn{(t)}$ \pei{$\mathcal{P}^{(n)}$}}

  \BlankLine
  \Loop{\label{ln:gjvj}
    \Flag $\gets$ \texttt{false}\;
    \ForEach{$S \neq T \in \mathcal{P}$}{
      $S_e \gets \{ u \in S: |\mathcal N_T(u)|\text{ is even} \}$\;
      $S_o \gets \{ u \in S: |\mathcal N_T(u)|\text{ is odd} \}$\;
      \If{$S_e \neq \varnothing$ \And $S_o \neq \varnothing$}{
        $\mathcal{P} \gets (\mathcal{P} \backslash \{S\}) \cup \{S_e,S_o\}$;\\ %\tcp*[f]{by \Cref{lem:vsplit-ext}}
        \Flag $\gets$ \texttt{true}\;
        \Break;
      }
    }
    \lIf{\Flag is \texttt{false} }{\Break}
  }

  \BlankLine
  \Return{$\mathcal{P}$}
\end{algorithm}

% \ruic{I decide to restore \Cref{alg:vsplit-ext} back to the original version, and modify \Cref{lem:split-twice} instead. The reason is that the proposed procedure requires too many loops: we have to loop over rounds, then $S$, then $T$, then fragments of $S$ ($S$ becomes finer by each $T$). Or, this algorithm can be written in a similar way to \Cref{alg:bfssplitting}, but that looks like duplication. Alternatively, I think a modification of \Cref{lem:split-twice} is pretty easy, please check the correctness.}
% \syz{I guess one good thing about the late-update version (updating $\mcP$ only after the inner loop) is that the cost of complete split is much clearer---just $|\mcP|^2$ is enough.}

\begin{algorithm}[!ht]
  \caption{SplitEdges($G$, $\mathcal{P}$)}\label{alg:esplit}

  \KwData{A graph $G = (V,E)$ and a vertex partition $\mathcal{P}$}
  \KwResult{An edge partition}

  \BlankLine
  $E' \gets \bigcup_{S \in \mathcal{P}}E(S)$\;
  $\mathcal{Q} \gets \{ E' \}$\;

  \BlankLine
  \ForEach{$S \neq T \in \mathcal{P}$}{
    $E' \gets E(S,T)$\;
    $\mathcal{Q} \gets \mathcal{Q} \cup \{E'\}$;\\ %\tcp*[f]{by \Cref{lem:esplit}} 
  }

  \BlankLine
  \Return{$\mathcal{Q}$}
\end{algorithm}

\begin{algorithm}[!ht]
  \caption{GraphDLA($G$)}\label{alg:dla}

  \KwData{A graph $G = (V,E)$}
  \KwResult{A basis for the QAOA-MaxCut DLA $\mathfrak{g}$ of $G$}

  \BlankLine
  $\mathcal{P} \gets$ \SplitVerticesInternal{$G$}\;
  $\mathcal{P} \gets$ \SplitVerticesExternal{$G, \mathcal P$}\;

  \BlankLine
  \uIf{$\abs{\mathcal{P}} = |V|$}{
    $\mathcal B \gets$ a known basis of $\mathfrak g_{\rm ma}$\tcp*{See \cite{kazi2024analyzing}}
  }
  \Else{
    $\mathcal{Q} \gets$ \SplitEdges{$G$, $\mathcal{P}$}\;
    $\mathcal G \gets \{ \mathrm{i} X_S: S \in \mathcal{P} \} \cup \{ \mathrm{i} ZZ_{E'}: E' \in \mathcal{Q} \}$\;
    $(\mathcal B, \_) \gets \texttt{GenerateDLA(}\mathcal G)$\tcp*{Algorithm 1 in \cite{allcock2024dynamical}}
  }
  \Return $\mathcal B$\;
\end{algorithm}

An example of \Cref{alg:vsplit-int,alg:vsplit-ext} is shown in \Cref{fig:example-partitionnodes}.
It is worth noting that our algorithm is \emph{faithful} but not \emph{complete}.
That is, if \Cref{alg:vsplit-ext} on input $G = (V, E)$ returns a vertex partition $\mathcal P$ such that $|\mathcal P| = |V|$, then the QAOA-MaxCut DLA of $G$ must be free.
Unfortunately, the converse does not hold --- see \Cref{fig:hsvc} for a counter-example.
%\syz{Need elaboration here: why the graph has free DLA, and why splitting lemma does not work for it.}

\begin{figure}[!ht]
  \centering
  \import{./img/}{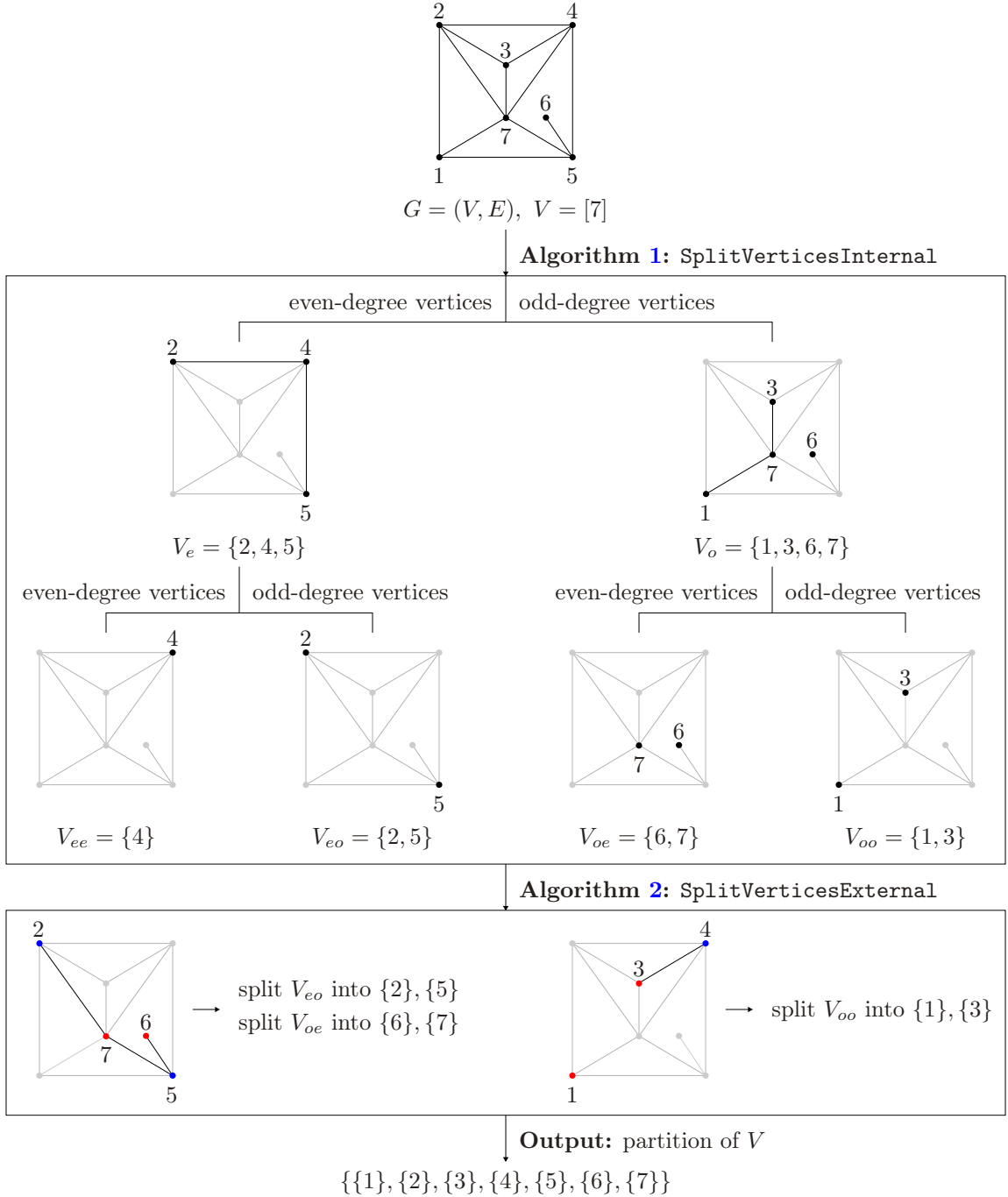}
  \caption{An example of vertex-partition on a $7$-vertex graph (\Cref{alg:vsplit-int,alg:vsplit-ext}).}
  \label{fig:example-partitionnodes}
\end{figure}

\begin{figure}[ht]
  \centering
  \begin{tikzpicture}
  \foreach \num/\x/\y in {
      4/1/0,
      3/0.5/0.87,
      2/-0.5/0.87,
      1/-1/0,
      6/-0.5/-0.87,
      5/0.5/-0.87,
      7/0/0
    } {
      \node[circle,draw,inner sep=1pt] (\num) at (\x,\y) {\num};
    }

  \draw (1) -- (2) -- (3) -- (4) -- (5) -- (6) -- (1);
  \draw (7) -- (2) -- (4);

  \draw[->] (1.5,0) -- node[above] {\Cref{alg:vsplit-int,alg:vsplit-ext}} +(3.5,0);

  \node[right] at (5.1,0) {$
      \begin{aligned}
        \mathcal P & = \{\{1\},\{2,6\},\{3,5\},\{4,7\}\} \\
                   & \neq \{\{u\}: u \in [7]\}
      \end{aligned}
    $};
\end{tikzpicture}
  \caption{A 7-vertex graph with a free QAOA-MaxCut DLA but \Cref{alg:vsplit-int,alg:vsplit-ext} fail to split the vertices completely.}
  \label{fig:hsvc}
\end{figure}
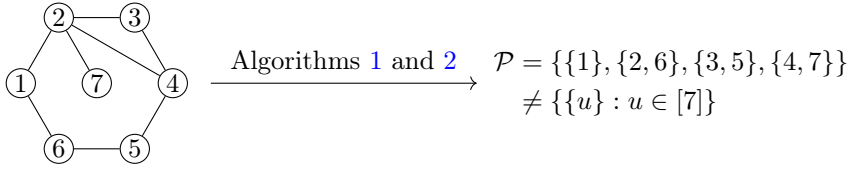

\begin{proposition}[Correctness of \Cref{alg:dla}]
  \Cref{alg:dla} outputs a basis of $\mathfrak{g}$.
\end{proposition}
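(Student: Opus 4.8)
The plan is to establish two containments, $\langle\mathcal{G}\rangle_{\mathrm{Lie},\mathbb{R}}\subseteq\mathfrak{g}$ and $\mathfrak{g}\subseteq\langle\mathcal{G}\rangle_{\mathrm{Lie},\mathbb{R}}$, where $\mathcal{G}$ is the generating set handed to the brute-force subroutine; correctness of \Cref{alg:dla} then follows from the correctness of $\texttt{GenerateDLA}$ of \cite{allcock2024dynamical}, which by assumption returns a basis of the Lie algebra generated by its input (and, in the degenerate branch, from the fact that the stored object is a genuine basis of $\mathfrak{g}_{\mathrm{ma}}$). The key observation is that all the real work is done by the three splitting lemmas; the proof is essentially a careful tracking of the invariant ``$\mathrm{i}X_S\in\mathrm{i}\mathfrak{g}$'' through the recursion and loops.

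For the first containment I would argue as follows. First, by induction on the recursion depth of \texttt{SplitVerticesInternal}($G$), every part $S$ of the returned partition $\mathcal{P}$ satisfies $\mathrm{i}X_S\in\mathrm{i}\mathfrak{g}$: the base case is $\mathrm{i}X_V=\mathrm{i}H_m\in\mathrm{i}\mathfrak{g}$, and at a recursive call on $G[W]$ with $\mathrm{i}X_W\in\mathrm{i}\mathfrak{g}$ already known, \Cref{lem:vsplit-int} applied with $S=W$ gives $\mathrm{i}X_{W_e},\mathrm{i}X_{W_o}\in\mathrm{i}\mathfrak{g}$, and $W_e,W_o$ are exactly the sets passed to the two child calls (using that $(G[W])[W']=G[W']$ for $W'\subseteq W$, so the ``internal'' degree parities computed at each level coincide with those in \Cref{lem:vsplit-int}). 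Second, \texttt{SplitVerticesExternal}($G,\mathcal{P}$) preserves this invariant, since each successful iteration replaces a part $S$ by the sets $S_e,S_o$ of its vertices of even/odd degree in $G[S,T]$, and \Cref{lem:vsplit-ext} gives $\mathrm{i}X_{S_e},\mathrm{i}X_{S_o}\in\mathrm{i}\mathfrak{g}$ from $\mathrm{i}X_S,\mathrm{i}X_T\in\mathrm{i}\mathfrak{g}$; the loop terminates because $|\mathcal{P}|$ strictly increases at each split and is bounded by $|V|$. Third, in the branch $|\mathcal{P}|<|V|$, the partition $\mathcal{Q}=\texttt{SplitEdges}(G,\mathcal{P})$ of $E$ consists of the between-class sets $E(S,T)$, for which $\mathrm{i}ZZ_{E(S,T)}\in\mathrm{i}\mathfrak{g}$ by \Cref{lem:esplit}, together with $E'=\bigcup_{S\in\mathcal{P}}E(S)$; since $E$ is the disjoint union of $E'$ with the $E(S,T)$, we get $\mathrm{i}ZZ_{E'}=\mathrm{i}H_p-\sum_{\{S,T\}}\mathrm{i}ZZ_{E(S,T)}\in\mathrm{i}\mathfrak{g}$. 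Hence every element of $\mathcal{G}=\{\mathrm{i}X_S:S\in\mathcal{P}\}\cup\{\mathrm{i}ZZ_{E'}:E'\in\mathcal{Q}\}$ lies in $\mathrm{i}\mathfrak{g}$, so $\langle\mathcal{G}\rangle_{\mathrm{Lie},\mathbb{R}}\subseteq\mathfrak{g}$.

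For the reverse containment, from the same disjointness of the partitions one has $\mathrm{i}H_m=\sum_{S\in\mathcal{P}}\mathrm{i}X_S$ and $\mathrm{i}H_p=\sum_{E'\in\mathcal{Q}}\mathrm{i}ZZ_{E'}$, both $\mathbb{R}$-linear combinations of elements of $\mathcal{G}$; since $\mathfrak{g}=\langle\{\mathrm{i}H_m,\mathrm{i}H_p\}\rangle_{\mathrm{Lie},\mathbb{R}}$ this gives $\mathfrak{g}\subseteq\langle\mathcal{G}\rangle_{\mathrm{Lie},\mathbb{R}}$, hence equality, and $\texttt{GenerateDLA}(\mathcal{G})$ returns a basis of $\mathfrak{g}$. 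In the remaining branch $|\mathcal{P}|=|V|$, each part is a singleton, so $\mathrm{i}X_u\in\mathrm{i}\mathfrak{g}$ for all $u\in V$; as the graph is unweighted ($r_{uv}\equiv1\neq0$), \Cref{lem:w547} gives $\mathfrak{g}=\mathfrak{g}_{\mathrm{ma}}$, so the stored basis of $\mathfrak{g}_{\mathrm{ma}}$ (assembled from the component bases of \cite{kazi2024analyzing} over the connected components of $G$, with $\{\mathrm{i}X_u\}$ for any isolated vertex) is a basis of $\mathfrak{g}$.

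I expect the only step needing genuine care to be the induction for \texttt{SplitVerticesInternal}: one must verify that at each level the degree parities are taken in the correct induced subgraph and match the hypotheses of \Cref{lem:vsplit-int}, and that the recursion bottoms out consistently (when $V_e$ or $V_o$ is empty the whole current set is returned, and no splitting is claimed there, which is exactly what the invariant requires). Everything else --- the termination of the external loop, the fact that $\mathcal{Q}$ partitions $E$, and the recovery of $H_m,H_p$ --- is routine bookkeeping layered on top of \Cref{lem:esplit,lem:vsplit-int,lem:vsplit-ext} and the assumed correctness of $\texttt{GenerateDLA}$.
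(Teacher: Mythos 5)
Your proof is correct and follows essentially the same route as the paper's: propagate the invariant $\mathrm{i}X_S\in\mathfrak{g}$ through the internal recursion and external loop via \Cref{lem:vsplit-int,lem:vsplit-ext}, then obtain the edge pieces via \Cref{lem:esplit} and the complement $ZZ_{E'}=H_p-\sum ZZ_{E(S,T)}$. You are in fact slightly more thorough than the paper, which leaves implicit both the reverse containment $\mathfrak{g}\subseteq\langle\mathcal{G}\rangle_{\mathrm{Lie},\mathbb{R}}$ (needed so that \texttt{GenerateDLA} returns a basis of $\mathfrak{g}$ rather than of a larger algebra) and the $|\mathcal{P}|=|V|$ branch handled via \Cref{lem:w547}.
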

\begin{proof}
  Regarding \Cref{alg:vsplit-int}, since $X_V = H_p \in \mathrm{i}\mathfrak{g}$, it follows from \Cref{lem:vsplit-int} that $X_{V_o}, X_{V_e} \in \mathrm{i}\mathfrak{g}$.
  For the induced subgraphs $G[V_o]$ and $G[V_e]$, a recursive argument combined with \Cref{lem:vsplit-int} implies that $X_T \in \mathrm{i}\mathfrak{g}$ for any $T \in \mathcal{P}_e \cup \mathcal{P}_o$.
  Now, consider \Cref{alg:vsplit-ext}.
  Let $\mathcal{P}$ be the output of \Cref{alg:vsplit-int}.
  For any distinct $S,T \in \mathcal{P}$, they are disjoint since $\mcP$ is a partition of $V$, therefore \Cref{lem:vsplit-ext} guarantees that $X_{S_e(T)}, X_{S_o(T)} \in \mathrm{i}\mathfrak{g}$.
  The set $S$ in $\mathcal{P}$ is then replaced by $S_e (T) $ and $S_o (T)$.
  By iterating this step, the final output $\mathcal{P}$ of \Cref{alg:vsplit-ext} satisfies $X_S \in \mathrm{i}\mathfrak{g}$ for all $S \in \mathcal{P}$.
  Finally, for \Cref{alg:esplit}, let $\mathcal{P}$ now denote the output of \Cref{alg:vsplit-ext}.
  Since $X_S, X_T \in \mathrm{i}\mathfrak{g}$ for any $S, T \in \mathcal{P}$ and $S\cap T=\varnothing$, \Cref{lem:esplit} implies that $ZZ_{E(S,T)} \in \mathrm{i}\mathfrak{g}$.
  Furthermore, the term $ZZ_{E'} = ZZ_{E} - \sum_{S,T \in \mathcal{P}, S \neq T} ZZ_{E(S,T)}$ also lies in $\mathrm{i}\mathfrak{g}$, where $E' = \bigcup_{S \in \mathcal{P}} E(S)$.
  As the output $\mathcal{Q}$ of \Cref{alg:esplit} consists precisely of the sets $E(S,T)$ (for $S \neq T\in\mathcal{P}$) and $E'=\bigcup_{S\in\mathcal{P}}E(S)$, it follows that $ZZ_{E''} \in \mathrm{i}\mathfrak{g}$ for every $E'' \in \mathcal{Q}$, the returned partition.
  %
  %The above reasoning collectively establishes the correctness of \Cref{alg:dla}.
\end{proof}

\subsection{All splitting {procedures} lead to the same partition}\label{sec:bfs-split}
%\syz{I added the following subsection.}

One may naturally wonder whether different choices of sets $S$ and $T$ in different steps of Algorithm \ref{alg:vsplit-ext} lead to different ending partitions.
In this section, we show that this is not the case: Any partition procedure that applies the splitting lemmas \Cref{lem:vsplit-int} and \ref{lem:vsplit-ext} -- for any choice of $S$ in internal splitting, any choice of $S$ and $T$ in external splitting, and in any order -- leads to the same result.
For ease of presentation, we identify the internal splitting of $S$ as a special case of external splitting with $S=T$ in this subsection.

Suppose that $\mathcal{P}$ and $\mathcal{Q}$ are two partitions of $V$.
We say that partition $\mathcal{P}$ \textit{refines} partition $\mathcal{Q}$ \textit{by a function} $p$ on $V$ if, for any set $S\in \mathcal{Q}$ and two elements $u,v\in S$, these two elements belong to the same set in $\mathcal{P}$ iff $p(u)=p(v)$.
In this case, we also say that $\mathcal{P}$ is a \textit{finer} partition of $\mathcal{Q}$.
Based on these concepts, in \Cref{alg:bfssplitting} we define a breadth-first search (BFS) splitting algorithm.
% Note that the loop in \Cref{alg:bfssplitting} stops at $n-2$ because the size of $\mathcal P^{(t)}$ is strictly increasing before the loop ends.

% \paragraph{BFS Splitting algorithm}
% \begin{enumerate}
%     \item $\mathcal{P}^{(0)} = \{V\}$. 
% 	\item \textbf{for} $t=0$ \textbf{to} $n$
% 	\begin{enumerate}
% 	    \item Suppose $\mathcal{P}^{(t)} = \{S_1,\ldots, S_k \}$.
% 	    \item Compute the parity function $p:V \to \{0,1\}^k$ defined by \[
% p(v) = (|N_{S_1} (v)|  \pmod2,\ldots , |N_{S_k } (v)|  \pmod2).\]
% 	    \item Let $\mathcal{P}^{(t+1)}$ be $\mathcal{P}^{(t) }$ refined by $p$.
% 	    \item \textbf{if} $\mathcal{P}^{(t+1)} = \mathcal{P}^{(t)}$, \textbf{return} $\mathcal{P}^{(t)}$ (and end the algorithm).
% 	\end{enumerate}
% \end{enumerate}

\begin{algorithm}[htb!]
  \caption{BFS-Splitting($G$)}\label{alg:bfssplitting}

  \KwData{A graph $G = (V,E)$}
  \KwResult{A vertex partition $\mathcal{P}$}

  \BlankLine
  $\mathcal{P}^{(0)} \gets \{V\}$\;

  \BlankLine
  \For{$t\gets 0$ \bf{to} $n-2$}{
    Suppose $\mathcal{P}^{(t)} = \{S_1,\ldots, S_k \}$\;
    Compute the parity function $p:V \to \{0,1\}^k$ defined by \[
      p(  v) = \big(|\mathcal N_{S_1} (v)|  \pmod2,\ldots , |\mathcal N_{S_k } (v)|  \pmod2 \big);\]

    $\mathcal{P}^{(t+1)} \gets \mathcal{P}^{(t) }$ refined by $p$\;
    \lIf{$\mathcal{P}^{(t+1)} = \mathcal{P}^{(t)}$}{\Return{$\mathcal{P}^{(t)}$}}
  }
  \Return{$\mathcal{P}^{(n-1)}$}\;
\end{algorithm}

We call the ending partition of the BFS splitting algorithm the \textit{BFS partition}.

\begin{theorem}\label{thm:all-splittings}
  Any partition procedure ends with the BFS partition, regardless of the choice of $S$ and $T$ in each step applying the splitting lemmas, as long as no further refined partition is possible.
\end{theorem}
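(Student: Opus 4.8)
The plan is to introduce the notion of a \emph{stable} partition and show that the terminal partition of an arbitrary splitting procedure and the output $\mathcal{P}^{\mathrm{BFS}}$ of \Cref{alg:bfssplitting} each refine the other, hence coincide. Call a partition $\mathcal{P}$ of $V$ \emph{stable} if neither \Cref{lem:vsplit-int} nor \Cref{lem:vsplit-ext}, applied to any ordered pair of (possibly equal) blocks $S,T \in \mathcal{P}$, yields a strictly finer partition. Unifying internal splitting with external splitting at $S=T$ as in the text, and noting that $|\mathcal{N}_S(v)| = \deg_{G[S]}(v)$ for $v \in S$, stability is equivalent to the statement that for all $S,T \in \mathcal{P}$ the parity map $v \mapsto |\mathcal{N}_T(v)| \bmod 2$ is constant on $S$. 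Since each refinement step increases the number of blocks by at least one, and a partition of $V$ has at most $|V|$ blocks, every splitting procedure halts after at most $|V|-1$ steps, and it halts exactly at a stable partition.

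The core of the proof is the following claim: \emph{if $\mathcal{R}$ is any stable partition of $V$, then $\mathcal{R}$ refines every partition occurring during any splitting procedure started from $\{V\}$, and also every intermediate partition $\mathcal{P}^{(t)}$ of \Cref{alg:bfssplitting}.} I would prove this by induction on the step index, the base case being $\{V\}$, which is refined by everything. For the inductive step one passes from a partition $\mathcal{P}$ (refined by $\mathcal{R}$, by the inductive hypothesis) to a partition $\mathcal{P}'$ obtained by subdividing blocks of $\mathcal{P}$ according to the parities $|\mathcal{N}_{S_i}(v)| \bmod 2$, where $S_i$ ranges either over a single block $T \in \mathcal{P}$ (a splitting-lemma step) or over all blocks of $\mathcal{P}$ at once (a BFS step). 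It suffices to check that two vertices $u,v$ in a common block of $\mathcal{R}$ are not separated by this subdivision, i.e. that $|\mathcal{N}_{S_i}(u)| \equiv |\mathcal{N}_{S_i}(v)| \pmod 2$ for every relevant $S_i$. Since $\mathcal{R}$ refines $\mathcal{P}$, we may write $S_i = D_1 \uplus \cdots \uplus D_m$ with each $D_j \in \mathcal{R}$, so $|\mathcal{N}_{S_i}(u)| = \sum_{j} |\mathcal{N}_{D_j}(u)|$; and applying stability of $\mathcal{R}$ to the pair $(D_u, D_j)$, where $D_u \in \mathcal{R}$ is the block containing $u$ and $v$, gives $|\mathcal{N}_{D_j}(u)| \equiv |\mathcal{N}_{D_j}(v)| \pmod 2$ for each $j$. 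Summing over $j$ mod $2$ completes the induction.

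Granting the claim, I would conclude as follows. A splitting procedure terminates at a stable partition $\mathcal{Q}$; applying the claim with $\mathcal{R} = \mathcal{Q}$ shows that $\mathcal{Q}$ refines every BFS intermediate, in particular the final one, $\mathcal{P}^{\mathrm{BFS}}$. Conversely, $\mathcal{P}^{\mathrm{BFS}}$ is itself stable: \Cref{alg:bfssplitting} returns either because a parity-vector refinement leaves the partition unchanged — so within each block every coordinate of the parity vector is constant, whence no split applies — or after all $|V|-1$ iterations, each a strict refinement, forcing the all-singletons partition, which is vacuously stable. Applying the claim with $\mathcal{R} = \mathcal{P}^{\mathrm{BFS}}$ then shows that $\mathcal{P}^{\mathrm{BFS}}$ refines $\mathcal{Q}$. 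Two partitions that refine each other are equal, so $\mathcal{Q} = \mathcal{P}^{\mathrm{BFS}}$, which is \Cref{thm:all-splittings}.

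The only subtle point I anticipate is the bookkeeping inside the inductive claim: one must handle single-split steps and the simultaneous parity-vector BFS steps by the same argument, and must be careful that the set $S_i$ against whose neighbourhood parities one subdivides is a block of the partition \emph{before} the step, so that the inductive hypothesis applies and $S_i$ can be decomposed into $\mathcal{R}$-blocks. Everything else — the additivity of $v \mapsto |\mathcal{N}_{\cdot}(v)|$ over disjoint unions, and the reduction mod $2$ — is routine.
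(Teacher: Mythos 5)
Your proof is correct and follows essentially the same route as the paper's: both directions of the mutual refinement rest on the mod-2 additivity of $v \mapsto |\mathcal N_T(v)|$ over a decomposition of $T$ into blocks of the finer partition, together with the stability (terminality) of the ending partitions. The only difference is organizational: you package the paper's two separate inductions into a single reusable claim (``any stable partition refines every reachable partition'') applied once with $\mathcal R = \mathcal Q$ and once with $\mathcal R = \mathcal P^{\mathrm{BFS}}$, whereas the paper runs a synchronous step-by-step induction for one direction and an induction on BFS-tree levels for the other.
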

\begin{proof}
  First, we show that the BFS partition $\mcP$ is the finest partition of any partition procedure $\bm Q$.
  Suppose the procedure $\bm Q$ gives partitions $\mcQ^{(t)}$ for $t=1,2,\ldots$ In each time step $t$ the procedure $\bm Q$ chooses a pair $S,T\in \mcQ^{(t)}$ and applies \Cref{lem:vsplit-int} (if $S=T$) or \Cref{lem:vsplit-ext} (if $S\ne T$) to split $S$, and then increases $t$ by 1.
  We claim that $\mcP^{(t)}$ in BFS splitting algorithm is a finer partition of $\mcQ^{(t)}$ for all $t$.
  This can be seen by induction.
  The base $\mcP^{(0)} = \mcQ^{(0)} = \{V\}$ holds.
  Now suppose $\mcP^{(t)}$ is a finer partition of $\mcQ^{(t)}$, and the procedure $\bm Q$ chooses a pair $S,T\in \mcQ^{(t)}$ and applies the splitting lemma to get to $\mcQ^{(t+1)}$.
  Any two vertices $u,v\in S$ that are separated by $T$, i.e. $|\mathcal N_T (u)|\ne |\mathcal N_T (v)| \pmod2$, is also separated in $\mcP^{(t+1)}$.
  Indeed, since $\mcP^{(t)}$ is a finer partition of $\mcQ^{(t)}$, there are sets $T_1,\ldots ,T_\ell\in \mcP^{(t)}$ whose union equals $T\in \mcQ^{(t)}$.
  Note that at least one $T_j$ has $|\mathcal N_{T_j } (u)| \ne |\mathcal N_{T_j } (v)| \pmod2$---otherwise summing them up would give $|\mathcal N_T (u)|=|\mathcal N_T (v)| \pmod2$, contradiction.
  Therefore, $\mcP^{(t+1) }$ is a finer partition of $\mcQ^{(t+1)}$ as well.
  Continuing this process, we conclude that the BFS partition $\mcP$ is a refined partition of the ending partition of $\bm Q$.

  %\syz{I rewrote the second part of the proof. See whether this version is clearer.}

  Second, we show that any partition procedure $\bm Q$ does not end until it reaches the BFS partition.
  Note that the BFS partition procedure naturally corresponds to a tree $\mcT_{BFS}$, where the root is labeled $V$, and each node with label $S$ has children with labels $S_1,\ldots , S_\ell$ if $S$ is partitioned into $S_1,\ldots ,S_\ell$.
  If $S$ is not further partitioned from $\mcP^{(t)}$ to $\mcP^{(t+1)}$, then it has a single child also of label $S$.
  The ending partition is the collection of all the leaves, which are all at the last level.
  Now consider any partition procedure $\bm Q$; suppose that it ends with partition $\mcQ$.
  It is also associated with a tree $\mcT_{\bm Q}$.

  The nodes in $\mcT_{BFS}$ with distance $t$ to the root is said to be at level $t$.
  For any level $t$, the associated sets of the nodes at level $t$ are exactly those in partition $\mcP^{(t)}$ in Algorithm.
  We will show that the partition $\mcQ$ is a refined partition of $\mcP^{(t)}$ for all $t$, which implies that $\mcQ$ also refines the BFS partition $\mcP_{BFS}$.
  The proof is by induction on $t$.
  For $t=0$, the root set $\mcP^{(0)}=\{V\}$ is clearly refined by any partition $\mcQ$ of $V$.
  Now suppose that $\mcQ$ refines $\mcP^{(t)}$ and we will show that $\mcQ$ refines $\mcP^{(t+1)}$.
  We call a set $S\subseteq V$ compatible with partition $\mcQ$ if either $S\in \mcQ$ or $S$ is partitioned into $S_1,\ldots ,S_\ell$, all of which are in $\mcQ$.
  The partition $\mcQ$ refines $\mcP^{(t+1)}$ if and only if all nodes at level $t+1$ have their associated sets compatible with $\mcQ$.

  Take any node $v$ in level $t$ in tree $\mcT_{BFS}$; suppose its associated set is $S$.
  If $v$ has only one child $v'$, then $v'$ is also associated with $S$, which is compatible with $\mcQ$ by induction hypothesis.
  If $v$ has more than one child, then the children are obtained from splitting $S$ by sets in $\mcP^{(t)}$.
  Since $\mcQ$ refines $\mcP^{(t)}$, any two points $u,v\in S$ that are separated by sets in $\mcP^{(t)}$ will also be separated by sets in $\mcQ$.
  (If $u$ and $v$ have different parity of size of their neighbor set in $T = T_1 \uplus \cdots \uplus T_k$, so do they for at least one $T_i$.)
  Therefore $u$ and $v$ belong to different sets in $\mcQ$---otherwise $\mcQ$ could not be the ending partition of procedure $\bm Q$ as $u$ and $v$ can still be separated by sets in $\mcQ$.
  In other words, when $\mcP^{(t)}$ is refined to $\mcP^{(t+1)}$, points in any set $S'\in \mcQ$ remains in one set in $\mcP^{(t+1)}$.
  Therefore $\mcQ$ refines $\mcP^{(t+1)}$.
\end{proof}

\subsection{DLAs for random graphs}\label{sec:DLA-ERgraph}

% \syz{Consider to put this section ahead, right after the weighted graph section. Then Section \ref{sec:DLA-asym-subdivided-odd-graph} and \ref{sec:j8w9} can be motivated by explicit constructions of graphs with free DLA.}
% \ruic{My proof depends on the fact there exists at least one graph with DLA when $n \ge 7$, which is proved in \Cref{sec:DLA-asym-subdivided-odd-graph}.}

%\jon{(jon): I guess the proof actually shows that most random graphs are free and splittable. So that the fraction of cases where the splitting is suboptimal is vanishing}

Here we study the DLAs of QAOA-MaxCut on $n$-vertex random graphs in the Erd\H{o}s-R\'enyi (ER) model $G(n,\frac{1}{2})$.
Our main result of this section is the following theorem.
\thmunweightedBP*

% \begin{theorem}\label{thm:DLA-ER-main}
%     For any $n \ge 7$, the DLA of QAOA-MaxCut on $G \sim G(n,\frac{1}{2})$ is free with probability at least $1-\exp(-\Theta(n))$.
% \end{theorem}

In Sections \ref{sec:property-ER} and \ref{sec:unweighted-random-proof}, we will prove the first statement.
Here we show how the second and third statements, and the claim regarding non-isomorphic graphs, follow from the first one.

For the second statement, note that the probability that $G\sim G(n,\frac{1}{2})$ is disconnected is $O(n \cdot (1/2)^{n-1})=e^{-\Omega(n)}$ \cite{gilbert1959random} (this probability is dominated by the event that a single vertex is disconnected), and it is straightforward to see that the probability that $G$ is a path or cycle is $e^{-\Omega(n^2)}$.
Thus, by \Cref{thm:zygu} and a union bound over the bad events (i) the graph is disconnected; (ii) the graph is a cycle or path (iii) the DLA is not free, we obtain the second statement.

For the third statement, note that in \Cref{eq:VQA-var}, $k=1$ or 2, $\mcP_{\.g_j}(\rho) \le 1$ and $\mcP_{\.g_j}(O) \le 2^n$, and $\dim(\.g_j) = \Theta(4^n)$, thus the conclusion follows.

For the non-isomorphic graph statement, note that with probability $1-2^{-\Omega(n)}$ a random graph $G\sim G(n,\frac{1}{2})$ has no nontrivial automorphisms \cite{erdos1963asymmetric}.
Furthermore, each such graph appears the same number of times (actually just $n!$) in all $2^{\binom{n}{2}}$ graphs.
Therefore, sampling from $G(n,\frac{1}{2})$ under the condition that $G$ has no nontrivial automorphisms is the same as uniform sampling from non-isomorphic graphs.
Combined with the fact that this condition holds with probability $1-2^{-\Omega(n)}$, the statement follows.

\subsubsection{Properties of Erd\H{o}s-R\'enyi graphs}\label{sec:property-ER}
We first prove a number of properties of Erd\H{o}s-R\'enyi graphs, which will be utilized later in the proof of \Cref{thm:DLA-ER-main}.

A graph $G = (V, E)$ can be identified with an \emph{edge-indicator vector} $x = (x_{(u,v)})_{u \neq v \in V} \in \mathbb F_2^{\binom n2}$, where $x_{(u,v)} = 1$ if and only if $(u,v) \in E$.
%\syzs{on a vertex set $V$ with size $n$. For $u \neq v\in V$, we do not distinguish between $(u, v)$ and $(v, u)$. A graph}{} 
The \emph{degree-parity map} on $V$ is a linear map $A: \mathbb F_2^{\binom n2} \to \mathbb F_2^n$, sending an edge-indicator vector $x$ to a corresponding \emph{degree-parity vector} $y = (y_u)_{u \in V}$, where $y_u = \bigoplus_{v \in V \backslash \{u\}} x_{(u, v)}$.
%The following lemma gives some basic properties of the degree-parity map.

\begin{lemma}\label{lem:eey0}
  Let $A$ be the degree-parity map on a vertex set $V$ with size $n$.
  The following properties hold:
  \begin{itemize}
    \item
          Fix a tree $T$ on $V$, and denote $x_T = (x_{(u,v)})_{(u,v) \in E(T)}$, $x_{\bar{T}} = (x_{(u,v)})_{u \neq v \in V, (u,v) \notin E(T)}$.
          For any $x^*_{\bar{T}} \in \mathbb F_2^{\binom n2-n+1}$ and $y^* \in \mathbb F_2^n$ where $y^*$ has an even Hamming weight, the following linear system on $x = (x_T; x_{\bar{T}})$ has a unique solution:
          \begin{equation*}
            Ax = y^* \qq{and} x_{\bar{T}} = x^*_{\bar{T}}.
          \end{equation*}
    \item
          $\Im(A) = \{ \bm 1_S: S \subseteq V, \abs{S} = 0 \pmod 2 \}$, and $\dim (\mathrm{Ker}(A) = \binom n2-n+1$.
  \end{itemize}
\end{lemma}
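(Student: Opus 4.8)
The plan is to establish the first (spanning-tree) statement and then obtain the second as a short consequence. Throughout, write $W \triangleq \{\bm 1_S : S\subseteq V,\ \abs{S} \equiv 0 \pmod 2\}$ for the set of even-Hamming-weight vectors in $\mathbb F_2^n$; it is the kernel of the ``sum of all coordinates'' functional, hence a subspace of dimension $n-1$. The structural fact I would record first is that $A$ always lands in $W$: over $\mathbb F_2$ each edge $(u,v)$ contributes only to the two coordinates $y_u$ and $y_v$ of $Ax$, so $\sum_{u\in V}(Ax)_u = 0$, i.e. $\operatorname{Im}(A)\subseteq W$.

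For the first statement, fixing $x_{\bar T} = x^*_{\bar T}$ turns the map $x_T \mapsto A(x_T; x^*_{\bar T})$ into an affine map $\phi\colon \mathbb F_2^{E(T)} \to W$ whose linear part is $L(x_T) \triangleq A(x_T; \bm 0)$ and whose constant term $c \triangleq A(\bm 0; x^*_{\bar T})$ lies in $W$. The crux is to show $L$ is injective. If $L(x_T) = \bm 0$, then in the subgraph $F$ of the tree $T$ consisting of those edges $e$ with $(x_T)_e = 1$, every vertex has even degree; but $F$ is a forest, and a forest with at least one edge has a vertex of degree $1$, a contradiction. Hence $F$ is edgeless, i.e. $x_T = \bm 0$, so $L$ is injective. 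Since $T$ has exactly $n-1$ edges, this gives $\dim \operatorname{Im}(L) = n-1 = \dim W$, forcing $\operatorname{Im}(L) = W$; adding the shift $c\in W$ does not change the image, so $\phi$ is a bijection onto $W$. Consequently, for every even-weight $y^*$ there is a unique $x_T$ with $\phi(x_T) = y^*$, and $x = (x_T; x^*_{\bar T})$ is the unique solution of $Ax = y^*,\ x_{\bar T} = x^*_{\bar T}$.

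For the second statement, $\operatorname{Im}(A)\subseteq W$ is the structural fact above, and the reverse inclusion is immediate from the first statement applied with any fixed spanning tree $T$ of $K_n$ and $x^*_{\bar T} = \bm 0$: it supplies, for each $y^*\in W$, an $x$ with $Ax = y^*$. (Alternatively and self-containedly, given $S$ with $\abs{S}$ even one may take $x$ to be the indicator of a perfect matching on $S$, which $A$ sends to $\bm 1_S$.) Thus $\operatorname{Im}(A) = W$, so $\operatorname{rank}(A) = n-1$, and rank--nullity yields $\dim \operatorname{Ker}(A) = \binom n2 - (n-1) = \binom n2 - n + 1$.

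The only step requiring a genuine idea is the injectivity of $L$ --- equivalently, that a subgraph of a tree with all degrees even is edgeless; the rest is a dimension count and bookkeeping. If one prefers an explicitly constructive argument in place of the linear-algebra one, the same conclusion follows by leaf-peeling on $T$: repeatedly pick a leaf $\ell$ with tree-neighbour $\ell'$, observe that the degree-parity equation at $\ell$ determines $x_{(\ell,\ell')}$ uniquely (every other edge at $\ell$ being a prescribed non-tree edge), fold the now-known edge values at $\ell$ into the targets of the remaining vertices, delete $\ell$, and recurse; one checks via the degree-parity identity at $\ell$ that the updated target vector again has even weight, so the induction closes at $n=1$.
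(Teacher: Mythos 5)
Your proof is correct. The paper proves the first point constructively: it roots the tree, repeatedly peels leaves to show each tree-edge value $x_{(u,v)}$ is forced by the degree-parity equation at the leaf $u$ (giving uniqueness), and then verifies existence by checking the equation at the final root via the even-weight hypothesis and the handshaking lemma; the second point then follows exactly as in your last paragraph. You instead package the argument linear-algebraically: you show the linear part $L(x_T) = A(x_T;\bm 0)$ is injective (a subgraph of a tree with all degrees even is edgeless, since a nonempty forest has a leaf), and then a dimension count $\dim \operatorname{Im}(L) = n-1 = \dim W$ forces surjectivity onto the even-weight subspace $W$, so the affine map is a bijection and existence comes for free. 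The two proofs rest on the same combinatorial fact (forests have leaves), but your route trades the paper's explicit existence verification at the root for rank--nullity, which is arguably cleaner; the paper's leaf-peeling has the minor advantage of being directly algorithmic. You even note the leaf-peeling variant at the end, which is precisely the paper's argument. No gaps.
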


\begin{proof}
  \underline{Point 1:}
  We will prove that a solution $x_T$ exists and is unique by repeatedly trimming leaves on the tree $T$.
  Let $x^*_{\bar T} = (x^*_{(u,v)})_{u \neq v \in V, (u,v) \notin E(T)}$.
  Fix a root $r \in V$ for the tree $T$.
  For each leaf $u$ of $T$, let $v$ be its parent.
  $x_{(u,v)}$ is uniquely determined by the equation:
  \begin{equation*}
    y^*_u = (Ax)_u = x_{(u,v)} \oplus \bigoplus_{w \in V \backslash \{u,v\}} x^*_{(u,w)}.
  \end{equation*}
  Remove the leaves from $T$ to get a new tree $T'$.
  % \syz{We don't need to do it twice.} \syzs{For each leaf $u$ of $T'$, again let $v$ be its parent. $x_{(u,v)}$ is then uniquely determined by the equation:}{}
  % \begin{equation*}
  %   \syzs{y^*_u = (Ax)_u = x_{(u,v)} \oplus \bigoplus_{w \in V \backslash \{u\}: (u,w) \notin E(T)} x^*_{(u,w)} \oplus \bigoplus_{w' \in V \backslash \{v\}: (u,w') \in E(T)} x_{(u,w')}.}{}
  % \end{equation*}
  Repeating this process until there is only one vertex $r$ left shows that $x_T$ (and hence $x$) is uniquely determined.
  Finally, for the existence, one verifies that
  \begin{equation*}
    y^*_r = \bigoplus_{u \in V: u \neq r} y^*_u = \bigoplus_{u \in V: u \neq r} (Ax)_u = (Ax)_r,
  \end{equation*}
  where the first equality holds since $y^*$ has an even Hamming weight, and the last equality holds due to the handshaking lemma.

  \underline{Point 2:}
  Let $\mathcal E = \{ \bm 1_S: S \subseteq V, \abs{S} = 0 \bmod 2 \}$.
  Point 1 implies that $\mathcal E \subseteq \Im(A)$, while the handshaking lemma gives $\Im(A) \subseteq \mathcal E$.
  Therefore, $\Im(A) = \mathcal E$.
  Since $A$ is a linear map, $\dim (\mathrm{Ker}(A)) = \dim (\mathrm{dom}(A)) - \dim (\Im(A)) = \binom n2-n+1$.
\end{proof}

%The following lemma shows the distributions for vertex degree parities in graph $G \sim G(n,\frac{1}{2})$.

\begin{lemma}\label{lem:lg4i}
  For any random graph $G=(V,E) \sim G(n,\frac{1}{2})$, the degree parities $(\deg(v) \bmod2)$ of any $n-1$ vertices are i.i.d. $\mathrm{Bernoulli}(\frac{1}{2})$.
  For any $S\subseteq V$ with even size, $\Pr[V_o = S] = 2^{1-n}$, where $V_o = \{v\in V: \deg(v)\text{ is odd}\}$.
  % Consequently, if $n \ge 2$, then for the odd-degree vertex set $V_o$ of $G$,
  % \begin{equation}\label{eq:gyu0}
  %   \Pr(\abs{|V_o| - \frac n2} \ge \frac n4) \le 2 \exp(-\frac{n}{24}).
  % \end{equation}
\end{lemma}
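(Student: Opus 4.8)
The plan is to reduce everything to the linear-algebraic structure of the degree-parity map $A$ recorded in \Cref{lem:eey0}. First I would observe that drawing $G \sim G(n,\frac12)$ is the same as drawing the edge-indicator vector $x \in \mathbb F_2^{\binom n2}$ uniformly at random, and that the vector of degree parities is exactly $y = Ax$, with $y_v = 1$ iff $\deg(v)$ is odd; thus $V_o = \operatorname{supp}(y)$, viewed as the set $S$ with $\bm 1_S = y$.

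Next I would invoke the standard fact that the pushforward of the uniform distribution on the domain of a linear map is the uniform distribution on its image: each fibre $A^{-1}(y')$ for $y' \in \Im(A)$ has the same cardinality $2^{\dim \operatorname{Ker}(A)}$, so $y = Ax$ is uniform over $\Im(A)$. By \Cref{lem:eey0}, $\Im(A) = \{ \bm 1_S : S \subseteq V,\ |S| \equiv 0 \pmod 2 \}$, a set of size $2^{n-1}$. Hence for every even-size $S \subseteq V$ we get $\Pr[V_o = S] = \Pr[y = \bm 1_S] = 2^{-(n-1)} = 2^{1-n}$ (and $\Pr[V_o = S] = 0$ when $|S|$ is odd, by the handshaking lemma already folded into \Cref{lem:eey0}); this is the second assertion.

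For the i.i.d. statement, fix any vertex $w$ and set $I = V \setminus \{w\}$. I would show that the coordinate-projection $\pi_I$ restricted to the even-weight code $C \triangleq \Im(A)$ is a bijection onto $\mathbb F_2^{n-1}$: its kernel consists of vectors of $C$ supported on $\{w\}$ alone, but the only such vector is $e_w$, which has odd Hamming weight and so does not lie in $C$; therefore $\pi_I|_C$ is injective, and since $|C| = 2^{n-1}$ it is bijective. Because $y$ is uniform on $C$, its projection $\pi_I(y)$ — i.e. the tuple of degree parities of the $n-1$ vertices in $I$ — is uniform on $\mathbb F_2^{n-1}$, which is precisely the claim that these $n-1$ parities are i.i.d.\ $\operatorname{Bernoulli}(\tfrac12)$.

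There is essentially no hard step here; the only points to be careful about are (i) applying the uniform-pushforward-under-a-linear-map fact correctly, which is immediate from the rank–nullity decomposition already in \Cref{lem:eey0}, and (ii) verifying the projection bijection, which comes down to the trivial observation that a single standard basis vector has odd weight and hence is not in the even-weight code.
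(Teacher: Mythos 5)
Your proposal is correct and follows essentially the same route as the paper: both arguments use the linearity of the degree-parity map $A$ and the kernel dimension from \Cref{lem:eey0} to conclude that $y = Ax$ is uniform on the even-weight code $\Im(A)$ of size $2^{n-1}$, from which both claims follow (the paper writes your projection-bijection step concretely as completing $y^*_{\le n-1}$ by the parity bit $y^*_n = |y^*_{\le n-1}| \bmod 2$, which is the same observation).
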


\begin{proof}
  Since $G \sim G(n, \frac12)$, each edge-indicator vector appears with probability $2^{-\binom n2}$.
  By the fact that the degree-parity map $A$ is linear and that $\dim (\mathrm{Ker}(A)) = \binom n2-n+1$ (\Cref{lem:eey0} point 2), each degree-parity vector $y \in \Im(A)$ appears with probability $|\mathrm{Ker}(A)| \cdot 2^{-\binom n2} = 2^{1-n}$.
  By \Cref{lem:eey0} (point 2), $\Im(A)$ comprises all vectors in $\mathbb F_2^n$ with an even Hamming weight.
  Hence, for any $y^*_{\le n-1} \in \mathbb F_2^{n-1}$, let $y_n^* = |y^*_{\le n-1}| \pmod 2$,
  \begin{equation*}
    \Pr(y_1 = y^*_1, \dots, y_{n-1} = y^*_{n-1}) = \Pr(y = (y^*_{\le n-1},y^*_n))  = 2^{1-n}.
  \end{equation*}
  Therefore, the degree parities the first $n-1$ vertices are i.i.d. $\mathrm{Bernoulli}(\frac12)$.
  The same argument holds for any $n-1$ vertices.

  For the second statement, note that for $S$ of even size, whether $S$ contains the last vertex depends on the parity of $S\cap [n-1]$.
  Similarly, the degree parity of the last vertex also depends on the number of odd-degree vertices in the first $n-1$ vertices.
  Therefore $V_o = S$ is equivalent to $V_o\cap [n-1] = S\cap [n-1]$.
  Now that the degree parity of the first $n-1$ vertices are i.i.d. Bernoulli($\frac{1}{2}$), the conclusion follows.

  % Let $|V_o|$ and $X$ denote the number of odd-degree vertices in the entire vertex set and in the first $n-1$ vertices, respectively.
  % As discussed above, the degree parities of the first $n-1$ vertices are i.i.d. $\mathrm{Bernoulli}(\frac12)$, which implies $X\sim\mathrm{Binomial}(n-1,\frac12)$.
  % Notice that either $|V_o| = X$ or $|V_o| = X+1$, hence
  % \begin{equation}
  %   \Pr(\abs{\abs{V_o} - \frac n2} \ge \frac n4) \le \Pr(\abs{X - \mathbb EX} \ge \frac n4 - \frac12).
  % \end{equation}
  % Applying the Chernoff bound with $\delta = \frac{n-2}{2(n-1)} \in [0,1]$, we have
  % \begin{align}
  %   \text{RHS} = \Pr(\abs{X - \mathbb EX} \ge \delta \mathbb EX)
  %    & \le 2 \exp(-\delta^2 \mathbb EX/3)                                                \\
  %    & = 2 \exp(- \frac{n}{24} - \frac18 + \frac{1}{24(n-1)}) \le 2 \exp(-\frac{n}{24}).
  % \end{align}
  % This proves \cref{eq:gyu0}.
\end{proof}

%\ruic{I moved this lemma here.}

As a corollary, with high probability, $|V_o|$ cannot be too small or too large.
% We first bound the probability that $|V_o|$ is too small or too big.

\begin{corollary}\label{cor:Vo-size-prob}
  For any random graph $G=(V,E) \sim G(n,\frac{1}{2})$ with $n\ge 2$, for the odd-degree vertex set $V_o$ of $G$,
  \begin{equation}\label{eq:gyu0}
    \Pr(\abs{|V_o| - \frac n2} \ge \frac n4) \le 2 \exp(-\frac{n}{24}).
  \end{equation}
\end{corollary}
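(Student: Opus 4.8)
The plan is to reduce the claim to a standard Hoeffding bound for a binomial random variable, using the structure established in \Cref{lem:lg4i}.

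Write $V = \{v_1,\dots,v_n\}$ and set $B \triangleq \abs{V_o \cap \{v_1,\dots,v_{n-1}\}}$. By \Cref{lem:lg4i} the degree parities of $v_1,\dots,v_{n-1}$ are i.i.d.\ $\mathrm{Bernoulli}(\tfrac12)$, so $B \sim \mathrm{Bin}(n-1,\tfrac12)$. Since $\abs{V_o}$ is even by the handshaking lemma, the parity of $\deg(v_n)$ is forced to agree with $B \bmod 2$, so $\abs{V_o} \in \{B, B+1\}$, and therefore $\abs{|V_o| - \tfrac n2} \le \abs{B - \tfrac{n-1}{2}} + \tfrac12$. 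In particular the event $\abs{|V_o| - \tfrac n2} \ge \tfrac n4$ is contained in the event $\abs{B - \tfrac{n-1}{2}} \ge \tfrac n4 - \tfrac12$.

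For $n \ge 2$ the threshold $\tfrac n4 - \tfrac12$ is nonnegative, so Hoeffding's inequality applied to $B$ gives $\Pr\big(\abs{B - \tfrac{n-1}{2}} \ge t\big) \le 2\exp\big(-2t^2/(n-1)\big)$; with $t = \tfrac n4 - \tfrac12$ the exponent equals $(n-2)^2/\big(8(n-1)\big)$. It then suffices to check the elementary inequality $(n-2)^2/\big(8(n-1)\big) \ge n/24$, equivalently $2n^2 - 11n + 12 \ge 0$, whose roots are $\tfrac32$ and $4$; hence it holds for every integer $n \ge 4$. For the two remaining cases $n \in \{2,3\}$ nothing is needed, since there $2\exp(-n/24) \ge 2\exp(-3/24) > 1$ while the probability is at most $1$. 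Combining the two ranges gives the claim.

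There is no genuine obstacle here; the only points requiring care are the passage from $\abs{V_o}$ to the clean binomial $B$ — combining the $\pm1$ rounding with the handshaking-parity constraint — and the constant bookkeeping that makes the exponent $n/24$ come out, which is why the tiny cases $n \in \{2,3\}$ are peeled off by hand rather than handled by the concentration bound.
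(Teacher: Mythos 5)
Your proof is correct and follows essentially the same route as the paper's: reduce to the binomial variable counting odd-parity vertices among the first $n-1$ (via \Cref{lem:lg4i} and the observation $\abs{V_o} \in \{B, B+1\}$), then apply a standard concentration inequality at threshold $\tfrac n4 - \tfrac12$. The only cosmetic difference is that you use Hoeffding (which forces you to peel off $n \in \{2,3\}$ where the bound is vacuous), whereas the paper uses the multiplicative Chernoff bound, whose exponent $\tfrac{(n-2)^2}{24(n-1)} = \tfrac{n}{24} - \tfrac18 + \tfrac{1}{24(n-1)}$ happens to dominate $\tfrac{n}{24}$ for all $n \ge 2$ at once.
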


\begin{proof}
  Let $|V_o|$ and $X$ denote the number of odd-degree vertices in the entire vertex set and in the first $n-1$ vertices, respectively.
  According to \Cref{lem:lg4i}, the degree parities of the first $n-1$ vertices are i.i.d. $\mathrm{Bernoulli}(\frac12)$, which implies $X\sim\mathrm{Binomial}(n-1,\frac12)$.
  Notice that either $|V_o| = X$ or $|V_o| = X+1$, hence
  \begin{equation*}
    \Pr(\abs{\abs{V_o} - \frac n2} \ge \frac n4) \le \Pr(\abs{X - \av(X)} \ge \frac n4 - \frac12).
  \end{equation*}
  Applying the Chernoff bound with $\delta = \frac{n-2}{2(n-1)} \in [0,1]$, we have
  \begin{align*}
    \text{RHS} = \Pr(\abs{X - \av (X)} \ge \delta \av(X))
     & \le 2 \exp(-\delta^2 \av(X)/3)                                                    \\
     & = 2 \exp(- \frac{n}{24} - \frac18 + \frac{1}{24(n-1)}) \le 2 \exp(-\frac{n}{24}).
  \end{align*}
\end{proof}

% Another corollary gives the probability that $V_o = S$ for some $S \subseteq V$.

% \begin{corollary}\label{cor:unweighted-Vo-distri}
%   For $G=(V,E)\sim G(n,\frac{1}{2})$ and any set $S\subseteq V$,
%   \begin{equation}
%     \Pr(V_o = S) =
%       \begin{cases}
%         \frac{1}{2^{n-1}}, & \text{if $|S|$ is even,} \\
%         0,                 & \text{if $|S|$ is odd.}
%       \end{cases}
%   \end{equation}
%   % Restriction of $V_o$ on the first $n-1$ nodes results in a uniform distribution.
%   % Namely, for any set $S'\subseteq [n-1]$, we have $\Pr(V_o \cap [n-1]=S')=\frac{1}{2^{n-1}}$.
% \end{corollary}

% \begin{proof}
%   % Let $p_v\triangleq \deg(v) \pmod 2$ be the degree parity of vertex $v$.
%   Since $\sum_{v\in V}\deg(v)=2|E|$, \ruis{then}{} $|V_o|$ must be even\ruis{, from which it}{. It} follows that $\Pr(V_o=S)=0$ if $|S|$ is odd.
%   According to \Cref{lem:lg4i}, the degree parities of any $n-1$ vertices are i.i.d. $\rm{Bernoulli}(\frac{1}{2})$ and the parity of the remaining vertex is determined by the \rui{degree parities of the} first $n-1$ \rui{vertices}.
%   Therefore, we have $\Pr(V_o=S)=\ruis{\Pr((p_v)_{v\in V}={\bm 1}_S)=}{}\frac{1}{2^{n-1}}$ if $|S|$ is even.
% \end{proof}

The following lemma shows that the subgraphs induced by odd- and even-degree vertices of an ER graph are also ER graphs.
% \syz{To check later: What I need is only the first two are independent ($(G[S], G[\bar{S}]) \mid (V_o = S) \sim (G(k,\tfrac{1}{2}), G(n-k,\tfrac{1}{2}))$) and the third is bipartite ER graph ($G[S \backslash \{s\}, \bar{S} \backslash \{t\}] \sim G(k-1, n-k-1,\tfrac{1}{2})$). Will check whether all the three are also independent.}
% \ruic{I need this lemma in \Cref{lem:no-equal-row}}
\begin{lemma}\label{lem:subgraph-independ}
  Suppose $G=(V,E) \sim G(n,\frac{1}{2})$.
  Let $V_o$ be the odd-degree vertex set of $G$.
  Fix an arbitrary $S \subseteq V$ with size $0 < k < n$ an even number, and let $\bar{S} = V \backslash S$.
  Further fix arbitrary $s \in S$ and $t \in \bar{S}$.
  We have
  \begin{equation}\label{eq:iwpw}
    \left(G[S], G[\bar{S}], G[S \backslash \{s\}, \bar{S} \backslash \{t\}]\right) \mid (V_o = S) \sim \left(G(k,\tfrac{1}{2}), G(n-k,\tfrac{1}{2}), G(k-1, n-k-1,\tfrac{1}{2})\right).
  \end{equation}
  % Consequently, if $4 \le k \le n-4$, then for any simple graphs $G_1$ on $S$ and $G_2$ on $\bar{S}$,
  % \begin{equation}\label{eq:ryra}
  %   \Pr(A(S) \mid B(S,G_1,G_2)) \ge \max\left\{0, 1-\frac{k^2}{2^{n-k}}\right\},
  %   \Pr(A(\bar{S}) \mid B(S,G_1,G_2)) \ge \max\left\{0, 1-\frac{(n-k)^2}{2^k}\right\},
  % \end{equation}
  % where $A(S)$ denotes the event $\forall u \neq v \in S (\mathcal N(u) \cap \bar{S} \neq \mathcal N(v) \cap \bar{S})$; $B(S, G_1, G_2)$ denotes the event $V_o = S, G[S] = G_1, G[\bar{S}] = G_2$.
\end{lemma}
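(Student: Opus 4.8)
The plan is to reduce the statement to the linear-algebra fact in \Cref{lem:eey0} (Point~1) by choosing a convenient spanning tree. Conditioning $G\sim G(n,\tfrac12)$ on $V_o=S$ is the same as conditioning on the degree-parity vector equalling $\bm 1_S$; since $|S|=k$ is even, $\bm 1_S\in\Im(A)$, so this event has positive probability, and because $G(n,\tfrac12)$ assigns probability $2^{-\binom n2}$ to every edge-indicator vector, the conditioned law is uniform over the fiber $A^{-1}(\bm 1_S)$, which has size $|\mathrm{Ker}(A)|=2^{\binom n2-n+1}$ by \Cref{lem:eey0} (Point~2).

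Next I would single out the spanning tree $T$ given by the ``double star'' whose edges are all pairs $(s,w)$ with $w\in\bar S$ together with all pairs $(t,v)$ with $v\in S\setminus\{s\}$. This graph is connected (every vertex of $\bar S$ sees $s$, every vertex of $S\setminus\{s\}$ sees $t$, and $s\sim t$ since $t\in\bar S$) and has $(n-k)+(k-1)=n-1$ edges, hence is a spanning tree of $V$; moreover it is designed so that its complement $\bar T$ is exactly the disjoint union
\[
  \bar T \;=\; E(S)\ \uplus\ E(\bar S)\ \uplus\ E\big(S\setminus\{s\},\,\bar S\setminus\{t\}\big),
\]
because a within-$S$ or within-$\bar S$ edge is never in $T$, while a cross edge $(u,w)$ with $u\in S,\ w\in\bar S$ lies in $T$ iff $u=s$ or $w=t$, i.e.\ lies outside $T$ iff $u\in S\setminus\{s\}$ and $w\in\bar S\setminus\{t\}$. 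A direct count confirms $|\bar T|=\binom k2+\binom{n-k}2+(k-1)(n-k-1)=\binom n2-n+1$, using the identity $(k-1)(n-k-1)=k(n-k)-n+1$; note that $0<k<n$ is precisely what makes $T$ well defined, including the degenerate case $n-k=1$ where the bipartite block is empty.

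With this $T$ and $y^*=\bm 1_S$ (which has even Hamming weight), \Cref{lem:eey0} (Point~1) asserts that for every assignment of the free coordinates $x_{\bar T}$ there is a \emph{unique} graph $G$ with $V_o=S$ realising it. Thus $x_{\bar T}\mapsto G$ is a bijection from $\mathbb F_2^{\binom n2-n+1}$ onto the fiber $A^{-1}(\bm 1_S)$, so the uniform distribution on the fiber pushes forward to the uniform distribution on $x_{\bar T}$; equivalently, conditioned on $V_o=S$, the edge indicators indexed by $\bar T$ are i.i.d.\ $\mathrm{Bernoulli}(\tfrac12)$. Since $\bar T$ decomposes into the three blocks above, this is exactly the claim that $G[S]$, $G[\bar S]$, and $G[S\setminus\{s\},\bar S\setminus\{t\}]$ are mutually independent and distributed as $G(k,\tfrac12)$, $G(n-k,\tfrac12)$, and $G(k-1,n-k-1,\tfrac12)$ respectively.

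The only real idea is the choice of the double-star spanning tree that makes $\bar T$ split into exactly the three induced-subgraph edge sets; everything else is bookkeeping plus a direct appeal to \Cref{lem:eey0}. The one point worth double-checking --- that the degree-parity constraints are consistent, i.e.\ that once the three blocks are fixed the parity of the single ``uncovered'' vertex $t$ is automatically forced to the required even value --- is handled for free by the existence half of \Cref{lem:eey0} (Point~1), which encodes it via the handshaking lemma together with the evenness of $|S|$; so no extra work is needed there.
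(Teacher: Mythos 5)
Your proposal is correct and follows essentially the same route as the paper: both use the same double-star spanning tree on $\{s,t\}$, apply \Cref{lem:eey0} (Point 1) to identify the fiber $\{x : Ax = \bm 1_S\}$ bijectively with the free coordinates $x_{\bar T}$, and conclude that conditioned on $V_o = S$ the edge indicators outside $T$ are i.i.d.\ $\mathrm{Bernoulli}(\tfrac12)$, which is exactly the claimed product distribution on the three induced subgraphs.
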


\begin{proof}
  Consider the tree constructed by all possible crossing edges incident to $s$ and $t$, i.e. $T = (V, E(T))$ where $E(T) = \{ (s, u): u \in \bar{S} \} \cup \{ (v, t): v \in S \}$.
  By \Cref{lem:eey0}, for any $x_{\bar T}^* \in \mathbb F_2^{\binom n2-n+1}$, there is a unique solution $x = (x_T; x_{\bar T}) \in \mathbb F_2^{\binom n2}$ to the following linear system:
  \begin{equation*}
    Ax = \bm 1_S, \quad x_{\bar T} = x_{\bar T}^*,
  \end{equation*}
  where $A$ is the degree-parity map on $V$.
  Since $G \sim G(n, \frac12)$, each edge-indicator vector $x$ appears with equal probability $2^{-\binom n2}$.
  Note that $Ax = \bm 1_S$ is equivalent to $V_o = S$.
  Hence,
  \begin{equation}\label{eq:no3s}
    \Pr(x_{\bar T} = x_{\bar T}^* \mid Ax = \bm 1_S) = \frac{\Pr(x_{\bar T} = x_{\bar T}^*, Ax = \bm 1_S)}{\Pr(Ax = \bm 1_S)} = \frac{2^{-\binom n2}}{\Pr(Ax = \bm 1_S)} = 2^{-\binom n2+n-1},
  \end{equation}
  where the last equality follows from the fact that the degree parities of any $n-1$ vertices are i.i.d. $\mathrm{Bernoulli}(\frac12)$ (\mbox{\Cref{lem:lg4i}}).
  We have proved \cref{eq:iwpw} since it is equivalent to \cref{eq:no3s}.

  % Notice that the choice of $s \in S, t \in \bar S$ is arbitrary.
  % Fix a $t \in \bar S$.
  % Conditioned on $B(S,G_1,G_2)$, $\{ x_{(u, w)}, x_{(v, w)}: w \in \bar S \backslash \{ t \} \}$ are i.i.d. $\mathrm{Bernoulli}(\frac12)$ for any $u \neq v \in S$ since $|S| = k > 2$.
  % Therefore, by union bound,
  % \begin{align}
  %   \Pr(A(S) \mid B(S,G_1,G_2))
  %   & = \Pr(\forall u \neq v \in S, \exists w \in \bar S \backslash \{t\} (x_{(u,w)} \neq x_{(v,w)})) \\
  %   & \ge 1 - \binom k2 2^{-n+k+1} \ge 1-\frac{k^2}{2^{n-k}}.
  % \end{align}
  % Similarly,
  % \begin{equation}
  %   \Pr(A(\bar{S}) \mid B(S,G_1,G_2)) \ge 1-\frac{(n-k)^2}{2^k}.
  % \end{equation}
  % We have proved \cref{eq:ryra}, since the probabilities are non-negative.
\end{proof}

\subsubsection{Proof of \Cref{thm:DLA-ER-main}}\label{sec:unweighted-random-proof}

In this section, we prove the first statement of \Cref{thm:DLA-ER-main}, which implies the remaining statements in the theorem (as shown at the beginning of Section \ref{sec:DLA-ERgraph}).
For $G=(V,E)$ and any $S\subseteq V$, let $S_o$ and $S_e$ denote the odd-degree and even-degree vertex sets of graph $G[S]$, respectively.
We first show that \Cref{alg:vsplit-int} splits the vertices of $G$ into many sets with high probability.

\begin{lemma}\label{lem:unweighted-internal-gives-many}
  Let $\mathcal P$ be a vertex partition returned by \Cref{alg:vsplit-int} on input $G \sim G(n, \frac12)$.
  Conditioned on both $V_o$ and $V_e$ being of size at least $n/4$, the probability that both $V_o$ and $V_e$ are partitioned into more than $k$ sets {in $\mathcal{P}$} is at least $1-2^{4k-n/4}$.
\end{lemma}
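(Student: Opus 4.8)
The plan is to reduce to a one-graph estimate and then exploit the recursive structure of \Cref{alg:vsplit-int}. Write $N(H)$ for the number of parts that algorithm outputs on a graph $H$. Conditioning on the exact set $V_o=S$ (with $n/4\le|S|\le 3n/4$), \Cref{lem:subgraph-independ} gives $G[S]\sim G(|S|,\tfrac12)$, and applying it recursively, conditioning each time on the odd-degree set of the current induced subgraph, shows that the degree-parity data revealed along the entire recursion on $G[V_o]$ (and, independently, along the recursion on $G[V_e]$) comes from mutually independent fresh ER instances. Hence, by a union bound, it suffices to prove $\Pr[N(G(m,\tfrac12))\le k]\le 2^{4k-m-1}$ for all $m\ge 2$: since $|V_o|,|V_e|\ge n/4$, each of $\Pr[N(G[V_o])\le k]$ and $\Pr[N(G[V_e])\le k]$ is then at most $2^{4k-n/4-1}$, and the two sum to $2^{4k-n/4}$.

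For the one-graph estimate, I would use that the run of \Cref{alg:vsplit-int} on $G(m,\tfrac12)$ is a full binary tree whose number of leaves equals $N$: a node on $s$ vertices is a leaf precisely when its induced subgraph has all even or all odd degrees, which by \Cref{lem:lg4i} happens with probability at most $2^{2-s}$, and otherwise the node splits into its even- and odd-degree parts, any particular such split occurring with probability $2^{1-s}$; moreover, by the handshake lemma every subset arising in this recursion has even size, which I would use to tighten constants. Writing $f(m,k):=\Pr[N(G(m,\tfrac12))\le k]$, the independence above gives $f(m,1)\le 2^{2-m}$ and, for $k\ge2$,
\[
  f(m,k)\le 2^{2-m}+2^{1-m}\!\!\sum_{\substack{b\ \mathrm{even}\\ 2\le b\le m-2}}\!\!\binom{m}{b}\sum_{j=1}^{k-1}f(m-b,j)\,f(b,k-j).
\]
Feeding in the ansatz $f(m,k)\le g(k)2^{-m}$ and using $\sum_{b\ \mathrm{even}}\binom mb\le 2^{m-1}$ collapses this to the numerical recursion $g(1)=4$, $g(k)=4+\sum_{j=1}^{k-1}g(j)g(k-j)$, whose generating function satisfies $\hat g=\tfrac{4x}{1-x}+\hat g^2$, so $g(k)=2^{O(k)}$. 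To reach the stated exponent $4k$ one must be sharper: peel off the ``root is a leaf'' contribution $2^{2-m}\le 2^{2-n/4}$ once at the top rather than recurring on it, and replace the crude accounting of recursion trees by a sum over tree shapes $\sigma$ (at most $4^{r-1}$ with $r$ leaves) together with a Kraft-identity evaluation of $\sum_{\mathrm{assignments}}2^{-\sum_i s_i d_i}$ (the leaves of $\sigma$ lying at depths $d_i$ with $\sum_i 2^{-d_i}=1$), whose dominant term is $2^{1-r}$; this makes the probability of realizing a shape $\sigma$ at most $\approx 2^{2r-m}$, and summing over $r\le k$ and over shapes gives $\approx 2^{4k-m}$.

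I expect the delicate step to be precisely this constant-tracking. Naive versions of both the recursion and the tree-shape union bound overshoot --- the numerical recursion above has growth rate $17$ rather than $16$ --- so obtaining the stated $2^{4k-n/4}$ requires handling the lower-order terms in the inclusion--exclusion evaluation of $\sum_{\mathrm{assignments}}2^{-\sum_i s_i d_i}$ and their interaction with the Catalan factor, as well as the minor variant of the argument for $G[V_e]$ when $n$ is odd (where a single ``spine'' of odd-sized subsets appears, all other subsets remaining even). A bound of the weaker form $1-2^{O(k)-n/4}$ --- still enough for \Cref{thm:DLA-ER-main} --- already follows from the cruder analysis.
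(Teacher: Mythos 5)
Your skeleton is the same as the paper's (a full binary recursion tree for \Cref{alg:vsplit-int}, independence of the subgraphs at different nodes via \Cref{lem:subgraph-independ}, a leaf-stopping probability of $2^{2-s}$ from \Cref{lem:lg4i}, and a union bound over tree shapes counted by Catalan numbers), and your reduction to a single-graph estimate on $G(m,\tfrac12)$ with $m\ge n/4$ followed by a union bound over $V_o$ and $V_e$ is exactly right. However, as you yourself flag, neither of your two routes actually delivers the stated exponent: the recursion for $f(m,k)$ provably overshoots (growth rate $17$), and the tree-shape route is left hanging on an unevaluated ``Kraft-identity'' sum over size assignments whose constant-tracking you describe as delicate. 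What your completed analysis yields is only $1-2^{O(k)-n/4}$, which is weaker than the lemma as stated (though, as you note, still sufficient for \Cref{thm:DLA-ER-main} after rescaling $k$).

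The missing observation that closes the gap is that no evaluation of the shape-realization probability is needed at all. Decompose the bad event as a disjoint union over \emph{labeled} trees $T^*$ (trees whose nodes carry actual vertex sets): the partition has at most $k$ parts iff for some bad $T^*$ the procedure realizes $T^*$ and then every leaf fails to split. Two facts make this sum collapse. First, conditional on realizing any labeled tree whose $\ell$ leaves carry sets $V_1,\dots,V_\ell$ partitioning the $m$ vertices, the leaf subgraphs are independent ER graphs, so the probability that all leaves stop is at most $\prod_{i}2^{2-|V_i|}=2^{2\ell-m}$ --- crucially this depends only on $\ell$ and $m$, not on the individual sizes, so there is nothing to optimize over assignments. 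Second, for a fixed tree \emph{shape}, the events $\{T_{pre}=T^*\}$ over all labeled trees $T^*$ of that shape are disjoint, so their probabilities sum to at most $1$; there is no need to compute them via products of $2^{1-s}$ split probabilities and multinomial counts. The bound is then simply (number of shapes with $\le k$ leaves) $\times\, 2^{2k-m}\le C_{k-1}\cdot 2^{2k-m}<2^{4k-m}/k^{3/2}$, and the spare factor $k^{-3/2}$ absorbs both the factor $2$ from the union bound over $V_o$ and $V_e$ and the summation over $\ell\le k$, giving $2^{4k-n/4}$ without any of the lower-order bookkeeping (or the odd-$n$ casework) you were worried about.
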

\begin{proof}
  The internal parity splitting procedure (\Cref{lem:vsplit-int}) is naturally associated with a binary tree $T$: the root is labeled $V$, and for each internal node with label $S$ (which we will call node $S$), its two children are labeled $S_o$ and $S_e$ respectively.
  \Cref{lem:subgraph-independ} implies that, for each node $S$, whether and how it further splits into $S_o$ and $S_e$ is independent of the splitting of all other nodes.
  That is, for any two nodes in the tree with sets $S$ and $T$, either these two sets are disjoint or one contains the other, but in either case, for any subsets $S'\subseteq S$ and $T'\subseteq T$, the events $S_o=S'$ and $T_o=T'$ are independent.

  We say that a tree $T^*$ is \textit{bad} if it contains at most $k$ leaves.
  The internal parity splitting procedure has at most $k$ sets if the associated tree is one of the bad trees and it stops growing.
  %\pei{(pei: should we explain what $T_{pre}$ is?)}\syz{I think it's ok, $T_{pre}$ never appear by itself, it only appears in $T_{pre} = T^*$, which we've defined next.}
  More precisely, define the event ``$T_{pre} = T^*$'' if the splitting procedure indeed splits at each internal node of $T^*$.
  The splitting procedure stops at $T^*$ if and only if $T_{pre} = T^*$ and further splitting does not happen at any leaf of $T^*$.
  Thus the probability that the splitting partitions $V$ into at most $k$ sets is
  \begin{align}\label{eq:unweighted-prob-stops}
    \sum_{T^*: \text{ bad}} \Pr(T_{pre} = T^*) \Pr(T \text{ stops at all leaves}\ | \ T_{pre} = T^*)
  \end{align}
  Note that for each fixed $T^*$, suppose it has $\ell\le k$ leaves.
  Denote by $V_1, \ldots, V_\ell$ the associated sets of the splitting procedure at these leaves.
  Note that each set $S$ does not further split if and only if $S_o=\varnothing$ or $S_o = S$.
  By \Cref{lem:subgraph-independ,lem:lg4i}, $S_o=\varnothing$ occurs with probability $1/2^{|S|-1}$, and $S_o=S$ occurs with probability $1/2^{|S|-1}$ if $|S|$ is even and 0 if $|S|$ is odd.
  Thus, the probability that $S$ cannot be further split is at most $2/2^{|S|-1} = 1/2^{|S|-2}$.
  Since whether $V_1,\ldots,V_\ell$ split are all independent events (\Cref{lem:subgraph-independ}), the probability that all these sets do not split happens with probability $2^{2-|V_1|} \cdots 2^{2-|V_\ell|} = 2^{2\ell-n} \le 2^{2k-n}$.
  With this, and the trivial bound that $\Pr(T_{pre} = T^*) \le 1$, we can further bound the probability in \Cref{eq:unweighted-prob-stops} (from above) by $2^{2k-n}$ times the number $C_{k-1}$ of distinct full binary trees of $k$ leaves \cite[p.26]{roman2015introduction}.
  This number {$C_{k-1}$} (Catalan number) is known to be {$\frac{1}{k} \binom{2k-2}{k-1} < 2^{2k}/k^{3/2}$}.
  Thus the probability that the splitting procedure partitions $V$ into only $k$ sets is at most $2^{2k}/k^{3/2} \cdot 2^{2k-n} = 2^{4k-n}/k^{3/2}$.
  Applying this to $V_o$ and $V_e$ and using the condition of $|V_o|\ge n/4$ and $|V_e|\ge n/4$ gives the overall failure bound of $2^{1+4k-n/4}/k^{3/2} < 2^{4k-n/4}$, as claimed.
\end{proof}

We now consider applying \Cref{alg:vsplit-ext} to the partition $\mathcal{P}$ returned by \Cref{alg:vsplit-int}.
% \Cref{alg:vsplit-ext} contains an outer loop which keeps applying the external splitting lemma for all pairs of $(S,T)$ (\Cref{lem:vsplit-ext}) to $\mathcal{P}$ to obtain a finer partition.
% We next analyze the second step of the above procedure, which applies potentially many iterations of step splitting via \Cref{lem:vsplit-ext}. 
We will show that \Cref{alg:vsplit-ext} splits all vertices with high probability.
\begin{lemma}\label{lem:split-twice}
  Conditioned on $|V_o|\ge n/4$, $|V_e |\ge n/4$, and that partition $\mcP$ has more than $m$ sets inside $V_o$ and $V_e$, \Cref{alg:vsplit-ext} completely splits all sets into individual vertices with probability at least $1-n^2/2^{m}$.
\end{lemma}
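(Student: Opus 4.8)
\textbf{Proof plan for \Cref{lem:split-twice}.}
The idea is to push all the relevant randomness onto the bipartite graph of edges crossing between $V_o$ and $V_e$, and then union bound over vertex pairs. I would first condition on the odd-degree set $V_o=S$ and on the two induced subgraphs $G[S]$ and $G[V\setminus S]$. Since \Cref{alg:vsplit-int} inspects only induced subgraphs, this already determines the partition $\mathcal P$, so it suffices to prove the bound conditioned on any such triple compatible with $|S|,|V\setminus S|\ge n/4$ and with $\mathcal P$ having more than $m$ parts inside each of $S$ and $V\setminus S$; we may also assume $m\ge1$ and $|S|,|V\setminus S|\ge3$, since otherwise the asserted bound is vacuous or the conditioning already forces the relevant parts to be singletons. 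By \Cref{lem:subgraph-independ} (together with \Cref{lem:lg4i}), conditioned on this triple the crossing graph, viewed as an $S\times(V\setminus S)$ matrix $M$ over $\mathbb F_2$, is uniform over an affine subspace $M_0+L$, where $L$ is the space of matrices all of whose row and column sums vanish; in particular the entries of any submatrix obtained by deleting one row and one column of $M$ are i.i.d.\ uniform, and $M$ is independent of $\mathcal P$.

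Next, the partition $\mathcal Q$ returned by \Cref{alg:vsplit-ext} refines $\mathcal P$ and, by its stopping rule (cf.\ \Cref{thm:all-splittings}), admits no further external split. Hence, to obtain $|\mathcal Q|=|V|$ it suffices that every pair $u\ne v$ lying in a common part of $\mathcal P$ has some part $B\in\mathcal P$ with $B\cap\{u,v\}=\varnothing$ and $|(\mathcal N(u)\triangle\mathcal N(v))\cap B|$ odd: if $\mathcal Q$ had a part containing distinct $u,v$, pick such a $B$; since $\mathcal Q$ refines $\mathcal P$, write $B=R_1\uplus\cdots\uplus R_\ell$ with $R_i\in\mathcal Q$, each distinct from the part of $\mathcal Q$ containing $u,v$ (as $B\cap\{u,v\}=\varnothing$); the stopping rule gives $|\mathcal N_{R_i}(u)|\equiv|\mathcal N_{R_i}(v)|\pmod2$ for all $i$, so summing yields $|\mathcal N_B(u)|\equiv|\mathcal N_B(v)|\pmod2$, contradicting $|\mathcal N_B(u)|+|\mathcal N_B(v)|\equiv|(\mathcal N(u)\triangle\mathcal N(v))\cap B|\equiv1\pmod2$.

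It then remains to bound the probability that this sufficient condition fails for a fixed pair. Every part of $\mathcal P$ lies entirely in $V_o$ or entirely in $V_e$ (\Cref{alg:vsplit-int} splits $V_o$ from $V_e$ first), so a common-part pair $\{u,v\}$ lies wholly in, say, $V_o$, and the parts $B_1,\dots,B_r\in\mathcal P$ contained in $V_e$ — of which there are $r>m$ — all satisfy $B_i\cap\{u,v\}=\varnothing$ and together partition $V_e$. Whether $B_i$ witnesses the condition is the value of the $\mathbb F_2$-linear functional $\phi_i(M)\triangleq\sum_{w\in B_i}(M_{uw}+M_{vw})$, so the pair fails iff $\phi_1(M)=\cdots=\phi_r(M)=0$. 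A short linear-algebra check — exhibiting, for any nonempty proper $I\subsetneq[r]$, a $2\times2$ ``plaquette'' matrix in $L$ on which $\sum_{i\in I}\phi_i$ takes value $1$ (this uses a third row $u'\in V_o\setminus\{u,v\}$, available since $|V_o|\ge3$) — shows that $\phi_1,\dots,\phi_{r-1}$ have linearly independent linear parts, so $(\phi_1(M),\dots,\phi_{r-1}(M))$ is uniform on $\mathbb F_2^{r-1}$ when $M$ is uniform on $M_0+L$. Hence the pair fails with probability at most $2^{-(r-1)}\le2^{-m}$, and a union bound over the fewer than $n^2/2$ vertex pairs gives the claimed bound $n^2/2^m$.

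The main obstacle is the conditioning bookkeeping in the first step: $\mathcal P$ is an intricate function of $G[V_o]$ and $G[V_e]$, so one must isolate the fresh randomness of the crossing edges from everything that determines $\mathcal P$ — precisely the content of \Cref{lem:subgraph-independ} — and then verify that the parity functionals $\phi_i$ of $M$ remain unbiased and suitably independent even after this conditioning, the lone delicate point being the degenerate plaquette cases near the single deleted row and column, which the mild assumptions $|V_o|,|V_e|\ge3$ dispatch.
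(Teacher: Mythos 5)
Your proposal is correct and follows essentially the same route as the paper: isolate the crossing-edge randomness via \Cref{lem:subgraph-independ}, show that each pair left unsplit by the internal partition survives all opposite-side parts with probability at most $2^{-m}$, and union bound over the at most $n^2$ pairs. Your affine-subspace/linear-independence argument for the parity functionals is a more careful rendering of the paper's step of deleting one set from each side to obtain i.i.d.\ crossing edges, and your refinement argument (via the stopping rule) for why a witnessing part forces \Cref{alg:vsplit-ext} to split the pair is a cleaner justification of the corresponding claim in the paper's proof.
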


% {\color{purple}
%   \begin{lemma}\label{lem:unweighted-external-all-split}
%     On input $G \sim G(n,\frac{1}{2})$, suppose \Cref{alg:vsplit-int} outputs a vertex partition $\mathcal P$ such that for all $S \in \mathcal P$, $|S| \le k$.
%     After one round of the loop (Line \ref{ln:gjvj}) in \Cref{alg:vsplit-ext}, all vertices completely split with probability at least $1-nk/2^{\Omega(n/k)}$.
%   \end{lemma}
% }

\begin{proof}
  Suppose that the sets in $\mcP$ are $S_0,\ldots , S_p, T_0,\ldots , T_m$, where the sets $S_i$ are inside $V_o$ and the sets $T_j$ are inside $V_e$.
  Again remove $S_0$ and $T_0$ to obtain i.i.d. Bernoulli($\frac{1}{2}$) crossing edges between $S_1\cup\ldots\cup S_p$ and $T_1\cup\ldots\cup T_m$ (Lemma \ref{lem:subgraph-independ}).
  Consider any $S_i\in \mcP$ and any two vertices $u,v\in S_i$.
  If these two vertices cannot be split by \Cref{alg:vsplit-ext}, then they cannot be split by any $T_j$.
  The latter happens if and only if they have the same neighborhood parity for all $T_j$, i.e. $|\mathcal N_{T_j} (u)| = |\mathcal N_{T_j} (v)| \pmod 2$ for all $j\in [m]$.

  %\syz{remove this whole paragraph.} By \Cref{lem:lg4i} and \Cref{cor:Vo-size-prob}, we have $\Pr(V_o = S_i)\le \frac{1}{2^{n-1}}$ for a fixed set $S_i\subseteq V$ and $\Pr(|V_e| < \frac{n}{4})\le 2 \exp(-\frac{n}{24})$. By the union bound it follows that the event $(V_o \neq S_i,|V_e| \ge \frac{n}{4})$ happens with probability at least $1-2^{-\Theta(n)}$. Conditioned on this event, let the sets in $\mcP$ contained in $V_e$ be $T_0,\cdots ,T_m$. Since $|V_e |\ge n/4$ and all sets $|T_j | \le k$, we know that $m \ge \frac{n}{4k}-1$. \pei{If $S$ is contained in $V_e$, let $T_0=S$ without loss of generality.} \jon{(jon): What about the case where $S\subset V_o$? This is actually assumed; the red is wrong} By \Cref{lem:subgraph-independ}, if we remove $T_0$ and only consider edges from $S_i$ to $T_1,\cdots ,T_m$, then all these edges are i.i.d. $\rm Bernoulli(\frac{1}{2})$ distributed. %if we remove one node (or more) from T_0 then the bipartite graph from S_i to T1... Tm is independent

  For each $j\in [m]$, we know that $|\mathcal N_{T_j} (u)| = |\mathcal N_{T_j} (v)|\pmod2$ with probability $\frac{1}{2}$.
  Since this equality for different $j$ are independent, the probability that it holds for all $j$ is $2^{-m}$.
  %= 2^{-\Omega(n/k)}$. %for all $j\in [m]$ .
  By the union bound over $S_i$ and $v,u\in S_i$, the probability that there exist $u,v\in S_i$ for some $S_i$ that are not split is at most {$\sum_{S_i\in \mcP} \binom{|S_i|}{2} 2^{-m} \le n^2 2^{-m}$}.
  %$\sum_{S_i\in \mcP} \binom{|S_i|}{2} 2^{-\Omega(n/k)} \le nk2^{-\Omega(n/k)}$.
\end{proof}

%\begin{theorem}\label{thm:unweighted-weaker-bound}
%  The DLA of QAOA-MaxCut on $G \sim G(n,\frac{1}{2})$ is free with probability at least $1-n^{3/2}/2^{\Omega(\sqrt{n})}$.
%\end{theorem}
%\begin{proof}
%  Putting all the bad events together, the error probability is $n/2^{k-1} +2^{-\Omega(n)} + nk/2^{\Omega(n/k)}$, which is $n^{3/2}/2^{\Omega(\sqrt{n})}$ when taking $k = \sqrt{n}$.
%\end{proof}

%What if we apply the external splitting lemma for more rounds? Presumably, after one round, even if some sets are not completely split, they will be split to very small sets, and the number of sets should be $\Omega(n)$. Then a second round makes the failure probability as small as $\text{poly}(n)/2^{\Omega(n)}$, and we can optimize $k$ again to achieve better probability bound.

% \begin{theorem}\label{thm:gj4j}
%   The DLA for QAOA on an ER random graph $G(n,1/2)$ is free with probability at least $1-1/2^{\tilde \Omega (n)}$.
% \end{theorem}
\paragraph{Remark.}
The statement of Lemma \ref{lem:split-twice} actually holds even if, in Algorithm \ref{alg:vsplit-ext}, we do not update the partition $\mcP$ and execute only one iteration of the outer loop.
The same proof goes through.

We now prove the first statement in \Cref{thm:DLA-ER-main}, which we restate below.
\thmunweightedBP*

\begin{proof}
  By \Cref{cor:Vo-size-prob}, we have $|V_o|\ge n/4$ and $|V_e |\ge n/4$ with probability $1-2^{\Omega(n)}$.
  By \Cref{lem:unweighted-internal-gives-many} (setting $k = n/20$), with probability $1-2^{\Omega(n)}$, both $V_o$ and $V_e$ are partitioned by \Cref{alg:vsplit-int} into more than $n/20$ sets.
  Now by \Cref{lem:split-twice}, we know that \Cref{alg:vsplit-ext} completely splits all sets with probability $1- n^2 2^{-\Omega(n) } = 1- 2^{-\Omega(n)}$.
  Putting these together, we see that \Cref{alg:vsplit-int} plus \Cref{alg:vsplit-ext} completely split $V$ into individual vertices with probability $1-2^{-\Omega(n)}$.
\end{proof}

An alternative proof of \Cref{thm:DLA-ER-main} (first statement) is given in Appendix \ref{append:main-result-ER}.

\subsection{DLAs for asymmetrically subdivided odd graphs}\label{sec:DLA-asym-subdivided-odd-graph}

In this section, we will use the splitting lemmas (\Cref{lem:vsplit-int,lem:vsplit-ext}) to show that the QAOA-MaxCut DLAs of \emph{asymmetrically subdivided odd graphs} are free.
Such graphs can be viewed as unweighted analogues of randomly edge-weighted graphs, constructed by subdividing the edges of any unweighted graph with odd-degree vertices into paths of distinct lengths.
% As a consequence, we will conclude that, for every $n \ge 7$, there exists at least one $n$-vertex graph whose DLA is free. \syz{To do: This sentence sounds quite weak; may need to see how to strengthen.}

%First, we introduce the definition of (asymmetric) graph subdivision formally.
\begin{definition}[(Asymmetric) graph subdivision]
  For any graph $G=(V,E)$, $G'=(V',E')$ is called a \textit{graph subdivision} of $G$ if
  \begin{equation*}
    V'=V\cup \bigcup_{(u,v)\in \hat{E}} V_{uv} \text{~and~} E'=(E\backslash \hat{E}) \cup \bigcup_{(u,v)\in \hat{E}} E_{uv}
  \end{equation*}
  where $\hat{E}\subseteq E$ is the set of edges being removed, and the added vertices and edges are
  \begin{align*}
    V_{uv} & \triangleq\{w^j_{uv}: j\in [\ell_{uv}],\ell_{uv}\ge 1\},                                       \\
    E_{uv} & \triangleq\{(u,w^1_{uv}),(w^j_{uv},w^{j+1}_{uv}),(w^{\ell_{uv}}_{uv},v): j\in [\ell_{uv}-1]\}.
  \end{align*}
  If $|E \backslash \hat{E}| =$ 0 or 1 and all $V_{uv}$ have different sizes, then $G'$ is called an asymmetric graph subdivision of $G$.
\end{definition}

Note that we allow at most one edge to remain un-subdivided (i.e., $|E \backslash \hat{E}| = 0$ or $1$) in an asymmetric subdivision.
\Cref{fig:k7jn} illustrates the subdivision of a triangle graph, and the comparison with the edge-weighted analogue.

\begin{figure}[ht]
  \centering
  %% Creator: Inkscape 1.4.2 (f4327f4, 2025-05-13), www.inkscape.org
%% PDF/EPS/PS + LaTeX output extension by Johan Engelen, 2010
%% Accompanies image file 'subdiv-g3ll.pdf' (pdf, eps, ps)
%%
%% To include the image in your LaTeX document, write
%%   \input{<filename>.pdf_tex}
%%  instead of
%%   \includegraphics{<filename>.pdf}
%% To scale the image, write
%%   \def\svgwidth{<desired width>}
%%   \input{<filename>.pdf_tex}
%%  instead of
%%   \includegraphics[width=<desired width>]{<filename>.pdf}
%%
%% Images with a different path to the parent latex file can
%% be accessed with the `import' package (which may need to be
%% installed) using
%%   \usepackage{import}
%% in the preamble, and then including the image with
%%   \import{<path to file>}{<filename>.pdf_tex}
%% Alternatively, one can specify
%%   \graphicspath{{<path to file>/}}
%% 
%% For more information, please see info/svg-inkscape on CTAN:
%%   http://tug.ctan.org/tex-archive/info/svg-inkscape
%%
\begingroup%
  \makeatletter%
  \providecommand\color[2][]{%
    \errmessage{(Inkscape) Color is used for the text in Inkscape, but the package 'color.sty' is not loaded}%
    \renewcommand\color[2][]{}%
  }%
  \providecommand\transparent[1]{%
    \errmessage{(Inkscape) Transparency is used (non-zero) for the text in Inkscape, but the package 'transparent.sty' is not loaded}%
    \renewcommand\transparent[1]{}%
  }%
  \providecommand\rotatebox[2]{#2}%
  \newcommand*\fsize{\dimexpr\f@size pt\relax}%
  \newcommand*\lineheight[1]{\fontsize{\fsize}{#1\fsize}\selectfont}%
  \ifx\svgwidth\undefined%
    \setlength{\unitlength}{191.17465282bp}%
    \ifx\svgscale\undefined%
      \relax%
    \else%
      \setlength{\unitlength}{\unitlength * \real{\svgscale}}%
    \fi%
  \else%
    \setlength{\unitlength}{\svgwidth}%
  \fi%
  \global\let\svgwidth\undefined%
  \global\let\svgscale\undefined%
  \makeatother%
  \begin{picture}(1,0.3099957)%
    \lineheight{1}%
    \setlength\tabcolsep{0pt}%
    \put(0,0){\includegraphics[width=\unitlength,page=1]{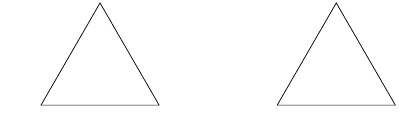}}%
    \put(0.16224526,0.16468603){\color[rgb]{0,0,0}\makebox(0,0)[rt]{\smash{\begin{tabular}[t]{r}1\end{tabular}}}}%
    \put(0.34017548,0.16468603){\color[rgb]{0,0,0}\makebox(0,0)[lt]{\smash{\begin{tabular}[t]{l}2\end{tabular}}}}%
    \put(0.25121031,0.00158339){\color[rgb]{0,0,0}\makebox(0,0)[t]{\smash{\begin{tabular}[t]{c}3\end{tabular}}}}%
    \put(0,0){\includegraphics[width=\unitlength,page=2]{subdiv-g3ll.pdf}}%
    \put(0.02879754,0.26847866){\color[rgb]{0,0,0}\makebox(0,0)[t]{\smash{\begin{tabular}[t]{c}(1)\end{tabular}}}}%
    \put(0.62189833,0.26847866){\color[rgb]{0,0,0}\makebox(0,0)[t]{\smash{\begin{tabular}[t]{c}(2)\end{tabular}}}}%
  \end{picture}%
\endgroup%

  \caption{Edge weights vs. graph subdivision}
  \label{fig:k7jn}
\end{figure}

\begin{definition}[odd graph]
  An unweighted, connected graph with at least 3 vertices is called an \emph{odd graph}, if the degrees of its vertices are all odd.% \syzs{and at least one vertex has a degree greater than $1$.}{} 
\end{definition}

We will show that the DLA of QAOA-MaxCut for an asymmetrically subdivided odd graph is free.

\begin{restatable}{theorem}{SubdivThm}\label{thm:5xv6}
  If $G=(V,E)$ is an asymmetric subdivision of an odd graph, then the DLA of $G$ is free, i.e., $\mathfrak{g} = \mathfrak{g}_{\rm{ma}}$.
\end{restatable}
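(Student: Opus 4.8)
The plan is to reduce everything, via Lemma~\ref{lem:w547}, to showing that $X_x\in\mathrm{i}\mathfrak{g}$ for every vertex $x$ of $G$, and then to establish this by the splitting lemmas. Let $H=(V_H,E_H)$ be the odd graph of which $G$ is an asymmetric subdivision, and write $V(G)=V_H\uplus P$, where $P$ is the set of subdivision vertices. Every $w\in P$ has $\deg_G(w)=2$ (even), while every $u\in V_H$ keeps its degree $\deg_H(u)$, which is odd since $H$ is an odd graph; also all edge weights are $1\neq0$. Hence, by Lemma~\ref{lem:w547}, it suffices to prove $X_x\in\mathrm{i}\mathfrak{g}$ for all $x\in V(G)$ --- equivalently, by the framework of Section~\ref{sec:DLA-algorithm} together with Theorem~\ref{thm:all-splittings}, that the BFS partition of Algorithm~\ref{alg:bfssplitting} refines $V(G)$ into singletons.

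First I would record the coarse structure produced by the initial splittings. Applying internal vertex splitting (Lemma~\ref{lem:vsplit-int}) to $X_V=H_m$ separates odd- from even-degree vertices, so $X_{V_H},X_{P}\in\mathrm{i}\mathfrak{g}$; the induced subgraph $G[V_H]$ has at most one edge (the at most one un-subdivided edge $(u^\ast,v^\ast)$), so a further internal split peels $\{u^\ast,v^\ast\}$ off the rest of $V_H$, and $G[P]$ is a disjoint union of paths $w^1_{uv}\!-\!\cdots\!-\!w^{\ell_{uv}}_{uv}$, one of each (distinct) length $\ell_{uv}$ per subdivided edge $(u,v)$, with $w^i_{uv}$ at distance $i$ from $u$.

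The heart of the argument is to show that repeated external vertex splitting (Lemma~\ref{lem:vsplit-ext}) against the parts produced so far, interleaved with internal splitting inside $G[P]$, eventually isolates every vertex, and this is where the distinctness of the lengths $\ell_{uv}$ is essential. I would run an induction on the BFS round $t$ (equivalently, on distance into the paths from $V_H$): after round $t$, (i) each path vertex $w^i_{uv}$ with $\min\{i,\ell_{uv}+1-i\}\le t$ lies in a part determined by the unordered distance pair $\{i,\ell_{uv}+1-i\}$ and by which branch endpoint it is closer to, and (ii) each branch vertex is distinguished according to the multiset of lengths of its incident subdivided edges. The mechanism is that when two paths have been peeled to residual segments of different sizes --- which is forced since their lengths differ --- the neighbourhood-parity function of Algorithm~\ref{alg:bfssplitting} against the already-isolated layers detects the difference and separates the residual segments; in particular the (unique) shortest relevant path becomes fully isolated, and feeding its now-singleton endpoint back as the set $T$ in an external split separates the branch vertex it is attached to, after which the recursion cascades to the next path and the next branch vertex. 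The un-subdivided edge $(u^\ast,v^\ast)$ is absorbed into the same bookkeeping after the initial peel above. Once all branch vertices are isolated, the two ends of each path are attached to distinct isolated vertices, which breaks the mirror symmetry $w^i_{uv}\leftrightarrow w^{\ell_{uv}+1-i}_{uv}$ and lets external splitting against $\{u\},\{v\},\{w^1_{uv}\},\dots$ strip the path vertex by vertex. Finally, with $X_x\in\mathrm{i}\mathfrak{g}$ for all $x$, Lemma~\ref{lem:w547} gives $\mathfrak{g}=\mathfrak{g}_{\rm{ma}}$.

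The main obstacle is exactly this inductive step: converting ``distinct lengths $\Rightarrow$ full splitting'' into a rigorous termination proof. The delicate point is the mirror symmetry of a single path, which no amount of \emph{internal} splitting can break; one must verify that it is the \emph{length-asymmetry between different paths}, propagated through several rounds of external splitting, that ultimately separates a path's two ends (hence its two branch endpoints), so the induction has to carry along enough state about which branch vertices and path layers are already isolated. I would also check by hand the degenerate lengths $\ell_{uv}\in\{1,2\}$ (which behave slightly differently under the parity rules) and small base cases. A consistency check for the whole approach is Corollary~\ref{cor:autg-trivial}: an asymmetric subdivision of an odd graph does have trivial automorphism group, as it must if every vertex is to be split off.
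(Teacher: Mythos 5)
Your high-level plan coincides with the paper's: reduce via Lemma~\ref{lem:w547} to showing $X_x\in\mathrm i\mathfrak g$ for every vertex, use internal splitting (Lemma~\ref{lem:vsplit-int}) to separate the branch vertices $V_H$ (all of odd degree) from the degree-2 subdivision vertices, and then exploit the distinctness of the path lengths. The problem is that the entire technical content of the theorem lives in the step you yourself flag as ``the main obstacle,'' and the inductive invariant you sketch for it is not workable as stated. Your invariant (i) asserts that after round $t$ each path vertex in layer $\le t$ already sits in a part determined by ``which branch endpoint it is closer to'' --- i.e., that the mirror symmetry $w^i_{uv}\leftrightarrow w^{\ell_{uv}+1-i}_{uv}$ is broken during the layer-peeling phase. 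But internal splitting can never break this symmetry, and the external splits that could break it require the two branch endpoints $u,v$ (or the two ends of the path) to already lie in different parts --- which is exactly what the induction is supposed to produce. Likewise, the claim that ``the (unique) shortest relevant path becomes fully isolated'' before any branch vertex is isolated is circular for the same reason. The correct order of operations is the opposite of the one you describe: one first separates the paths from each other only as \emph{unordered blocks} (per-path sums $X_{V_i}$, together with one extra block $V_0$ collecting the middle vertices of the odd-length paths, which no splitting can assign to individual paths at that stage); this is the paper's Lemma~\ref{lem:0wzl}, proved by induction on the total number of path vertices with a delicate case analysis for residual paths of length $2$, $3$, and $\ge 4$. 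Only after that does one isolate the branch vertices (by externally splitting each pair $\{u,v\}$, $(u,v)\in E(G')$, against a block $V_{\ell'}$ of another path incident to $u$, using that some branch vertex has degree $>1$), and only then does one propagate along the paths.

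A second, smaller divergence: for the final propagation along each path the paper does not use splitting at all, but the explicit commutator identities of Lemma~\ref{lem:bck2} ($X_{u(i)}=\tfrac14[Z_{u(i-1)}Z_{u(i)},[Z_{u(i-1)}Z_{u(i)},H_m]]-X_{u(i-1)}$ and its $ZZ$ analogue). Your alternative --- stripping each path by repeated external splits against the now-singleton endpoint --- is plausible, but it must contend with the fact that the middle vertices of odd-length paths live in the shared block $V_0$ rather than in the per-path blocks, so the bookkeeping is not as clean as ``strip vertex by vertex.'' In summary: the skeleton is right and consistent with the paper, but the central lemma (distinct lengths $\Rightarrow$ separation of the path blocks, and then isolation of the branch vertices) is asserted rather than proved, and the induction you propose to prove it would need to be restructured --- decoupling ``which path a vertex belongs to'' (established early) from ``which end of that path'' (established only after the branch vertices are singletons) --- before it could be made rigorous.
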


The proof of \Cref{thm:5xv6} relies on the following two lemmas.
For a path $P$, we denote by $\abs{P}$ the number of vertices on the path.

\begin{lemma}\label{lem:0wzl}
  Suppose $G = (V, E)$ contains a disjoint union of paths $P_1 \cup \dots \cup P_k$ as a subgraph, and $1 \le \abs{P_1} < \dots < \abs{P_k}$.
  Label the vertices in $P_i$ by $u(i, 1), \dots, u(i, \abs{P_i})$, such that $(u(i, j), u(i, j+1)) \in E(P_i)$ for all $1 \le j < \abs{P_i}$.
  Partition $V(P_1) \cup \dots \cup V(P_k)$ into $k+1$ sets $V_0 \cup V_1 \cup \dots \cup V_k$ where
  \begin{equation}\label{eq:psod}
    V_i \triangleq
    \begin{cases}
      \{ u(j, \ceil{\abs{P_i}/2}): 1 \le j \le k, \abs{P_j} \rm{~is~odd} \}, & {\rm if~} i = 0, \\
      V(P_i) \backslash V_0,                                                 & \rm{otherwise}.
    \end{cases}
  \end{equation}
  If $X_{V(P_1)} + \dots + X_{V(P_k)} \in \mathrm i \mathfrak g$, then $X_{V_0}, X_{V_1}, \dots, X_{V_k} \in \mathrm i \mathfrak g$.
\end{lemma}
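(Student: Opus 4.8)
Write $W \triangleq V(P_1)\cup\dots\cup V(P_k)$, so the hypothesis reads $X_W\in\mathrm i\mathfrak{g}$; I will work under the (implicit, and automatic in the intended application) assumption that $G[W]$ is exactly the disjoint path union $P_1\sqcup\dots\sqcup P_k$, so that for $S,T\subseteq W$ every degree and neighbourhood parity appearing in \Cref{lem:vsplit-int} and \Cref{lem:vsplit-ext} is computed purely within the paths. For $1\le j\le\lfloor|P_i|/2\rfloor$ set $D_i^{(j)}\triangleq\{u(i,j),u(i,|P_i|+1-j)\}$, the $j$-th ``symmetric pair'' of $P_i$ (a genuine two-element set precisely because $j\le\lfloor|P_i|/2\rfloor$); then $V_i=\biguplus_{j=1}^{\lfloor|P_i|/2\rfloor}D_i^{(j)}$ and $V_0=\biguplus_{i:\,|P_i|\ \mathrm{odd}}\{u(i,(|P_i|+1)/2)\}$. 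Since $X_{A\uplus B}=X_A+X_B$ for disjoint $A,B$, it suffices to show $X_{V_0}\in\mathrm i\mathfrak{g}$ and $X_{D_i^{(j)}}\in\mathrm i\mathfrak{g}$ for every such pair $(i,j)$.

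\textbf{Step 1 (internal peeling).} Starting from $X_W\in\mathrm i\mathfrak{g}$ I would apply \Cref{lem:vsplit-int} recursively. The point is that in any induced subforest of a disjoint union of paths the odd-degree vertices are exactly the path endpoints (degree $1$) and the even-degree vertices are the interior vertices (degree $2$) together with any isolated vertices (degree $0$), so one round of internal splitting ``strips both ends'' off every current sub-path. Tracking this peeling on each $P_i$ shows the recursion terminates and outputs, as sets in $\mathrm i\mathfrak{g}$: the terminal ``all-even'' set, which is exactly $V_0$; and, for each level $j\ge1$, the layer $L_j\triangleq\biguplus_{i:\,|P_i|\ge 2j}D_i^{(j)}$, which internal splitting further separates (according to whether a pair's two vertices are adjacent) into $A_j$ (the central adjacent pair of the unique path of length $2j$, if any) and $Q_j\triangleq\biguplus_{i:\,|P_i|\ge 2j+1}D_i^{(j)}$. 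This already gives $X_{V_0},X_{A_j},X_{Q_j}\in\mathrm i\mathfrak{g}$ for all $j$, hence $X_{D_i^{(j)}}\in\mathrm i\mathfrak{g}$ whenever $|P_i|=2j$.

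\textbf{Step 2 (external unlumping).} It remains to split each $Q_j$ into the individual $D_i^{(j)}$, and here the distinctness $|P_1|<\dots<|P_k|$ enters, via reverse induction on $j$ (from $j=\lfloor|P_k|/2\rfloor$ down to $1$); the base case is immediate because at the top level only pairs of $P_k$ (and possibly $P_{k-1}$, if $|P_{k-1}|=|P_k|-1$) occur, and Step 1 already separates them. For the inductive step, fix $j$ and a path $P_i$ with $|P_i|\ge 2j+1$: if $|P_i|=2j+1$, external-split $Q_j$ against $V_0$ (a short count shows that both vertices of any $D_{i'}^{(j)}$ with $|P_{i'}|\ge 2j+1$ have an odd number of neighbours in $V_0$ iff $|P_{i'}|=2j+1$, and by distinctness there is at most one such $i'$, so \Cref{lem:vsplit-ext} isolates $X_{D_i^{(j)}}$); if instead $|P_i|\ge 2j+2$, then $D_i^{(j+1)}$ is a genuine pair at level $j+1$, available by the inductive hypothesis (or by Step 1 when $|P_i|=2j+2$), and external-splitting $Q_j$ against $D_i^{(j+1)}$ isolates $D_i^{(j)}$ because the only path-adjacencies between level-$j$ and level-$(j+1)$ positions lie inside a single path. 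Running this over all $i$ gives $X_{D_i^{(j)}}\in\mathrm i\mathfrak{g}$ for every genuine pair; summing over $j$ yields $X_{V_i}\in\mathrm i\mathfrak{g}$, which together with $X_{V_0}\in\mathrm i\mathfrak{g}$ from Step 1 proves the lemma. (Equivalently one can phrase Steps 1--2 as: the BFS partition of the path union refines $\{V_0,\dots,V_k\}$, and then invoke \Cref{thm:all-splittings}; the combinatorial content is the same.)

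\textbf{Where the work is.} The skeleton is short, but two bookkeeping steps need care. First, verifying precisely that recursive internal splitting of a disjoint path union produces $V_0$ together with the $A_j$ and $Q_j$ --- one must be careful with short paths ($|P_i|\in\{1,2,3\}$), with the fact that the ``odd'' branches are recursed into as well, and with termination. Second, the parity computations in Step 2 and checking that the reverse induction closes, i.e.\ that the probe sets $D_i^{(j+1)}$ and $V_0$ are always in $\mathrm i\mathfrak{g}$ and disjoint from $Q_j$. I would also flag that $G[W]$ must contain no edges beyond the paths themselves: a chord inside $W$ generally makes the claimed partition unreachable (for instance $X_{V_0}$ can then fail to be obtainable), so this belongs in the hypothesis --- and it holds automatically in the application to asymmetrically subdivided odd graphs, where $W$ is the set of degree-$2$ subdivision vertices.
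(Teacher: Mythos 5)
Your proof is correct, and it organizes the argument differently from the paper's. The paper proves the lemma by induction on the total number of path vertices (after strengthening the statement to allow several trivial paths alongside distinct nontrivial ones): it peels one layer of endpoints via internal splitting, invokes the inductive hypothesis on the shortened paths, and then recovers each peeled endpoint pair by external splitting against the stripped interior $V_i'$, with separate subtraction arguments for the short-path cases $\abs{P_i}\in\{2,3\}$. You instead run the internal peeling all the way down in one pass, obtaining $V_0$ together with the layered objects $A_j$ and $Q_j$, and then unlump each $Q_j$ by reverse induction on the layer index $j$, probing with $V_0$ (for paths of length exactly $2j+1$) or with the already-isolated inner pair $D_i^{(j+1)}$ (for longer paths). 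The parity counts you sketch for both probes check out, the probe sets are disjoint from $Q_j$ and available when needed, and the distinctness of the path lengths enters exactly where it must (at most one path of each length $2j$ or $2j+1$). Your route avoids both the ``either trivial or distinct'' strengthening and the paper's three-case subtraction analysis, at the cost of somewhat heavier layer bookkeeping; the paper's induction is more self-contained per step. Your flag that $G[V(P_1)\cup\dots\cup V(P_k)]$ must carry no edges beyond the paths is well taken --- the paper's proof makes the same implicit assumption (the splitting lemmas compute degrees in induced subgraphs of $G$), and it holds in the intended application to subdivided odd graphs --- so this is a shared imprecision in the statement rather than a gap in your argument.
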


\begin{proof}
  We will prove a stronger version: the same result holds if the paths are \emph{either trivial or distinct}, in the sense that there exists a number $0 \le k' \le k$ such that
  % \begin{equation}\label{eq:ikom}
  %   \begin{cases}
  %     \abs{P_i} = 1,             & \text{for } 1 \le i \le k', \\
  %     \abs{P_i} > 1,             & \text{for } k' < i \le k,   \\
  %     \abs{P_i} < \abs{P_{i+1}}, & \text{for } k' < i < k.     \\
  %   \end{cases}
  % \end{equation}
  \begin{equation}\label{eq:ikom}
    {1 = \abs{P_1} = \cdots = \abs{P_{k'}} < \abs{P_{k'+1}} < \cdots < \abs{P_{k}},}
  \end{equation}
  %\syz{How about changes to ``$1 = \abs{P_1} = \cdots = \abs{P_{k'}} < \abs{P_{k'+1}} < \cdots < \abs{P_{k}}$ for some $0 \le k' \le k$'',  which is easier to parse?} 
  We prove the stronger version by induction on the total number of vertices $T \triangleq \sum_{i=1}^k \abs{P_i}$ in the paths, while fixing the number of odd-length paths to some number $m$.
  Notice that $T \ge m$ and $|V_0|=m$.

  For the base case $T = m$, we have $k = m$ and $\abs{P_i} = 1$ for all $1 \le i \le k$.
  In such case, $X_{V_0} = X_{V(P_1)} + \dots + X_{V(P_k)} \in \mathrm i \mathfrak g$, and $X_{V_i} = X_{\varnothing} = 0 \in \mathrm i \mathfrak g$ for all $1 \le i \le k$.

  For $T > m$, assume the claim holds for all cases with a total number of vertices $< T$.
  Let $k'$ be defined as in \cref{eq:ikom}.
  Since $T > m$ it must be that $k' < k$.
  The only odd-degree vertices in the subgraph $P_1 \cup \dots \cup P_k$ are the ends of $P_{k'+1}, \dots, P_k$:
  \begin{equation*}
    V_o \triangleq \{ u(i,1), u(i,\abs{P_i}): k' < i \le k \}.
  \end{equation*}
  Removing these vertices from $P_1 \cup \dots \cup P_k$ we get another disjoint union of paths that is also a subgraph of $G$:
  \begin{equation*}
    P'_i \triangleq
    \begin{cases}
      P_i,                                              & \text{if } 1 \le i \le k', \\
      P_i[V(P_i)\backslash \{u(i,1), u(i,\abs{P_i})\}], & \text{if } k' < i \le k.
    \end{cases}
  \end{equation*}
  Note the if $\abs{P_{k'+1}} = 2$, the $P'_{k'+1}$ is an empty path.
  The new subgraph $H$ satisfies the following:
  \begin{itemize}
    \item
          it contains $m$ odd paths, since the length of each path decreases by 0 or 2,
    \item
          ignoring empty paths (if any), paths in $H$ are still \emph{either trivial or distinct}, since the trivial paths remain unchanged, while the lengths of nontrivial paths all decrease by 2 and so are still distinct,
    \item
          $\abs{G}-\abs{H}=2(k-k') > 0$. %the total number of vertices decreases by 
  \end{itemize}
  By \Cref{lem:vsplit-int}, $X_{V_o}, X_{V(P'_1)} + \dots + X_{V(P'_k)} \in \mathrm i \mathfrak g$ since $X_{V(P_1)} + \dots + X_{V(P_k)} \in \mathrm i \mathfrak g$.
  Hence, by hypothesis, we have $X_{V'_0}, X_{V'_1}, \dots, X_{V'_k} \in \mathrm i \mathfrak g$, where
  \begin{equation*}
    V'_i \triangleq
    \begin{cases}
      \{ u(j, \ceil{\abs{P'_i}/2}): 1 \le j \le k, \abs{P'_j} \text{ is odd} \}, & \text{if } i = 0, \\
      V(P'_i) \backslash V'_0,                                                   & \text{otherwise}.
    \end{cases}
  \end{equation*}
  Notice that
  \begin{itemize}
    \item
          $V'_0 = V_0$ since $\abs{P_i} \equiv \abs{P'_i} \mod 2$ for all $1 \le i \le k$,
    \item
          $V'_i = V_i = \varnothing$ for $1 \le i \le k'$ since $\abs{P_i} = \abs{P'_i} = 1$,
    \item
          $V_i$ is the disjoint union of $V'_i$ and $\{u(i,1), u(i,\abs{P_i})\}$ for $k' < i \le k$ since $\abs{P_i} > 1$.
  \end{itemize}
  So $X_{V_i} = X_{V'_i} \in \mathrm i \mathfrak g$ for $0 \le i \le k'$.
  It remains to prove that $X_{V_i} \in \mathrm i \mathfrak g$ for $k' < i \le k$.
  Equivalently, we have to prove that $X_{u(i,1)} + X_{u(i,\abs{P_i})} = X_{V_i} - X_{V'_i} \in \mathrm i \mathfrak g$ for $k' < i \le k$.
  There are 3 cases:
  \begin{itemize}
    \item
          Case 1: If $\abs{P_i} \ge 4$, then $\abs{V'_i} \ge 2$.
          The only vertices in $V_o$ adjacent to $V'_i$ are $u(i,1)$ and $u(i,\abs{P_i})$, each having exactly one neighbor in $V'_i$.
          By \Cref{lem:vsplit-ext}, we have $X_{u(i,1)} + X_{u(i,\abs{P_i})} \in \mathrm i \mathfrak g$ since $X_{V_o}, X_{V'_i} \in \mathrm i \mathfrak g$.
    \item
          Case 2: If $\abs{P_i} = 3$, then $i = k'+1$ or $i = k'+2$.
          For $i < j \le k$, then $\abs{P_j} \ge 4$.
          We have already shown that $X_{u(j,1)} + X_{u(j,\abs{P_j})} \in \mathrm i \mathfrak g$ for $i < j \le k$ (Case 1) and $X_{V_o} \in \mathrm i \mathfrak g$.
          If $i = k'+1$, then $X_{u(i,1)} + X_{u(i,\abs{P_i})} = X_{V_o} - \sum_{j=i+1}^k (X_{u(j,1)} + X_{u(j,\abs{P_j})}) \in \mathrm i \mathfrak g$.
          If $i = k'+2$, then $X_{u(i-1,1)} + X_{u(i-1,\abs{P_{i-1}})} + X_{u(i,1)} + X_{u(i,\abs{P_i})} = X_{V_o} - \sum_{j=i+1}^k (X_{u(j,1)} + X_{u(j,\abs{P_j})}) \in \mathrm i \mathfrak g$.
          The only vertices in $\{ u(i-1,1), u(i-1,\abs{P_{i-1}}), u(i,1), u(i,\abs{P_i}) \}$ adjacent to $V_0$ are $u(i,1)$ and $u(i,\abs{P_i})$, each having exactly one neighbor in $V_0$.
          By \Cref{lem:vsplit-ext}, we have $X_{u(i,1)} + X_{u(i,\abs{P_i})} \in \mathrm i \mathfrak g$ since $X_{V_0} \in \mathrm i \mathfrak g$.
    \item
          Case 3: If $\abs{P_i} = 2$, it must be $i = k'+1$.
          We have already shown that $X_{u(j,1)} + X_{u(j,\abs{P_j})} \in \mathrm i \mathfrak g$ for $i < j \le k$ (Case 1 and 2) and $X_{V_o} \in \mathrm i \mathfrak g$.
          Hence, $X_{u(i,1)} + X_{u(i,\abs{P_i})} = X_{V_o} - \sum_{j=i+1}^k (X_{u(j,1)} + X_{u(j,\abs{P_j})}) \in \mathrm i \mathfrak g$.
  \end{itemize}
\end{proof}

The next lemma gives a necessary and sufficient condition for \Cref{thm:5xv6}.

\begin{lemma}\label{lem:bck2}
  Suppose $G=(V,E)$ is an asymmetric subdivision of a connected odd graph $G'$.
  If $X_u, Z_u Z_v \in \mathrm{i}\mathfrak{g}$ for all $u \in V(G')$ and $v \in \mathcal{N}_G(u) \backslash V(G')$, then $\mathfrak{g} = \mathfrak{g}_{\rm{ma}}$.
\end{lemma}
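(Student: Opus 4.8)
The plan is to reduce the whole statement to proving that $\mathrm i X_w \in \mathfrak g$ for \emph{every} $w \in V$; since all edge weights equal $1$, \Cref{lem:w547} then yields $\mathfrak g = \mathfrak g_{\rm ma}$ immediately. The hypothesis already gives $\mathrm i X_w \in \mathfrak g$ for $w \in V(G')$, so the work is to recover $\mathrm i X_w$ for each inserted (``path'') vertex $w$. Fix a subdivided edge of $G'$ and write the resulting path in $G$ as $u = w^0 - w^1 - \dots - w^\ell - w^{\ell+1} = v$, where $u,v\in V(G')$ are distinct (as $G'$ is simple), $w^1,\dots,w^\ell$ are the inserted vertices, and $\ell = \ell_{uv}\ge 1$ (the at most one un-subdivided edge contributes no new vertices). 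Two structural facts, both immediate from the definition of a subdivision, will be used repeatedly: (i) each internal vertex $w^j$ ($1\le j\le\ell$) has exactly two neighbours in $G$, namely $w^{j-1}$ and $w^{j+1}$; and (ii) among all the $X_x$ ($x\in V$), only $X_{w^j}$ and $X_{w^{j+1}}$ fail to commute with $Z_{w^j}Z_{w^{j+1}}$. Note that the ``asymmetric'', ``odd'' and ``connected'' hypotheses of \Cref{lem:bck2} are not actually needed here; they matter only where \Cref{thm:5xv6} verifies the hypotheses of this lemma.

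The core is a short induction \emph{along the path}, maintaining for $1\le j\le\ell$ the invariant that $\mathrm i X_{w^j}$, $\mathrm i Z_{w^{j-1}}Z_{w^j}$ and $\mathrm i Y_{w^j}Z_{w^{j-1}}$ all lie in $\mathfrak g$. For the base case $j=1$ I would use only the hypothesised $\mathrm i X_u,\ \mathrm i Z_u Z_{w^1}\in\mathfrak g$ and the generators $\mathrm i H_m,\ \mathrm i H_p\in\mathfrak g$: $[\mathrm i X_u,\mathrm i Z_uZ_{w^1}]$ is a scalar multiple of $\mathrm i Y_uZ_{w^1}$; by (ii), $[\mathrm i H_m,\mathrm i Z_uZ_{w^1}]$ is a scalar multiple of $\mathrm i(Y_uZ_{w^1}+Z_uY_{w^1})$, so subtracting (in the real vector space $\mathfrak g$) gives $\mathrm i Z_uY_{w^1}=\mathrm i Y_{w^1}Z_u$; and $[\mathrm i Z_uY_{w^1},\mathrm i Z_uZ_{w^1}]$ is a scalar multiple of $\mathrm i X_{w^1}$ (using $YZ=\mathrm i X$). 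For the inductive step $j\to j+1$ (needed for $1\le j\le\ell-1$): by (i), $[\mathrm i X_{w^j},\mathrm i H_p]$ is a scalar multiple of $\mathrm i(Y_{w^j}Z_{w^{j-1}}+Y_{w^j}Z_{w^{j+1}})$, so subtracting the known $\mathrm i Y_{w^j}Z_{w^{j-1}}$ produces $\mathrm i Y_{w^j}Z_{w^{j+1}}$; next $[\mathrm i Y_{w^j}Z_{w^{j+1}},\mathrm i X_{w^j}]$ is a scalar multiple of $\mathrm i Z_{w^j}Z_{w^{j+1}}$; next, by (ii), $[\mathrm i H_m,\mathrm i Z_{w^j}Z_{w^{j+1}}]$ is a scalar multiple of $\mathrm i(Y_{w^j}Z_{w^{j+1}}+Z_{w^j}Y_{w^{j+1}})$, so subtracting gives $\mathrm i Z_{w^j}Y_{w^{j+1}}=\mathrm i Y_{w^{j+1}}Z_{w^j}$; and finally $[\mathrm i Y_{w^{j+1}}Z_{w^j},\mathrm i Z_{w^j}Z_{w^{j+1}}]$ is a scalar multiple of $\mathrm i X_{w^{j+1}}$. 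This re-establishes the invariant at $j+1$.

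Running the induction to $j=\ell$ gives $\mathrm i X_{w^j}\in\mathfrak g$ for every inserted vertex of every subdivided path; combined with the hypothesis for $V(G')$ this covers all of $V$, and \Cref{lem:w547} finishes. I do not anticipate a real obstacle: the only care required is bookkeeping the factors of $\mathrm i$ and the Pauli-product signs in the handful of two-Pauli commutators (each an instance of $[P,Q]=2PQ$ for anticommuting Paulis, together with $YZ=\mathrm i X$), and verifying facts (i)--(ii). The one genuinely thought-requiring point is picking the chain of four commutators per step so that each operator produced can be expressed from operators already known to lie in $\mathfrak g$; this is exactly what makes the three-part invariant close under the induction.
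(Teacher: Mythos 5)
Your proposal is correct and follows essentially the same route as the paper: an induction along each subdivided path that recovers $\mathrm{i}X$ for the inserted degree-$2$ vertices one step at a time, then an appeal to \Cref{lem:w547}. The only cosmetic difference is that you use single commutators and carry the intermediate $\mathrm{i}YZ$ terms in your invariant, whereas the paper packages each step as a double commutator (e.g.\ $X_{u(i)} = \tfrac14[Z_{u(i-1)}Z_{u(i)},[Z_{u(i-1)}Z_{u(i)},H_m]] - X_{u(i-1)}$) and also tracks the $Z_{u(i)}Z_{u(i+1)}$ terms explicitly instead of invoking \Cref{lem:w547} at the end.
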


\begin{proof}
  For any $(u, u') \in E(G')$, they are connected by a path of length at least 1 in $G$.
  If the path has length 1, then $Z_u Z_{u'} \in \mathrm i\mathfrak g$ due to \Cref{lem:esplit}.
  Otherwise, denote the vertices along the path between $u, u'$ by $\{ u(i) \}_{i=0}^{m}$ with $m \ge 2$, where $u(0) = u$, $u(1) = v$, and $u(m) = u'$.
  We will prove that $X_{u(i)} \in \mathrm{i}\mathfrak{g}$ for $0 \le i \le m$, and that $Z_{u(i)} Z_{u(i+1)} \in \mathrm{i}\mathfrak{g}$ for $0 \le i < m$ by induction on $i$.
  For the base case $i=0$, we have $X_{u(0)}, Z_{u(0)} Z_{u(1)} \in \mathrm{i}\mathfrak{g}$.
  For $1 \le i \le m-1$, assume that $X_{u(i-1)}, Z_{u(i-1)} Z_{u(i)} \in \mathrm{i}\mathfrak{g}$.
  It can be verified that
  \begin{align*}
    X_{u(i)}            & = \frac{1}{4} [Z_{u(i-1)} Z_{u(i)}, [Z_{u(i-1)} Z_{u(i)}, H_m]] - X_{u(i-1)} \in \mathrm{i}\mathfrak{g}, \\
    Z_{u(i)} Z_{u(i+1)} & = \frac{1}{4} [X_{u(i)}, [X_{u(i)}, H_p]] - Z_{u(i-1)} Z_{u(i)} \in \mathrm{i}\mathfrak{g}.
  \end{align*}
  Finally, $X_{u(m)}\in\mathrm{i}\mathfrak{g}$.
  We have proved $\mathfrak{g}=\mathfrak{g}_{\rm ma}$, since $X_u\in \mathrm{i}\mathfrak{g}$ for all $v\in V$ and $Z_uZ_v\in \mathrm{i}\mathfrak{g}$ for all $(u,v)\in E$.
\end{proof}

Now we recall the main result and show the proof.
\SubdivThm*

\begin{proof}
  Suppose $G$ is an asymmetric subdivision of a connected odd $G'$.
  The odd-degree vertices are exactly the vertices of $G'$, i.e.
  \begin{equation*}
    V_o \triangleq \{ v \in G: \deg_G(v) \text{ is odd} \} = V(G').
  \end{equation*}
  Moreover, the induced subgraph $G[V_e]$ on the remaining vertices $V_e \triangleq V \backslash V_o$ is a disjoint union of paths indexed by the edges of $G'$: $(P_{\ell})_{\ell \in E(G')}$.
  There is at most one empty path ($|P_\ell| = 0$ for some $l \in E(G')$), and all other paths have distinct lengths.
  By \Cref{lem:vsplit-int}, $X_{V_e}, X_{V_o} \in \mathrm i\mathfrak g$ since $X_V = H_m \in \mathrm i\mathfrak g$.
  Applying \Cref{lem:0wzl}, we have $X_{V_{\ell}} \in \mathrm i\mathfrak g$ for all $\ell \in E(G')$ if $|P_\ell| \ge 1$, where
  \begin{equation*}
    V_{\ell} \triangleq
    \begin{cases}
      V_0,                        & \text{if } \abs{P_{\ell}} = 1, \\
      V(P_{\ell}) \backslash V_0, & \text{otherwise}.
    \end{cases}
  \end{equation*}
  and $V_0$ is defined similar to \cref{eq:psod}.

  Now we argue that $X_u, Z_uZ_v \in \mathrm i\mathfrak g$ for all $u \in V_o, v \in \mathcal N_G(u)$.
  This completes the proof due to \Cref{lem:bck2}.
  The argument splits into 4 points:
  \begin{itemize}
    \item
          For any $\ell= (u, v) \in E(G')$, either $|P_\ell| = 0$ or $|P_\ell| \ge 1$.
          If $|P_\ell| = 0$, then $\deg_{G[V_o]} (u) = \deg_{G[V_o]} (v) = 1$ and $\deg_{G[V_o]} (w) = 0$ for $w \in V_o \backslash \{ u, v \}$.
          By \Cref{lem:vsplit-int}, we have $X_u + X_v \in \mathrm i\mathfrak g$ since $X_{V_o} \in \mathrm i\mathfrak g$.
          If $|P_\ell| \neq 0$, then $V_{\ell} \neq \varnothing$, and the only vertices in $V_o$ adjacent to $V_{\ell}$ are $u$ and $v$, each having exactly one neighbor in $V_{\ell}$.
          By \Cref{lem:vsplit-ext}, we have $X_u + X_v \in \mathrm i\mathfrak g$ since $X_{V_{\ell}}, X_{V_o} \in \mathrm i\mathfrak g$.
    \item
          Suppose $\deg_{G'}(u) > 1$, i.e., there exists $v \ne v'$ such that $\ell = (u, v), \ell' = (u, v') \in E(G')$.
          Assume W.L.O.G. that $|P_{\ell'}| \ge 1$ since at least one of $P_{\ell}, P_{\ell'}$ is non-empty.
          The only vertex in $\{u, v\}$ adjacent to $V_{\ell'}$ in $G$ is $u$, and the number of neighbors of $u$ in $V_{\ell'}$ is 1.
          Applying \Cref{lem:vsplit-ext} again, we have $X_u \in \mathrm i\mathfrak g$ since $X_{V_{\ell}}, X_u+X_v \in \mathrm i\mathfrak g$.
    \item
          Since $G'$ is a connected odd graph, there exists at least one vertex in $G'$ whose degree $>1$, and every degree-1 vertex is connected to some large degree vertex.
          In other words, for any $(u, v) \in E(G')$, either $\deg_{G'}(u) > 1$ or $\deg_{G'}(v) > 1$.
          Assume W.L.O.G. that $\deg_{G'}(u) > 1$, since $X_u, X_u + X_v \in \mathrm i\mathfrak g$, we also have $X_v \in \mathrm i \mathfrak g$.
          Hence, $X_u \in \mathrm i\mathfrak g$ for all $u \in V(G') = V_o$.
    \item
          For $u \in V(G')$ and any $v \in \mathcal N_G(u) \backslash V(G')$, $v$ is contained in some $V_{\ell}$ with $\ell \in E(G')$.
          The only edge between $u$ and $V_{\ell}$ in $G$ is $(u, v)$.
          By \Cref{lem:esplit}, we have $Z_uZ_v \in \mathrm i\mathfrak g$ since $X_{V_{\ell}}, X_u \in \mathrm i\mathfrak g$.
  \end{itemize}
\end{proof}

%Complementary to \Cref{thm:5xv6}, we show 
We now prove \Cref{thm:polytimered}, that shows that any connected graph can be transformed to a new graph by subdividing the graph with $O(|E|^2)$ vertices and optionally adding $O(|V|^2)$ vertices and edges, such that (i) the MaxCut problems on the two graphs are equivalent, i.e., they can be reduced to each other, and (ii) the DLA of QAOA-MaxCut on the new graph is free.

% \begin{enumerate}
%   \item
%         the MaxCut problems on the two graphs are equivalent, i.e., they can be reduced to each other,
%   \item
%         the DLA of QAOA-MaxCut on the new graph is free.
% \end{enumerate}

\thmpolytimereduction*

% \begin{theorem}
%   For a connected graph $G=(V,E)$, there exists another connected graph $G'=(V',E')$ obtained by subdividing $G$ with $O(|E|^2)$ vertices and optionally adding $O(|V|^2)$ vertices and edges, such that
%   \begin{itemize}
%     \item
%           The MaxCut on $G'$ and $G$ differ by $|V'| - |V|$.
%           Moreover, there is an efficient transformation between the MaxCut solutions of $G$ and $G'$.
%     \item
%           The DLA of $G'$ is free, i.e., $\mathfrak{g} = \mathfrak{g}_{\rm{ma}}$.
%   \end{itemize}
% \end{theorem}

\begin{proof}
For any graph $H$, a \emph{coloring} of $H$ is a vertex labeling function $C: V(H) \to \{\text{red}, \text{blue}\}$.
The collection of colorings of $H$ is denoted by $\mathcal C_H$.
The \emph{cut size} of a coloring $C \in \mathcal C_H$ is defined by
\begin{equation*}
  \mathrm{cut}_H (C) \triangleq \sum_{(u, v) \in E(H)} \bm{1}(C(u) \neq C(v)).
\end{equation*}
The maximum cut size of the graph $H$ can be expressed as $\mathrm{MaxCut} (H) = \max_{C \in \mathcal C_H} \mathrm{cut}_H (C)$.
Let $C$ and $C'$ denote the colorings on vertex sets $V$ and $V'$, respectively.
We say $C$ is \emph{consistent} with $C'$, denoted by $C \preceq C'$, if $V \subseteq V'$ and $C(u) = C'(u)$ for all $u \in V$.

First, we discuss the case when $G$ is an odd graph.
The corresponding graph $G'$ is constructed as follows: arbitrarily index the edges of $G$ by $e_1, \dots, e_{|E|}$, and subdivide $e_i$ with $2i$ vertices, i.e., the edge $e_i = (u_i, v_i)$ is replaced by a path $u_i - w_{i,1} - \dots - w_{i,2i} - v_i$.
In other words, we have $G' = (V', E')$ where
\begin{align*}
  V' & = V \cup \{ w_{i,j}: 1 \le i \le |E|, 1 \le j \le 2i \},                                      \\
  E' & = \{ (u_i, w_{i,1}), (w_{i,j}, w_{i,j+1}), (w_{i,2i}, v_i): 1 \le i \le |E|, 1 \le j < 2i \}.
\end{align*}

It can be verified that (1) $|V'| - |V| = \sum_{i=1}^{|E|} 2i = O(|E|^2)$, and (2) the DLA of $G'$ is free by direct application of \Cref{thm:5xv6}.
It remains to show that the MaxCut on $G'$ and $G$ differ by $|V'| - |V|$, and to give an efficient transformation between the MaxCut solutions of $G$ and $G'$.
Notice that for the path $u_i - w_{i,1} - \dots - w_{i,2i} - v_i$ and a coloring $C \in \mathcal C(G)$:
\begin{itemize}
  \item
        If $C(u_i) = C(v_i)$, then the MaxCut of the path is $2i$.
  \item
        If $C(u_i) \neq C(v_i)$, then the MaxCut of the path is $2i+1$.
\end{itemize}
In both cases, the MaxCut of the path is achieved by coloring $w_{i,2}, w_{i,4}, \dots, w_{i,2i}$ in $C(u_i)$, and coloring $w_{i,1}, w_{i, 3}, \dots, w_{i,2i-1}$ in the opposite color.
Therefore, for any $C \in \mathcal C(G)$,
\begin{equation*}
  \max_{\substack{C' \in \mathcal C(G') \\ C \preceq C'}} \mathrm{cut}_{G'}(C') = \sum_{i=1}^{|E|} (2i + \bm{1}(C(u_i) \neq C(v_i)) = (|V'| - |V|) + \mathrm{cut}_G(C).
\end{equation*}
Taking the maximum over all colorings of $G$, we obtain
\begin{equation*}
  \mathrm{MaxCut}(G') = \max_{C \in \mathcal C(G)} \max_{\substack{C' \in \mathcal C(G') \\ C \preceq C'}} \mathrm{cut}_{G'}(C') = (|V'| - |V|) + \max_{C \in \mathcal C(G)} \mathrm{cut}_G(C) = (|V'| - |V|) + \mathrm{MaxCut}(G).
\end{equation*}
Moreover, given an optimal coloring $C \in \mathcal C(G)$, an optimal coloring $C' \in \mathcal C(G')$ can be obtained by coloring the inner vertices of each path in an alternative way as described earlier, since
\begin{equation*}
  \mathrm{cut}_{G'}(C') = \max_{\substack{C'' \in \mathcal C(G') \\ C \preceq C''}} \mathrm{cut}_{G'}(C'') = (|V'| - |V|) + \mathrm{cut}_G(C) = (|V'| - |V|) + \mathrm{MaxCut}(G) = \mathrm{MaxCut}(G').
\end{equation*}
Conversely, given an optimal coloring $C' \in \mathcal C(G')$, $C \triangleq C'|_{V}$ is optimal since
\begin{align*}
  \mathrm{cut}_G(C) = -(|V'| - |V|) + \max_{\substack{C' \in \mathcal C(G') \\
      C \preceq C'}} \mathrm{cut}_{G'}(C')
   & \ge -(|V'| - |V|) + \mathrm{cut}_{G'}(C')                              \\
   & = -(|V'| - |V|) + \mathrm{MaxCut}(G') = \mathrm{MaxCut}(G).
\end{align*}
We have found an efficient transformation between the MaxCut solutions of $G$ and $G'$.

Second, if $G$ has even-degree vertices, we can add one vertex and one edge to each even-degree vertex to obtain an odd graph $G'=(V', E')$.
Notice that each added vertex can always be colored such that it contribute 1 to the cut, thus $|\mathrm{MaxCut}(G')|=|\mathrm{MaxCut}|+|V'|-|V|$.
Moreover, it is easily seen that there is an efficient transformation between the MaxCut solutions of $G$ and $G'$.
Next, we subdivide $G'$ to obtain a graph $G''=(V'',E'')$.
  More precisely, we subdivide the edges in $E' \backslash E$ with $2, 4, \dots, 2|E' \backslash E|$ vertices, and the edges in $E$ with $2|E' \backslash E|+2, 2|E' \backslash E|+4, \dots, 2|E'|$ vertices.
  Overall, $G''$ is obtained by subdividing $G$ using $\sum_{i = |E' \backslash E|+1}^{|E'|} 2i = O(|E'|^2) = O((|E| + |V|)^2) = O(|E|^2)$ vertices, and adding $O(|E' \backslash E|^2) = O(|V|^2)$ vertices and edges.
  Since $G'$ is an odd graph, by the same discussion as the first case, we have $|\mathrm{MaxCut}(G'')|=|\mathrm{MaxCut}(G')|+|V''|-|V'|=|\mathrm{MaxCut}(G)|+|V''|-|V|$ and the DLA of $G''$ is free.
  An efficient transformation between the MaxCut solutions of $G$ and $G''$ can be obtained by composing those for $G \leftrightarrow G'$ and $G' \leftrightarrow G''$.

  Finally, if $G$ is not an odd graph and has no even-degree vertices, then it must be a length-$1$ path.
  We can construct $G'$ by merging 3 paths of lengths 2, 3, and 4 at one end.
  It is easy to check that $G'$ satisfies all the requirements.
\end{proof}

Before completing this section, we give an example of a family of graphs whose DLAs are free.

\begin{definition}[Spider graphs, \Cref{fig:k-armed-spider-graph}]
  Let $n_1, \dots, n_k \ge 1$.
  The \emph{$k$-armed spider graph} $G(n_1, \dots, n_k) = (V, E)$ is defined to be the graph on $1+n_1+\dots+n_k$ vertices, with $V = \{ c \} \cup \{ u_i^j: 1 \le i \le k, 1 \le j \le n_i \}$ and $E = \{ (c, u_i^1): 1 \le i \le k \} \cup \{ (u_i^j, u_i^{j+1}): 1 \le i \le k, 1 \le j < n_i \}$.
\end{definition}

\begin{corollary}\label{cor:lmr2}
  If $k \ge 3$ is odd and $n_1, \cdots, n_k \ge 1$ are pairwise distinct, then the QAOA-MaxCut DLA of a $k$-armed spider graph $G(n_1, \cdots, n_k)$ is free.
\end{corollary}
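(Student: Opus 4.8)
The plan is to realize the $k$-armed spider graph as an asymmetric subdivision of a small odd graph and then invoke \Cref{thm:5xv6}. The natural underlying graph is the star $K_{1,k}$, i.e.\ a central vertex $c$ joined to $k$ leaves $v_1,\dots,v_k$: subdividing the edge $(c,v_i)$ with $n_i-1$ internal vertices produces exactly the arm $c - u_i^1 - \cdots - u_i^{n_i}$ of the spider graph (with $v_i$ playing the role of the arm endpoint $u_i^{n_i}$ and the internal path vertices being $u_i^1,\dots,u_i^{n_i-1}$). Thus $G(n_1,\dots,n_k)$ is a subdivision of $K_{1,k}$ with $\hat E = \{(c,v_i): n_i\ge 2\}$ and $\ell_{(c,v_i)} = n_i-1$ for each $(c,v_i)\in\hat E$.

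First I would check that $K_{1,k}$ is an \emph{odd graph} in the sense of the paper: it is connected, has $k+1\ge 4\ge 3$ vertices, each leaf has degree $1$, and the center has degree $k$, which is odd by hypothesis; hence every vertex has odd degree. Second, I would verify the two conditions in the definition of an \emph{asymmetric} subdivision. The set of un-subdivided edges is $E(K_{1,k})\setminus\hat E = \{(c,v_i): n_i=1\}$; since the $n_i$ are pairwise distinct, at most one equals $1$, so this set has size $0$ or $1$. The added vertex sets $V_{(c,v_i)}$ on the subdivided edges have sizes $n_i-1\ge 1$, and pairwise distinctness of the $n_i$ makes these sizes pairwise distinct as well. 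Both requirements hold, so $G(n_1,\dots,n_k)$ is an asymmetric subdivision of the odd graph $K_{1,k}$, and \Cref{thm:5xv6} immediately yields $\mathfrak g = \mathfrak g_{\rm ma}$.

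There is essentially no substantive obstacle here; the only care needed is bookkeeping. One must be careful to take the arm endpoints $u_i^{n_i}$ as the original leaves $v_i$ of $K_{1,k}$ so that the path internal vertices are $u_i^1,\dots,u_i^{n_i-1}$, and to note that the edge case $n_i = 1$ (a length-one arm) corresponds precisely to leaving that star-edge un-subdivided, which the definition of asymmetric subdivision permits for at most one edge --- exactly what pairwise distinctness guarantees. Note also that the hypothesis ``$k$ odd'' is used in a single place: to make the center of $K_{1,k}$ odd-degree, without which $K_{1,k}$ would fail to be an odd graph and \Cref{thm:5xv6} would not apply.
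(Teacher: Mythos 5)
Your proof is correct and takes exactly the paper's route: the paper's entire argument is the one-line observation that $G(n_1,\dots,n_k)$ is an asymmetric subdivision of the $k$-star, which is an odd graph when $k$ is odd, so \Cref{thm:5xv6} applies. Your write-up simply makes explicit the bookkeeping (the subdivision sizes $n_i-1$ are pairwise distinct and at most one edge is left un-subdivided) that the paper leaves implicit.
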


\begin{proof}
  $G(n_1, \cdots, n_k)$ is an asymmetric subdivision of the $k$-star, which is an odd graph.
\end{proof}

\begin{figure}[ht]
  \centering
  \begin{tikzpicture}
  \draw [fill=black]
  (0,0) circle (0.05)
  (1,0.5) circle (0.05)
  (2,1) circle (0.05)
  (0,1) circle (0.05)
  (0,2) circle (0.05)
  (-1,0.5) circle (0.05)
  (-2,1) circle (0.05)
  (-3,1.5) circle (0.05)
  (-0.5,-1) circle (0.05)
  (-1,-2) circle (0.05);
  \draw (0,0) -- (2,1) (0,0) -- (0,2) (0,0) -- (-3,1.5) (0,0) -- (-1,-2);

  \node (a) at (0,0.8) {};
  \node (b) at (0,2.2) {};
  \draw[decorate,decoration={brace,raise=5pt}] (b) -- (a);

  \node (a) at (-0.85,0.38) {};
  \node (b) at (-3.2,1.6) {};
  \draw[decorate,decoration={brace,raise=5pt}] (b) -- (a);

  \node (a) at (-1.1,-2.25) {};
  \node (b) at (-0.35,-0.8) {};
  \draw[decorate,decoration={brace,raise=5pt}] (a) -- (b);

  \node (a) at (0.8,0.35) {};
  \node (b) at (2.25,1.1) {};
  \draw[decorate,decoration={brace,raise=5pt}] (b) -- (a);

  \draw (0.5,1.5) node{ $n_1$} (-1.8,1.4) node{ $n_2$} (-1.2,-1.35) node{ $n_3$} (1.8,0.25) node{ $n_k$};
  \node[rotate=45] at (0.7,-0.7) {$\cdots\cdots$};
\end{tikzpicture}
  \caption{A $k$-armed spider graph with arm lengths $n_1,n_2,\ldots,n_k$ for $k\ge 1$.}
  \label{fig:k-armed-spider-graph}
\end{figure}
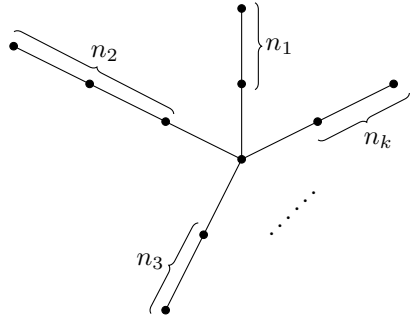

It follows that, for any $n \ge 7$, there exists at least one $n$-vertex graph whose DLA is free.
From \Cref{cor:autg-trivial}, for a DLA to be free, the associated graph must have a trivial automorphism group.
No connected graphs on $n\le 5$ vertices have a trivial automorphism group, and thus none of those graphs have free DLAs.
The case $n=6$ is interesting, in that there are $8$ non-isomorphic graphs that have trivial automorphism groups, but none of these has a free DLA.
We leave it as an open question as to why this is.

\subsection{Other applications of the splitting lemmas}\label{sec:j8w9}

% Beyond their use in the algorithm of the previous section, \Cref{lem:esplit,lem:vsplit-int,lem:vsplit-ext} can be applied to prove that the DLAs of certain families of graphs are free, and also  used to extend graphs with free DLAs in such a way that freeness is preserved.
% In this section, for brevity, we will say that a graph $G$ is free if its a DLA of QAOA-MaxCut is free, i.e., if $\mathfrak{g}_G=\mathfrak{g}_{G,\rm ma}$.
In this section, we use the splitting lemmas (\Cref{lem:vsplit-int,lem:vsplit-ext}) to prove that the DLAs of other families of graphs are free, and to extend graphs with free DLAs in such a way that freeness is preserved.
For brevity, we will say that a graph $G = (V, E)$ is \emph{free} if its DLA of QAOA-MaxCut is free (i.e., if $\mathfrak{g}_G=\mathfrak{g}_{G,\rm ma}$), and will say that $G$ is \emph{splittable} if \Cref{alg:dla} on input $G$ returns a partition $\mathcal{P}$ such that $\abs{\+P}=\abs{V}$.
Note that if a graph is splittable, then it must be free; the converse may not hold.

Given $G=(V,E)$, we use the notation $V_e,V_o\subseteq V$ to denote the even- and odd-degree vertices in $G$, respectively.
Similarly, $V_{ee},V_{eo}$ denote the even and odd degree vertices in the induced subgraph $G[V_e]$, while $V_{oe},V_{oo}$ denote the even and odd degree vertices in $G[V_{o}]$.
Applying this notation recursively, we get that e.g, $V_{eeoeo}$ denotes the odd-degree vertices in $G[V_{eeoe}]$, etc. Note that \Cref{lem:vsplit-int} implies that $X_{V_a}\in \mathrm i\mathfrak{g}_G$ for any string $a\triangleq a_1a_2 \dots a_m$ with $a_i\in\{e,o\}$.

\begin{lemma}\label{lem:extend-dla-by-zz}
  Let $G=(V,E)$ be a subgraph of $G'=(V',E')$.
  Let $F\subseteq \{(u,v) \in E' : u, v \in V'\setminus V\}$.
  Define the DLAs
  % \begin{align*}
  %     \mathfrak{g} &= \left\langle  \left\{\mathrm{i}\sum_{u \in V}X_{u}, \mathrm{i}\sum_{(u,v) \in E}Z_{u}Z_{v} \right\} \right\rangle_{{\rm Lie},\mathbb{R}}, \quad\quad\mathfrak{g}' = \left\langle  \left\{\mathrm{i}\sum_{u \in V}X_{u}, \mathrm{i}\sum_{(u,v) \in E \cup F}Z_{u}Z_{v} \right\} \right\rangle_{{\rm Lie},\mathbb{R}}.
  % \end{align*}
  \begin{equation*}
    \mathfrak{g}=\LieClosure{\{\mathrm{i} X_V,\mathrm{i}ZZ_E\} }\quad \mathfrak{g}'= \LieClosure{\{\mathrm{i}X_V,\mathrm{i}ZZ_{E\cup F}\} }
  \end{equation*}
  If $\mathfrak{g}$ is free then $\mathfrak{g} \subseteq \mathfrak{g}'$.
\end{lemma}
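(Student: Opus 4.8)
The plan is to reduce the statement to the inclusion $[\mathfrak{g},\mathfrak{g}]\subseteq\mathfrak{g}'$ together with semisimplicity of $\mathfrak{g}$. Everything hinges on one observation: $ZZ_F$ involves only qubits in $V'\setminus V$, whereas $X_V$ and $ZZ_E$ --- and hence every element of $\mathfrak{g}=\LieClosure{\{\mathrm{i}X_V,\mathrm{i}ZZ_E\}}$ --- involves only qubits in $V$. Since these qubit sets are disjoint, $\mathrm{i}ZZ_F$ commutes with $\mathrm{i}X_V$, with $\mathrm{i}ZZ_E$, and therefore with all of $\mathfrak{g}$; moreover $E$ and $F$ are disjoint as edge sets, so $ZZ_{E\cup F}=ZZ_E+ZZ_F$.

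First I would show that $[\mathfrak{g},\mathfrak{g}]\subseteq\mathfrak{g}'$. Since $[\mathfrak{g},\mathfrak{g}]$ is spanned by brackets of nested commutators of the generators $\mathrm{i}X_V,\mathrm{i}ZZ_E$, it suffices to prove $[A,B]\in\mathfrak{g}'$ for any two such nested commutators $A,B$. I would argue by induction on the total number of generator symbols: replace $A$ (resp. $B$) by the ``same-shape'' nested commutator $A'$ (resp. $B'$) built from the generators $\mathrm{i}X_V,\mathrm{i}(ZZ_E+ZZ_F)$ of $\mathfrak{g}'$ --- by the inductive hypothesis $A'=A$ when $A$ has length $\ge 2$, while $A'=\mathrm{i}(ZZ_E+ZZ_F)=\mathrm{i}ZZ_E+\mathrm{i}ZZ_F$ when $A$ is the generator $\mathrm{i}ZZ_E$ itself (and $A'=A$ when $A=\mathrm{i}X_V$). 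Every nested commutator of $\mathrm{i}X_V,\mathrm{i}ZZ_E$ is supported on $V$, so the $\mathrm{i}ZZ_F$-corrections in $A'-A$ and $B'-B$ commute with everything appearing; expanding $[A',B']$ bilinearly, all cross-terms vanish and $[A',B']=[A,B]$. As $[A',B']\in\mathfrak{g}'$ (it is a bracket of nested commutators of $\mathfrak{g}'$'s generators), this gives $[A,B]\in\mathfrak{g}'$, hence $[\mathfrak{g},\mathfrak{g}]\subseteq\mathfrak{g}'$.

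To conclude, I would invoke freeness: by \Cref{lem:free-semisimple}, a free $\mathfrak{g}$ is semisimple (assuming, as we may, that $G$ has no isolated vertices), so $\mathfrak{g}=[\mathfrak{g},\mathfrak{g}]$. Combined with the previous paragraph, $\mathfrak{g}=[\mathfrak{g},\mathfrak{g}]\subseteq\mathfrak{g}'$, which is the claim.

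I expect the main obstacle to be the bookkeeping in the induction: making precise the notion of ``the same-shape nested commutator in the other generating set'' and verifying that each $\mathrm{i}ZZ_F$-correction term vanishes. Both are routine once one notes that $\mathrm{i}ZZ_F$ is central in $\LieClosure{\{\mathrm{i}X_V,\mathrm{i}ZZ_E,\mathrm{i}ZZ_F\}}$ and that everything generated by $\mathrm{i}X_V$ and $\mathrm{i}ZZ_E$ lives on the disjoint qubit set $V$. A secondary point worth stating carefully is exactly where freeness enters: it is what forces $\mathfrak{g}$ to equal its own commutator ideal; without semisimplicity (for instance, if $G$ has an isolated vertex $v$, so that $\mathrm{i}X_v$ is a central generator never produced by any commutator) the commutator-ideal argument alone does not suffice, and such degenerate cases would need to be handled directly.
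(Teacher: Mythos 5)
Your proposal is correct and follows essentially the same route as the paper: observe that $\mathrm{i}ZZ_F$ commutes with everything generated by $\mathrm{i}X_V$ and $\mathrm{i}ZZ_E$ (disjoint qubit support), conclude $[\mathfrak{g},\mathfrak{g}]=[\mathfrak{g}',\mathfrak{g}']\subseteq\mathfrak{g}'$, and then use freeness via \Cref{lem:free-semisimple} to get $\mathfrak{g}=[\mathfrak{g},\mathfrak{g}]$. Your explicit induction just fills in the step the paper states tersely, and your caveat about isolated vertices matches the hypothesis already present in \Cref{lem:free-semisimple}.
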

\begin{proof}
  From \Cref{lem:free-semisimple}, if $\mathfrak{g}$ is free then it is semisimple and thus $\mathfrak{g}=[\mathfrak{g},\mathfrak{g}]$.
  Since $\mathrm{i}\sum_{(u,v)\in F}Z_uZ_v$ commutes with $\mathrm{i}\sum_{u\in V}X_u$ and $\mathrm{i}\sum_{(u,v)\in E}Z_uZ_v$ we thus have $\mathfrak{g}=[\mathfrak{g},\mathfrak{g}]=[\mathfrak{g}',\mathfrak{g}']\subseteq \mathfrak{g}'$.
\end{proof}

\begin{definition}[Even and odd partitions]
  Let $S,T$ be disjoint vertex sets.
  We call a set of edges $E$ between $S$ and $T$ a \emph{partition of $S$ by $T$} if there exists a surjection $f:S\rightarrow T$ such that $E=\{(u, f(u)) : u \in S\}$.
  If the degree of every vertex in $T$ in $G=(S\cup T, E)$ is odd (even), we say $E$ is and \emph{odd (even) partition of $S$ by $T$}.
  %If the degree of every vertex in $T$ in $G=(S\cup T, E)$ is non-zero, we say that the partition is positive.
\end{definition}

Note that when $\abs{S}=\abs{T}$, an odd partition of $S$ by $T$ is a perfect matching between $S$ and $T$.

\begin{lemma}\label{lem:perfect-match}
  Let $G=(V,E)$ and $S,T\subseteq V$, $S\cap T=\varnothing$.
  If $X_T\in \mathrm{i}\mathfrak{g}_G$ and $X_u\in \mathrm{i}\mathfrak{g}_G$ for all $u\in S$, and $E(S,T)$ is a partition of $S$ by $T$, then $X_v\in \mathrm{i}\mathfrak{g}_G$ for all $v\in T$.
\end{lemma}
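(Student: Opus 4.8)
The plan is to prove, for each $u \in S$, that $X_{f(u)} \in \mathrm i\mathfrak{g}_G$, where $f\colon S \to T$ is the surjection witnessing that $E(S,T)$ is a partition of $S$ by $T$. Since $f$ is onto, this immediately yields $X_v \in \mathrm i\mathfrak{g}_G$ for every $v \in T$, which is the claim.

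Fix $u \in S$. Because $E(S,T) = \{(w, f(w)) : w \in S\}$ is exactly the set of all edges of $G$ between $S$ and $T$, the vertex $u$ has a single neighbour in $T$, namely $f(u)$. Consequently, in the induced bipartite graph $G[T, \{u\}]$ the vertex $f(u)$ has degree $1$ and every other vertex of $T$ has degree $0$; in the notation of \Cref{lem:vsplit-ext} this says $T_o(\{u\}) = \{f(u)\}$ and $T_e(\{u\}) = T \setminus \{f(u)\}$. I would then invoke \Cref{lem:vsplit-ext} with the set being refined taken to be $T$ and the reference set taken to be $\{u\}$; these are disjoint since $u \in S$ and $S \cap T = \varnothing$. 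Its two hypotheses, $X_T \in \mathrm i\mathfrak{g}_G$ and $X_{\{u\}} = X_u \in \mathrm i\mathfrak{g}_G$, are precisely what is assumed, so the lemma delivers $X_{T_o(\{u\})} = X_{f(u)} \in \mathrm i\mathfrak{g}_G$. Letting $u$ range over $S$ and using surjectivity of $f$ finishes the proof.

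An equivalent route uses only the edge-splitting lemma: applying \Cref{lem:esplit} to the disjoint sets $\{u\}$ and $T$ gives $ZZ_{E(\{u\},T)} = Z_u Z_{f(u)} \in \mathrm i\mathfrak{g}_G$, and then the identity $\tfrac14[Z_u Z_{f(u)}, [Z_u Z_{f(u)}, H_m]] = X_u + X_{f(u)}$ — a one-line computation, being the single-edge instance of the map $f$ of \cref{eq:5e2q} — together with $X_u \in \mathrm i\mathfrak{g}_G$ and $H_m = X_V \in \mathrm i\mathfrak{g}_G$ yields $X_{f(u)} \in \mathrm i\mathfrak{g}_G$ once more.

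There is no substantive obstacle here: the content of the lemma is just the observation that ``each vertex of $S$ has a unique neighbour in $T$'' forces the degree parities in $G[T, \{u\}]$ to isolate the single vertex $f(u)$, after which one application of an already-established splitting lemma suffices. The only point demanding care is bookkeeping — ensuring that the roles of the two sets fed into \Cref{lem:vsplit-ext} are assigned correctly (the set being refined is $T$, not $S$), and that it is surjectivity of $f$ that upgrades the per-vertex conclusion to a statement about all of $T$.
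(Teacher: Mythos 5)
Your primary argument is correct and is essentially the paper's own proof: both fix a vertex of $S$ mapping onto a given $t\in T$ (equivalently, range over $u\in S$ and use surjectivity of $f$), observe that in $G[T,\{u\}]$ only $f(u)$ has odd degree, and apply \Cref{lem:vsplit-ext} with $T$ as the set being refined and $\{u\}$ as the reference set. The alternative route via \Cref{lem:esplit} and the identity $\tfrac14[Z_uZ_{f(u)},[Z_uZ_{f(u)},H_m]]=X_u+X_{f(u)}$ is also valid but adds nothing beyond the first argument.
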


\begin{proof}
  The partition of $S$ by $T$ ensures that, for any $t\in T$, there is an $s \in S$ such that in the subgraph $G[T, \{s\}]$, $t$ has odd degree while all $t' \in T \setminus\{t\}$ have even (zero) degree.
  By \Cref{lem:vsplit-ext}, $X_t\in \mathrm{i}\mathfrak{g}_{G}$.
\end{proof}

\begin{proposition}[Free extensions by partitions]\label{lem:extend-matchings}
  Let $G=(V,E)$ and $G'' = (V'', E'')$ be disjoint graphs, and let $V_{o,1}\cup V_{o,2}$ and $V_{e,1}\cup V_{e,2}$ be partitions of $V_o$ and $V_e$, respectively.
  If $G$ is free and% where the degree of each vertex in $G''$ is odd. If 
  \begin{enumerate}
    % \item $G''$ is odd, and $\tilde{E}$ is an odd partition of either (a) $V_e$ or (b) $V_{eo}$ by $V''$, or
    % \item $G''$ is odd, and $\tilde{E}$ is an even partition of either (a) $V_o$ or (b) $V_{oo}$ by $V''$, or
    % \item $G''$ is even, and $\tilde{E}$ is an odd partition of either (a) $V_o$ or (b) $V_{oe}$ by $V''$, or
    % \item $G''$ is even, and $\tilde{E}$ is an even partition of either (a) $V_e$ or (b) $V_{ee}$ by $V''$,
    \item
          $\tilde{E}_1$ is an odd partition of $V_{e,1}$ by $V_o''$, and $\tilde{E}_2$ is an even partition of $V_{e,2}$ by $V_e''$, or \label{item:free-extension1}
    \item
          $\tilde{E}_1$ is an even partition of $V_{o,1}$ by $V_o''$, and $\tilde{E}_2$ is an odd partition of $V_{o,3}$ by $V_e''$,\label{item:free-extension2}
          % \item $G''$ is odd, and $\tilde{E}_1$ is an odd partition of $V_{oe}$ by $V''_e$, and $\tilde{E}_2$ is an even partition of $V_{oo}$ by $V'_o$,
          % \item $\tilde{E}_1$ is an even partition of $V_{ee}$ by $V'_e$, and $\tilde{E}_2$ is an odd partition of $V_{eo}$ by $V''_o$,
  \end{enumerate}
  then $G' = (V', E')$, where $V' = V \cup V''$ and $E' = E \cup E'' \cup \tilde{E}_1\cup \tilde{E}_2$, is free.
  % let $V_{eo}=V_1 \cup V_2 \cup \ldots \cup V_L$ be a partition of $V_{eo}$ such that $\abs{V_i}=2$ for all $i\in[L]$. For each $i\in[L]$ let $G_i'=(V_i', E_i')$ where $V_i'=\{u_i', v_i'\}$ and $E_i'=\{(u_i',v_i')\}$, i.e., $G_i'$ is a path of length $2$. Let $\tilde{E}_i$ be a perfect matching between $V_i$ and $V_i'$. Then $G'=(V',E')$, where $V'=V\cup \bigcup_i V_i', E\cup \bigcup_i E'_i \cup \bigcup_i \tilde{E}_i$, is free.  
  If $G$ is splittable, then $G'$ is splittable.
\end{proposition}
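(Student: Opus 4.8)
The plan is to treat the two hypotheses symmetrically, since they differ only by exchanging the roles of odd- and even-degree vertices of $G$; I would run the argument under case~\ref{item:free-extension1} and remark that case~\ref{item:free-extension2} is obtained by swapping ``odd'' and ``even'' everywhere. The first step is a bookkeeping of degree parities in $G'$. As $V$ and $V''$ are disjoint, the only edges of $G'$ with both ends in $V$ are those of $E$, so $G'[V]=G$; likewise $G'[V'']=G''$, and the edges of $G'$ between $V$ and $V''$ are exactly $\tilde E_1\cup\tilde E_2$. Under case~\ref{item:free-extension1}, each vertex of $V_e=V_{e,1}\uplus V_{e,2}$ acquires exactly one new incident edge and each vertex of $V_o$ acquires none, so all of $V$ has odd degree in $G'$; each vertex of $V_o''$ acquires an odd number of new edges and each vertex of $V_e''$ an even number, so all of $V''$ has even degree in $G'$. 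Hence the odd- and even-degree classes of $G'$ are exactly $V$ and $V''$.

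For the freeness statement I would then invoke the splitting lemmas on $G'$. By \Cref{lem:vsplit-int} applied to $G'$ (splitting $V'$ into its odd/even classes), $X_V,X_{V''}\in\mathrm i\mathfrak g_{G'}$; by \Cref{lem:esplit} with $S=V$, $T=V''$, $ZZ_{\tilde E_1\cup\tilde E_2}\in\mathrm i\mathfrak g_{G'}$, hence $ZZ_E+ZZ_{E''}=ZZ_{E'}-ZZ_{\tilde E_1\cup\tilde E_2}\in\mathrm i\mathfrak g_{G'}$ (using $ZZ_{E'}\in\mathrm i\mathfrak g_{G'}$). Therefore $\LieClosure{\{\mathrm iX_V,\,\mathrm i(ZZ_E+ZZ_{E''})\}}\subseteq\mathfrak g_{G'}$, and since $G$ is free, \Cref{lem:extend-dla-by-zz}---with $G$ viewed as a subgraph of $(V\cup V'',E\cup E'')$ and $F=E''$ (which lies entirely inside $V''$)---gives $\mathfrak g_G\subseteq\LieClosure{\{\mathrm iX_V,\,\mathrm i(ZZ_E+ZZ_{E''})\}}\subseteq\mathfrak g_{G'}$; in particular $\mathrm iX_u\in\mathfrak g_{G'}$ for all $u\in V$. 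Applying \Cref{lem:vsplit-int} to $G'$ with $S=V''$ (where $G'[V'']=G''$) gives $X_{V_o''},X_{V_e''}\in\mathrm i\mathfrak g_{G'}$. Since $\tilde E_1$ is a partition of $V_{e,1}$ by $V_o''$ and $\mathrm iX_u\in\mathfrak g_{G'}$ for every $u\in V_{e,1}\subseteq V$, \Cref{lem:perfect-match} yields $\mathrm iX_w\in\mathfrak g_{G'}$ for all $w\in V_o''$, and the same argument with $\tilde E_2$ and $V_{e,2}$ yields $\mathrm iX_w\in\mathfrak g_{G'}$ for all $w\in V_e''$. Thus $\mathrm iX_v\in\mathfrak g_{G'}$ for all $v\in V'$, and then \Cref{lem:esplit} applied to singletons $S=\{u\}$, $T=\{v\}$ gives $\mathrm iZ_uZ_v\in\mathfrak g_{G'}$ for every $(u,v)\in E'$. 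Hence all generators of $\mathfrak g_{G',\mathrm{ma}}$ lie in $\mathfrak g_{G'}$, so $\mathfrak g_{G'}=\mathfrak g_{G',\mathrm{ma}}$ and $G'$ is free.

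For the splittability statement I would work at the level of \Cref{alg:dla}, whose output equals the BFS partition by \Cref{thm:all-splittings}; thus it suffices to show the BFS partition of $G'$ consists of singletons. Because $G'[V]=G$ and $G'[S,T]=G[S,T]$ for all $S,T\subseteq V$, every split available within $G$ is available within $G'$, so the restriction to $V$ of the BFS partition of $G'$ refines the BFS partition of $G$; since $G$ is splittable, the latter is the all-singletons partition, and hence so is the former. It remains to split $V''$: once the vertices of $V$ appear as singleton blocks, the surjectivity of the partition maps ($V_{e,1}\to V_o''$ from $\tilde E_1$ and $V_{e,2}\to V_e''$ from $\tilde E_2$) provides for each $w\in V''$ a vertex $u\in V$ whose only neighbour in $V''$ is $w$, so external splitting by the singleton $\{u\}$ (\Cref{lem:vsplit-ext}) detaches $\{w\}$; iterating over $w\in V''$ splits $V''$ completely. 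Hence the BFS partition of $G'$ is all-singletons, i.e.\ $G'$ is splittable.

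The step I expect to be the main obstacle is the inclusion $\mathfrak g_G\subseteq\mathfrak g_{G'}$. Inside $\mathfrak g_{G'}$ one can only recover the combined operator $\mathrm i(ZZ_E+ZZ_{E''})$, not $\mathrm iZZ_E$ alone, so it is not immediate that both generators $\{\mathrm iX_V,\mathrm iZZ_E\}$ of $\mathfrak g_G$ lie in $\mathfrak g_{G'}$; what saves the day is that freeness of $G$ makes $\mathfrak g_G$ semisimple (\Cref{lem:free-semisimple}), so $\mathfrak g_G=[\mathfrak g_G,\mathfrak g_G]$ is built from brackets of length $\ge 2$ in which the extra summand $\mathrm iZZ_{E''}$---a commuting spectator---never contributes, and this is exactly what \Cref{lem:extend-dla-by-zz} packages. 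The other point needing attention is the degree-parity computation of the first paragraph: the whole argument is contingent on the odd/even partition conditions being tuned precisely so that $V$ and $V''$ become the odd- and even-degree classes of $G'$, which is what makes the first internal split of $G'$ cleanly separate them and keeps the relevant induced subgraphs equal to $G$ and $G''$.
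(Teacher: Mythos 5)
Your proof of the freeness claim is correct and follows essentially the same route as the paper: compute that $V'_o=V$ and $V'_e=V''$, use internal splitting to get $X_V,X_{V''}\in\mathrm i\mathfrak g_{G'}$ and edge splitting to recover $ZZ_{E\cup E''}$, invoke \Cref{lem:extend-dla-by-zz} (via semisimplicity of the free $\mathfrak g_G$) to conclude $X_u\in\mathrm i\mathfrak g_{G'}$ for $u\in V$, and finish with \Cref{lem:perfect-match} on $V''$. The only substantive difference is that you also spell out the splittability claim via the BFS-partition characterization (\Cref{thm:all-splittings}) and the singleton external splits detaching each $w\in V''$ — a step the paper's proof leaves implicit — and you close with \Cref{lem:esplit} on singletons rather than citing \Cref{lem:w547}; both are valid.
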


\begin{proof}
  We prove Case \ref{item:free-extension1}.
  Case \ref{item:free-extension2} is similar.

  The partitions $\tilde{E}_1$ and $\tilde{E}_2$ ensure that $V'_e = V''_o \cup V''_e = V''$ and $V''_o = V$.
  By \Cref{lem:vsplit-int}, $X_V, X_{V''}\in \mathrm i\mathfrak{g}_{G'}$ and, by \Cref{lem:esplit}, $ZZ_{\tilde{E}}\in \mathrm i\mathfrak{g}_{G'}$ and hence $Z_{E'\setminus \tilde{E}}=Z_{E\cup E''}\in \mathrm i\mathfrak{g}_{G'}$.
  By \Cref{lem:extend-dla-by-zz}, the DLA generated by $\mathrm iX_V$ and $\mathrm iZZ_{E\cup \bigcup E''}$ contains $\mathfrak{g}_G$, which is free.
  Thus, $X_u\in \mathrm i\mathfrak{g}_{G'}$ for all $u\in V$, and from \Cref{lem:perfect-match}, $X_v\in \mathrm i\mathfrak{g}_{G'}$ for all $v\in V''$.
  The freeness of $G'$ follows from \Cref{lem:w547}.

  % (a) If $\tilde{E}$ is an odd partition of $V_e$ by $V''$, then $V_{e}(G')=V''$ and $V_o(G')=V$.  By \Cref{lem:vsplit-int}, $X_V, X_{V''}\in i\mathfrak{g}_{G'}$ and, by \Cref{lem:esplit}, $ZZ_{\tilde{E}}\in i\mathfrak{g}_{G'}$ and hence $Z_{E'\setminus \tilde{E}}=Z_{E\cup E''}\in i\mathfrak{g}_{G'}$. By \Cref{lem:extend-dla-by-zz}, the DLA generated by $iX_V$ and $iZZ_{E\cup \bigcup E''}$ contains  $\mathfrak{g}_G$, which is free. Thus, $X_u\in i\mathfrak{g}_{G'}$ for all $u\in V$.

  % (b) If $\tilde{E}$ is an odd partition of $V_{eo}$ by $V''$, then  $V_{eo}(G')= V''$,  $V_{ee}(G')=V_{ee}(G)$, and $V_o(G')=V_o(G)\cup V_{eo}(G)$. By \Cref{lem:vsplit-int}, $X_{V_o(G')}, X_{V_{ee}(G')}, X_{V_{eo}(G')}=X_{V''}\in i\mathfrak{g}_{G'}$.    Since  $V(G)=V_o(G')\cup V_{ee}(G')$, it follows that $X_{V}\in i\mathfrak{g}_{G'}$. Since $X_{V''}\in i\mathfrak{g}$, by \Cref{lem:esplit}, $ZZ_{\bigcup_i \tilde{E}_i}\in i\mathfrak{g}_{G'}$ and hence $ZZ_{E'\setminus \tilde{E}} = ZZ_{E\cup E''}$. By \Cref{lem:extend-dla-by-zz}, the DLA generated by $iX_V$ and $iZZ_{E\cup \bigcup E''}$ contains  $\mathfrak{g}_G$, which is free.  Thus, $X_u\in i\mathfrak{g}_{G'}$ for all $u\in V$.   

  % In either case (a) or (b), $X_u\in i\mathfrak{g}_{G'}$ for all $u\in V$ and thus, from \Cref{lem:perfect-match}, $X_v\in i\mathfrak{g}_{G'}$ for all $v\in V''$. The freeness of $G'$ follows from \Cref{lem:w547}. 
\end{proof}

Illustrations of \Cref{lem:extend-matchings} are given in \Cref{fig:extension-odd} and the following two corollaries.
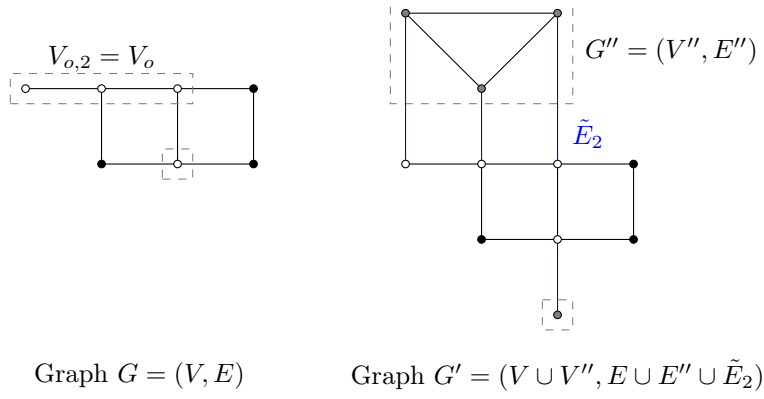
\begin{figure}
  \centering
  \begin{tikzpicture}
  \draw (0,0)--(2,0) (-1,1)--(2,1) (0,0)--(0,1) (1,0)--(1,1) (2,0)--(2,1);
  \draw [fill=black] (0,0) circle (0.05) (2,0) circle (0.05) (2,1) circle (0.05);

  \draw [fill=white] (-1,1) circle (0.05) (0,1) circle (0.05) (1,1) circle (0.05) (1,0) circle (0.05);

  \draw [dashed,gray] (-1.2,0.8) -- (1.2,0.8) --(1.2,1.2) -- (-1.2,1.2) --cycle (0.8,-0.2)--(1.2,-0.2) --(1.2,0.2) -- (0.8,0.2) --cycle;

  \draw [dashed,gray] (3.8,2.2) -- (6.2,2.2) -- (6.2,0.8) -- (3.8,0.8) -- cycle (5.8,-1.8) --(6.2,-1.8) -- (6.2,-2.2) -- (5.8,-2.2) --cycle;

  \draw (5,-1)--(7,-1) (4,0)--(7,0) (5,-1)--(5,0) (6,-1)--(6,0) (7,-1)--(7,0);

  \draw [blue] (4,0)--(4,2) (5,0)--(5,1) (6,0) -- (6,2) (6,-1) -- (6,-2);
  \draw (4,2) -- (5,1)   -- (6,2) -- cycle;
  \draw [fill=black] (5,-1) circle (0.05) (7,-1) circle (0.05) (7,0) circle (0.05);

  \draw [fill=white] (4,0) circle (0.05) (5,0) circle (0.05) (6,0) circle (0.05) (6,-1) circle (0.05);

  \draw [fill=gray] (4,2) circle (0.05) (5,1) circle (0.05) (6,2) circle (0.05) (6,-2) circle (0.05);

  \draw (0,1.4) node{$V_{o,2} = V_o$}
    % (1,-0.4) node{$V_2$}
    (7.5,1.5) node{$G''=(V'',E'')$};
    % (7.5,-2) node{$G_2'=(V_2',E_2')$};

  \draw [blue] (6.4,0.4) node{$\tilde{E}_2$};
    % (6.4,-1.4) node{$\tilde{E}_2$};

  \draw (0.5,-2.8) node{Graph $G=(V,E)$} (6,-2.8) node{Graph $G'=(V\cup V'', E \cup E'' \cup \tilde{E}_2)$};
\end{tikzpicture}
  \caption{
    Extending a free graph by \Cref{lem:extend-matchings} (Case \ref{item:free-extension2}).
    (Left) $G(V,E)$ is free (see \Cref{prop:free-ladder}).
    (Right) $G''$ (grey) is the disjoint union of a $3$-cycle and a single vertex, and thus $V_e''=V''$, $V''_o=\varnothing$.
    Take $V_o=V_{o,1}\cup V_{o,2}$ where $V_{o,1}=\varnothing$ and $V_{o,2}=V_o$, $\tilde{E_1}=\varnothing$, and $\tilde{E_2}$ a perfect matching between $V_{o,2}$ and $V''_e$.
    The resulting graph $G'=(V\cup V'', E\cup E'' \cup \tilde{E}_2)$ is free.
  }
  \label{fig:extension-odd}
\end{figure}

\begin{corollary}
  If $k \ge 4$ is even and $n_1, \cdots, n_{k-1} \ge 1$ are pairwise distinct then, for any $m\ge 1$, the $k$-armed spider graph $G(n_1+m, \cdots, n_{k-1}+m,m)$ is free.
\end{corollary}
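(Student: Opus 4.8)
The plan is to induct on $m$, using the free $(k-1)$-armed spiders of \Cref{cor:lmr2} as the base case and ``growing'' the spider one shell of pendant vertices at a time via \Cref{lem:extend-matchings}. Throughout, fix $k\ge 4$ even and pairwise distinct $n_1,\dots,n_{k-1}\ge 1$, and let $P(m)$ be the statement that $G(n_1+m,\dots,n_{k-1}+m,m)$ is free.

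For the base case $P(1)$, I would take $G=G(n_1,\dots,n_{k-1})$, which is free by \Cref{cor:lmr2} since $k-1\ge 3$ is odd and the $n_i$ are distinct. Because $k$ is even, the center of $G$ has odd degree $k-1$, so the odd-degree vertex set $V_o$ of $G$ consists of the center together with the $k-1$ leaves, of size exactly $k$. Let $G''$ consist of $k$ isolated vertices, so $V''_o=\varnothing$ and $V''_e=V(G'')$ has size $k$. Then apply \Cref{lem:extend-matchings} in Case \ref{item:free-extension2} with $\tilde E_1=\varnothing$ and $\tilde E_2$ a perfect matching between $V_o$ and $V''_e$ (a valid odd partition of $V_o$ by $V''_e$, since each vertex of $V''_e$ then has degree $1$): the conclusion is that $G'=(V(G)\cup V(G''),\,E(G)\cup\tilde E_2)$ is free. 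Reading off the matching, the new vertex attached to the center of $G$ becomes an arm of length $1$, while the new vertex attached to the $i$-th leaf of $G$ lengthens arm $i$ to length $n_i+1$, so $G'=G(n_1+1,\dots,n_{k-1}+1,1)$, which is exactly $P(1)$.

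For the inductive step I would assume $P(m-1)$ with $m\ge 2$, so $G\triangleq G(n_1+m-1,\dots,n_{k-1}+m-1,m-1)$ is free (this is a genuine $k$-armed spider of the required form, since $m-1\ge 1$). Now the center of $G$ has \emph{even} degree $k$, so $V_o$ is precisely the set of the $k$ leaves of $G$, again of size $k$. Repeating the previous construction --- $G''$ a set of $k$ isolated vertices, $\tilde E_1=\varnothing$, $\tilde E_2$ a perfect matching from the $k$ leaves of $G$ onto $V(G'')$ --- \Cref{lem:extend-matchings} (Case \ref{item:free-extension2}) again produces a free graph $G'$, and since every leaf of $G$ receives exactly one new pendant, every arm lengthens by one; hence $G'=G(n_1+m,\dots,n_{k-1}+m,m)$, giving $P(m)$ and completing the induction.

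The main point to get right --- and what dictates the inductive rather than one-shot strategy --- is that a single application of \Cref{lem:extend-matchings} cannot reach $G(n_1+m,\dots,n_{k-1}+m,m)$ from the $(k-1)$-armed spider when $m\ge 2$: the crossing edges must attach to the odd-degree vertices of the free graph with multiplicity one while surjecting onto all of $V(G'')$, forcing $|V(G'')|\le|V_o|$ and hence $G''$ to be ``thin'', so one is compelled to add the pendants shell by shell. The pleasant coincidence that makes every step go through uniformly is that $|V_o|=k$ both for the $(k-1)$-armed spider (odd-degree center plus $k-1$ leaves) and for every longer spider in this family (even-degree center contributing nothing, plus $k$ leaves), so the matching onto the $k$ vertices of $G''$ is always perfect. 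Beyond keeping track of which vertex of $G''$ gets attached to which leaf (and to the center in the base case), I expect no real obstacle.
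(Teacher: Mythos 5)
Your proposal is correct and follows essentially the same route as the paper: start from the free $(k-1)$-armed spider of \Cref{cor:lmr2}, attach $k$ isolated vertices to the $k$ odd-degree vertices via a perfect matching using \Cref{lem:extend-matchings} (Case \ref{item:free-extension2}), and iterate $m$ times, noting that after the first step the odd-degree set is exactly the $k$ leaf ends. Your write-up merely makes the paper's ``iterate this process $m$ times'' explicit as an induction on $m$.
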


\begin{proof}
  % Consider the $3$-armed spider graph $G$ from \Cref{fig:xmr2} consisting of a central node connected to three arms of lengths $n_1, n_2,n_3$, respectively.
  % As shown previously, for pairwise distinct $n_1,n_2,n_3 \ge 1$, $G$ is free.
  Let $G=(V,E)$ be a $(k-1)$-armed spider graph with pairwise distinct arm lengths $n_1,\ldots, n_{k-1} \ge 1$.
  By \Cref{cor:lmr2}, $G$ is free.
  The odd vertex set $V_o$ comprises the central node (call this $u_0$) and the nodes (call these $u_1,u_2,\ldots, u_k$) at the ends of the arms, i.e., $V_o=\{u_0, \ldots, u_k\}$.
  Let $G''=(V'',E'')$ with $V''=\{w_0,\ldots, w_k\}$ and $E''=\varnothing$, and define the partition of $V_o=V_{o,1}\cup V_{o,2}$ with $V_{o,1}=\varnothing$ and $V_{o,2}=V_o$.
  Then apply \Cref{lem:extend-matchings}, with $\tilde{E}$ a perfect matching between $V_{o,2}$ and $V''$.
  It follows that the new $k$-armed spider graph $G'=(V\cup V'', E\cup \tilde{E})$ is free.
  %By \Cref{lem:extend-matchings}, we can connect $k+1$ new nodes $w_0,..., w_{k}$ by edge set $\{(u_i,w_i)\}_{i=0, \ldots k}$, and the new $(k+1)$-armed spider graph $G'$ with leg lengths $n_1+1, n_2+1,n_3+1, 1$ is free.  
  $G'$ has odd vertices only at the ends of the $k$ arms.
  The result follows by iterating this process $m$ times.
  %\ruic{The proof is for $k+1$, not $k$}
\end{proof}

\begin{corollary}
  Let $G=(V,E)$ be free and denote $V_o=\{u_1,\ldots, u_\abs{V_o}\}$.
  Let $C_{\abs{V_o}}=(V_C,E_C)$ be the cycle graph on $\abs{V_o}$ vertices and, for $k\ge 1$, let $H=(V_H,E_H)=C_{\abs{V_o}}\square P_k$ be the Cartesian product of $C_{\abs{V_o}}$ and the path graph $P_k$.
  Denote the vertices of $H$ by $V_h = \{(c_i, p_j) : i \in [\abs{V_o}], j \in [k]\}$.
  Then, $G'=(V\cup V_H, E\cup E_h \cup\{(u_i,(c_i, p_1)) : i\in [\abs{V_o}]\}$ is free.
\end{corollary}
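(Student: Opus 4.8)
The plan is to view $G'$ as obtained from $G$ by successively attaching $k$ fresh ``shells'' -- each a copy of $C_m$ with $m \triangleq \abs{V_o}$ -- and to prove freeness by induction on the number of shells, invoking \Cref{lem:extend-matchings} at every step. First I would set up a layered description of $H = C_{\abs{V_o}} \square P_k$: for $j \in [k]$ the $j$-th layer $L_j \triangleq \{(c_i,p_j) : i \in [m]\}$ induces a copy of $C_m$ in $H$, and for $1 \le j < k$ the edges $\{((c_i,p_j),(c_i,p_{j+1})) : i \in [m]\}$ form a perfect matching between $L_j$ and $L_{j+1}$; likewise $\{(u_i,(c_i,p_1)) : i \in [m]\}$ is a perfect matching between $V_o$ and $L_1$. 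For $0 \le j \le k$ let $G^{(j)}$ be the graph on $V \cup L_1 \cup \cdots \cup L_j$ whose edges are $E$, the cycle on each $L_\ell$ for $\ell \le j$, the matching between $V_o$ and $L_1$, and the matchings between consecutive layers up to $L_j$. Then $G^{(0)} = G$, $G^{(k)} = G'$, and $G^{(j)}$ is precisely $G^{(j-1)}$ with a disjoint copy of $C_m$ (on the new vertices $L_j$) joined to it by a perfect matching. The inductive claim is that $G^{(j)}$ is free and that its odd-degree vertex set is exactly $L_j$ (with the convention $L_0 \triangleq V_o$); the base case $j=0$ is the hypothesis that $G$ is free.

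For the inductive step I would assume $G^{(j-1)}$ is free with odd-degree set $L_{j-1}$, a set of $m$ vertices, and apply \Cref{lem:extend-matchings} with the free graph taken to be $G^{(j-1)}$ and $G'' \triangleq C_m$ on vertex set $L_j$. Since $C_m$ is $2$-regular we have $V''_e = V(G'') = L_j$ and $V''_o = \varnothing$, so I take the trivial partition $V_{o,1} = \varnothing$, $V_{o,2} = V_o(G^{(j-1)}) = L_{j-1}$ (the $V_e$-partition plays no role in this case), put $\tilde E_1 = \varnothing$, and let $\tilde E_2$ be the perfect matching joining $L_{j-1}$ to $L_j$ (namely $\{(u_i,(c_i,p_1))\}$ when $j=1$, and $\{((c_i,p_{j-1}),(c_i,p_j))\}$ otherwise). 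Then $\tilde E_1$ is vacuously an even partition of $V_{o,1}$ by $V''_o$, while $\tilde E_2$ is an odd partition of $V_{o,2}$ by $V''_e$ because a perfect matching is a bijection that gives each vertex of $V''_e = L_j$ degree $1$. Case \ref{item:free-extension2} of \Cref{lem:extend-matchings} then gives that $G^{(j)}$ is free. Reading off the degree parities as in that case, each vertex of $L_j$ acquires degree $2 + 1 = 3$ (odd), each vertex of $L_{j-1}$ gains one incident edge and becomes even, and all other degrees are unchanged, so the odd-degree set of $G^{(j)}$ is $L_j$, which again has $m$ vertices -- exactly what is needed to continue the induction. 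Taking $j = k$ then yields that $G' = G^{(k)}$ is free.

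I expect the work here to be bookkeeping rather than new Lie-algebraic content: checking that $G'$ genuinely decomposes into this nested sequence of shell attachments, that the matching used at each step qualifies as an ``odd partition by $V''_e$'' in the precise sense of \Cref{lem:extend-matchings}, and that $\abs{V_o(G^{(j)})} = m$ stays constant so that a fresh perfect matching between $L_j$ and the next $C_m$ always exists. The one point that needs a little care is that Case \ref{item:free-extension2} of \Cref{lem:extend-matchings} flips the newly attached cycle to odd degree and the old boundary to even degree -- the written proof of that proposition spells this out only for Case \ref{item:free-extension1} -- but this is immediate from the definitions of odd and even partitions. Finally, it is worth noting why the induction is necessary: a single application of \Cref{lem:extend-matchings} with $G'' = H$ itself fails, because for $k \ge 2$ the matching onto the first layer of $H$ meets $V''_o(H)$ rather than $V''_e(H)$, so the parities only line up when the shells are peeled off one at a time.
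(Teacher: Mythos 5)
Your proposal is correct and matches the paper's own argument: the paper likewise applies Case 2 of \Cref{lem:extend-matchings} with $V_{o,1}=\varnothing$, $\tilde E_1=\varnothing$, and $\tilde E_2$ a perfect matching onto a fresh copy of $C_{|V_o|}$, observes that the odd-degree set of the resulting graph is exactly the newly attached cycle, and iterates. Your version just spells out the layer-by-layer bookkeeping more explicitly.
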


\begin{proof}
  Applying \Cref{lem:extend-matchings} (Case \ref{item:free-extension2}) with $V_{o,1}=\varnothing$, $V_{o,2}=V_o$, $\tilde{E}_1=\varnothing$ and $\tilde{E}_2$ an odd partition (perfect matching) of $V_o$ by $V_C$, shows that $\tilde{G}=(V \cup V_C, E \cup E_C \cup \tilde{E})$ is free.
  But in $\tilde{G}$, the odd degree vertex set is precisely $V_C$.
  By iterating this process we can keep adding copies of $C_{\abs{V_o}}$ while ensuring the resulting graph is free.
\end{proof}

\Cref{lem:extend-matchings} requires that all of $V_e$ (Case \ref{item:free-extension1}) or $V_o$ (Case \ref{item:free-extension2}) be covered by partitions.
We can relax this, so that only the odd or even components of $V_e$ or $V_o$ need to be covered by partitions, provided that the subgraphs induced by $V_{ee}$, $V_{eo}$, $V_{oo}$ or $V_{oe}$ have vertex degrees with particular parity.

\begin{proposition}\label{prop:extend-even-odd}
  Let $G=(V,E)$ and $G'' = (V'', E'')$ be disjoint graphs.
  If $G$ is free and
  \begin{enumerate}%$\{(u,v)\in E : u \in V_{eo}, v \in V_{ee}\}=\emptyset$} and 
    \item
          all vertices in $G[V_{ee}]$ have even degree, all vertices in $G''$ have odd degree, and $\tilde{E}$ is an odd partition of $V_{eo}$ by $V''$, or
    \item
          all vertices in $G[V_{eo}]$ have odd degree, all vertices in $G''$ have even degree, and $\tilde{E}$ is an even partition of $V_{ee}$ by $V''$, or
    \item
          all vertices in $G[V_{oe}]$ have even degree, all vertices in $G''$ have odd degree, and $\tilde{E}$ is an even partition of $V_{oo}$ by $V''$, or 
    \item
          all vertices in $G[V_{oo}]$ have odd degree, all vertices $G''$ have even degree, and $\tilde{E}$ is an odd partition of $V_{oe}$ by $V''$.
          % \item $G''$ is odd, and $\tilde{E}$ is (i) an odd partition of $V''$ by $V_{eo}$, or (ii) an even partition of $V''$ by $V_{oo}$, or
          % \item $G''$ is even, and $\tilde{E}$ is (i) an odd partition of $V''$ by $V_{oe}$ , or (ii) an even partition of $V''$ by $V_{ee}$ ,
  \end{enumerate}
  % \begin{enumerate}
  %     \item $G''$ is odd, and $\tilde{E}$ is an odd partition of either (a) $V_e$ or (b) $V_{eo}$ by $V''$, or
  %     \item $G''$ is odd, and $\tilde{E}$ is an even partition of either (a) $V_o$ or (b) $V_{oo}$ by $V''$, or
  %     \item $G''$ is even, and $\tilde{E}$ is an odd partition of either (a) $V_o$ or (b) $V_{oe}$ by $V''$, or
  %     \item $G''$ is even, and $\tilde{E}$ is an even partition of either (a) $V_e$ or (b) $V_{ee}$ by $V''$,
  % \end{enumerate}
  then $G' = (V', E')$, where $V' = V \cup V''$ and $E' = E \cup E'' \cup \tilde{E}$, is free.
\end{proposition}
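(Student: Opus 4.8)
The plan is to reuse the proof architecture of \Cref{lem:extend-matchings}, inserting one extra internal vertex split to compensate for the fact that the partition $\tilde E$ now covers only one degree-parity class of $V_e$ (or $V_o$) rather than all of it. I will carry out Case~1 explicitly; Cases~2--4 will follow by interchanging the roles of the even- and odd-degree classes and adjusting the parity arithmetic.

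First I would record how attaching $\tilde E$ and $G''$ changes degrees. In Case~1, $\tilde E$ joins $V_{eo}$ to $V''$: it adds exactly one edge at each vertex of $V_{eo}$ (flipping its parity even $\to$ odd), and---since $G''$ has all degrees odd and $\tilde E$ is an odd partition---an odd number of edges at each vertex of $V''$ (flipping its parity odd $\to$ even); all other degrees are untouched. Hence $V_o' = V_o \cup V_{eo}$ and $V_e' = V_{ee}\cup V''$. Moreover, $\tilde E$ has all its endpoints in $V_{eo}\cup V''$ and $E''\subseteq V''\times V''$, so there are no $G'$-edges between $V_{ee}$ and $V''$, and $G'[V_e']$ is the disjoint union of $G[V_{ee}]$ and $G''$.

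Next I would peel off $G''$. Since $X_{V'}\in\mathrm i\mathfrak g_{G'}$, \Cref{lem:vsplit-int} applied to $V'$ gives $X_{V_o'},X_{V_e'}\in\mathrm i\mathfrak g_{G'}$. Applying \Cref{lem:vsplit-int} a second time, to $V_e'$: by the Case~1 hypotheses every vertex of $V_{ee}$ has even degree in $G[V_{ee}]$ and every vertex of $V''$ has odd degree in $G''$, so in $G'[V_e']=G[V_{ee}]\sqcup G''$ the even-degree class is exactly $V_{ee}$ and the odd-degree class is exactly $V''$. This yields $X_{V_{ee}},X_{V''}\in\mathrm i\mathfrak g_{G'}$, and hence $X_V = X_{V_o'}+X_{V_{ee}}\in\mathrm i\mathfrak g_{G'}$. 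Now, as $V\cap V''=\varnothing$ and the only $G'$-edges between $V$ and $V''$ form $\tilde E$, \Cref{lem:esplit} gives $ZZ_{\tilde E}\in\mathrm i\mathfrak g_{G'}$, so $ZZ_{E\cup E''}=ZZ_{E'}-ZZ_{\tilde E}\in\mathrm i\mathfrak g_{G'}$. By \Cref{lem:extend-dla-by-zz} with $F=E''$, the free algebra $\mathfrak g_G=\LieClosure{\{\mathrm iX_V,\mathrm iZZ_E\}}$ is contained in $\LieClosure{\{\mathrm iX_V,\mathrm iZZ_{E\cup E''}\}}\subseteq\mathfrak g_{G'}$; freeness of $\mathfrak g_G$ then gives $X_u\in\mathrm i\mathfrak g_{G'}$ for all $u\in V$. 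Finally $\tilde E$ is a partition of $V_{eo}$ by $V''$, so \Cref{lem:perfect-match} with $S=V_{eo}$, $T=V''$ gives $X_v\in\mathrm i\mathfrak g_{G'}$ for all $v\in V''$, whence $X_w\in\mathrm i\mathfrak g_{G'}$ for every $w\in V'$ and \Cref{lem:w547} concludes that $\mathfrak g_{G'}=\mathfrak g_{G',\mathrm{ma}}$. For Cases~2--4 one instead splits whichever of $V_e',V_o'$ absorbs $V''$ (namely $V_e'$ in Case~2, $V_o'$ in Cases~3 and~4), the second internal split separating $V''$ from the uncovered second-level class using the stated degree hypothesis on $G[V_{eo}]$, $G[V_{oe}]$, or $G[V_{oo}]$, and \Cref{lem:perfect-match} is invoked with $(S,T)$ equal to $(V_{ee},V'')$, $(V_{oo},V'')$, $(V_{oe},V'')$ respectively.

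The hard part is really just the bookkeeping in the first two steps: everything after the second internal split repeats the argument of \Cref{lem:extend-matchings} almost verbatim, so the only genuine content is verifying that the hypotheses on $G[V_{ee}]$, $G[V_{eo}]$, $G[V_{oe}]$, $G[V_{oo}]$ and on $G''$ are exactly strong enough that a single further internal split detaches $G''$ from the part of $V$ not met by $\tilde E$. This rests on the observation that $\tilde E$ creates no edges between $G''$ and that uncovered part, so the relevant induced subgraph is a disjoint union and its degree-parity classes are inherited from the two summands.
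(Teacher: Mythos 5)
Your proposal is correct and follows essentially the same route as the paper's proof: compute the new degree parities to get $V'_o = V_o\cup V_{eo}$ and $V'_e = V_{ee}\cup V''$, apply the internal splitting lemma twice to isolate $X_{V_{ee}}$ and $X_{V''}$ (hence $X_V$), recover $\mathfrak g_G$ via the edge split and \Cref{lem:extend-dla-by-zz}, and finish with \Cref{lem:perfect-match} and \Cref{lem:w547}. Your explicit observation that $G'[V'_e]$ is the disjoint union of $G[V_{ee}]$ and $G''$, and your case-by-case identification of the $(S,T)$ pairs for \Cref{lem:perfect-match}, are slightly more detailed than the paper's "the other cases are similar," but the argument is the same.
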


\begin{proof}
  We prove Case 1.
  The other cases are similar.

  If all vertices in $G[V_{ee}]$ have even degree then every $u\in V_{ee}$ must be share an edge in $G$ with an even number of vertices in $V_{eo}$.
  Then, if $\tilde{E}$ is an odd partition of $V_{eo}$ by $V''$, then (i) $u\in V_{eo}\Rightarrow u \in V'_o$; (ii) $u\in V_{ee}\Rightarrow u \in V'_{ee}$.
  Therefore, $V'_o = V_o \cup V_{eo}$, $V'_{ee}=V_{ee}$, and $V'_{eo}=V''$.
  By \Cref{lem:vsplit-int}, $X_{V'_o}, X_{V'_{ee}}, X_{V'_{eo}}=X_{V''}\in \mathrm i\mathfrak{g}_{G'}$.
  Since $V=V'_o\cup V'_{ee}$, it follows that $X_{V}\in \mathrm i\mathfrak{g}_{G'}$.
  Since $X_{V''}\in \mathrm i\mathfrak{g}$, by \Cref{lem:esplit}, $ZZ_{\tilde{E}}\in \mathrm i\mathfrak{g}_{G'}$ and hence $ZZ_{E'\setminus \tilde{E}} = ZZ_{E\cup E''}$.
  By \Cref{lem:extend-dla-by-zz}, the DLA generated by $\mathrm iX_V$ and $\mathrm iZZ_{E\cup E''}$ contains $\mathfrak{g}_G$, which is free.
  Thus, $X_u\in \mathrm i\mathfrak{g}_{G'}$ for all $u\in V$.
  From \Cref{lem:perfect-match}, $X_v\in \mathrm i\mathfrak{g}_{G'}$ for all $v\in V''$.
  The freeness of $G'$ follows from \Cref{lem:w547}.

  % $V''=V_{ee}$, $V''_o = V_o \cup V_{eo}$, and $V''_{eo}=V''$.%\ruis{$V''=V_{ee}$, $V''_o = V_o \cup V_{eo}$, and $V''_{eo}=V''$}{$V'_o = V_o \cup V_{eo}$, $V'_{ee} = V_{ee}$, and $V'_{eo} = V''$}.
  % By \Cref{lem:vsplit-int}, $X_{V'_o}, X_{V'_{ee}}, X_{V'_{eo}}=X_{V''}\in \mathrm i\mathfrak{g}_{G'}$.    Since  $V=V'_o\cup V'_{ee}$, it follows that $X_{V}\in \mathrm i\mathfrak{g}_{G'}$. Since $X_{V''}\in \mathrm i\mathfrak{g}$, by \Cref{lem:esplit}, $ZZ_{\tilde{E}}\in \mathrm i\mathfrak{g}_{G'}$ and hence $ZZ_{E'\setminus \tilde{E}} = ZZ_{E\cup E''}$. By \Cref{lem:extend-dla-by-zz}, the DLA generated by $\mathrm iX_V$ and $\mathrm iZZ_{E\cup E''}$ contains  $\mathfrak{g}_G$, which is free.  Thus, $X_u\in \mathrm i\mathfrak{g}_{G'}$ for all $u\in V$.   From \Cref{lem:perfect-match}, $X_v\in \mathrm i\mathfrak{g}_{G'}$ for all $v\in V''$. The freeness of $G'$ follows from \Cref{lem:w547}. 

  % In either case (a) or (b), $X_u\in i\mathfrak{g}_{G'}$ for all $u\in V$ and thus, from \Cref{lem:perfect-match}, $X_v\in i\mathfrak{g}_{G'}$ for all $v\in V''$. The freeness of $G'$ follows from \Cref{lem:w547}. 
\end{proof}

%Thus $X_{V_o(G')}-X_{V_{eo}(G')}=X_{V_o(G)}\in i\mathfrak{g}_{G'}$. ;$X_{V_o(G)},X_{V_e(G)}\in i\mathfrak{g}_{G'}$, it follows that $X_{V'\setminus V}\in i\mathfrak{g}_{G'}$ and
As a simple application of \Cref{prop:extend-even-odd}, we show that extended ladder graphs with short ``tails'' are free.

\begin{definition}[Extended ladder graphs, \Cref{fig:extended-ladder-graph}]
  The $(n,k)$-extended ladder graph $G_{n,k}=(V,E)$ is defined to be the graph on $2n+k$ vertices, with $V=\{u_{-k},\dots, u_{n-1}\}\cup\{ v_0,\ldots, v_{n-1}\}$ and $E=\{(u_i,v_i)\}_{i=0}^{n-1} \cup \{ (u_i,u_{i+1})\}_{i=-k}^{n-2}\cup\{(v_i,v_{i+1})\}_{i=0}^{n-2}$.
\end{definition}

\begin{figure}[!ht]
  \centering
  \begin{tikzpicture}
  \draw (0,0)--(5,0) (-3,-1)--(5,-1) (0,0)--(0,-1) (1,0)--(1,-1) (2,0)--(2,-1) (3,0)--(3,-1) (4,0)--(4,-1) (5,0)--(5,-1);

  \draw [fill=black] (0,0) circle (0.05)
  (1,0) circle (0.05)
  (2,0) circle (0.05)
  (3,0) circle (0.05)
  (4,0) circle (0.05)
  (5,0) circle (0.05);
  \draw [fill=black] (0,-1) circle (0.05)
  (1,-1) circle (0.05)
  (2,-1) circle (0.05)
  (3,-1) circle (0.05)
  (4,-1) circle (0.05)
  (5,-1) circle (0.05)
  (-1,-1) circle (0.05)
  (-2,-1) circle (0.05)
  (-3,-1) circle (0.05);

  \draw (0,0.3) node{$v_0$}
  (1,0.3) node{$v_1$}
  (2,0.3) node{$v_2$}
  (3,0.3) node{$\cdots$}
  (4,0.3) node{$v_{n-2}$}
  (5,0.3) node{$v_{n-1}$};

  \draw (0,-1.3) node{$u_0$}
  (1,-1.3) node{$u_1$}
  (2,-1.3) node{$u_2$}
  (3,-1.3) node{$\cdots$}
  (4,-1.3) node{$u_{n-2}$}
  (5,-1.3) node{$u_{n-1}$}
  (-1,-1.3) node{$u_{-1}$}
  (-2,-1.3) node{$\cdots$}
  (-3,-1.3) node{$u_{-k}$};
\end{tikzpicture}
  \caption{An $(n,k)$-extended ladder graph $G_{n,k}$.}
  \label{fig:extended-ladder-graph}
\end{figure}
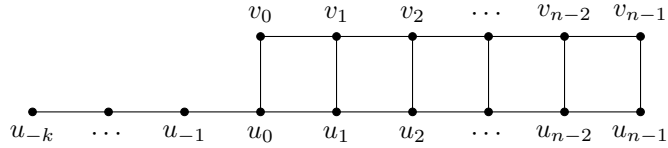

\begin{proposition}\label{prop:free-ladder}
  For all $n\ge 3$, the extended ladder graphs $G_{n,1}$ and $G_{n,2}$ are free.
\end{proposition}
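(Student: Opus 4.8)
Proposition \ref{prop:free-ladder} asks us to show that the extended ladder graphs $G_{n,1}$ and $G_{n,2}$ are free for all $n\ge 3$. The natural approach is to apply the splitting lemmas (\Cref{lem:vsplit-int} and \Cref{lem:vsplit-ext}) directly and track which $X_u$ land in $\mathrm i\mathfrak{g}$, and then invoke \Cref{lem:w547}. First I would compute the degree sequence of $G_{n,k}$ for $k=1,2$: the rungs $v_i$ all have degree $3$ except the endpoints $v_0,v_{n-1}$ which have degree $2$; along the bottom rail, $u_0,\dots,u_{n-1}$ also have degree $3$ except $u_{n-1}$ (degree $2$), while the ``tail'' vertices $u_{-1},\dots,u_{-k}$ have degree $2$ (or $1$ at the very end $u_{-k}$). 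This gives a concrete, small partition $V = V_o\uplus V_e$ after the first internal split, and I would then iterate \Cref{lem:vsplit-int} on the induced subgraphs $G[V_o]$ and $G[V_e]$.

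The plan is to trace \Cref{alg:dla} (equivalently, \Cref{alg:bfssplitting}, since by \Cref{thm:all-splittings} the order of splits does not matter) on these two families and verify by hand that the BFS partition is the discrete partition $\{\{u\} : u\in V\}$. Because $n$ is arbitrary but the graph has a repetitive ``ladder'' structure, I expect the argument to proceed by first peeling off the endpoints and tail (which have distinctive low degrees and so get isolated early), and then using \Cref{lem:vsplit-ext} to compare parities of neighborhoods across the resulting blocks, propagating the splitting inward rung by rung — much like the path-subdivision argument in \Cref{lem:bck2}, where once $X_{u}$ and $Z_uZ_v$ are both in $\mathrm i\mathfrak{g}$ one can ``walk'' along the ladder using the identities $X_{v} = \tfrac14[Z_uZ_v,[Z_uZ_v,H_m]] - X_u$ and $Z_vZ_w = \tfrac14[X_v,[X_v,H_p]] - Z_uZ_v$. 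Concretely, after the internal/external splits isolate enough boundary vertices, one gets $X_{u_0}, X_{v_0}\in\mathrm i\mathfrak g$, hence $Z_{u_0}Z_{v_0}, Z_{u_0}Z_{u_1}, Z_{v_0}Z_{v_1}\in\mathrm i\mathfrak g$ by \Cref{lem:esplit}, and then induct up the ladder. I would handle $G_{n,1}$ and $G_{n,2}$ separately, since the tail parity differs, but the two arguments should be nearly identical up to the base case.

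The main obstacle, I expect, is that the \emph{pure} internal-splitting step (\Cref{lem:vsplit-int}) alone will not suffice: the subgraphs induced by even-degree vertices in a ladder tend to be disjoint unions of paths of equal or near-equal length, which \Cref{alg:vsplit-int} cannot split further (cf. \Cref{fig:hsvc}, where the algorithm stalls). So the crux is to check that the \emph{external} comparison (\Cref{lem:vsplit-ext}) — comparing, for two blocks $S\ne T$ in the current partition, the parity of $|\mathcal N_T(u)|$ as $u$ ranges over $S$ — does break every remaining symmetry. This requires a careful, possibly case-split verification that no two vertices of $G_{n,k}$ ever share the same neighborhood-parity profile against all current blocks; the ladder's lack of a nontrivial automorphism (guaranteed-necessary by \Cref{cor:autg-trivial}, and here easily checked directly because the tail breaks the left-right symmetry and the rungs break the top-bottom symmetry) is what makes this plausible. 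An alternative, cleaner route would be to realize $G_{n,1}$ and $G_{n,2}$ as \emph{extensions} of a smaller free graph via \Cref{lem:extend-matchings} or \Cref{prop:extend-even-odd} — e.g., building the ladder one rung at a time by an odd/even partition — and I would try that first if the direct BFS-tracking becomes unwieldy; but the base case ($G_{3,1}$, $G_{3,2}$, small enough to check by the explicit splitting algorithm or even by brute-force \texttt{GenerateDLA}) will in any event need to be established by hand.
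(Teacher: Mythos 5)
Your proposal arrives at the paper's proof only as its declared fallback: the ``alternative, cleaner route'' you mention at the end --- building the ladder one rung at a time via \Cref{prop:extend-even-odd}, with the base cases $G_{3,1}$ and $G_{3,2}$ checked by \Cref{alg:dla} --- is exactly what the paper does. Concretely, for $G=G_{n,j}$ ($j\in\{1,2\}$, $n\ge 3$) one computes $V_e=\{u_{n-1},v_{n-1},v_0\}$ (plus $u_{-1}$ when $j=2$), so $V_{eo}=\{u_{n-1},v_{n-1}\}$ induces a single edge (all degrees odd) while $G[V_{ee}]$ consists of isolated vertices (all degrees even); Case 1 of \Cref{prop:extend-even-odd} then applies with $G''$ the new rung $\{u_n,v_n\}$ joined by one edge and $\tilde E$ the perfect matching $\{(u_{n-1},u_n),(v_{n-1},v_n)\}$, giving that $G_{n+1,j}$ is free, and induction finishes the job.

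The route you lead with --- tracing the splitting algorithm on $G_{n,k}$ for arbitrary $n$ and verifying that the final partition is discrete --- is where the genuine gap sits. You correctly predict that internal splitting alone stalls and that external comparisons are needed, but the ``careful, possibly case-split verification'' you defer is the entire content of the proof for an infinite family, and it is not obviously easier than the induction: freeness does not imply splittability (cf.\ \Cref{fig:hsvc}), so a priori the direct trace could stall even though the graphs are free, and triviality of $\mathrm{Aut}(G_{n,k})$ is only necessary (\Cref{cor:autg-trivial}), not evidence that the neighborhood-parity profiles separate. A small additional caution: the ``walking'' identities from \Cref{lem:bck2} do not transfer cleanly here, since $\frac14[X_v,[X_v,H_p]]$ returns the sum of all three $ZZ$ terms incident to an interior ladder vertex rather than a single one (this is moot once all $X_u$ are obtained, by \Cref{lem:w547}). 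So: right destination, but you should commit to the rung-by-rung induction and carry out the short degree computation above rather than leaving the direct verification open.
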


\begin{proof}
  The proof is by induction on $n$.
  By running \Cref{alg:dla}, it is straightforward to verify that $G_{3,1}$ and $G_{3,2}$ are splittable and hence free.
  Now assume that $G=(V,E) = G_{k,j}$ is free for $j\in\{1,2\}$ and $k \ge 3$.
  Note that $V_{eo} = \{u_{k-1},v_{k-1}\}$ and $G[V_{eo}]$ is a path of length $2$, with all vertex degrees odd.
  Let $G''(V'',E'')$ where $V''=\{u_k,v_k\}$ and $E''=\{(u_k,v_k)\}$ i.e., a length $2$ path, and let $\tilde{E}$ be a perfect matching between $V''$ and $V_{eo}(G_{k,1})$.
  By \Cref{prop:extend-even-odd} (Case 1), $G_{k+1,j}$ is free.
\end{proof}

\Cref{lem:extend-matchings} and \Cref{prop:extend-even-odd} require every element in $G''$ to be connected to $G$ via a partition.
We can relax this requirement if $G''$ is acyclic (i.e., a forest).

\begin{lemma}[Forests with free leaves are free]\label{lem:forest-leaves}
  Let $G=(V,E)$ be acyclic, and let $V_1\subseteq V$ be the set of degree $1$ vertices in $G$.
  Then $\langle \{i X_u : u \in V_1\}\cup \{i\sum_{u \in V\setminus V_1}X_u, i\sum_{(u,v)\in E}Z_uZ_v\}\rangle_{\rm{Lie}, \mbR} = \langle \{i X_u : u \in V\}\cup\{i Z_u
    Z_v : (u,v)\in E\}\rangle_{\rm{Lie}, \mbR}$ %If $X_u\in \mathrm i\mathfrak{g}_G$ for every $u\in V_1$, then $G$ is free.
\end{lemma}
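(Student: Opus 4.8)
The plan is to prove the reverse inclusion $\mathfrak{g}_{\rm ma}\subseteq\mathfrak{g}$, where $\mathfrak{g}\triangleq\LieClosure{\{\mathrm{i} X_u: u\in V_1\}\cup\{\mathrm{i}\sum_{u\in V\setminus V_1}X_u,\ \mathrm{i} H_p\}}$ is the left-hand side; the forward inclusion $\mathfrak{g}\subseteq\mathfrak{g}_{\rm ma}$ is immediate because every generator of $\mathfrak{g}$ is a real linear combination of the generators $\{\mathrm{i} X_u : u\in V\}\cup\{\mathrm{i} Z_uZ_v : (u,v)\in E\}$ of $\mathfrak{g}_{\rm ma}$. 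By \Cref{lem:w547} applied with all edge weights equal to $1$, it suffices to show that $\mathrm{i} X_u\in\mathfrak{g}$ for every $u\in V$. Two facts are used repeatedly: $\mathrm{i} H_m=\sum_{u\in V_1}\mathrm{i} X_u+\mathrm{i}\sum_{u\in V\setminus V_1}X_u\in\mathfrak{g}$ and $\mathrm{i} H_p\in\mathfrak{g}$, so the splitting lemmas \Cref{lem:esplit,lem:vsplit-int,lem:vsplit-ext} all apply to $\mathfrak{g}$. I would also assume without loss of generality that every connected component of $G$ has at least two vertices; an isolated vertex is never a generator of $\mathfrak{g}$ and cannot be reached by the propagation below, so such components require a separate and degenerate treatment.

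Call a vertex $u$ \emph{good} if $\mathrm{i} X_u\in\mathfrak{g}$, and an edge $(u,v)$ \emph{active} if $\mathrm{i} Z_uZ_v\in\mathfrak{g}$; every leaf is good, being a generator. First I would establish three propagation rules. (A) If $u$ is good, then $\mathrm{i}\sum_{v\in\mathcal{N}(u)}Z_uZ_v\in\mathfrak{g}$: apply \Cref{lem:esplit} with $S=\{u\}$ and $T=V\setminus\{u\}$, noting that $X_S=X_u$ and $X_T=H_m-X_u$ both lie in $\mathrm{i}\mathfrak{g}$. (B) If $u$ is good and $(u,v)$ is active, then $v$ is good, via the identity $X_v=\frac{1}{4}[Z_uZ_v,[Z_uZ_v,H_m]]-X_u$, exactly as in the proof of \Cref{lem:bck2}. (C) If $u$ is good and every edge incident to $u$ is active except possibly a single edge $(u,v)$, then $(u,v)$ is active, by subtracting the known summands $\mathrm{i} Z_uZ_w$ from the element produced by rule (A). In particular, every leaf edge is active.

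Next I would root each tree component of $G$ at an arbitrary vertex and process the vertices in order of non-increasing depth, maintaining the invariant: once $u$ has been processed, $u$ is good and every edge joining $u$ to one of its children is active. The base case consists of the deepest vertices, which are necessarily leaves of $G$, hence good and childless. For the inductive step, when $u$ is processed all of its children $c_1,\dots,c_k$ have already been processed, so each $c_j$ is good and every edge from $c_j$ to its own children is active; the only possibly-inactive edge at $c_j$ is $(c_j,u)$, so rule (C) makes $(u,c_j)$ active for every $j$; then rule (B) applied to the good vertex $c_1$ and the active edge $(u,c_1)$ shows that $u$ is good, and since the edges $(u,c_j)$ are exactly the child-edges of $u$, the invariant is restored at $u$. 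As each component has at least two vertices, its root has at least one child, so the root too becomes good. Hence every vertex of $G$ is good, and \Cref{lem:w547} completes the proof.

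The routine part is the pair of double-commutator identities underlying rules (A)--(C), which are the same computations already carried out in \Cref{lem:esplit,lem:bck2}. The step that will require the most care is the bookkeeping in the final paragraph: one must verify that the induction invariant is exactly strong enough to both feed rule (C) at the children of the vertex currently being processed and be re-established at that vertex, and that processing in order of non-increasing depth really does handle every child before its parent. The isolated-vertex components are the only genuine exception to this scheme, which is why I would exclude them at the outset.
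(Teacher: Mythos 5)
Your proof is correct and shares the paper's overall strategy---propagate membership of the individual $\mathrm iX_u$ inward from the degree-$1$ vertices, using acyclicity to guarantee the propagation reaches everything, then invoke \Cref{lem:w547}---but the propagation mechanism is different. The paper's (very terse) proof repeatedly applies the external splitting lemma (\Cref{lem:vsplit-ext}) with $T$ a single already-marked leaf and $S$ the set of unmarked vertices, so that $S_o(T)$ is exactly the unique unmarked neighbour of that leaf; peeling the leaves and recursing on $G[V\setminus V_1]$ then marks every vertex. You instead keep two kinds of bookkeeping (good vertices and active edges) and work up a rooted tree from the deepest level: \Cref{lem:esplit} with $S=\{u\}$ produces the local star $\sum_{v\in\mathcal N(u)}Z_uZ_v$, subtraction isolates the one remaining parent edge, and the double-commutator identity from the proof of \Cref{lem:bck2} pushes goodness across it. Both mechanisms are sound; yours avoids \Cref{lem:vsplit-ext} entirely and yields $\mathrm iZ_uZ_v\in\mathfrak g$ for every edge along the way, rather than recovering these only at the end through \Cref{lem:w547}. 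One point where you are more careful than the paper: isolated vertices genuinely break the statement, not just the proof---for $G$ consisting of two isolated vertices the left-hand side is one-dimensional while the right-hand side is two-dimensional---so your explicit exclusion of them is warranted (with exactly one isolated vertex the lemma still holds via a final subtraction from $\sum_{u\in V\setminus V_1}X_u$), whereas the paper's proof silently asserts that ``all vertices become marked.''
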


\begin{proof}
  Let $V_1=\{v\in V : \deg(v)=1\}$ be the set of degree $1$ nodes in $G$.
  Consider the following algorithm.
  For any $u\in V$, mark $u$ if $X_u\in \mathrm i\mathfrak{g}_C$.
  Initially all $v\in V_1$ are marked.
  Successively apply \Cref{lem:vsplit-ext} with $S=V\setminus V_1$ and $T=\{v\}$ for each $v \in V_1$ to mark all neighbours of $V_1$.
  The set of neighbours of $V_1$ includes all degree $1$ vertices in $G[V\setminus V_1]$.
  Iterate this process on $G[V\setminus V_1]$ and so on until eventually all vertices become marked.
  The result follows from \Cref{lem:w547}.
\end{proof}
%Let $d(u)$ be the distance from $u$ to the nearest marked node (if $u$ is marked then $d(u)=0$).

\begin{proposition}[Extensions by forests]\label{prop:extension-forest}
  Let $G=(V,E)$ and $G''=(V'',E'')$ be disjoint graphs, with $G''$ acyclic.
  If $G$ is free and $\tilde{E}$ is an odd partition of $V''_o$ by $V_e$ , then $G'=(V',E')$ is free, where $V'=V\cup V''$ and $E'=E\cup E'' \cup \tilde{E}$.
\end{proposition}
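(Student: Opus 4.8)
The goal is to establish $\mathrm i X_v \in \mathfrak{g}_{G'}$ for every $v \in V'$; freeness of $G'$ then follows at once from \Cref{lem:w547} (all edge weights being $1$). The first step is to read off the degree parities in $G'$. Writing $f : V''_o \to V_e$ for the surjection underlying the odd partition $\tilde E$, each $u \in V''_o$ acquires exactly one new edge (to $f(u)$) and so becomes even-degree in $G'$; each $u \in V''_e$ acquires none; each $v \in V_e$ acquires $\lvert f^{-1}(v)\rvert$ new edges, an odd number by hypothesis, and so becomes odd-degree; each $v \in V_o$ acquires none. Hence $V'_o = V$ and $V'_e = V''$, and \Cref{lem:vsplit-int} applied with $S = V'$ gives $\mathrm i X_{V'_o} = \mathrm i X_V$ and $\mathrm i X_{V'_e} = \mathrm i X_{V''}$ in $\mathfrak{g}_{G'}$.

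Next I would recover the generators of $G$ inside $\mathfrak{g}_{G'}$. Since $V \cap V'' = \varnothing$ and $\mathrm i X_V, \mathrm i X_{V''} \in \mathfrak{g}_{G'}$, \Cref{lem:esplit} gives $\mathrm i ZZ_{\tilde E} \in \mathfrak{g}_{G'}$ (the edges of $G'$ joining $V$ to $V''$ being exactly those of $\tilde E$), so $\mathrm i ZZ_{E \cup E''} = \mathrm i\!\left(ZZ_{E'} - ZZ_{\tilde E}\right) \in \mathfrak{g}_{G'}$. As $E''$ joins only vertices of $V'' = V' \setminus V$, \Cref{lem:extend-dla-by-zz} with $F = E''$ applies: $\mathfrak{g}_G = \LieClosure{\{\mathrm i X_V, \mathrm i ZZ_E\}}$ is free by assumption, hence $\mathfrak{g}_G \subseteq \LieClosure{\{\mathrm i X_V, \mathrm i ZZ_{E \cup E''}\}} \subseteq \mathfrak{g}_{G'}$. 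In particular $\mathrm i X_u \in \mathfrak{g}_{G'}$ for all $u \in V$, and since $\mathrm i ZZ_E \in \mathfrak{g}_{G'}$ we also get $\mathrm i ZZ_{E''} = \mathrm i\!\left(ZZ_{E \cup E''} - ZZ_E\right) \in \mathfrak{g}_{G'}$.

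The remaining task---and the heart of the proof---is to push the markings into the forest $G''$, i.e., to show $\mathrm i X_v \in \mathfrak{g}_{G'}$ for all $v \in V''$. The clean route is \Cref{lem:forest-leaves} applied to $G''$: since $\mathrm i X_{V''}$ and $\mathrm i ZZ_{E''}$ are already in $\mathfrak{g}_{G'}$, it suffices to establish $\mathrm i X_w \in \mathfrak{g}_{G'}$ for each degree-$1$ vertex $w$ of $G''$ (the remaining part then being $\mathrm i X_{V''} - \mathrm i\sum_w X_w$), for then the identity of \Cref{lem:forest-leaves} yields $\mathfrak{g}_{G'',\mathrm{ma}} \subseteq \mathfrak{g}_{G'}$ and hence $\mathrm i X_v \in \mathfrak{g}_{G'}$ for all $v \in V''$. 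Every leaf of $G''$ lies in $V''_o$ and is joined by $\tilde E$ to some vertex of $V_e \subseteq V$, whose $X$-operator is already marked; the plan is to peel $G''$ inward from the boundary $V_e$, interleaving \Cref{lem:vsplit-ext} (splitting $V''$, and already-marked subsets of it, against single already-marked vertices) with \Cref{lem:vsplit-int} (internal degree-parity refinements inside $G''$ and the subgraphs it induces), until every vertex of $V''$ is isolated. With all $\mathrm i X_v$, $v \in V'$, in hand, \Cref{lem:w547} finishes: $\mathrm i Z_u Z_v \in \mathfrak{g}_{G'}$ for every edge, so $\mathfrak{g}_{G'} = \mathfrak{g}_{G',\mathrm{ma}}$.

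I expect the forest-peeling step to be the main obstacle. Because $\tilde E$ is only an \emph{odd} partition rather than a perfect matching, splitting $V''$ against a single $v \in V_e$ returns the whole fibre combination $X_{f^{-1}(v)}$ instead of a single $X_w$; the fibres $\{f^{-1}(v)\}_{v \in V_e}$ partition $V''_o$, but breaking each fibre down to individual vertices requires repeatedly refining against newly marked subsets of $V''$ and carefully tracking neighbourhood parities through $G''$---and this is exactly the point at which acyclicity of $G''$ must be used. Everything else (the parity bookkeeping, the edge split, and the appeals to \Cref{lem:extend-dla-by-zz} and \Cref{lem:w547}) is routine, following the template of the proofs of \Cref{lem:extend-matchings} and \Cref{prop:extend-even-odd}.
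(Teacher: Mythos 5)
Your first two paragraphs track the paper's proof exactly: the parity bookkeeping giving $V'_o=V$ and $V'_e=V''$, the recovery of $\mathrm i ZZ_{E\cup E''}$ via \Cref{lem:esplit}, the appeal to \Cref{lem:extend-dla-by-zz} to get $X_u\in\mathrm i\mathfrak g_{G'}$ for all $u\in V$, and the reduction of the forest part to isolating the leaves of $G''$ so that \Cref{lem:forest-leaves} can finish. But the step you then defer --- ``peel $G''$ inward from the boundary \dots until every vertex of $V''$ is isolated'' --- is precisely the substance of the proposition, and your proposal never carries it out; as written it is a plan, not a proof. The paper closes this step in one line with \Cref{lem:perfect-match}: each target vertex $w\in V''_o$ has an already-isolated partner $v\in V_e$ whose \emph{only} neighbour in $V''$ is $w$, so splitting $X_{V''_o}$ externally by $\{v\}$ (\Cref{lem:vsplit-ext}) returns $X_w$ by itself. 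No inward peeling of the forest occurs; acyclicity is used only afterwards, inside \Cref{lem:forest-leaves}.

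Moreover, the obstacle you flag is not a technicality to be tidied up later --- under the orientation of the partition you adopt it is fatal. If the surjection runs $f:V''_o\to V_e$ with fibres of odd size possibly $\ge 3$, the fibres cannot in general be broken into singletons: take $G''$ with three leaves attached to a common vertex, all three mapped by $f$ to the same $v\in V_e$. These leaves then have identical neighbourhoods in $G'$, so $\mathrm{Aut}(G')$ is nontrivial and $G'$ cannot be free by \Cref{cor:autg-trivial}; no sequence of splittings (or any other argument) can isolate them. The argument only goes through when the partition is oriented so that every $w\in V''_o$ has a privately matched, individually known vertex of $V_e$ --- i.e., when \Cref{lem:perfect-match} applies with the individually known side in $V$ and the target side $V''_o$, as in the perfect-matching configuration of \Cref{fig:extension-even}. (The paper's own wording of the hypothesis versus its proof is inconsistent on this direction, but its proof unambiguously relies on the direction that makes \Cref{lem:perfect-match} applicable.) By committing to fibres of $f:V''_o\to V_e$ of size possibly exceeding one and promising to split them afterwards, you are attempting something that cannot be done; the missing idea is to use the partition in the opposite direction together with \Cref{lem:perfect-match}.
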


\begin{proof}
  As $\tilde{E}$ is an odd partition of $V_e$ by $V_e''$, then $V'_o = V$ and $V'_e = V''$.
  Thus $X_V, X_{V''}\in \mathrm i\mathfrak{g}_G'$.
  Furthermore, by the same argument as in the proof of \Cref{lem:extend-matchings}, we have that $X_u\in \mathrm i\mathfrak{g}_{G'}$ for all $u\in V$.
  By \Cref{lem:perfect-match}, $X_v \in \mathrm i\mathfrak{g}_{G'}$ for all $X_v$ in $V_e''$, which include all the leaves (degree $1$ vertices) in $G''$.
  The result follows from \Cref{lem:forest-leaves}.
\end{proof}

An illustration of \Cref{prop:extension-forest} is given in \Cref{fig:extension-even}.
\begin{figure}[!ht]
  \centering
  \begin{tikzpicture}
  \begin{scope}[shift={(-7,0)}]
    \draw (-2,1)--(2,1)--(2,0) -- (0,0) -- (0,1) (1,0)--(1,1);

    \draw [fill=black] (-1,1) circle (0.05) (0,0) circle (0.05) (2,0) circle (0.05) (2,1) circle (0.05);

    \draw [fill=white] (-2,1) circle (0.05) (0,1) circle (0.05) (1,1) circle (0.05) (1,0) circle (0.05);

    \draw [dashed, gray] (-1.2,1.2) -- (-0.5,1.2) -- (-0.5,0.5) -- (0.5,0.5) -- (0.5,-0.2) --(1.5,-0.2) -- (1.5,1.2) --(2.2,1.2) -- (2.2,-0.4) -- (-1.2,-0.4) --cycle;

    \draw (0,-0.8) node{$V_e$} (0,-2.7) node{Graph $G=(V,E)$};
  \end{scope}

  \draw (-2,1)--(2,1)--(2,0) -- (0,0) -- (0,1) (1,0)--(1,1);

  \draw (0,-2) -- (-1,-1) (0,-2) -- (0,-1) (0,-2) -- (2,-1);

  \draw [blue] (0,-2) -- (2,1);

  \draw [blue] (-1,1) -- (-1,-1) (0,0) -- (0,-1) (2,0) -- (2,-1);

  % \draw [gray] (-1.1,-0.9) -- (2.1,-0.9) --(2.1,-1.1) --(-1.1,-1.1) -- cycle;

  \draw [dashed,gray] (-1.2,-0.8) -- (2.2,-0.8) -- (2.2,-2.2) --(-1.2,-2.2) -- cycle;

  \draw [fill=black] (-1,1) circle (0.05) (0,0) circle (0.05) (2,0) circle (0.05) (2,1) circle (0.05);

  \draw [fill=white] (-2,1) circle (0.05) (0,1) circle (0.05) (1,1) circle (0.05) (1,0) circle (0.05);

  \draw [fill=gray] (-1,-1) circle (0.05) (0,-1) circle (0.05) (2,-1) circle (0.05) (0,-2) circle (0.05) (1,-1.5) circle (0.05);

  \draw %(0.5,-2) node{\scriptsize $w'_{1,0}$}(2,-1.3) node{\scriptsize $S_1$}
    (-2.5,-1.5) node{$G''=(V'',E'')$};

  \draw [blue] (-1.35,0) node{$\tilde{E}$};

  % \draw [blue] (1.5,-0.5) node{\scriptsize$\tilde{E}_{1,2}$};

  \draw (0,-2.7) node{Graph $G'=(V\cup V'', E \cup E'' \cup \tilde{E})$};
\end{tikzpicture}
  \caption{
    Extending a free graph by \Cref{prop:extension-forest}.
    (Left) $G=(V,E)$ is free (see \Cref{prop:free-ladder}).
    (Right) $G''$ (grey) is taken to be a tree.
    The odd degree nodes in $G''$ are connected to $V_e$ by a perfect matching.
    The resulting graph $G'$ is free.
  }
  \label{fig:extension-even}
\end{figure}
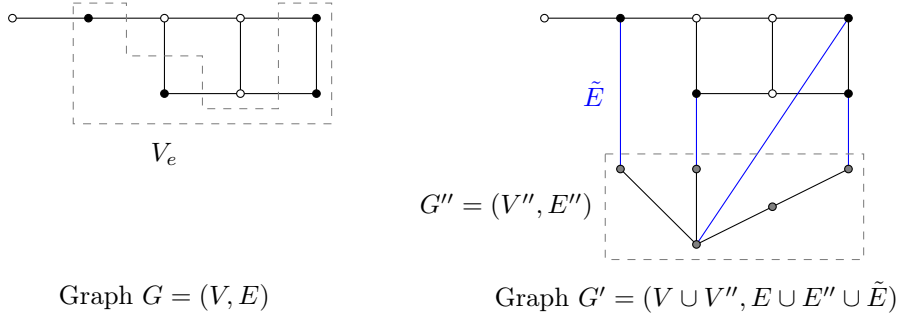

\begin{proposition}
  [Combining free graphs]
  Let $G=(V,E)$ and $H=(W,F)$ be free graphs, and $\tilde{E}$ be a set of edges between $V_{o}$ and $W_{e}$.
  If every vertex in the bipartite graph $B=(V_o\cup W_e, \tilde{E})$ has odd degree, then $G'=(V\cup W, E\cup F \cup \tilde{E})$ is free.
  % Let $G_1=(V_1,E_1)$ and $G_2=(V_2,E_2)$ be free graphs with $\abs{V_o(G)}$ and $\abs{V_e(G')}$ both even and non-zero.
  % Let $\tilde{E}$ be a set of edges between $V_{o}(G_1),V_{e}(G_2)$.
  % If every vertex in the bipartite graph $H=(V_o(G_1)\cup V_e(G_2), \tilde{E})$ is odd, then  $G'=(V_1\cup V_2, E_1\cup E_2 \cup \tilde{E})$ is free.
\end{proposition}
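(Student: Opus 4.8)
The plan is to pin down the degree parities of $G'$ so that $V'_e = V$ and $V'_o = W$, and then transport the freeness of both $G$ and $H$ into $\mathfrak{g}_{G'}$ using \Cref{lem:extend-dla-by-zz}. First I would do the degree bookkeeping. The only edges of $G'$ incident to a vertex of $V$ that do not already lie in $E$ belong to $\tilde E$, and $\tilde E$ meets $V$ only in $V_o$. Hence a vertex $u \in V_o$ has $\deg_{G'}(u) = \deg_G(u) + \deg_B(u) = $ odd ${}+{}$ odd ${}={}$ even, while a vertex in $V_e$ keeps its even degree, so $V \subseteq V'_e$. Dually, a vertex $w \in W_e$ has $\deg_{G'}(w) = \deg_H(w) + \deg_B(w) = $ even ${}+{}$ odd ${}={}$ odd, and a vertex in $W_o$ keeps its odd degree, so $W \subseteq V'_o$. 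Since $V$ and $W$ partition $V(G')$, this forces $V'_e = V$ and $V'_o = W$. Now $X_{V(G')} = H_m \in \mathrm i \mathfrak{g}_{G'}$, so \Cref{lem:vsplit-int} applied with $S = V(G')$ gives $X_V, X_W \in \mathrm i \mathfrak{g}_{G'}$. Since $V \cap W = \varnothing$ and the crossing edge set is $E(V, W) = \tilde E$, \Cref{lem:esplit} gives $ZZ_{\tilde E} \in \mathrm i \mathfrak{g}_{G'}$, and therefore $ZZ_E + ZZ_F = H_p - ZZ_{\tilde E} \in \mathrm i \mathfrak{g}_{G'}$.

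Next I would invoke \Cref{lem:extend-dla-by-zz} twice. Regard $G = (V, E)$ as a subgraph of the auxiliary graph $(V \cup W,\, E \cup F)$; the edge set $F = E(H)$ lies entirely among the vertices of $W = (V \cup W) \setminus V$, so the hypotheses of \Cref{lem:extend-dla-by-zz} are satisfied, and since $\mathfrak{g}_G$ is free it follows that $\mathfrak{g}_G \subseteq \langle \{ \mathrm i X_V,\ \mathrm i ZZ_{E \cup F} \} \rangle_{\mathrm{Lie}, \mathbb R}$. Both generators on the right lie in $\mathrm i \mathfrak{g}_{G'}$ by the previous step, so $\mathfrak{g}_G \subseteq \mathfrak{g}_{G'}$; in particular $\mathrm i X_u \in \mathrm i \mathfrak{g}_{G'}$ for every $u \in V$. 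By the symmetric argument with the roles of $G$ and $H$ interchanged --- regarding $H = (W, F)$ as a subgraph of $(V \cup W,\, F \cup E)$, with $E$ lying among the vertices of $V$ --- freeness of $\mathfrak{g}_H$ yields $\mathrm i X_w \in \mathrm i \mathfrak{g}_{G'}$ for every $w \in W$. Thus $\{ \mathrm i X_u : u \in V \cup W \} \subseteq \mathfrak{g}_{G'}$, and since all edge weights of $G'$ equal $1$, \Cref{lem:w547} concludes that $\mathfrak{g}_{G'} = \mathfrak{g}_{G', \mathrm{ma}}$, i.e.\ $G'$ is free.

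The main point to get right is the order of operations in the middle step: one must first extract $ZZ_{\tilde E}$ --- and hence only the \emph{combined} element $ZZ_E + ZZ_F$, not $ZZ_E$ or $ZZ_F$ individually --- before invoking \Cref{lem:extend-dla-by-zz}. This is precisely why that lemma is stated for the union $E \cup F$ of edge sets, and why it has to be applied with $G$ and $H$ playing symmetric roles (the freeness of $G$ recovers the individual $X_u$ on $V$, and the freeness of $H$ recovers those on $W$). Once this is arranged, the degree-parity computation and the disjointness of $V$ and $W$ are routine, and the conclusion follows from \Cref{lem:w547} exactly as in the proofs of \Cref{lem:extend-matchings} and \Cref{prop:extend-even-odd}.
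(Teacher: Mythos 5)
Your proposal is correct and follows essentially the same route as the paper: establish $V'_e = V$ and $V'_o = W$ from the odd-degree condition on $B$, obtain $X_V, X_W$ via internal splitting and $ZZ_{E\cup F}$ via edge splitting, apply \Cref{lem:extend-dla-by-zz} symmetrically to $G$ and $H$ to recover every $X_u$, and conclude with \Cref{lem:w547}. Your write-up is in fact more careful than the paper's (the explicit degree-parity bookkeeping and the remark about why only $ZZ_E + ZZ_F$, not the individual summands, is available), but the argument is the same.
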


\begin{proof}
  If every vertex in $B$ has odd degree, then every vertex in $G'[V]$ has even degree, and every vertex in $G'[W]$ has odd degree.
  $X_V, X_W, ZZ_{E\cup F}\in \mathrm i\mathfrak{g}_{G'}$.
  By \Cref{lem:extend-dla-by-zz}, $\mathfrak{g}_G\subseteq\langle \{\mathrm iX_V, \mathrm iZZ_{E\cup F}\}\rangle_{\text{Lie},\mathbb{R}}$ and $\mathfrak{g}_H\subseteq\langle \{\mathrm iX_W, \mathrm iZZ_{E\cup F}\}\rangle_{\text{Lie},\mathbb{R}}$, and thus $X_u\in \mathrm i\mathfrak{g}_{G'}$ for all $u\in V\cup W$.
  The result follows from \Cref{lem:w547}.
  % The proof then follows by the same logic as the proof of \Cref{prop:odd-extend}.
\end{proof}

%T.  
Note that the above proposition requires that the vertices in $B$ have odd degree.
This is because, by the handshaking lemma, $\abs{V_o}$ must be even.
For $B$ to have every vertex with odd degree, it is necessary that $W_e$ also be even.
% $\abs{V_o}$ and $\abs{W_e}$ both even and non-zero.

We end this section by proving the freeness of another family of graphs that generalizes the $(n,1)$-extended ladder graphs discussed earlier.

\begin{definition}[Grid+1 graph, \Cref{fig:05xp} (1)]
  For any $w,h\ge 1$, the $(w,h)$-grid+1 graph $G(w,h)=(V,E)$ is defined to be the graph with vertex set $V=\{u_{-1,0},u_{i,j}: 0\le i\le w, 0\le j \le h\}$ and edge set $E=\{(u_{-1,0},u_{0,0}), (u_{i,j},u_{i+1,j}),(u_{i,j},u_{i+1,j}): 0\le i \le w-1, 0\le j \le h-1\}$.
\end{definition}

\begin{figure}[ht]
  \centering
  %% Creator: Inkscape 1.4.2 (f4327f4, 2025-05-13), www.inkscape.org
%% PDF/EPS/PS + LaTeX output extension by Johan Engelen, 2010
%% Accompanies image file 'grid-mwl9.pdf' (pdf, eps, ps)
%%
%% To include the image in your LaTeX document, write
%%   \input{<filename>.pdf_tex}
%%  instead of
%%   \includegraphics{<filename>.pdf}
%% To scale the image, write
%%   \def\svgwidth{<desired width>}
%%   \input{<filename>.pdf_tex}
%%  instead of
%%   \includegraphics[width=<desired width>]{<filename>.pdf}
%%
%% Images with a different path to the parent latex file can
%% be accessed with the `import' package (which may need to be
%% installed) using
%%   \usepackage{import}
%% in the preamble, and then including the image with
%%   \import{<path to file>}{<filename>.pdf_tex}
%% Alternatively, one can specify
%%   \graphicspath{{<path to file>/}}
%% 
%% For more information, please see info/svg-inkscape on CTAN:
%%   http://tug.ctan.org/tex-archive/info/svg-inkscape
%%
\begingroup%
  \makeatletter%
  \providecommand\color[2][]{%
    \errmessage{(Inkscape) Color is used for the text in Inkscape, but the package 'color.sty' is not loaded}%
    \renewcommand\color[2][]{}%
  }%
  \providecommand\transparent[1]{%
    \errmessage{(Inkscape) Transparency is used (non-zero) for the text in Inkscape, but the package 'transparent.sty' is not loaded}%
    \renewcommand\transparent[1]{}%
  }%
  \providecommand\rotatebox[2]{#2}%
  \newcommand*\fsize{\dimexpr\f@size pt\relax}%
  \newcommand*\lineheight[1]{\fontsize{\fsize}{#1\fsize}\selectfont}%
  \ifx\svgwidth\undefined%
    \setlength{\unitlength}{364.47191133bp}%
    \ifx\svgscale\undefined%
      \relax%
    \else%
      \setlength{\unitlength}{\unitlength * \real{\svgscale}}%
    \fi%
  \else%
    \setlength{\unitlength}{\svgwidth}%
  \fi%
  \global\let\svgwidth\undefined%
  \global\let\svgscale\undefined%
  \makeatother%
  \begin{picture}(1,0.30720913)%
    \lineheight{1}%
    \setlength\tabcolsep{0pt}%
    \put(0,0){\includegraphics[width=\unitlength,page=1]{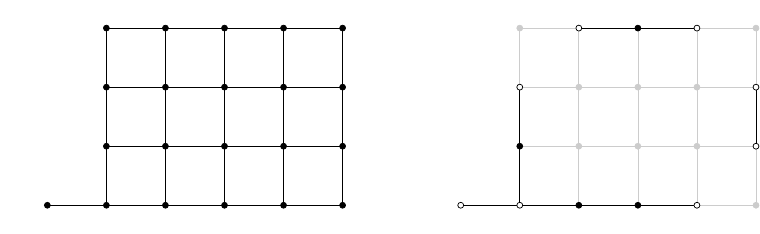}}%
    \put(0.14011664,0.00577408){\color[rgb]{0,0,0}\makebox(0,0)[t]{\smash{\begin{tabular}[t]{c}$u_{0,0}$\end{tabular}}}}%
    \put(0.06234258,0.00577408){\color[rgb]{0,0,0}\makebox(0,0)[t]{\smash{\begin{tabular}[t]{c}$u_{-1,0}$\end{tabular}}}}%
    \put(0.14011664,0.28576066){\color[rgb]{0,0,0}\makebox(0,0)[t]{\smash{\begin{tabular}[t]{c}$u_{0,h}$\end{tabular}}}}%
    \put(0.45121282,0.28576066){\color[rgb]{0,0,0}\makebox(0,0)[t]{\smash{\begin{tabular}[t]{c}$u_{w,h}$\end{tabular}}}}%
    \put(0.45121282,0.00577408){\color[rgb]{0,0,0}\makebox(0,0)[t]{\smash{\begin{tabular}[t]{c}$u_{w,0}$\end{tabular}}}}%
    \put(0.06233919,0.27020585){\color[rgb]{0,0,0}\makebox(0,0)[t]{\smash{\begin{tabular}[t]{c}(1)\end{tabular}}}}%
    \put(0.60675885,0.27020585){\color[rgb]{0,0,0}\makebox(0,0)[t]{\smash{\begin{tabular}[t]{c}(2)\end{tabular}}}}%
  \end{picture}%
\endgroup%

  \caption{
    (1) The grid+1 graph $G(w, h)$.
    (2) The subgraph induced by odd-degree vertices, where the solid (hollow) dots represent the even (odd) degree vertices in the subgraph.
    %\pei{(pei): to distinguish the vertices $(i,j)$ and the edges, I replace the vertex $(i,j)$ with $u_{i,j}$. Please change the notations in this graph. }
  }
  \label{fig:05xp}
\end{figure}

\begin{proposition}\label{prop:grid}
  For $w > h \ge 3$, the grid+1 graph $G(w, h)$ is free.
\end{proposition}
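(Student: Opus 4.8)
The plan is to prove $X_u\in\mathrm i\mathfrak g$ for every vertex $u$ of $G(w,h)$, which by \Cref{lem:w547} forces $\mathfrak g=\mathfrak g_{\rm ma}$; the only tools will be the splitting lemmas \Cref{lem:vsplit-int}, \Cref{lem:vsplit-ext}, \Cref{lem:esplit} and \Cref{lem:0wzl}. First I would record the degree structure. Writing the vertices as $u_{-1,0}$ and $u_{i,j}$ ($0\le i\le w$, $0\le j\le h$), the odd-degree vertices are exactly $V_o=\{u_{-1,0},u_{0,0}\}$ together with the non-corner boundary vertices of the grid, so $V_e$ consists of the three remaining corners $u_{w,0},u_{0,h},u_{w,h}$ and the interior vertices $u_{i,j}$ ($1\le i\le w-1$, $1\le j\le h-1$). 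One checks that $G[V_o]$ is the disjoint union of (i) a $3$-armed spider $S$ centred at $u_{0,0}$ with arm lengths $1$, $w-1$, $h-1$, which are pairwise distinct since $w>h\ge 3$, (ii) a path on the $h-1$ right-column vertices, and (iii) a path on the $w-1$ bottom-row vertices, while $G[V_e]$ is an $(w-1)\times(h-1)$ grid $\Gamma$ together with three isolated vertices; this is the second panel of \Cref{fig:05xp}.

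By \Cref{lem:vsplit-int} applied to $X_V=H_m$ we get $X_{V_o},X_{V_e}\in\mathrm i\mathfrak g$, and iterating \Cref{lem:vsplit-int} gives $X_{V_a}\in\mathrm i\mathfrak g$ for every parity string $a$, as noted before \Cref{lem:extend-dla-by-zz}. One further split of $X_{V_o}$ puts the centre $u_{0,0}$ and the seven arm/path endpoints into $V_{oo}$, and puts the four ``internal'' paths obtained by deleting those endpoints --- of sizes $w-2$, $h-2$, $w-3$ and $h-3$ (the last one empty when $h=3$) --- into $V_{oe}$. Using $w>h\ge 3$, these four sizes are pairwise distinct \emph{except} when $w=h+1$ and $h\ge 4$, where two of them equal $h-2\ge 2$.

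In the generic case --- all other $(w,h)$, including $(w,h)=(4,3)$, where the coinciding sizes are trivial --- \Cref{lem:0wzl} applied to $X_{V_{oe}}$ splits it into the pieces that lemma prescribes, yielding in particular generators supported on subsets of the internal paths of $S$ and of the two path components. Starting from these and from $X_{V_{oo}}$, and using at the outset that $u_{-1,0}$ is the \emph{unique} degree-$1$ vertex of $G$ (so that any set meeting exactly one other neighbour of $u_{0,0}$ separates $u_{-1,0}$ from $u_{0,0}$ by \Cref{lem:vsplit-ext}), a sequence of external splittings --- subtracting off already-isolated vertices whenever a split set is not yet a singleton --- isolates every vertex of $S$ and of the two paths. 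Next, each grid-boundary vertex $u_{i,0}$ or $u_{0,j}$ is the unique neighbour in $V_e$ of the interior vertex directly inside it, so splitting $V_e$ against each isolated boundary vertex isolates the boundary row/column of $\Gamma$, and iterating inward isolates all of $\Gamma$; finally the three isolated corners of $G[V_e]$ are peeled off using their now-isolated boundary neighbours. The crucial point making the $\Gamma$-recursion go through is that the lone pendant $u_{-1,0}$ orients the whole graph, pinning down a distinguished corner of $\Gamma$ and breaking all its reflection symmetries, so that the parts of $\Gamma$'s boundary --- which a priori split only into two pairs of equal-length paths --- do get separated once the asymmetry has propagated in. Having obtained $X_u\in\mathrm i\mathfrak g$ for all $u\in V$, \Cref{lem:w547} completes the proof.

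The main obstacle is twofold. First, one must organise the ``unzipping'' cascade of the previous paragraph so that it provably reaches \emph{every} vertex and never stalls; the cleanest route is probably an induction, e.g.\ on $h$, with the extended-ladder graphs $G_{n,1}$ and $G_{n,2}$ of \Cref{prop:free-ladder} as the $h\le 2$ base and the step a carefully bookkept re-run of the peeling on the outermost layer, checking the few small configurations by hand. Second, the case $w=h+1$ with $h\ge 4$ falls outside the hypothesis of \Cref{lem:0wzl}, because the left-arm-internal path and the bottom-row-internal path then have the same length $h-2\ge 2$; this case needs a separate argument, for instance first separating these two paths by external splitting against a set derived from the (distinct-size) boundary segments of $\Gamma$ that meet their respective interior neighbourhoods, thereby restoring a ``trivial or distinct'' family before invoking \Cref{lem:0wzl}. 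Making both of these precise is where the real work lies.
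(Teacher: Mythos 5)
Your plan follows essentially the same route as the paper's proof: the same decomposition of $G[V_o]$ into the $3$-armed spider $G(1,h-1,w-1)$ plus two paths, the same application of \Cref{lem:0wzl} to the four internal paths of $G[V_{oe}]$ on $h-3<h-2\le w-3<w-2$ vertices, the same identification of $w=h+1$ (where $h-2=w-3\ge 2$) as the case escaping \Cref{lem:0wzl}, and the same endgame of propagating singletons through the grid by external splitting. Two corrections are in order. First, a labeling slip: the two path components of $G[V_o]$ are the \emph{top} row and the right column --- the bottom-row vertices $u_{1,0},\dots,u_{w-1,0}$ form the length-$(w-1)$ spider arm --- so the coinciding pair when $w=h+1$ is the left-arm interior ($h-2$ vertices) and the top-row interior ($w-3$ vertices); your vertex counts are consistent with the correct identification, so nothing breaks. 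Second, and more substantively, your cascade wants to isolate the two boundary paths \emph{before} touching the interior grid $\Gamma$, but at that stage nothing separates $u_{1,h}$ from $u_{w-1,h}$: their neighbourhoods differ only in $u_{0,h}$ vs.\ $u_{w,h}$ and $u_{1,h-1}$ vs.\ $u_{w-1,h-1}$, and every set available before $\Gamma$ has been split (the parts produced by \Cref{lem:0wzl}, the set $V_{oo}$, and all internal parity refinements of $V_e$, which respect the automorphism of $G[V_e]$ that reflects $\Gamma$ left--right and swaps the isolated corners $u_{0,h}\leftrightarrow u_{w,h}$) treats these pairs symmetrically. The paper sidesteps this stall by isolating only the spider, extracting the single corner $u_{0,h}$ via a two-step external split through $u_{1,h-1}$ and $u_{1,h-2}$, and then sweeping column by column from the left, so that the top row and right column are obtained last rather than first. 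Since \Cref{thm:all-splittings} makes the terminal partition order-independent, your plan is salvageable by simply continuing with other splits and returning to the boundary paths later, but an explicit non-stalling schedule should follow the paper's column sweep (or your proposed induction on $h$ from the extended ladders) rather than the boundary-paths-first order.
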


\begin{proof}
  Denote the grid+1 graph by $G$.
  The subgraph $G[V_o]$ (see \Cref{fig:05xp} (2)) is a disjoint union of a 3-spider graph $G' \cong G(1, h-1, w-1)$, a path with length $h-2$, and another path of length $w-2$.
  Hence, $G[V_{oe}]$ is a disjoint union of 4 paths on $0 \le h-3 < h-2 \le w-3 < w-2$ vertices.
  \begin{itemize}
    \item
          If $h-2 < w-3$ then, similar to the proof of \Cref{thm:5xv6}, we can show that $X_u \in \mathrm i\mathfrak g$ for any vertex $u$ of the 3-spider graph $G'$.
    \item
          If $h-2 = w-3$, then by slightly generalizing \Cref{lem:0wzl}, we can again follow the proof of \Cref{thm:5xv6} to show that $X_u \in \mathrm i\mathfrak g$ for any vertex $u$ of the 3-spider graph $G'$.
  \end{itemize}
  Since $X_{\{u_{0, h-1}\}}, X_{\{u_{0, h-2}\}}, X_{V \backslash V(G')} \in \mathrm i\mathfrak g$, applying the external splitting lemma (\Cref{lem:vsplit-ext}) with $S = V \backslash V(G')$ and $T = \{u_{0, h-1}\}$ or $T = \{u_{0, h-2}\}$ we have $X_{\{u_{0, h}, u_{1, h-1}\}}, X_{\{u_{1, h-2}\}} \in \mathrm i\mathfrak g$.
  Again apply the external splitting lemma with $S = \{u_{0, h}, u_{1, h-1}\}$ and $T = \{u_{1, h-2}\}$, we conclude that $X_{\{u_{0, h}\}} \in \mathrm i\mathfrak g$.
  Combined with the fact that $X_u \in \mathrm i\mathfrak g$ for any vertex $u$ of the 3-spider graph $G'$, it follows that $X_u \in \mathrm i\mathfrak g$ for each $u \in V_1 \triangleq \{ (x, y) \in V(G): x = -1 \text{ or } 0 \}$.
  Apply the external splitting lemma with $S = V \backslash V_1$ and $T = \{u_{0, i}\}$ for each $0 \le i \le h$, we have $X_{\{u_{1, i}\}} \in \mathrm i\mathfrak g$.
  By repeating this process, it is easy to show that $X_u \in \mathrm i\mathfrak g$ for any $u \in V(G)$.
\end{proof}

\subsection{Numerical results}\label{sec:numerical-results}

In this section, we present the numerical results of our algorithm on (1) all connected graphs with 4 to 7 vertices, and (2) the benchmark set for MaxCut solvers provided by MQLib \cite{dunning2018what}.

\paragraph{Results on small connected graphs}
We executed \Cref{alg:dla} on all connected graphs with 4-7 vertices.
To avoid redundant computations, we selected all non-isomorphic connected graphs provided by the Graph Atlas \cite{networkx_graph_atlas_g}: 6 graphs on 4 vertices, 21 on 5, 112 on 6, and 853 on 7, for a total of 992 instances.
Graphs with 8 or more vertices were not considered because they are not included in the Graph Atlas.
Running times are reported in \Cref{fig:1n9c} --- the median is 0.3ms for 4-vertex graphs, 5ms for 5-vertex, 278ms for 6-vertex, and 13s for 7-vertex.
% \syz{If we report absolute time, then we'd better to specify the hardware details that the algorithms run on.}
In contrast, PennyLane's built-in routine \cite{pennylane_lie_closure} failed to complete computations on many 6-vertex instances in 28h, and returned incorrect results when the DLA dimension $\gtrsim 500$.
To ensure a fair comparison, all benchmarks for our algorithm and PennyLane's routine were performed on the same AMD EPYC 9754 processor with 32GB of memory.
The performance advantage of our algorithm stems from its use of an efficient graph-theoretic subroutine to perform generator splitting, a task computationally intense by purely algebraic methods.

% \pei{(pei): Should we analyze why our algorithm runs faster? For example, (1) After splitting the generators, the newly generated basis requires less computation; (2) Many instances can be identified as multi-angle in advance, thus avoiding additional calculations.}

% \syz{This paper mainly studies the high-end complexity of the graphs, while the next conjecture considers the low-end one. I wonder whether it fits well here. }

Analysis of the results reveals an apparent linear relationship between $\dim\mathfrak g$ and the number of $\mathrm{Aut}(G)$-orbits of $\mathfrak g_{\mathrm{ma}}$.
More precisely, let $\mathfrak g_{\mathrm{ma}}/\mathrm{Aut}(G)$ denote the largest sub-algebra of $\mathfrak g_{\mathrm{ma}}$ that is fixed by every automorphism $\pi\in\mathrm{Aut}(G)$ (acting on $\mathfrak g_{\mathrm{ma}}$ by permutating qubits). 
We observe that $\dim\mathfrak g$ grows almost proportionally to $\dim(\mathfrak{g}_{\rm ma} / \mathrm{Aut}(G))$ (see \Cref{fig:a9l1}), and conjecture that this scaling persists for larger graphs.

\begin{conjecture}
  There exists a constant $C > 1$, such that
  \begin{equation*}
    \dim \mathfrak{g} \le \dim(\mathfrak{g}_{\rm ma} / \mathrm{Aut}(G)) \le C \cdot \dim \mathfrak{g}.
  \end{equation*}
\end{conjecture}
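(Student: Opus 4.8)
The plan is to handle the two inequalities separately; only the right-hand one has real content. Write $\Gamma \triangleq \mathrm{Aut}(G)$, and observe that $\mathfrak g_{\rm ma}/\Gamma$ is exactly the subspace $\mathfrak g_{\rm ma}^{\Gamma}$ of elements fixed by every $\pi \in \Gamma$ (which is automatically a subalgebra, and is the largest such). For the left inequality I would simply invoke the fact quoted just before \Cref{cor:autg-trivial}, namely that every $H \in \mathfrak g$ satisfies $\pi(H) = H$ for all $\pi \in \Gamma$; hence $\mathfrak g \subseteq \mathfrak g_{\rm ma}^{\Gamma} = \mathfrak g_{\rm ma}/\Gamma$, so $\dim \mathfrak g \le \dim(\mathfrak g_{\rm ma}/\Gamma)$.

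For the right inequality, the first step is to make $\dim(\mathfrak g_{\rm ma}/\Gamma)$ explicit using \Cref{thm:zygu}. Up to an abelian center of dimension $O(1)$, $\mathfrak g_{\rm ma}$ is one or two copies of a classical simple Lie algebra acting on the parity sectors $\mathcal H_{\pm} \subseteq (\mathbb C^2)^{\otimes n}$, on which $\Gamma \subseteq \mathfrak S_n$ acts by permuting tensor factors. By Schur's lemma, the $\Gamma$-centralizer inside $\mathrm{End}(\mathcal H_{\pm})$ is $\bigoplus_{\lambda} \mathrm{End}(\mathbb C^{m^{\pm}_{\lambda}})$, where $m^{\pm}_{\lambda}$ is the multiplicity of the $\Gamma$-irrep $\lambda$ in $\mathcal H_{\pm}$; passing to the relevant real form ($\mathfrak u$-, $\mathfrak{so}$-, or $\mathfrak{sp}$-invariants according to the case of \Cref{thm:zygu}) gives $\dim(\mathfrak g_{\rm ma}/\Gamma) = \Theta\big(\sum_{\lambda,\pm} (m^{\pm}_{\lambda})^2\big)$. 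The conjecture is then equivalent to the lower bound $\dim \mathfrak g = \Omega\big(\sum_{\lambda,\pm} (m^{\pm}_{\lambda})^2\big)$, i.e.\ to the assertion that the QAOA dynamics saturates the symmetry-allowed dimension up to a constant factor.

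To produce such a lower bound I would exploit the splitting lemmas \Cref{lem:vsplit-int,lem:vsplit-ext,lem:esplit}. Even though \Cref{alg:dla} need not split a symmetric graph completely, running the splitting procedure on a fundamental domain for the $\Gamma$-action on $V$ yields elements of $\mathrm i\mathfrak g$ that respect the orbit structure; one can then try an induction on the orbit decomposition, reducing to ``symmetry blocks'' small enough to fall under the path, cycle, and complete-graph cases, where $\mathfrak g$ is known exactly and does fill the corresponding invariant subalgebra. Summing the blockwise estimates, together with control of how the $\Gamma$-isotypic components mix under the adjoint actions of $H_m$ and $H_p$, is what should deliver $\dim \mathfrak g \ge c \sum_{\lambda,\pm}(m^{\pm}_{\lambda})^2$ for an absolute constant $c$.

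The main obstacle --- and the reason the statement is only conjectured --- is obtaining a constant $C$ that is \emph{uniform in $G$}: the multiplicities $m^{\pm}_{\lambda}$ can grow with $n$, and being simultaneously a subalgebra and $\Gamma$-invariant does not a priori prevent $\mathfrak g$ from occupying only a vanishing fraction of $\mathfrak g_{\rm ma}^{\Gamma}$, so the required saturation seems to need a genuinely new argument rather than a dimension count. A secondary, more technical difficulty is bookkeeping: the $\mathfrak{su}$/$\mathfrak{so}$/$\mathfrak{sp}$ trichotomy of \Cref{thm:zygu}, the abelian centers, and the bipartite and cycle sub-cases each perturb the clean $\sum (m^{\pm}_{\lambda})^2$ count and would need separate handling. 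Failing a full proof, the numerics of \Cref{sec:numerical-results} (which suggest $C$ is small, indeed close to $1$ for graphs on up to $7$ vertices) provide the main supporting evidence.
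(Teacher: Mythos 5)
This statement is an open conjecture in the paper: the authors offer no proof, only the numerical evidence of \Cref{fig:a9l1} on graphs with at most $7$ vertices. There is therefore no ``paper proof'' to compare against, and your proposal should be judged as a research plan rather than as a proof attempt.

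Your treatment of the left inequality is correct and is exactly the reasoning the paper makes available: by the result quoted just before \Cref{cor:autg-trivial}, every $H\in\mathfrak g$ is fixed by every $\pi\in\mathrm{Aut}(G)$, and $\mathfrak g\subseteq\mathfrak g_{\rm ma}$, so $\mathfrak g$ is contained in the largest $\mathrm{Aut}(G)$-fixed subalgebra $\mathfrak g_{\rm ma}/\mathrm{Aut}(G)$ and the dimension inequality follows. Your identification of $\mathfrak g_{\rm ma}/\Gamma$ with the fixed-point subalgebra $\mathfrak g_{\rm ma}^{\Gamma}$, and the Schur-lemma count $\Theta\bigl(\sum_{\lambda,\pm}(m^{\pm}_{\lambda})^2\bigr)$ for its dimension, are reasonable at sketch level, modulo the caveat you yourself raise that the $\mathfrak{so}$ and $\mathfrak{sp}$ cases count invariants of a bilinear form and do not reduce to a bare sum of squared multiplicities.

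The right inequality is the entire content of the conjecture and your proposal does not establish it; to your credit, you say so explicitly. The specific weak points are: (i) the claim that running the splitting lemmas on a ``fundamental domain'' for the $\Gamma$-action yields elements of $\mathrm i\mathfrak g$ respecting the orbit structure is unsubstantiated --- the splitting lemmas only ever produce sums $X_S$ over degree-parity classes, which need not align with $\Gamma$-orbits, and \Cref{fig:hsvc} already shows the splitting procedure can fail even on asymmetric graphs; (ii) the proposed induction on ``symmetry blocks'' reducing to paths, cycles, and complete graphs has no mechanism for controlling how the isotypic components mix under $\mathrm{ad}_{H_p}$, which is precisely where a lower bound on $\dim\mathfrak g$ would have to come from; and (iii) as you note, nothing in the argument rules out $\mathfrak g$ occupying a vanishing fraction of $\mathfrak g_{\rm ma}^{\Gamma}$. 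So the proposal correctly frames the problem and settles the easy half, but the substantive half remains open, consistent with the paper's own status for this statement.
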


\begin{figure}[!ht]
  \centering
  \begin{tikzpicture}
  \begin{axis}[
      boxplot/draw direction=y,
      xlabel={Number of nodes},
      ylabel={Time (s)},
      xtick={4,5,6,7},
      width=10cm,
      height=6cm,
      grid=both,
      ymode=log,
    ]

    \addplot[
      only marks,
      mark=-,
      mark size=5pt,
      color=blue,
    ] table [x index=0, y index=1] {img/time-7u1t.txt};

    \addplot+[
      boxplot prepared={
          median=0.0003116130828857422,
          lower quartile=0.00015842914581298828,
          upper quartile=0.0007451772689819336,
          lower whisker=0.00010609626770019531,
          upper whisker=0.0012395381927490234
        },
      boxplot/draw position=4,
      boxplot/box extend=0.3,
      on layer=foreground,
      color=red,
    ] coordinates {};

    \addplot+[
      boxplot prepared={
          median=0.005404949188232422,
          lower quartile=0.0024271011352539062,
          upper quartile=0.006653308868408203,
          lower whisker=0.00023221969604492188,
          upper whisker=0.012992620468139648
        },
      boxplot/draw position=5,
      boxplot/box extend=0.3,
      on layer=foreground,
      color=red,
    ] coordinates {};

    \addplot+[
      boxplot prepared={
          median=0.2780325412750244,
          lower quartile=0.09842729568481445,
          upper quartile=0.6786850690841675,
          lower whisker=0.0006194114685058594,
          upper whisker=1.549071729183197
        },
      boxplot/draw position=6,
      boxplot/box extend=0.3,
      on layer=foreground,
      color=red,
    ] coordinates {};

    \addplot+[
      boxplot prepared={
          median=12.983344674110413,
          lower quartile=4.421620607376099,
          upper quartile=32.39087510108948,
          lower whisker=0.0010342597961425781,
          upper whisker=74.34475684165955
        },
      boxplot/draw position=7,
      boxplot/box extend=0.3,
      on layer=foreground,
      color=red,
    ] coordinates {};
  \end{axis}
\end{tikzpicture}
  \caption{Running time on small connected graphs.}
  \label{fig:1n9c}
\end{figure}
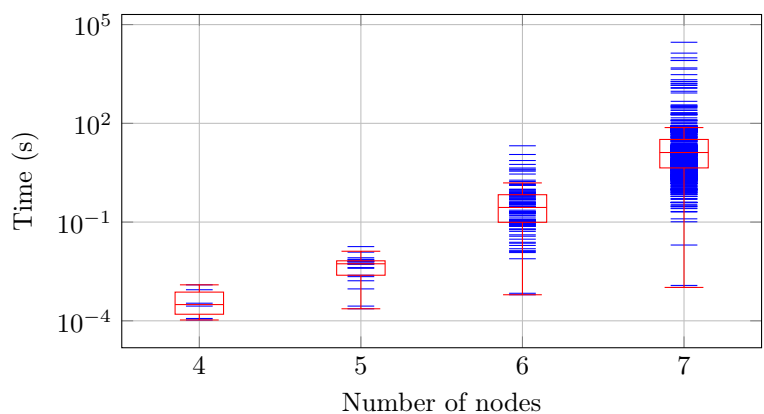

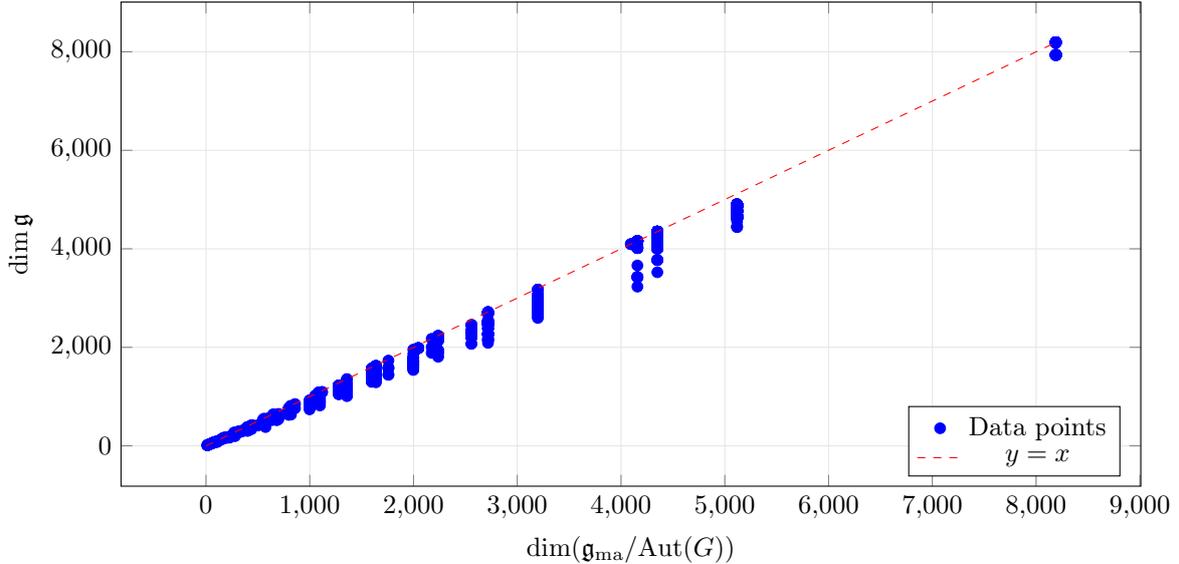
\begin{figure}[!ht]
  \centering
  \begin{tikzpicture}
  \begin{axis}[
    width=15cm,
    height=8cm,
    xlabel={$\dim (\mathfrak{g}_{\rm ma} / {\rm Aut}(G))$},
    ylabel={$\dim \mathfrak{g}$},
    grid=both,
    grid style={gray!20},
    legend style={at={(0.98,0.02)},anchor=south east},
    ]

    % scatter points from external file
    \addplot[
      only marks,
      mark=*,
      color=blue,
    ] table [x index=1, y index=0] {img/conj-gstq.txt};
    \addlegendentry{Data points}

    % dotted line y = x
    \addplot[
      domain=0:8190,  % adjust range as needed
      samples=2,
      dashed,
      color=red,
      on layer=foreground,
    ] {x};
    \addlegendentry{$y = x$}
  \end{axis}
\end{tikzpicture}
  \caption{DLA dimension vs number of orbits of the multi-angle DLA under $\mathrm{Aut}(G)$.}
  \label{fig:a9l1}
\end{figure}

\paragraph{Results on MQLib benchmark set}
% https://arxiv.org/abs/2507.22117
% https://github.com/MQLib/MQLib
% https://web.stanford.edu/~yyye/yyye/Gset/
We executed \Cref{alg:dla} on the MQLib benchmark set \cite{dunning2018what}, a library of MaxCut instances designed for evaluating classical heuristics.
This library comprises 3,506 graphs (snapshot taken 15 October 2025) ranging from random models to real-world applications, with up to 53,130 vertices.
Although the majority of instances are edge-weighted, we deliberately discarded the weights to keep our algorithm applicable.
This simplification is justified: many weighted instances carry random edge weights, which render the associated DLA free; omitting the weights therefore more likely yields an optimistic estimate of DLA dimension.

In addition to checking whether the DLAs are free, we also calculated lower bounds on the DLA dimensions in the following way. \Cref{alg:vsplit-ext} returns a vertex partition $\mathcal P$.
Let $V_{\rm iso} \triangleq \{ v \in V': V' \in \mathcal P, \abs{V'} = 1 \}$ denote the set of isolated vertices in the partition.
The multi-angle DLA of the induced subgraph $G[V_{\text{iso}}]$ must be a sub-algebra of the original DLA; its dimension thus provides a lower bound. Our results are reported in \Cref{tbl:7zvg}. The efficiency of \Cref{alg:dla} allows us to numerically analyze graphs of sizes (and corresponding DLA dimensions) that far surpass previous studies (e.g., \cite{larocca2022diagnosing} numerically studied QAOA-MaxCut DLAs for Erd\H{o}s-R\'enyi graphs on $n \le 6$ vertices and DLA dimensions $\le 10^4$). %Pei - this is Jon: let me change it to 6 :)
%\ruic{I think it is $n \le 6$, see their Figure 7b. Figure 7a only implies that they numerically studied the \emph{variance}, not the DLA dimension.}\pei{(checked, should be $6$.)}

\begin{table}[ht]
  \centering
  \begin{tabular}{lrr}
    \toprule
    Description                    & \#Instances & Proportion \\
    \midrule
    Total                          & 3\,506      & 100\%      \\
    Free DLA                       & 1\,994      & 57\%       \\
    \midrule
    $\dim\mathfrak{g} \ge 2^{32}$  & $\ge 2\,746$      & $\ge 78\%$       \\
    $\dim\mathfrak{g} \ge 2^{64}$  & $\ge 2\,701$      & $\ge 77\%$       \\
    $\dim\mathfrak{g} \ge 2^{128}$ & $\ge 2\,622$      & $\ge 75\%$       \\
    $\dim\mathfrak{g} \ge 2^{256}$ & $\ge 2\,260$      & $\ge 64\%$       \\
    $\dim\mathfrak{g} \ge 2^{512}$ & $\ge 1\,956$      & $\ge 56\%$       \\
    \bottomrule
  \end{tabular}
  \caption{
    Summary of results on the MQLib benchmark set \cite{dunning2018what}.
    Edge weights are discarded to give unweighted graphs.
    %\ruic{Note that I added `$\ge$' in the \#Instances and Proportion columns}
  }
  \label{tbl:7zvg}
\end{table}

\section{Generalization to other DLAs}\label{sec:vcgu}

% \paragraph{Comments}
% \begin{itemize}
%   \item
%         \syz{It'll be good if we could easily go beyond QAOA and study any 2-local Hermitian generators.
%           More specifically, consider the following two questions.
%           First, are all Lie algebras of the form $\langle P, QQ\rangle_{Lie}$ the same, i.e. $\langle X, YY\rangle_{Lie} \cong \langle X, ZZ\rangle_{Lie} \cong \langle Y, XX\rangle_{Lie} \cong \langle Y, ZZ\rangle_{Lie} \cong \langle Z, XX\rangle_{Lie} \cong \langle Z, YY\rangle_{Lie}$? The second question is inhomogeneous forms $\langle P, PP\rangle_{Lie}$, $\langle P, PQ\rangle_{Lie}$, $\langle P, QR\rangle_{Lie}$? $\langle P, PP\rangle_{Lie}$ is probably easy as $P$ commutes with $PP$.
%           How about the other two?}
% \end{itemize}

In this section, we show some applications of results in \Cref{sec:DLA-unweighted-graph,sec:DLA-weighted-graph} for other DLAs on graphs.

\paragraph{Results of isomorphic DLAs}
Since the DLA of QAOA-MaxCut $\mathfrak g$ is a sub-algebra of $\mathfrak u(2^n)$, any algebra automorphism $\sigma$ of $\mathfrak u(2^n)$ induces an algebra isomorphic to $\mathfrak g$:
\begin{equation*}
  \mathfrak g' \triangleq \LieClosure{\sigma(\mathrm iH_m), \sigma(\mathrm iH_p)} = \sigma\left(\LieClosure{\mathrm iH_m, \mathrm iH_p}\right) \cong \LieClosure{\mathrm iH_m, \mathrm iH_p} = \mathfrak g.
\end{equation*}
 The adjoint maps $\mathrm{Ad}_U(\cdot) = U (\cdot) U^\dagger$ for $U \in \mathrm U(2^n)$ are algebra automorphisms of $\mathfrak u(2^n)$.
Based on this observation, we show that the $X$ and $Z$ in $H_m$ and $H_p$ can be replaced by any distinct and non-identity Pauli operators $P$ and $Q$.

\begin{proposition}
  For an $n$-vertex graph $G=(V,E,\bm r)$ and Pauli operators $P, Q \in \{X, Y, Z\}$, let $H_m^P\triangleq\sum_{j=1}^n P_j$ and $H_p^Q\triangleq \sum_{(j,k)\in E} r_{jk} Q_jQ_k$.
  If $P \ne Q $, then $\LieClosure{\mathrm iH_m^P, \mathrm iH_p^Q} \cong \LieClosure{\mathrm iH_m, \mathrm iH_p}$.
\end{proposition}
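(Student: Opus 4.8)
The plan is to exhibit the asserted isomorphism explicitly as an adjoint action, exactly along the lines of the displayed computation preceding the proposition. For any $U\in\mathrm U(2^n)$ the map $\mathrm{Ad}_U(\cdot)=U(\cdot)U^\dagger$ is an algebra automorphism of $\mathfrak u(2^n)$, and algebra automorphisms commute with the operation of taking a Lie closure; hence it suffices to produce a unitary $U$ with $\mathrm{Ad}_U(\mathrm iH_m)=\mathrm iH_m^P$ and $\mathrm{Ad}_U(\mathrm iH_p)=\mathrm iH_p^Q$, for then
\begin{equation*}
  \LieClosure{\mathrm iH_m^P,\mathrm iH_p^Q}=\mathrm{Ad}_U\left(\LieClosure{\mathrm iH_m,\mathrm iH_p}\right)\cong\LieClosure{\mathrm iH_m,\mathrm iH_p}.
\end{equation*}
I would take $U$ of the product form $U=W^{\otimes n}$ for a suitable single-qubit $W\in\mathrm U(2)$: if $WXW^\dagger=P$ and $WZW^\dagger=Q$, then $\mathrm{Ad}_U(\mathrm i\sum_j X_j)=\mathrm i\sum_j(WXW^\dagger)_j=\mathrm iH_m^P$, and since conjugation by a tensor product acts factor-wise, $\mathrm{Ad}_U(\mathrm i\sum_{(j,k)\in E}r_{jk}Z_jZ_k)=\mathrm i\sum_{(j,k)\in E}r_{jk}(WZW^\dagger)_j(WZW^\dagger)_k=\mathrm iH_p^Q$.

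It remains to build $W$. Conjugation by $\mathrm{SU}(2)$ on the real span of $\{X,Y,Z\}$ realizes precisely the rotation group $\mathrm{SO}(3)$, and the Hilbert--Schmidt-normalized Paulis form an orthonormal frame. Because $P\ne Q$ are distinct and nonidentity, $\tr(PQ)=0$, so the unit vectors $\hat p,\hat q$ representing $P,Q$ are orthogonal; there is then a (unique) rotation sending the first basis Pauli $X$ to $\hat p$, the third basis Pauli $Z$ to $\hat q$, and $Y$ to $\hat q\times\hat p$, which completes $\{\hat p,\hat q\}$ to a right-handed orthonormal triple. Lifting this rotation gives the desired $W\in\mathrm{SU}(2)$ (and one finds automatically $WYW^\dagger=\pm R$ for the third Pauli $R$, with the sign dictated by $PQ=\pm\mathrm iR$). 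Alternatively, and perhaps more transparently, I would simply enumerate the six ordered pairs $(P,Q)$ with $P\ne Q$ and list the corresponding single-qubit Clifford $W$ built from the Hadamard and phase gates, checking the two conjugation identities by direct matrix computation.

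The proposition is essentially a corollary of the adjoint-automorphism principle, so there is no deep obstacle; the one point that warrants care is the existence of a single $W$ simultaneously satisfying $WXW^\dagger=P$ and $WZW^\dagger=Q$. This is exactly where the hypothesis $P\ne Q$ enters: if $P=Q$ the two target vectors $\hat p,\hat q$ would coincide and no such conjugation could exist, whereas for $P\ne Q$ their orthogonality makes the rotation (hence $W$, hence $U=W^{\otimes n}$) available, which is all that is needed to finish.
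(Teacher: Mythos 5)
Your proposal is correct and follows essentially the same route as the paper: conjugation by a tensor power $U=W^{\otimes n}$ of a single-qubit unitary, using that $\mathrm{Ad}_U$ is an automorphism of $\mathfrak u(2^n)$ commuting with Lie closure. The only cosmetic difference is that the paper takes $W$ to be a single-qubit Clifford and tolerates signs $H_m^P=\pm UH_mU^\dagger$, $H_p^Q=\pm UH_pU^\dagger$ (harmless for the closure), whereas you lift an $\mathrm{SO}(3)$ rotation to $\mathrm{SU}(2)$ to get exact equalities; both are valid.
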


\begin{proof}
  It suffices to find a unitary $U \in \mathrm U(2^n)$, such that
  \begin{equation}\label{eq:fas9}
    H_m^P = \pm U H_m U^\dagger \qq{and} H_p^Q = \pm U H_p U^\dagger.
  \end{equation}
  Recall that
  \begin{align*}
    H (X, Y, Z) H^\dagger & = (Z, -Y, X), \\
    S (X, Y, Z) S^\dagger & = (Y, -X, Z),
  \end{align*}
  where $H$ is the Hadamard gate and $S$ is the phase gate.
  Hence, ignoring phases, the action of the single-qubit Clifford group on $\{X,Y,Z\}$ is the full symmetric group $\mathfrak S_3$.
  Let $C$ be the single-qubit Clifford that maps (by conjugation) $X$ to $\pm P$, and $Z$ to $\pm Q$.
  It is easily verified that $U \triangleq C^{\otimes n}$ is the desired unitary for \cref{eq:fas9} to hold.
\end{proof}

Therefore, all conclusions about $\mathfrak{g}=\LieClosure{\mathrm iH_m, \mathrm iH_p}$ also hold for $\LieClosure{\mathrm iH_m^P, \mathrm iH_p^Q}$.

\paragraph{Results of larger DLAs}
We can extend the results to beyond QAOA-MaxCut DLAs within the framework of Geometric Quantum Machine Learning (GQML) \cite{larocca2022group,ragone2022representation,meyer2023exploiting}.
The key insight of GQML is that one should construct PQCs whose structure is constrained to be invariant under the relevant symmetry group $\mathfrak S_n$, such as the automorphism group $\mathrm{Aut}(G)$ of any graph $G$. 
  While prior studies \cite{schatzki2024theoretical,albertini2018controllability} have examined the DLAs of $\mathfrak S_n$-equivariant quantum neural networks (QNN)—where $\mathfrak S_n$ is the automorphism group of the unweighted complete graph—one can also investigate the DLAs corresponding to more general graphs.

% If $\mathfrak g$ is a sub-algebra of $\mathfrak g'$, then $\dim \mathfrak g \le \dim \mathfrak g'$.
% Based on this simple observation, we can give dimension lower bounds for larger DLAs.
More specifically, suppose that the DLA of a GQML QNN defined by a weighted graph $G = (V, E, \bm r)$ is
\begin{equation}\label{eq:tg1h}
  \mathfrak g' = \LieClosure{\mathrm i\sum_{u \in V} X_u, \mathrm i\sum_{u \in V} Y_u, \mathrm i\sum_{(u, v) \in E} r_{uv} Z_u Z_v}
\end{equation}
This generalizes the $\mathfrak S_n$-equivariant QNN where $G$ is a complete graph and $\bm r \equiv 1$,
%, such DLA was considered in \pei{\cite{schatzki2024theoretical,albertini2018controllability}}.
and we can identify this $\.g'$ for many cases as shown below.

\begin{lemma}[\cite{kokcu2024classification}]\label{lem:DLA-XYZZ}
  For any connected graph $G=(V,E)$ with $n\ge 3$ vertices, we have
  \begin{equation}\label{eq:DLA-XYZZ}
    \mathfrak{g}''= \LieClosure{\{\mathrm{i} X_uX_v,\mathrm{i}X_uY_v,\mathrm{i}X_uZ_v:~(u,v)\in E\}}=\mathfrak{su}(2^n).
  \end{equation}
\end{lemma}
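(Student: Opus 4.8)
The plan is to show $\mathfrak{g}'' = \mathfrak{su}(2^n)$ by proving two inclusions. The inclusion $\mathfrak{g}'' \subseteq \mathfrak{su}(2^n)$ is immediate: every generator $\mathrm{i}X_uX_v$, $\mathrm{i}X_uY_v$, $\mathrm{i}X_uZ_v$ is traceless and anti-Hermitian, and $\mathfrak{su}(2^n)$ is closed under Lie brackets, so the Lie closure stays inside $\mathfrak{su}(2^n)$. The real content is the reverse inclusion $\mathfrak{su}(2^n) \subseteq \mathfrak{g}''$, and here the standard strategy is to show that $\mathfrak{g}''$ contains \emph{every} Pauli string (times $\mathrm{i}$), since the $4^n - 1$ nontrivial Pauli strings span $\mathfrak{su}(2^n)$ as a real vector space.

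First I would fix a spanning tree of $G$ (using connectedness) and work edge by edge. On a single edge $(u,v)$, the three generators $X_uX_v, X_uY_v, X_uZ_v$ together with their iterated commutators generate a copy of $\mathfrak{su}(4)$ supported on qubits $u,v$: for instance $[\mathrm{i}X_uY_v, \mathrm{i}X_uZ_v] \propto \mathrm{i}\,I_u\,[Y_v,Z_v] \propto \mathrm{i}X_v$, and similarly $[\mathrm{i}X_uX_v,\mathrm{i}X_uY_v]\propto \mathrm{i}Z_v$, $[\mathrm{i}Y_v,\mathrm{i}Z_v]\propto \mathrm{i}X_v$ confirms we get all single-qubit Paulis on $v$; then $[\mathrm{i}X_uX_v,\mathrm{i}Z_v]\propto \mathrm{i}X_uY_v$ etc.\ and brackets with the single-qubit $v$-operators produce $X_u P_v$ for all $P$, after which $[\mathrm{i}X_uY_v,\mathrm{i}X_uZ_v]$-type brackets feed back single-qubit operators on $u$. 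A short finite check establishes that this local subalgebra is all of $\mathfrak{su}(4)$ on $\{u,v\}$. In particular $\mathrm{i}Z_u, \mathrm{i}Z_v, \mathrm{i}Z_uZ_v \in \mathfrak{g}''$ for every edge. Next I would propagate: moving along the tree, if a qubit $w$ already carries a full $\mathfrak{su}(2)$ (i.e.\ all $\mathrm{i}P_w \in \mathfrak{g}''$) and $(w,x)\in E$, then the edge generators $X_wX_x$, $X_wY_x$, $X_wZ_x$ combined with the single-qubit operators on $w$ yield all single-qubit operators on $x$ and all weight-2 strings on $\{w,x\}$; iterating over the tree gives $\mathrm{i}P_v \in \mathfrak{g}''$ for every vertex $v$ and every single-qubit Pauli $P$, and $\mathrm{i}P_uQ_v \in \mathfrak{g}''$ for every edge. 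Finally, since $\mathfrak{g}''$ contains all single-qubit $\mathrm{i}X_v$ and at least one genuinely entangling two-qubit generator, a standard ``generating-set for $\mathfrak{su}(2^n)$'' argument (all single-qubit operators plus one entangling gate generate the full unitary Lie algebra) upgrades this to all $4^n-1$ Pauli strings, hence $\mathfrak{g}'' = \mathfrak{su}(2^n)$. Concretely, once all $\mathrm{i}P_v$ and all edge operators $\mathrm{i}Z_uZ_v$ are in hand, one builds up arbitrary-weight strings by repeatedly bracketing a known Pauli string with a single-qubit operator on a new qubit connected through the graph.

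The step I expect to be the main obstacle is verifying cleanly that a single edge's three generators generate the \emph{entire} $\mathfrak{su}(4)$ and not merely a proper subalgebra (e.g.\ one must rule out landing in something like $\mathfrak{so}(4)$ or an $\mathfrak{su}(2)\oplus\mathfrak{su}(2)$); this requires checking that the commutators produce a set of anti-Hermitian operators whose real span has dimension $15$, ideally by exhibiting $15$ explicit linearly independent Pauli strings obtained from nested brackets. Once that local computation is pinned down, the propagation along a spanning tree and the final ``single-qubit operators plus one entangler suffice'' argument are routine. Alternatively, this lemma is quoted from \cite{kokcu2024classification}, so one may simply cite their classification; but the self-contained route above is short and is the natural way to reprove it.
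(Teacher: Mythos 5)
The paper does not actually prove this lemma---it is quoted from \cite{kokcu2024classification}---so the only thing to assess is whether your self-contained argument is sound, and it has a genuine gap at exactly the step you flagged as the main obstacle. The three operators $X_uX_v,\ X_uY_v,\ X_uZ_v$ with $X$ pinned to the same vertex $u$ do \emph{not} generate $\mathfrak{su}(4)$: since $[X_u\otimes A,\,X_u\otimes B]=I_u\otimes[A,B]$ and $[I_u\otimes A,\,X_u\otimes B]=X_u\otimes[A,B]$, the real span of $\{X_v,Y_v,Z_v,X_uX_v,X_uY_v,X_uZ_v\}$ is already closed under the bracket; it is the $6$-dimensional algebra isomorphic to $\mathfrak{su}(2)\oplus\mathfrak{su}(2)$ (split by the projectors $(I\pm X_u)/2$), i.e.\ precisely one of the proper subalgebras you said must be ruled out. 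In particular your claim that ``$[\mathrm{i}X_uY_v,\mathrm{i}X_uZ_v]$-type brackets feed back single-qubit operators on $u$'' is false: that bracket equals $-2\mathrm{i}X_v$, supported on $v$, and no nested bracket of these three generators ever produces $Y_u$, $Z_u$, or even $X_u$ alone. If the generating set really contained only one orientation per edge, the lemma itself would fail (e.g.\ on the path $1\!-\!2\!-\!3$ with edges oriented $(1,2),(2,3)$, every generator touching qubit $1$ carries $X_1$, so nothing with $Y_1$ or $Z_1$ is ever produced).

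The repair is to read $E$ symmetrically, so that for each undirected edge both $X_uP_v$ and $X_vP_u=P_uX_v$ are generators; this is the convention the paper itself relies on, since the corollary immediately following the lemma takes $\mathrm{i}Z_uX_v\in\mathfrak{g}''$ as given. With $Y_uX_v$ and $Z_uX_v$ available, $[Y_uX_v,Z_uX_v]\propto \mathrm{i}X_u$ and its companions yield the full $\mathfrak{su}(2)$ on $u$ as well, and then $[P_u,\,X_uQ_v]\propto R_uQ_v$ produces all nine two-body strings, so each edge does carry a full $\mathfrak{su}(4)$. From there your outline works, modulo one more loose phrase: you cannot extend the support of a Pauli string by ``bracketing with a single-qubit operator on a new qubit'' (Paulis with disjoint supports commute); you must bracket with a two-body edge operator straddling the current support and the new vertex, which connectedness supplies. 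With those two corrections the argument is the standard universality proof and closes the lemma.
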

\begin{corollary}
  For a weighted connected graph $G = (V, E, \bm r)$, define $\mathfrak g'$ according to \cref{eq:tg1h}.
  \begin{itemize}
    \item
          If the weights $\bm r$ satisfies $r_{uv}\neq 0$ for any $(u,v)\in E$ and \cref{eq:weight-constraint}, then $\mathfrak g'=\mathfrak{su}(2^n)$ and $\dim \mathfrak g' = 4^n-1$.
    \item
          If the weights $\bm r \equiv 1$ and $G$ is an asymmetric odd graph, $\mathfrak g'=\mathfrak{su}(2^n)$ and $\dim \mathfrak g' = 4^n-1$.
    \item
          If the weights $\bm r \equiv 1$, DLAs $\mathfrak{g}'$ on almost all graphs are $\mathfrak{su}(2^n)$ with dimension $4^n-1$.
  \end{itemize}
\end{corollary}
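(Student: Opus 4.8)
The plan is to reduce the corollary to \Cref{lem:DLA-XYZZ} by showing that in all three cases $\mathfrak g'$ (defined in \cref{eq:tg1h}) contains $\mathrm{i}X_uX_v$, $\mathrm{i}X_uY_v$ and $\mathrm{i}X_uZ_v$ for every edge $(u,v)\in E$. The inclusion $\mathfrak g'\subseteq\mathfrak{su}(2^n)$ is automatic, since all three generators are traceless anti-Hermitian matrices, so once the reverse inclusion is established we get $\mathfrak g'=\mathfrak{su}(2^n)$ and $\dim\mathfrak g'=4^n-1$. Throughout I would assume $n\ge 3$: for $n=1$ there are no edges and $\mathfrak g'=\LieClosure{\mathrm{i}X,\mathrm{i}Y}=\mathfrak{su}(2)$ directly, while $n=2$ is vacuous in bullet 1 (the only connected $2$-vertex graph is a single edge, which violates \cref{eq:weight-constraint}) and in bullet 2 (an asymmetric subdivision of an odd graph has at least $3$ vertices), and is irrelevant to the asymptotic bullet 3.

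The first step is to observe that $\LieClosure{\mathrm{i}\sum_{u\in V}X_u,\ \mathrm{i}\sum_{(u,v)\in E}r_{uv}Z_uZ_v}$ is precisely the QAOA--MaxCut DLA $\mathfrak g$ of $G$, hence $\mathfrak g\subseteq\mathfrak g'$. Under the hypotheses of bullet 1 this is \Cref{thm:weighted}; under bullet 2 it is \Cref{thm:5xv6} applied to the asymmetric subdivision of an odd graph (which is connected with $n\ge3$); under bullet 3 it is the first statement of \Cref{thm:DLA-ER-main}, which simultaneously gives connectivity and $\mathfrak g=\mathfrak g_{\rm ma}$ with probability $1-\exp(-\Omega(n))$ (and likewise when $G$ is drawn uniformly from non-isomorphic graphs). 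In each case $\mathfrak g=\mathfrak g_{\rm ma}$, so $\mathrm{i}X_u\in\mathfrak g'$ for all $u\in V$ and $\mathrm{i}Z_uZ_v\in\mathfrak g'$ for all $(u,v)\in E$. The extra mixer generator $\mathrm{i}\sum_{u\in V}Y_u$ is what makes the DLA large: since only its $u$-th term fails to commute with $X_u$, one computes $[\mathrm{i}\sum_{w}Y_w,\mathrm{i}X_u]=2\,\mathrm{i}Z_u$, so $\mathrm{i}Z_u\in\mathfrak g'$; then $[\mathrm{i}Z_u,\mathrm{i}X_u]=-2\,\mathrm{i}Y_u$ shows $\mathrm{i}Y_u\in\mathfrak g'$, for every $u\in V$.

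With all single-qubit Paulis and all $\mathrm{i}Z_uZ_v$ now in $\mathfrak g'$, a short commutator chain on each edge $(u,v)$ produces the required weight-two strings: $[\mathrm{i}Y_u,\mathrm{i}Z_uZ_v]\propto\mathrm{i}X_uZ_v$, then $[\mathrm{i}X_uZ_v,\mathrm{i}Y_v]\propto\mathrm{i}X_uX_v$ and $[\mathrm{i}X_uZ_v,\mathrm{i}X_v]\propto\mathrm{i}X_uY_v$, all with nonzero constants. (Equivalently, conjugating $\mathrm{i}Z_uZ_v$ by the single-qubit rotations $e^{\mathrm{i}tY_u}$, which lie in the Lie group of $\mathfrak g'$ and hence preserve $\mathfrak g'$, achieves the same thing; the explicit commutators are cleaner.) Thus $\{\mathrm{i}X_uX_v,\mathrm{i}X_uY_v,\mathrm{i}X_uZ_v:(u,v)\in E\}\subseteq\mathfrak g'$, and since $G$ is connected with $n\ge3$, \Cref{lem:DLA-XYZZ} gives $\mathfrak{su}(2^n)=\LieClosure{\{\mathrm{i}X_uX_v,\mathrm{i}X_uY_v,\mathrm{i}X_uZ_v:(u,v)\in E\}}\subseteq\mathfrak g'$. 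Together with $\mathfrak g'\subseteq\mathfrak{su}(2^n)$ this yields $\mathfrak g'=\mathfrak{su}(2^n)$ of dimension $4^n-1$, and for bullet 3 the exceptional graphs are absorbed into the $\exp(-\Omega(n))$ failure probability of \Cref{thm:DLA-ER-main}.

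This is largely a bookkeeping argument layered on the earlier theorems, so I do not expect a serious obstacle. The two points that need care are: (i) verifying that the $Y$-mixer generator genuinely unlocks all single-qubit Paulis -- the content is the single commutator $[\mathrm{i}\sum_wY_w,\mathrm{i}X_u]\propto\mathrm{i}Z_u$, which is exactly where the hypothesis $\mathfrak g=\mathfrak g_{\rm ma}$ (equivalently, freeness of $\mathfrak g$, supplying the individual $\mathrm{i}X_u$) is used and cannot be dropped, since a small $\mathfrak g$ need not be enlarged to $\mathfrak{su}(2^n)$ by adding the $Y$-mixer; and (ii) confirming that the hypotheses of \Cref{lem:DLA-XYZZ} ($G$ connected, $n\ge3$) hold in each regime, which for the random-graph case means invoking the connectivity half of \Cref{thm:DLA-ER-main} alongside the freeness half.
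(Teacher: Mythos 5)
Your proposal is correct and follows essentially the same route as the paper: in each bullet, freeness of the sub-DLA $\mathfrak g$ (from \Cref{thm:weighted}, \Cref{thm:5xv6}, or \Cref{thm:DLA-ER-main}) supplies the individual $\mathrm iX_u$ and $\mathrm iZ_uZ_v$, the $Y$-mixer is commuted against $X_u$ to extract each $\mathrm iY_u$, and the result is reduced to \Cref{lem:DLA-XYZZ}. The only difference is cosmetic — the paper routes through the intermediate algebra $\mathfrak g'_{\rm ma}=\LieClosure{\{\mathrm iX_v,\mathrm iY_v,\mathrm iZ_uZ_v\}}$ and uses a double commutator to isolate $\mathrm iY_u$, whereas you derive the weight-two generators of \Cref{lem:DLA-XYZZ} directly by single commutators.
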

\begin{proof}
  Define the free DLA $\mathfrak{g}'_{\rm ma}=\LieClosure{\{\mathrm{i}X_v,\mathrm{i}Y_v,\mathrm{i}Z_uZ_v: v\in V,(u,v)\in E\}}$.
  It is easily verified that $\mathfrak{g}'_{\rm ma}\subseteq \mathfrak{g}''$ and $ \mathfrak{g}''\subseteq \mathfrak{g}'_{\rm ma}$ ($\mathfrak{g}''$ is defined in \cref{eq:DLA-XYZZ}) by
  \begin{align*}
     & \frac{1}{2}[\mathrm{i} X_uX_v,\mathrm{i}X_uZ_v]=\mathrm{i}Y_v\in\mathfrak{g}'' ,~ \frac{1}{4}[[\mathrm{i}X_uY_v,\mathrm{i}X_uX_v],\mathrm{i}Y_v]=\mathrm{i}X_v\in\mathfrak{g}'' ,~ \frac{1}{2}[\mathrm{i}Y_v,\mathrm{i}Z_uX_v]=\mathrm{i}Z_uZ_v\in\mathfrak{g}'',                              \\
     & \frac{1}{2}[\mathrm{i}Z_uZ_v,\mathrm{i}Y_u]=\mathrm{i}X_uZ_v\in\mathfrak{g}'_{\rm ma},~\frac{1}{2}[\mathrm{i}X_v,\mathrm{i}X_uZ_v]=\mathrm{i}X_uY_v\in\mathfrak{g}'_{\rm ma},~\frac{1}{2}[\mathrm{i}X_uZ_v,\mathrm{i}Y_v]=\mathrm{i}X_uX_v\in\mathfrak{g}'_{\rm ma}.
  \end{align*}
  Therefore, $\mathfrak{g}''=\mathfrak{g}'_{\rm ma}=\mathfrak{su}(2^n)$ with dimension $4^n-1$ by \Cref{lem:DLA-XYZZ}.
  If the subalgebra $\mathfrak{g}$ (defined in \cref{eq:weighted-DLA}) of $\mathfrak{g}'$ is free, then $\mathrm{i}X_u\in \mathfrak{g}'$ for any $u\in V$.
  Then we have
  \begin{equation*}
    -\frac{1}{4}[\mathrm{i} X_u,[\mathrm{i}X_u,\mathrm{i}\sum_{u\in V}Y_u]]=\mathrm{i}Y_u\in\mathfrak{g}',
  \end{equation*}
  which implies $\mathfrak{g}'=\mathfrak{g}'_{\rm ma}=\mathfrak{su}(2^n)$.
  The proof follows from \Cref{thm:weighted,thm:5xv6,thm:DLA-ER-main}.
\end{proof}

\section{Discussion}\label{sec:conclusion}

In this work we have shown that QAOA-MaxCut DLAs for most weighted and unweighted graphs are free. From the known classification of multi-angle QAOA DLAs, it follows that these DLAs scale exponentially in the number of qubits $n$, and hence the associated algorithms will suffer from barren plateaus.
This aligns with the common phenomena in complexity theory that a random object usually has high complexity, and efficient algorithms need to identify and exploit combinatorial or algebraic structures inside a problem. In the case of variational quantum algorithms, note that for a DLA to be free, the automorphism group of the associated graph must necessarily be trivial (although the converse is not true, see \Cref{fig:yeld}).  While random graphs will, with high probability, have trivial automorphism groups,  problems of practical interest often have structure and symmetry which manifest themselves in non-trivial automorphisms. In addition, it may be possible to alter the structure of a graph in ways that increase symmetry while transforming the MaxCut in a predictable way (for example, removing degree-$1$ vertices), which leads to an interesting notion of DLA-engineering.  This motivates further study into the precise relationship between the DLA dimension and graph symmetry, and the door is open for QAOA-MaxCut to be trainable on graphs of practical interest.

\begin{figure}
  \centering
  \import{./img/}{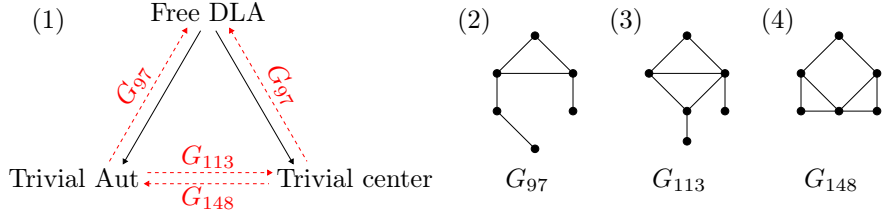}
  \caption{
    (1) Relationship between three properties of a graph $G$.
    \textbf{Free DLA}: the QAOA-MaxCut DLA of the graph is free.
    \textbf{Trivial $Aut(G)$}: the automorphism group of the graph is trivial, i.e., consists of only the identity permutation.
    \textbf{Trivial center}: the QAOA-MaxCut DLA of the graph has trivial center, i.e., the center consists of only the zero element.
    ``Free DLA'' implies ``Trivial Aut'' (\Cref{cor:autg-trivial}) and ``Trivial center'' (\Cref{lem:free-semisimple}) (solid arrows), while all other directions have counter examples (red dashed arrows and red labels). For example, graph $G_{97}$ has a trivial automorphism group, but its DLA is not free.
    (2)-(4) Counter-examples $G_{97}, G_{113}$ and $G_{148}$.
    These graphs come from the Graph Atlas \cite{networkx_graph_atlas_g}.
  }
  \label{fig:yeld}
\end{figure}

\bibliographystyle{halpha}
\bibliography{reference}

\appendix
\section{An alternative proof of \Cref{thm:DLA-ER-main}}\label{append:main-result-ER}

In this section, we provide an alternative proof of the first statement in \Cref{thm:DLA-ER-main}, which we restate below:
\thmunweightedBP*
As shown in the main text, the remaining statements follow quite straightforwardly from the first.

Our approach will be to analyze a recursive algorithm (\Cref{alg:weak}), a simpler (though weaker) variant of \Cref{alg:dla}.
We will show that \Cref{alg:weak} splits the generator $X_V$ of ER graphs $G(n, \frac12)$ completely with probability at least $1-\exp(-\Theta(n))$.
It follows from \Cref{lem:w547} that DLAs on such graphs are free with at least this probability.
\Cref{alg:weak} takes a graph $G=(V,E)\sim G(n,\frac{1}{2})$ and a hyperparameter $N \in \mathbb N_+$ as input, and does the following things:
\begin{itemize}
  \item
        It determines whether the DLA is free by brute force if $|V| \le N$.
  \item
        Otherwise, it applies the internal split (\Cref{lem:vsplit-int}) once.
        If this divides the graph into two moderate sized subgraphs, it recurses on each of them.
\end{itemize}

\SetKwFunction{CheckFreeDLA}{CheckFreeDLA}

\begin{algorithm}[ht]
  \caption{CheckFreeDLA($G$, $N$)}\label{alg:weak}

  \KwData{A graph $G = (V,E)$, $N \in \mathbb N_+$}
  \KwResult{\texttt{true} indicating that the DLA on $G$ is free, \texttt{false} if not sure}

  \BlankLine
  \If{$|V| \le N$}{
    Compute the DLA of $G$ by brute force\;
    \lIf{The DLA is free}{\Return{\texttt{true}}}
    \lElse{\Return{\texttt{false}}}
  }

  \BlankLine
  $V_e \gets \{ u \in V : \deg(u) \text{ is even} \}$\;\label{ln:gah2}
  $V_o \gets \{ u \in V : \deg(u) \text{ is odd} \}$\;\label{ln:gbvv}

  \BlankLine
  \lIf{$\abs{V_o} < 4$ or $\abs{V_e} < 4$}{\Return{\tt false}}

  \BlankLine
  \uIf{\CheckFreeDLA{$G[V_o]$, $N$} is \texttt{true}\label{ln:ph0j}}{
    \lIf{$\forall u,v \in V_e : \mathcal{N}(u) \cap V_o \neq \mathcal{N}(v) \cap V_o$}{\Return{\texttt{true}}}\label{ln:0r8g}
  }\ElseIf{\CheckFreeDLA{$G[V_e]$, $N$} is \texttt{true}\label{ln:9ufi}}{
    \lIf{$\forall u,v \in V_o : \mathcal{N}(u) \cap V_e \neq \mathcal{N}(v) \cap V_e$}{\Return{\texttt{true}}}\label{ln:ur9k}
  }

  \BlankLine
  \Return{\texttt{false}}
\end{algorithm}

The correctness of the recursion follows from \Cref{lem:vsplit-int,lem:w547,lem:vsplit-ext} --- We have $X_{V_o},X_{V_e}\in\mathrm{i}\mathfrak{g}$ by \Cref{lem:vsplit-int}, which guarantees a recursive implementation of the algorithm.
If Line \ref{ln:ph0j} returns \texttt{true}, then $X_w\in\mathrm{i}\mathfrak{g}$ for any $w\in V_o$.
For any $u,v\in V_e$, if $\mathcal{N}(u) \cap V_o \neq \mathcal{N}(v) \cap V_o$, then there exists a $w\in (\mathcal{N}(u) \cap V_o)\Delta (\mathcal{N}(v) \cap V_o)$.
Without loss of generality, assume that $u\in \mathcal{N}(w)\cap V_e$ and $v\notin \mathcal{N}(w)\cap V_e$.
By \Cref{lem:vsplit-ext} and the fact that $X_w,X_{V_e}\in\mathrm{i}\mathfrak{g}$, we deduce that $X_{\mathcal{N}(w)\cap V_e}, X_{V_e\backslash (\mathcal{N}(w)\cap V_e)}\in\mathrm{i}\mathfrak{g}$.
Recursively applying this argument to $X_{\mathcal{N}(w)\cap V_e}$ and $X_{V_e\backslash (\mathcal{N}(w)\cap V_e)}$, we conclude that $X_u\in\mathrm{i}\mathfrak{g}$ for every $u\in V_e$.
Finally, \Cref{lem:w547} ensures that the DLA is free, since $X_u,X_w\in \mathrm{i}\mathfrak{g}$ for all $u\in V_e$ and $w\in V_o$.
This justifies the correctness of Line \ref{ln:0r8g}.
A similar argument applies to Line \ref{ln:ur9k}.

The main result of this section is the following, which implies \Cref{thm:DLA-ER-main} directly.

\begin{restatable}{theorem}{ERgraphDLA}\label{thm:qp9g}
  For some constant $N \in \mathbb N_+$ and all $n \ge 7$, \Cref{alg:weak} on input $(G, N)$ with $G \sim G(n,\frac{1}{2})$ returns \texttt{true} with probability at least $1-\exp(-\Theta(n))$.
\end{restatable}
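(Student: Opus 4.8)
The plan is to convert \Cref{alg:weak} into a self-improving recursion for the failure probability $F(n)\triangleq\Pr_{G\sim G(n,1/2)}[\,\texttt{CheckFreeDLA}(G,N)=\texttt{false}\,]$ and then solve that recursion. Fix $n>N$ and condition on $V_o=S$ with $|S|=k$. By \Cref{lem:subgraph-independ} the triple $(G[V_o],\,G[V_e],\,\text{crossing edges after deleting one vertex per side})$ is distributed as independent copies of $G(k,\tfrac12)$, $G(n-k,\tfrac12)$ and $G(k-1,n-k-1,\tfrac12)$. Hence the events ``$\texttt{CheckFreeDLA}(G[V_o])=\texttt{true}$'', ``$\texttt{CheckFreeDLA}(G[V_e])=\texttt{true}$'' and the two neighbourhood‑distinctness tests on Lines \ref{ln:0r8g} and \ref{ln:ur9k} are mutually independent, the first two succeeding with probabilities $1-F(k)$ and $1-F(n-k)$; a union bound over pairs of vertices, using that two prescribed rows of a uniform $0/1$ matrix with $r$ columns agree with probability $2^{-r}$, bounds the failure probability of each test by $n^2 2^{-(\min(k,n-k)-1)}$. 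Since \Cref{alg:weak} returns \texttt{true} whenever $\texttt{CheckFreeDLA}(G[V_o])$ returns \texttt{true} and Line \ref{ln:0r8g} passes, and (disjointly) whenever $\texttt{CheckFreeDLA}(G[V_o])$ fails but $\texttt{CheckFreeDLA}(G[V_e])$ returns \texttt{true} and Line \ref{ln:ur9k} passes, combining this with the concentration bound $\Pr[\,|V_o|\notin[n/4,3n/4]\,]\le 2e^{-n/24}$ of \Cref{cor:Vo-size-prob} yields
\begin{equation*}
  F(n)\ \le\ \delta_n\ +\ \max_{n/4\le k\le 3n/4} F(k)\,F(n-k),\qquad \delta_n\triangleq 2e^{-n/24}+2n^2 2^{-(n/4-1)}=e^{-\Omega(n)}.
\end{equation*}
The correctness of Lines \ref{ln:0r8g} and \ref{ln:ur9k} — that passing these tests genuinely certifies $X_u\in\mathrm i\mathfrak g$ for all $u$, hence freeness via \Cref{lem:w547} — is exactly the argument spelled out right after \Cref{alg:weak}, using \Cref{lem:vsplit-int,lem:vsplit-ext}.

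Next I would supply the seed: for a constant $q_0>0$, $F(m)\le 1-q_0$ for every $m$ with $7\le m\le N$. On such inputs the algorithm brute‑forces the DLA, and for every $m\ge 7$ there is an $m$‑vertex graph whose DLA is free — e.g.\ a $3$‑armed spider graph with distinct arm lengths, which is an asymmetrically subdivided odd graph (\Cref{cor:lmr2}, cf.\ \Cref{thm:families}) — occurring in $G(m,\tfrac12)$ with probability at least $2^{-\binom m2}$; so one may take $q_0\triangleq\min_{7\le m\le N}\Pr[G(m,\tfrac12)\text{ is free}]>0$. Moreover, once $N\ge 28$, under the balance event every subgraph the algorithm actually recurses into has size $\ge(\text{parent size})/4\ge N/4\ge 7$, so the leaves of the recursion tree lie in the range $[7,N]$ where this seed applies.

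The third step is to solve the recursion so as to reach a genuinely exponential rate. The key point is that failure essentially requires all leaves of the (random) recursion tree $\mathcal T$ of \texttt{CheckFreeDLA} to be non‑free: unrolling the inequality along $\mathcal T$, the ``core'' contribution is $\prod_{\ell\ \text{leaf}}F(|\ell|)\le(1-q_0)^{\#\text{leaves}}$, and since $\mathcal T$ has at least $n/N$ leaves (each of size $\le N$), this is $e^{-\Omega(n)}$. The residual contribution comes from the per‑node errors $\delta_{|v|}$ accumulated over internal nodes $v$. To control it one uses that $\delta_m=e^{-\Omega(m)}$ decays exponentially in the node size while, on the balance event, node sizes shrink \emph{geometrically} with depth; thus for any fixed $c>1$ only $O_N(1)$ nodes have size in $[N,N^c]$, the contribution of all nodes of size $\ge N^c$ is $e^{-\Omega(N^c)}$, and the whole error sum is at most a constant $\epsilon(N)$ with $\epsilon(N)\to0$ as $N\to\infty$. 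Choosing $N$ large so that $\epsilon(N)$ is small first gives $F(n)\le 1-\Omega(1)$ for all large $n$, and then re‑running the recursion from this improved base (the map $\beta\mapsto\delta_n+\beta^2$ squares $\beta$ each time the size quadruples) propagates the bound down to $F(n)\le e^{-cn}$ for a constant $c=c(N)>0$. The remaining statements of \Cref{thm:DLA-ER-main} follow from this one exactly as in the main text.

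I expect the main obstacle to be precisely this last step — extracting a true $\exp(-\Theta(n))$ rate rather than the merely stretched‑exponential $\exp(-\Theta(n^{1-\epsilon}))$ that a naive induction with the ansatz $F(n)\le Ae^{-cn}$ fails to close on, because the product term $F(k)F(n-k)$ already saturates any exponential bound and leaves no room for $\delta_n$. Making the ``product over $\Omega(n)$ near‑independent leaves'' argument rigorous requires careful bookkeeping: one must track that the balance‑condition failures and neighbourhood‑test failures, summed over the whole branching tree, do not overwhelm the exponential gain — which is exactly where both the geometric shrinkage of subgraph sizes (\Cref{cor:Vo-size-prob}, together with the i.i.d.\ degree‑parity structure of \Cref{lem:lg4i}) and the freedom to take the brute‑force cutoff $N$ as large as we like are used.
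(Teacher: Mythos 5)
Your setup is sound and matches the paper's: the conditional independence from \Cref{lem:subgraph-independ}, the union bound for the neighbourhood-distinctness tests (the paper's \Cref{lem:no-equal-row}), the concentration of $|V_o|$ (\Cref{cor:Vo-size-prob}), and the resulting recursion $F(n)\le\delta_n+\max_k F(k)F(n-k)$ (the paper's \Cref{lem:1rv8}) are all correct, as is the seed via free spider graphs. The gap is in the step you yourself flag as the crux --- solving the recursion --- and it is a real one, in two places. First, the claim that ``only $O_N(1)$ nodes have size in $[N,N^c]$'' is false: geometric shrinkage holds along a single root-to-leaf path, but since child sizes sum to the parent size, the bottom of the recursion tree contains $\Theta(n/N)$ internal nodes of size $\Theta(N)$, each contributing an error $e^{-\Theta(N)}$. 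The accumulated error is therefore $\Omega\bigl((n/N)e^{-\Theta(N)}\bigr)$, which grows linearly in $n$ for any fixed $N$, so it is not a constant $\epsilon(N)$ and your intermediate conclusion $F(n)\le 1-\Omega(1)$ does not follow from the additive tree unrolling. Second, even granting a uniform constant bound $F(n)\le\beta<1$, your bootstrap ``$\beta\mapsto\delta_n+\beta^2$ squares $\beta$ each time the size quadruples'' only yields $\beta^{2^{\log_4 n}}=\beta^{\Theta(\sqrt n)}=e^{-\Theta(\sqrt n)}$, a stretched exponential, not $e^{-\Theta(n)}$.

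Your diagnosis of why the direct exponential ansatz fails is also mistaken, and this is precisely where the paper succeeds: with the ansatz $F(n)\le C\alpha^n$ for a constant $C<1$, the product term gives $F(k)F(n-k)\le C^2\alpha^k\alpha^{n-k}=C^2\alpha^n$, and since $C^2<C$ there is room $(C-C^2)\alpha^n$ left over to absorb $\delta_n$ (choosing $\alpha$ no smaller than the decay rate of $\delta_n$). This is the paper's final induction with $C=\frac{1+\beta}{2}$. What genuinely requires work is the uniform base case $F(n)\le\beta<1$ for \emph{all} $n\ge 7$, which cannot be assumed for $7\le n\le N$ only, because your seed constant $q_0=q_0(N)$ shrinks as $N$ grows while the error control requires $N$ large --- a circularity your proposal does not address. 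The paper resolves it in \Cref{lem:yo3d} by defining a monotone auxiliary sequence $q(n)$ satisfying the recursion exactly, anchoring the choice of $N$ to the fixed constant $r(7)$ via \cref{eq:35ob}, and accepting the stretched-exponential rate $1-q(n)=\exp(-\Theta(\sqrt n))$ at this stage; only then does the exponential-ansatz induction upgrade the rate to $\exp(-\Theta(n))$. To repair your proof you would need to replace the additive error-sum argument with a leaves-up induction of this kind (showing each node's failure bound stays below a fixed $\beta$ satisfying $\delta_m\le\beta(1-\beta)$), together with an argument breaking the $N$-versus-$q_0(N)$ circularity.
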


The following lemma bounds the probability that Line \ref{ln:0r8g} (or Line \ref{ln:ur9k}) of \Cref{alg:weak} returns \texttt{true}.
\begin{lemma}\label{lem:no-equal-row}
  Suppose $G=(V,E) \sim G(n,\frac{1}{2})$.
  Fix $S \subseteq V$ with size $4 \le k \le n-4$ an even number, and let $\bar{S} = V \backslash S$.
  Further fix $s \in S$ and $t \in \bar{S}$.
  Then for any simple graphs $G_1$ on $S$ and $G_2$ on $\bar{S}$,
  \begin{equation}\label{eq:ryra}
    \begin{array}{l}
      \Pr(A(S) \mid B(S,G_1,G_2)) \ge \max\left\{0, 1-\frac{k^2}{2^{n-k}}\right\}, \\
      \Pr(A(\bar{S}) \mid B(S,G_1,G_2)) \ge \max\left\{0, 1-\frac{(n-k)^2}{2^k}\right\},
    \end{array}
  \end{equation}
  where $A(S)$ and $B(S,G_1,G_2)$ denote the events
  \begin{align*}
      A(S):& ~\forall u \neq v \in S ~(\mathcal N(u) \cap \bar{S} \neq \mathcal N(v) \cap \bar{S}), \\
      B(S,G_1,G_2):& ~V_o = S, ~G[S] = G_1, ~G[\bar{S}] = G_2.
  \end{align*}
\end{lemma}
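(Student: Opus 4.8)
The plan is to condition on the event $B(S,G_1,G_2)$ and show that, under this conditioning, the crossing edges between $S$ and $\bar S$ still behave essentially like independent fair coins, so that the ``collision'' event (two vertices of $S$ having the same neighbourhood in $\bar S$) is rare. The key structural input is \Cref{lem:subgraph-independ}: conditioned on $V_o = S$, the triple $(G[S], G[\bar S], G[S\setminus\{s\},\bar S\setminus\{t\}])$ is distributed as $(G(k,\tfrac12),\, G(n-k,\tfrac12),\, G(k-1,n-k-1,\tfrac12))$, and in particular the bipartite graph of crossing edges \emph{excluding those incident to $s$ or $t$} is an independent $G(k-1,n-k-1,\tfrac12)$, independent of $G[S]$ and $G[\bar S]$. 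So after further conditioning on $G[S]=G_1$ and $G[\bar S]=G_2$, the edges in $E(S\setminus\{s\}, \bar S\setminus\{t\})$ remain i.i.d.\ $\mathrm{Bernoulli}(\tfrac12)$, while the edges incident to $s$ (within the crossing) and to $t$ are determined by the parity constraints $V_o = S$ together with $G_1,G_2$ (this is the content of the unique-solution statement in \Cref{lem:eey0}, with the tree $T$ taken as in the proof of \Cref{lem:subgraph-independ}).

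First I would fix two distinct vertices $u\neq v\in S$ and bound $\Pr(\mathcal N(u)\cap\bar S = \mathcal N(v)\cap\bar S \mid B(S,G_1,G_2))$. Write the crossing neighbourhoods of $u$ and $v$ as bit-strings indexed by $\bar S$. For each $w\in\bar S\setminus\{t\}$, at least one of the edges $(u,w)$, $(v,w)$ is a ``free'' edge in $E(S\setminus\{s\},\bar S\setminus\{t\})$ unless one of $u,v$ equals $s$; but even then, the \emph{other} edge is free (since $w\neq t$), so the event $x_{(u,w)} = x_{(v,w)}$ is governed by at least one fair coin independent of everything conditioned on, hence has probability exactly $\tfrac12$. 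The coordinates $w$ range over $\bar S\setminus\{t\}$, giving $n-k-1$ independent constraints, so the probability that $u$ and $v$ agree on all of $\bar S$ is at most $2^{-(n-k-1)}$. Taking a union bound over the $\binom{k}{2}<k^2/2$ pairs $u,v\in S$ yields
\begin{equation*}
  \Pr\bigl(\neg A(S)\,\bigm|\,B(S,G_1,G_2)\bigr) \le \frac{k^2}{2}\cdot 2^{-(n-k-1)} = \frac{k^2}{2^{n-k}},
\end{equation*}
which is the first bound in \cref{eq:ryra} (the $\max\{0,\cdot\}$ is just to make it non-vacuous). The second bound follows by the symmetric argument with the roles of $S$ and $\bar S$ exchanged: conditioned on $B(S,G_1,G_2)$, the crossing edges not incident to $s,t$ are still i.i.d.\ fair coins, so two distinct $u\neq v\in\bar S$ agree on all of $S\setminus\{s\}$ with probability at most $2^{-(k-1)}$, and a union bound over $\binom{n-k}{2}<(n-k)^2/2$ pairs gives $(n-k)^2/2^{k}$.

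The main obstacle — and the only place requiring care — is handling the crossing edges incident to the distinguished vertices $s$ and $t$, which are \emph{not} free but are fixed by the parity constraints. The point I would emphasize is that for any fixed pair $u,v\in S$ and any fixed coordinate $w\in\bar S\setminus\{t\}$, at most one of the two relevant crossing edges $(u,w),(v,w)$ touches $s$ (namely when $u=s$ or $v=s$, and then only one of them does), and neither touches $t$ since $w\neq t$; hence in the symmetric difference event ``$x_{(u,w)}\oplus x_{(v,w)} = 1$'' there is always at least one edge drawn from the independent $G(k-1,n-k-1,\tfrac12)$ block, conditionally independent of all other such events across distinct $w$. This is precisely why we are forced to drop the single coordinate $t$ from $\bar S$ and why the exponent is $n-k-1$ rather than $n-k$. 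Everything else is a routine union bound.
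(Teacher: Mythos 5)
Your overall strategy is the same as the paper's: condition on $B(S,G_1,G_2)$, invoke \Cref{lem:subgraph-independ} to get i.i.d.\ fair coins on the crossing edges away from $s$ and $t$, show that each fixed pair $u \neq v$ collides on all of $\bar S \setminus \{t\}$ with probability $2^{-(n-k-1)}$, and union bound over $\binom{k}{2}$ pairs. The exponent and the final bounds match the paper's exactly.

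However, your treatment of the pairs containing the distinguished vertex $s$ has a genuine logical gap. When $u = s$, you argue that since the edge $(v,w)$ is free, the event $x_{(s,w)} = x_{(v,w)}$ is ``governed by at least one fair coin independent of everything conditioned on, hence has probability exactly $\tfrac12$.'' This inference is invalid: the determined edge $x_{(s,w)}$ is a function of the free edges, and that function \emph{includes} $x_{(v,w)}$ itself --- by the leaf-trimming formula in \Cref{lem:eey0}, $x_{(s,w)} = y^*_w \oplus \bigoplus_{w' \in V \setminus \{s,w\}} x_{(w,w')}$, and $v \in V \setminus \{s,w\}$. Uniformity of one summand does not give uniformity of an XOR whose other summand depends on it. Indeed, if $k$ were $2$, the quantity $x_{(s,w)} \oplus x_{(v,w)}$ would be a deterministic function of the conditioning, so your argument as stated would ``prove'' a false claim; the true reason the XOR is uniform here is that after cancelling $x_{(v,w)}$ it equals a constant plus $\bigoplus_{w' \in S \setminus \{s,v\}} x_{(w,w')}$, a nonempty XOR of free coins precisely because $|S| = k > 2$. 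The paper sidesteps all of this by observing that $s$ and $t$ are arbitrary in \Cref{lem:subgraph-independ} (the event $B$ does not mention them), so for each pair $u \neq v$ one may re-choose $s \notin \{u,v\}$, making all $2(n-k-1)$ relevant crossing edges genuinely i.i.d.\ $\mathrm{Bernoulli}(\tfrac12)$. Your proof is repairable either by that re-choice or by carrying out the cancellation explicitly (using $k \ge 4$), but as written the key per-coordinate probability claim is not justified.
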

\begin{proof}
  Notice that the choice of $s \in S, t \in \bar S$ is arbitrary.
  Fix a $t \in \bar S$.
  By \Cref{lem:subgraph-independ}, conditioned on $B(S,G_1,G_2)$, $\{ x_{(u, w)}, x_{(v, w)}: w \in \bar S \backslash \{ t \} \}$ are i.i.d. $\mathrm{Bernoulli}(\frac12)$ for any $u \neq v \in S$ since $|S| = k > 2$.
  Therefore, by union bound,
  \begin{align*}
    \Pr(A(S) \mid B(S,G_1,G_2))
     & = \Pr(\forall u \neq v \in S, \exists w \in \bar S \backslash \{t\} (x_{(u,w)} \neq x_{(v,w)})) \\
     & \ge 1 - \binom k2 2^{-n+k+1} \ge 1-\frac{k^2}{2^{n-k}}.
  \end{align*}
  Similarly,
  \begin{equation*}
    \Pr(A(\bar{S}) \mid B(S,G_1,G_2)) \ge 1-\frac{(n-k)^2}{2^k}.
  \end{equation*}
  We have proved \cref{eq:ryra}, since the probabilities are non-negative.
\end{proof}

Let $p_N(n)$ denotes the probability that \Cref{alg:weak} returns \texttt{true} on input $(G,N)$ with $G \sim G(n,\frac12)$ and $N \in \mathbb N_+$; $r(n)$ denotes the probability that the DLA on $G\sim G(n,\frac12)$ is free.
Remark that $r(n) \ge p_N(n)$ for all $n \in \mathbb N_+$.
The following lemma characterizes $p_N(n)$.

\begin{lemma}\label{lem:1rv8}
  Suppose $N \ge 15$.
  If $n \le N$, we have $p_N(n) = r(n)$.
  Otherwise,

  \begin{equation}\label{eq:txnz}
    p_N(n) \ge \left(1 - \frac{(3n/4)^2}{2^{n/4}}\right) \sum_{k = \ceil{n/4}}^{\floor{3n/4}} \Pr(|V_o|=k) [1 - (1-p_N(k))(1-p_N(n-k))].
  \end{equation}
\end{lemma}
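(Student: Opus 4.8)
The plan is to handle the base case directly and reduce the recursive case to the independence structure in \Cref{lem:subgraph-independ,lem:no-equal-row}. For $n \le N$, \Cref{alg:weak} computes the DLA of the input $G \sim G(n,\tfrac12)$ by brute force and returns \texttt{true} exactly when it is free, so $p_N(n) = r(n)$ by the definition of $r$. For $n > N$, a glance at the control flow of \Cref{alg:weak} shows that, once the size check on $V_o$ and $V_e$ is passed, the algorithm returns \texttt{true} if and only if one of two \emph{disjoint} events occurs: $E_1 \wedge D_1$ (the \texttt{if} branch, Lines~\ref{ln:ph0j}--\ref{ln:0r8g}), where $E_1$ is the event that the recursive call on $G[V_o]$ returns \texttt{true} and $D_1$ is the event that the vertices of $V_e$ have pairwise distinct neighbourhoods inside $V_o$; or $\neg E_1 \wedge E_2 \wedge D_2$ (the \texttt{elif} branch, Lines~\ref{ln:9ufi}--\ref{ln:ur9k}), where $E_2$ is the event that the recursive call on $G[V_e]$ returns \texttt{true} and $D_2$ is the analogue of $D_1$ with $V_o$ and $V_e$ interchanged. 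Since $N \ge 15$ forces $n \ge 16$, whence $\ceil{n/4} \ge 4$ and $\floor{3n/4} \le n-4$, every $k$ with $\ceil{n/4} \le k \le \floor{3n/4}$ gives $|V_o| = k \ge 4$ and $|V_e| = n - k \ge 4$, so the algorithm does not short-circuit; conditioning on $|V_o| = k$ over this range and discarding the other $k$ (which only weakens the bound) gives
\[
  p_N(n) \ \ge\ \sum_{k=\ceil{n/4}}^{\floor{3n/4}} \Pr(|V_o|=k)\Bigl[\Pr\bigl(E_1 \wedge D_1 \mid |V_o|=k\bigr) + \Pr\bigl(\neg E_1 \wedge E_2 \wedge D_2 \mid |V_o|=k\bigr)\Bigr].
\]

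To bound the two conditional probabilities, I would condition further on $V_o = S$ for a fixed even set $S$ of size $k$. By \Cref{lem:subgraph-independ}, $G[S]$, $G[\bar S]$ and the trimmed crossing bipartite graph are then mutually independent, distributed as $G(k,\tfrac12)$, $G(n-k,\tfrac12)$ and $G(k-1,n-k-1,\tfrac12)$; since $E_1$ depends only on $G[S]$ and $E_2$ only on $G[\bar S]$, we get $\Pr(E_1 \mid V_o=S) = p_N(k)$, $\Pr(E_2 \mid V_o=S) = p_N(n-k)$, and $E_1 \perp E_2$ given $V_o=S$. For the distinctness events, \Cref{lem:no-equal-row} gives, \emph{uniformly over all graphs $G_1$ on $S$ and $G_2$ on $\bar S$}, that $\Pr(D_1 \mid V_o=S, G[S]=G_1, G[\bar S]=G_2) \ge 1 - (n-k)^2/2^{k}$ and $\Pr(D_2 \mid V_o=S, G[S]=G_1, G[\bar S]=G_2) \ge 1 - k^2/2^{n-k}$. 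Feeding these uniform bounds through the conditioning on $G[S]$ (and, in the second term, also on $G[\bar S]$), averaging over $S$, adding the results, and using that $(n-k)^2/2^{k}, k^2/2^{n-k} \le (3n/4)^2/2^{n/4}$ for $k \in [\ceil{n/4}, \floor{3n/4}]$, I obtain
\[
  \Pr\bigl(E_1 \wedge D_1 \mid |V_o|=k\bigr) + \Pr\bigl(\neg E_1 \wedge E_2 \wedge D_2 \mid |V_o|=k\bigr) \ \ge\ \Bigl(1 - \tfrac{(3n/4)^2}{2^{n/4}}\Bigr)\bigl[p_N(k) + (1-p_N(k))p_N(n-k)\bigr],
\]
and since $p_N(k) + (1-p_N(k))p_N(n-k) = 1 - (1-p_N(k))(1-p_N(n-k))$, substituting back into the displayed sum yields exactly \eqref{eq:txnz}.

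I expect the main obstacle to be the conditional-independence bookkeeping in the second step: the events $D_1$ and $D_2$ depend on \emph{all} crossing edges, including the ones incident to the two vertices deleted in \Cref{lem:subgraph-independ}, so one cannot simply declare $D_1, D_2$ independent of $G[V_o]$ and $G[V_e]$. What rescues the argument is that \Cref{lem:no-equal-row} already states its estimate conditionally on the full event ``$V_o = S,\ G[S]=G_1,\ G[\bar S]=G_2$'', so the computation is merely a careful application of the tower rule --- peeling off first $V_o = S$, then $G[S]=G_1$ (and $G[\bar S]=G_2$) --- rather than a fresh probabilistic estimate. A minor point to keep in mind is that $E_1, E_2, D_1, D_2$ are all invariant under vertex relabelling, so distributional identifications like ``$G[V_o] \sim G(k,\tfrac12)$'' may be read on the labelled vertex set $V_o$ without affecting the probabilities $p_N(k)$.
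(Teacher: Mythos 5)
Your proposal is correct and follows essentially the same route as the paper's proof: decompose the success event into the two disjoint branch events, use \Cref{lem:subgraph-independ} to identify the recursive-call probabilities as $p_N(k)$ and $p_N(n-k)$ and to get their conditional independence given $V_o=S$, apply \Cref{lem:no-equal-row} through the tower rule over $B(S,G_1,G_2)$ for the distinctness events, and finally restrict the sum to $k\in[\lceil n/4\rceil,\lfloor 3n/4\rfloor]$ with the uniform bound $(n-k)^2/2^{k},\,k^2/2^{n-k}\le (3n/4)^2/2^{n/4}$. The only cosmetic difference is that the paper first writes the exact decomposition over all $4\le k\le n-4$ (with $\max\{0,\cdot\}$ guards) before discarding the extreme $k$, whereas you restrict the range up front; both yield \eqref{eq:txnz}.
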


\begin{proof}
  We will use the following events in this proof:
  \begin{itemize}
    \item
          $T_n$: Line $n$ of \Cref{alg:weak} returns \texttt{true};
    \item
          $E_k$: The size of odd-degree vertex set $|V_o| = k$;
    \item
          $P(G')$: \Cref{alg:weak} returns \texttt{true} on input $(G',N)$.
    \item
          $A(S),~B(S, G_1, G_2)$: Same as \Cref{lem:no-equal-row}.
  \end{itemize}
  $E^c$ will  denote the complement of event $E$.

  Based on \Cref{alg:weak}, $p_N(n) = r(n)$ for $n \le N$.
  For $n > N$, we have
  \begin{equation}\label{eq:c07s}
    \begin{split}
      p_N(n) & = \Pr(T_{\ref{ln:0r8g}}) + \Pr(T_{\ref{ln:ur9k}})                                                               \\
             & = \sum_{k=4}^{n-4} \Pr(E_k) [\Pr(T_{\ref{ln:ph0j}} \mid E_k) \Pr(T_{\ref{ln:0r8g}} \mid E_k, T_{\ref{ln:ph0j}})
        + \Pr(T^c_{\ref{ln:ph0j}}, T_{\ref{ln:9ufi}} \mid E_k) \Pr(T_{\ref{ln:ur9k}} \mid E_k, T^c_{\ref{ln:ph0j}}, T_{\ref{ln:9ufi}})].
    \end{split}
  \end{equation}

  According to \Cref{lem:subgraph-independ} (\cref{eq:iwpw}), $(G[S], G[\bar{S}]) \mid (V_o = S) \sim (G(k,\frac{1}{2}), G(n-k,\frac{1}{2}))$ for any $S \subseteq V$ with size $0 < k < n$ and $\bar{S} \triangleq V \backslash S$.
  Hence, by the definition of $p_N(\cdot)$,
  \begin{equation*}
    \begin{split}
      \Pr(T_{\ref{ln:ph0j}} | E_k) & = \sum_{S \subseteq V, \abs{S} = k} \Pr(V_o = S \mid E_k) \Pr(T_{\ref{ln:ph0j}} \mid E_k, V_o = S) \\
                                   & = \sum_{S \subseteq V, \abs{S} = k} \Pr(V_o = S \mid E_k) \Pr(P(G[S]) \mid V_o = S)                \\
                                   & = \sum_{S \subseteq V, \abs{S} = k} \Pr(V_o = S \mid E_k) \cdot p_N(k) = p_N(k).
    \end{split}
  \end{equation*}
  Similarly,
  \begin{equation*}
    \Pr(T^c_{\ref{ln:ph0j}}, T_{\ref{ln:9ufi}} \mid E_k) = (1-p_N(k)) p_N(n-k).
  \end{equation*}

  Let $\mathcal X \triangleq \{ (S,G_1,G_2): S \subseteq V, \abs{S} = k; G_1, G_2 \text{ are simple graphs on } S, \bar{S}; P(G_1) \}$.
  Since $4 \le k \le n-4$, by \Cref{lem:no-equal-row} (\cref{eq:ryra}) we have

  \begin{equation*}
    \begin{split}
      \Pr(T_{\ref{ln:0r8g}} | E_k, T_{\ref{ln:ph0j}}) & = \sum_{(S,G_1,G_2) \in \mathcal X} \Pr(B(S,G_1,G_2) \mid E_k, T_{\ref{ln:ph0j}}) \Pr(T_{\ref{ln:0r8g}} | E_k, T_{\ref{ln:ph0j}}, B(S,G_1,G_2)) \\
                                                      & = \sum_{(S,G_1,G_2) \in \mathcal X} \Pr(B(S,G_1,G_2) \mid E_k, T_{\ref{ln:ph0j}}) \Pr(A(S) | B(S,G_1,G_2))                                      \\
                                                      & \ge \sum_{(S,G_1,G_2) \in \mathcal X} \Pr(B(S,G_1,G_2) \mid E_k, T_{\ref{ln:ph0j}}) \max\left\{0, 1 - \frac{(n-k)^2}{2^k}\right\}               \\
                                                      & =  \max\left\{0, 1 - \frac{(n-k)^2}{2^k}\right\}.
    \end{split}
  \end{equation*}
  Similarly,
  \begin{equation}\label{eq:cnwh}
    \Pr(T_{\ref{ln:ur9k}} | E_k, T^c_{\ref{ln:ph0j}}, T_{\ref{ln:9ufi}}) \ge \max\left\{0, 1 - \frac{k^2}{2^{n-k}}\right\}.
  \end{equation}

  Combining \crefrange{eq:c07s}{eq:cnwh}, for $n > N \ge 15$ we have
  \begin{align*}
    p_N(n) & \ge \sum_{k=4}^{n-4} \Pr(E_k) \left[p_N(k) \max\left\{0, 1 - \frac{(n-k)^2}{2^k}\right\} + (1-p_N(k)) p_N(n-k) \max\left\{0, 1 - \frac{k^2}{2^{n-k}}\right\}\right]                   \\
           & \ge \sum_{k=\ceil{n/4}}^{\floor{3n/4}} \Pr(E_k) \left[p_N(k) \max\left\{0, 1 - \frac{(n-k)^2}{2^k}\right\} + (1-p_N(k)) p_N(n-k) \max\left\{0, 1 - \frac{k^2}{2^{n-k}}\right\}\right] \\
           & \ge \left(1 - \frac{(3n/4)^2}{2^{n/4}}\right) \sum_{k=\ceil{n/4}}^{\floor{3n/4}} \Pr(E_k) [1 - (1-p_N(k))(1-p_N(n-k))].
  \end{align*}
  This proves \cref{eq:txnz} since $E_k$ denotes the event that $\abs{V_o} = k$.
\end{proof}

The next lemma bounds $r(n)$ uniformly away from zero for $n \ge 7$.
%Note that this lemma can alternatively be proved by \Cref{thm:gj4j}.

\begin{lemma}\label{lem:yo3d}
  For $n \ge 7$, there exists a constant $\beta \in (0,1)$ such that $r(n) \ge 1-\beta$.
\end{lemma}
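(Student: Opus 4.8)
The plan is to deduce the lemma from the recursion in \Cref{lem:1rv8} by a strong induction on $n$, using that $r(n) \ge p_N(n)$ for every $N$ and that $p_N(n) = r(n)$ for $n \le N$. The engine is \cref{eq:txnz}: it lower-bounds $p_N(n)$ by the prefactor $1 - (3n/4)^2 2^{-n/4}$ times a $\Pr(|V_o| = k)$-weighted average, over $\ceil{n/4} \le k \le \floor{3n/4}$, of the quantities $1 - (1 - p_N(k))(1 - p_N(n-k))$. Writing $c_n \triangleq \bigl(1 - (3n/4)^2 2^{-n/4}\bigr)\bigl(1 - 2\exp(-n/24)\bigr)$, \Cref{cor:Vo-size-prob} shows that the weights over that range sum to $\Pr\bigl(n/4 \le |V_o| \le 3n/4\bigr) \ge 1 - 2\exp(-n/24)$, so if every $p_N(k)$ and $p_N(n-k)$ appearing is at least $1 - \beta$, then $p_N(n) \ge c_n (1-\beta)(1+\beta)$. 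Since $c_n \to 1$, a bound like $c_n \ge 3/4$ holds for all large $n$, and $(3/4)(1+\beta) \ge 1$ as soon as $\beta \ge 1/3$, which closes the induction.

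Concretely, I would first fix the constant $N$, using only the analytic facts $c_n \to 1$ and $\ceil{n/4} \to \infty$: choose $N \ge 15$ (so that \Cref{lem:1rv8} applies) large enough that both $c_n \ge 3/4$ and $\ceil{n/4} \ge 7$ hold for every $n > N$. The second condition guarantees that the indices $k$ and $n-k$ occurring in \cref{eq:txnz} always lie in $\{7, \dots, n-1\}$, where the inductive hypothesis will be available. Next I would pin down $\beta$: for each $7 \le n \le N$ there is an $n$-vertex graph whose DLA is free (the $k$-armed spider graphs of \Cref{cor:lmr2} together with the extension constructions recorded after \Cref{cor:lmr2}), so such a graph occurs with positive probability under $G(n,\tfrac12)$ and hence $r(n) > 0$; since $\{7, \dots, N\}$ is finite, $\beta_0 \triangleq \max_{7 \le n \le N}(1 - r(n)) < 1$, and I set $\beta \triangleq \max\{1/3,\, \beta_0\} \in [1/3, 1)$.

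The induction is then routine. For $7 \le n \le N$, \Cref{lem:1rv8} gives $p_N(n) = r(n) \ge 1 - \beta$ by the choice of $\beta$. For $n > N$, assuming $p_N(m) \ge 1 - \beta$ for all $7 \le m < n$, substitute this into \cref{eq:txnz}: each bracketed term is at least $1 - \beta^2$, the weights over $\ceil{n/4} \le k \le \floor{3n/4}$ sum to at least $1 - 2\exp(-n/24)$ by \Cref{cor:Vo-size-prob}, and the prefactor is $1 - (3n/4)^2 2^{-n/4}$, so
\[
p_N(n) \ \ge\ c_n (1-\beta^2) \ =\ c_n(1-\beta)(1+\beta) \ \ge\ \tfrac{3}{4}\cdot\tfrac{4}{3}\cdot(1-\beta) \ =\ 1 - \beta,
\]
using $c_n \ge 3/4$ and $1 + \beta \ge 4/3$. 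Hence $p_N(n) \ge 1 - \beta$ for all $n \ge 7$, and therefore $r(n) \ge p_N(n) \ge 1 - \beta$, with $\beta \in (0,1)$ a constant.

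The point requiring care — and the only substantive one — is the \emph{order} in which $N$ and $\beta$ are chosen: $N$ must be fixed purely from the asymptotics $c_n \to 1$ and $\ceil{n/4} \ge 7$, and only afterwards may one read off the finite maximum $\beta_0$ from $r(7), \dots, r(N)$. Trying to optimize $\beta$ and $N$ jointly would risk implicitly assuming that $r(n)$ stays bounded away from $0$ for large $n$, which is exactly what is being proved; avoiding that circularity is where the real content lies, while the inequalities above are elementary. One also quietly uses that at least one free graph exists for every $7 \le n \le N$, which the paper has already established.
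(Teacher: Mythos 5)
Your proof is correct, and it is a genuinely leaner route than the paper's. The paper also drives the argument through the recursion of \Cref{lem:1rv8}, but it does so by constructing an auxiliary recursively defined sequence $q(n)$ (with $q(n) = (1-3\delta^n)[1-(1-q(\lfloor n/4\rfloor))^2]$ above the threshold), proving that $q$ is non-decreasing and that $1-q(n) = \exp(-\Theta(\sqrt n))$, and only then extracting a uniform constant $\beta$; its choice of $N$ is tied to the value $r(7)$ through the condition $1-3\delta^n \ge r(7)/(1-(1-r(7))^2)$, which is what makes the monotonicity claim go through. You replace all of that with a single strong induction against the fixed threshold $1-\beta$, where the identity $1-\beta^2 = (1-\beta)(1+\beta)$ with $\beta \ge 1/3$ supplies exactly the factor $1+\beta \ge 4/3$ needed to absorb the multiplicative loss $c_n \ge 3/4$ from the prefactor and the tail of $|V_o|$. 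The trade-off is that the paper's construction yields the quantitative statement $1-q(n) = \exp(-\Theta(\sqrt n))$ as a by-product (unused for this lemma, since \Cref{thm:qp9g} re-derives the exponential rate separately), whereas your argument gives only the uniform bound --- but that is all the lemma asserts. Your bookkeeping is sound: $N$ is fixed from the asymptotics of $c_n$ and from $\lceil n/4\rceil \ge 7$ alone, the indices $k$ and $n-k$ in \cref{eq:txnz} then always fall in $[7, n)$ where the hypothesis applies, the base case uses $p_N(n) = r(n)$ for $n \le N$ together with $r(n) > 0$ from \Cref{cor:lmr2}, and the weights in \cref{eq:txnz} sum to at least $1 - 2\exp(-n/24)$ by \Cref{cor:Vo-size-prob}, exactly as in the paper's Claim~3.
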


\begin{proof}
  We will show that there exists a constant $N$ such that for all $n \ge 7$, \Cref{alg:weak} on input $(G,N)$ with $G \sim G(n,\frac12)$ returns \texttt{true} with probability $p_N(n) \ge 1-\beta$.
  It follows that $r(n) \ge p_N(n) \ge 1-\beta$.

  Define $\delta \triangleq \exp(-\frac{1}{24})$.
  Let
  \begin{equation}\label{eq:35ob}
    N = \min_{n \in \mathbb N} \left\{ n \ge 58, 1-3\delta^n \ge \frac{r(7)}{1-(1-r(7))^{2}} \right\}.
  \end{equation}
  For $n\ge 7$, there exist $n$-vertex graphs whose DLAs are free (see \Cref{sec:DLA-asym-subdivided-odd-graph} \Cref{cor:lmr2}), while the DLAs of $n$-vertex cycle graphs for $n\ge 3$ are not free \cite{allcock2024dynamical}.
  This implies $r(n)\in (0,1)$ for $n\ge 7$.
  Since $r(7)\in(0,1)$, $\frac{r(7)}{1-(1-r(7))^{2}}\in (\frac12, 1)$, and thus the second inequality in \cref{eq:35ob} is well-defined.

  In the remainder of the proof, we will construct a function $q(n)$ such that $p_N(n)\ge q(n)$ and $1 - q(n) = \exp(-\Theta(\sqrt{n}))$.
  This completes the proof since there exists a constant $N_1 \ge 7$ such that $r(n) \ge p_N(n) \ge q(n) \ge \frac12$ for $n > N_1$.
  And thus for all $n \ge 7$,
  \begin{equation*}
    r(n) \ge 1-\beta \quad \text{where } \beta \triangleq 1-\min\left\{ \min_{7 \le n \le N_1} r(n), \frac12 \right\} \in (0,1).
  \end{equation*}
  % \jon{is the $\triangleq$ sign above for $\beta$ or $1-\beta$?}
  Define $q(n)$ such that
  \begin{equation}\label{eq:qn}
    q(n) =
    \begin{cases}
      \min_{7 \le k \le N} r(k),                                        & \text{if } 7 \le n \le N , \\
      \left(1-3\delta^{n}\right) \left[1-(1-q(\floor{n/4}))^{2}\right], & \text{if } n > N.
    \end{cases}
  \end{equation}
  Note that $q(n) = q(7) \in (0,1)$ for $7 \le n \le N$.
  % \jon{(jon): I guess we strictly need to use $q(\lfloor n/4\rfloor)$ for this to be well defined for all $n$?}
  For convenience, we avoid the $\floor{\cdot}$ operations in the following proof by defining $q(x) \triangleq q(\floor{x})$ for all real number $x \ge 7$.
  This is well defined since for $x \ge N+1$
  \begin{equation*}
    q(x) = \left(1-3\delta^{n}\right) \left[1-(1-q(x/4))^{2}\right] = \left(1-3\delta^{n}\right) \left[1-(1-q(\floor{x}/4))^{2}\right] = q(\floor{x})
  \end{equation*}
  by the equality $q(x/4) = q(\floor{x/4}) = q(\floor{\floor{x}/4}) = q(\floor{x}/4)$.

  Now we prove that
  \begin{enumerate}[leftmargin=62pt]
    \item[\bf Claim 1:]
          $1-q(n) = \exp(-\Theta(\sqrt{n}))$.
    \item[\bf Claim 2:]
          $q(n)$ is non-decreasing.
    \item[\bf Claim 3:]
          $p_N(n) \ge q(n)$.
  \end{enumerate}

  \underline{Proof of Claim 1.}
  Let $c \triangleq 1-q(7) \in (0,1)$ and $k \triangleq \lfloor \log_{4}(n/7) \rfloor$.
  We can verify that
  \begin{equation*}
    2^k=\Theta(2^{\log_{4} n})=\Theta(\sqrt{n}).
  \end{equation*}
  Let $f(n)\lesssim g(n)$ denote $f(n)\le g(n)$ for sufficiently large $n$.
  We have
  \begin{equation*}
    1-q(n) = (1-3\delta^n) (1-q(n/4))^2 + 3\delta ^{n} \ge (1-q(n/4))^2 \ge c^{2^{k}} = \exp(-\Theta(\sqrt{n})),
  \end{equation*}
  which also implies $  3\delta^n \lesssim \exp(-\Theta(\sqrt{n}))\le 1-q(n) $.
  Then we can verify
  \begin{multline*}
    1-q(n) = (1-3\delta^n) (1-q(n/4))^2 + 3\delta^n \le (1-q(n/4))^2 + 3\delta^n                   \\
    \lesssim 2(1-q(n/4))^2 = 2^k c^{2^{k}} = \exp(-\Theta(\sqrt{n})).
  \end{multline*}
  % \jon{(jon) Why is it clear that $3\delta^n \le (1-q(n/4))^2$ for large enough $n$?}and

  \underline{Proof of Claim 2.}
  We prove that $q(n) \ge q(n-1)$ for $n \ge 8$ by induction on $n$.
  If $8 \le n \le N$, then $q(n) = q(n-1)$ by the definition of $q(n)$.
  If $n = N+1$, we have $n/4 = (N+1)/4 > 7$ and $n/4 \le N$, hence $q(n/4) = \min_{7 \le k \le N} r(k) \le r(7)$ and $q(n/4) \in (0,1)$.
  Thus,
  \begin{flalign*}
     &  & q(n) & = (1-3\delta^n) \left[1-(1-q(n/4))^2\right]                       &                                                               \\
     &  &      & \ge \frac{r(7)}{1-(1-r(7))^{2}} \left[1-( 1-q(n/4))^{2}\right]    & \text{(\cref{eq:35ob})}                                       \\
     &  &      & \ge \frac{q(n/4)}{1-(1-q(n/4))^{2}} \left[1-(1-q(n/4))^{2}\right] & \text{($\tfrac{x}{1-(1-x)^2}$ is increasing for $x\in(0,1)$)} \\
     &  &      & = q(n/4) = q(n-1) .                                               & \text{($7 \le n/4, n-1 \le N$, \cref{eq:qn})}
  \end{flalign*}
  If $n \ge N+2$, assume that $q(k) \ge q(k-1)$ for any $k < n$.
  Then
  \begin{equation*}
    q(n) = (1-3\delta^n) \left[1-(1-q(n/4))^2\right] \ge (1-3\delta^{n-1}) \left[1-(1-q((n-1)/4))^2\right] = q(n-1).
  \end{equation*}
  Here, the inequality uses the hypothesis that $q(n/4) \ge q(n/4-1) \ge \cdots \ge q((n-1)/4)$.

  \underline{Proof of Claim 3.}
  We prove $p_N(n) \ge q(n)$ by induction on $n$.
  First, the base case holds: $p_N(n) = r(n) \ge q(n)$ for all $7 \le n \le N$.
  Second, if $n > N$, assume that $p_N(k) \ge q(k)$ for all $k < n$.
  Then
  \begin{flalign*}
        & p_N(n)                                                                                                                                                                    \\
    \ge & \left(1 - \frac{(3n/4)^2}{2^{n/4}}\right) \sum_{k = \ceil{n/4}}^{\floor{3n/4}} \Pr(|V_o|=k) [1 - (1-p_N(k))(1-p_N(n-k))]               & \text{(\Cref{lem:1rv8})}         \\
    \ge & \left(1 - \frac{(3n/4)^2}{2^{n/4}}\right) \left(1 - 2\exp(-\frac{n}{24})\right) \min_{n/4 \le k \le 3n/4} [1 - (1-p_N(k))(1-p_N(n-k))] & \text{(\Cref{cor:Vo-size-prob})} \\
    \ge & \left(1 - \frac{(3n/4)^2}{2^{n/4}} - 2\exp(-\frac{n}{24})\right) \min_{n/4 \le k \le 3n/4} [1 - (1-p_N(k))(1-p_N(n-k))]                & \text{(Union bound)}             \\
    \ge & (1 - 3\delta^n) \min_{n/4 \le k \le 3n/4} [1-(1-p_N(k))(1-p_N(n-k))]                                                                   & (*)                              \\
    \ge & (1 - 3\delta^n) \min_{n/4 \le k \le 3n/4} [1-(1-q(k))(1-q(n-k))]                                                                       & \text{(Hypothesis)}              \\
    \ge & (1 - 3\delta^n) [1 - (1-q(n/4))^2] = q(n).                                                                                             & \text{(Claim 2)}
  \end{flalign*}
  Here $(*)$ holds since $\frac{(3n/4)^2}{2^{n/4}} \le \exp(-\frac{n}{24}) = \delta^n$ for $n > N \ge 58$.
\end{proof}

Now we recall the main result and show the proof.
\ERgraphDLA*

\begin{proof}
  We will show that there exists constants $\alpha \in (0,1)$ and $N \in \mathbb N_+$ such that, on input $(G, N)$ with $G \in G(n,\frac12)$ and $n \ge 7$, \Cref{alg:weak} has success probability $p_N(n) \ge 1 - \frac{\beta+1}{2} \alpha^n$.
  Here, $\beta \in (0,1)$ is the constant defined in \Cref{lem:yo3d}.
  %\jon{should be the constant defined in lemma \ref{lem:yo3d}?}

  Since $0 < \frac{\beta+1}{2} < 1$, we have $\frac{\beta+1}{2} - (\frac{\beta+1}{2})^2 > 0$.
  Hence, there exist constants $\alpha_1 \in (0,1)$ and $N_1 \ge 15$ such that
  \begin{equation*}
    \left(\frac{\beta+1}{2} - \left(\frac{\beta+1}{2}\right)^2\right) \alpha_1^n \ge \frac{(3n/4)^2}{2^{n/4}} + 2\exp(-\frac{n}{24}), \quad \forall n > N_1.
  \end{equation*}
  % \jon{(jon) I'm not sure I follow. $\frac{\beta+1}{2} - \left(\frac{\beta+1}{2}\right)^2 \in (0,1/4)$ for all $\beta\in(0,1)$. So the term on the left of eq 101 is upperbounded by $0.25$. But the term on the right tends to 1 for large $n$}
  Since $0 < \beta < \frac{\beta+1}{2}$, one can find a minimum $\alpha_2 \in (0,1)$ such that
  \begin{equation*}
    \alpha_2 \ge \alpha_1 \qq{and} \frac{\beta+1}{2} \alpha_2^{N_1} \ge \beta.
  \end{equation*}
  Let $(\alpha, N) = (\alpha_2, N_1)$.
  We have $\alpha \in (0,1)$ and $N \ge 15$, and they are constants since they only depend on $\beta$.
  Moreover,
  \begin{equation}\label{eq:37ia}
    \begin{split}
      \left(\frac{\beta+1}{2} - \left(\frac{\beta+1}{2}\right)^2\right) \alpha^n
       & = \left(\frac{\beta+1}{2} - \left(\frac{\beta+1}{2}\right)^2\right) \alpha_2^n                                                                             \\
       & \ge \left(\frac{\beta+1}{2} - \left(\frac{\beta+1}{2}\right)^2\right) \alpha_1^n \ge \frac{(3n/4)^2}{2^{n/4}} + 2\exp(-\frac{n}{24}), \quad \forall n > N,
    \end{split}
  \end{equation}
  and
  \begin{equation}\label{eq:7759}
    \frac{\beta+1}{2} \alpha^N = \frac{\beta+1}{2} \alpha_2^{N_1} \ge \beta.
  \end{equation}

  Now we prove that $p_N(n) \ge 1 - \frac{\beta+1}{2} \alpha^n$ for $n \ge 7$ by induction on $n$.
  If $7 \le n \le N$, we have
  \begin{equation*}
    p_N(n) = r(n) \ge 1-\beta \ge 1 - \frac{\beta+1}{2} \alpha^N \ge 1 - \frac{\beta+1}{2} \alpha^n.
  \end{equation*}
  Here, we used $p_N(n) = r(n)$ for $n \le N$ (\Cref{lem:1rv8}), $r(n) \ge 1-\beta$ for $n \ge 7$ (\Cref{lem:yo3d}), and \cref{eq:7759}.
  For $n > N$, assume $p_N(k) \ge 1 - \frac{\beta+1}{2} \alpha^k$ for all $k < n$.
  We have
  \begin{flalign*}
        & p_N(n)                                                                                                                                                                              \\
    \ge & \left(1 - \frac{(3n/4)^2}{2^{n/4}}\right) \sum_{k = \ceil{n/4}}^{\floor{3n/4}} \Pr(|V_o|=k) [1 - (1-p_N(k))(1-p_N(n-k))]                   & \text{(\Cref{lem:1rv8})}               \\
    \ge & \left(1 - \frac{(3n/4)^2}{2^{n/4}}\right) \left(1 - 2\exp(-\frac{n}{24})\right) \left(1 - \left(\frac{\beta+1}{2}\right)^2 \alpha^n\right) & \text{(Hypothesis \& \Cref{lem:yo3d})} \\
    \ge & 1 - \frac{(3n/4)^2}{2^{n/4}} - 2\exp(-\frac{n}{24}) - \left(\frac{\beta+1}{2}\right)^2 \alpha^n                                            & \text{(Union bound)}                   \\
    \ge & 1 - \frac{\beta+1}{2} \alpha^n.                                                                                                            & \text{(\cref{eq:37ia})}
  \end{flalign*}
\end{proof}
\end{document}